\documentclass[11pt]{article}
\usepackage[letterpaper,margin=1in]{geometry}
\usepackage[T1]{fontenc}
\usepackage{lmodern}
\usepackage{amsmath,amssymb,amsfonts,amsthm,mathtools,bm}
\usepackage{booktabs,enumitem,microtype}
\usepackage{graphicx}
\usepackage[round,authoryear]{natbib}
\usepackage{xcolor}
\usepackage{xr}
\usepackage[hidelinks]{hyperref}

\newtheorem{theorem}{Theorem}[section]
\newtheorem{proposition}[theorem]{Proposition}

\theoremstyle{definition}

\theoremstyle{remark}
\newtheorem{remark}[theorem]{Remark}
\newtheorem*{theorem*}{Theorem}
\newtheorem*{proposition*}{Proposition}
\newtheorem*{corollary*}{Corollary}
\newtheorem*{lemma*}{Lemma}
\newtheorem*{remark*}{Remark}
\newtheorem*{openproblem*}{Open Problem}

\newcommand{\R}{\mathbb R}
\newcommand{\N}{\mathbb N}

\newcommand{\PP}{\mathbb P}
\newcommand{\ind}{\mathbf 1}
\newcommand{\op}{\mathrm{op}}
\newcommand{\eff}{\mathrm{eff}}

\newcommand{\col}{\operatorname{col}}
\newcommand{\rank}{\operatorname{rank}}
\newcommand{\Var}{\operatorname{Var}}
\newcommand{\diag}{\operatorname{diag}}
\newcommand{\tr}{\operatorname{tr}}

\newcommand{\norm}[1]{\left\lVert#1\right\rVert}
\newcommand{\inner}[2]{\left\langle#1,#2\right\rangle}
\newcommand{\trans}{\mathsf T}

\DeclareMathOperator{\Span}{span}

\setlist{nosep,leftmargin=*}
\allowdisplaybreaks

\newcommand{\rlN}{18}
\newcommand{\rlT}{26}
\newcommand{\rlYearA}{1995}
\newcommand{\rlYearB}{2020}
\newcommand{\rlDisc}{0.20}
\newcommand{\rlDiscSD}{0.28}
\newcommand{\rlAvail}{98.3}
\newcommand{\rlGamma}{3.17}
\newcommand{\rlBandHalfMed}{4.3}
\newcommand{\rlNSafe}{8}
\newcommand{\rlLamGoodMed}{0.0057}
\newcommand{\rlLamGoodMin}{0.0027}
\newcommand{\rlLamSafeMax}{0.00233}
\newcommand{\rlLamSafeMin}{0.000037}
\newcommand{\rlEtaOneFirst}{0.79}
\newcommand{\rlEtaOneLast}{0.71}
\newcommand{\rlEtaTwoPeak}{0.49}
\newcommand{\rlEtaTwoPeakYear}{1996}
\newcommand{\rlEtaTwoEnd}{0.16}
\newcommand{\rlBetaTwenty}{0.34}
\newcommand{\rlROTwenty}{9.99}
\newcommand{\rlPlugTwenty}{2.53}
\newcommand{\rlSensBeta}{0.63}
\newcommand{\rlSensEtaOne}{0.96}
\newcommand{\rlSensEtaTwo}{3.08}

\newcommand{\rlCyclePass}{0.0}

\newcommand{\rlDeclineEtaTwo}{0.26}
\newcommand{\rlDeclineEtaTwoSE}{0.016}
\newcommand{\rlDeclineEtaTwoZ}{16.2}
\newcommand{\rlDeclineEtaOne}{0.05}
\newcommand{\rlDeclineEtaOneSE}{0.005}
\newcommand{\rlDeltaStarEtaTwo}{0.037}

\newcommand{\appN}{18}
\newcommand{\appT}{48}
\newcommand{\appny}{4}
\newcommand{\appTau}{25}
\newcommand{\appBeta}{0.42}
\newcommand{\appMissPct}{8.1}
\newcommand{\appDiscMean}{-0.117}
\newcommand{\appDiscSD}{0.77}
\newcommand{\appCyclePassPct}{93.8}
\newcommand{\appSigMinMean}{0.19}
\newcommand{\appSigMinMin}{0.09}
\newcommand{\appGamma}{1.07}
\newcommand{\appPlugZ}{-4.5}
\newcommand{\appPlugPre}{0.34}
\newcommand{\appPlugPost}{0.15}
\newcommand{\appConstReject}{0}

\newcommand{\appJumpZbeta}{2.74}
\newcommand{\appJumpZetaOne}{-3.26}
\newcommand{\appJumpZetaTwo}{-17.8}
\newcommand{\appJumpZetaThree}{-1.30}
\newcommand{\appSensBeta}{1.75}
\newcommand{\appSensEtaTwo}{1.96}
\newcommand{\appDeltaStar}{0.111}

\newcommand{\appSafePct}{0.0}
\newcommand{\bernWidthRatio}{2.6}
\newcommand{\bernWidthRatioMax}{3.3}
\newcommand{\doseMaxGap}{0.002}
\newcommand{\doseMaxTwoMCSE}{0.015}
\newcommand{\doseOracleTop}{-0.581}
\newcommand{\doseMatchedTop}{-0.234}

\newcommand{\doseAttFactor}{2.49}
\newcommand{\doseRg}{150}
\newcommand{\exRnorm}{0.823}
\newcommand{\exIcEvMin}{0.501}
\newcommand{\exIcEvMax}{8.0}
\newcommand{\exIYEvMin}{0.0000}
\newcommand{\exSmin}{0.708}
\newcommand{\exBadQ}{0.000}
\newcommand{\exCommonBiasQ}{0.000}
\newcommand{\exBetaZeroEvMin}{0.510}
\newcommand{\exKcEvMin}{0.51}
\newcommand{\exKcEvMax}{7.80}

\newcommand{\exW}{\begin{pmatrix}0.00 & 0.60 & 0.24 & 0.16 \\ 0.58 & 0.00 & 0.18 & 0.23 \\ 0.23 & 0.18 & 0.00 & 0.58 \\ 0.16 & 0.24 & 0.60 & 0.00\end{pmatrix}}
\newcommand{\exIc}{\begin{pmatrix}0.68 & 0.17 & 0.34 \\ 0.17 & 2.79 & 3.45 \\ 0.34 & 3.45 & 5.73\end{pmatrix}}
\newcommand{\exKc}{\begin{pmatrix}2.75 & 3.36 \\ 3.36 & 5.56\end{pmatrix}}
\newcommand{\expOneR}{300}
\newcommand{\expOneN}{24}
\newcommand{\expOneT}{40}
\newcommand{\expOneny}{16}
\newcommand{\expOneTV}{0.75}
\newcommand{\expOneBeta}{0.50}
\newcommand{\expOneStatPre}{0.20}
\newcommand{\expOneStatPost}{-0.02}

\newcommand{\expOneConcPre}{0.06}
\newcommand{\expOneConcPost}{0.06}
\newcommand{\expOneJointPre}{0.487}
\newcommand{\expOneJointPost}{0.489}
\newcommand{\expOneROPre}{0.487}
\newcommand{\expOneROPost}{0.488}
\newcommand{\expOnePlugFalse}{100.0}
\newcommand{\expOneJointFalse}{1.3}
\newcommand{\expOneJointCover}{92.3}
\newcommand{\expOneCoverMCSE}{1.5}
\newcommand{\expOneROCovNaive}{92.3}
\newcommand{\expOneROCovProp}{94.7}

\newcommand{\covOracleGauss}{95.7}

\newcommand{\covOracleTfive}{95.2}

\newcommand{\covOracleCexp}{94.2}

\newcommand{\covStressGauss}{94.2}

\newcommand{\covCalEightGauss}{92.8}
\newcommand{\widthCalEightGauss}{1.12}
\newcommand{\covCalTwoFourGauss}{95.4}
\newcommand{\widthCalTwoFourGauss}{1.07}
\newcommand{\covCalEightTfive}{92.2}

\newcommand{\covCalTwoFourTfive}{92.8}

\newcommand{\covCalEightCexp}{86.6}

\newcommand{\covCalTwoFourCexp}{92.8}

\newcommand{\calGamma}{1.05}
\newcommand{\covCalMCSEmax}{1.5}
\newcommand{\calR}{500}
\newcommand{\gapEight}{2.2}

\newcommand{\gapRawEight}{5.5}
\newcommand{\gapRawTwoFour}{3.3}
\newcommand{\gapRawRatio}{1.67}
\newcommand{\covUnmatchedGauss}{84.2}
\newcommand{\covUnmatchedTfive}{77.7}
\newcommand{\covUnmatchedCexp}{77.0}
\newcommand{\covStressSidakReal}{74.2}
\newcommand{\covStressBernReal}{100.0}
\newcommand{\stressKappaMean}{0.62}
\newcommand{\stressKappaMax}{0.89}
\newcommand{\gridCovA}{93.3}
\newcommand{\gridCovB}{94.7}
\newcommand{\gridCovC}{95.0}
\newcommand{\gridCovD}{94.7}
\newcommand{\gridR}{300}
\newcommand{\widStrAAROr}{96.2}
\newcommand{\widStrBAROr}{95.4}
\newcommand{\widWkAAROr}{95.0}
\newcommand{\widWkBAROr}{94.6}
\newcommand{\widWkCAROr}{94.8}
\newcommand{\widVwBAROr}{95.0}
\newcommand{\widVwCAROr}{94.8}
\newcommand{\widStrALam}{0.093}
\newcommand{\widStrAWaldB}{96.0}
\newcommand{\widStrAAR}{90.8}
\newcommand{\widStrAProj}{100.0}
\newcommand{\widStrASwitch}{96.0}

\newcommand{\widStrALen}{0.62}

\newcommand{\widStrAPowTwo}{18.4}
\newcommand{\widStrBLam}{0.429}
\newcommand{\widStrBWaldB}{95.2}
\newcommand{\widStrBAR}{88.8}
\newcommand{\widStrBProj}{99.6}
\newcommand{\widStrBSwitch}{95.2}

\newcommand{\widWkALam}{0.068}
\newcommand{\widWkAWaldB}{88.8}
\newcommand{\widWkAAR}{92.0}
\newcommand{\widWkAProj}{99.6}
\newcommand{\widWkASwitch}{89.6}

\newcommand{\widWkALen}{0.75}

\newcommand{\widWkAPowTwo}{14.4}
\newcommand{\widWkBLam}{0.104}
\newcommand{\widWkBWaldB}{95.6}
\newcommand{\widWkBAR}{89.6}
\newcommand{\widWkBProj}{99.2}
\newcommand{\widWkBSwitch}{95.6}

\newcommand{\widWkCLam}{0.102}
\newcommand{\widWkCWaldB}{94.4}
\newcommand{\widWkCAR}{88.4}
\newcommand{\widWkCProj}{98.8}
\newcommand{\widWkCSwitch}{94.4}

\newcommand{\widWkCLen}{0.28}

\newcommand{\widWkCPowTwo}{24.0}
\newcommand{\widVwBLam}{0.027}
\newcommand{\widVwBWaldB}{93.6}
\newcommand{\widVwBAR}{90.8}
\newcommand{\widVwBProj}{99.2}
\newcommand{\widVwBSwitch}{98.8}

\newcommand{\widVwCLam}{0.026}
\newcommand{\widVwCWaldB}{98.8}
\newcommand{\widVwCAR}{88.4}
\newcommand{\widVwCProj}{98.8}
\newcommand{\widVwCSwitch}{98.4}

\newcommand{\widVwCLen}{0.61}
\newcommand{\widVwCUnb}{57.6}
\newcommand{\widVwCPowTwo}{14.4}
\newcommand{\widR}{250}
\newcommand{\widFloor}{0.03}
\newcommand{\widConvFailMax}{0.0}
\newcommand{\chgT}{300}
\newcommand{\chgH}{20}
\newcommand{\chgR}{500}
\newcommand{\chgKappaUB}{9.02}

\newcommand{\chgKappaAttLB}{0.36}

\newcommand{\chgAttKninefive}{4.50}
\newcommand{\chgNullFP}{0.0}
\newcommand{\chgLocMedAtFive}{1.0}
\newcommand{\chgLocQnineAtFive}{1.0}
\newcommand{\chgLocBoundAtFive}{22.0}
\newcommand{\obsSize}{6.7}
\newcommand{\obsSizeR}{300}
\newcommand{\obsSizeMCSE}{1.4}
\newcommand{\obsSizeFeas}{11.3}
\newcommand{\obsSizeFeasR}{150}
\newcommand{\obsPowA}{20.7}
\newcommand{\obsPowB}{38.0}
\newcommand{\obsPowCmp}{100.0}
\newcommand{\obsPowR}{150}
\newcommand{\obsPowMCSEmax}{4.0}
\newcommand{\obsSplitCov}{96.7}
\newcommand{\obsSplitW}{1.0}
\newcommand{\obsSplitCovA}{96.8}
\newcommand{\obsSplitWA}{18.4}
\newcommand{\obsSplitCovB}{84.2}
\newcommand{\obsSplitWB}{13.5}
\newcommand{\obsDetCmp}{150}
\newcommand{\obsDetA}{31}
\newcommand{\obsDetB}{57}
\newcommand{\obsAttCmp}{96.0}
\newcommand{\obsAttCmpMCSE}{1.6}
\newcommand{\obsFalseInc}{0.7}
\newcommand{\obsAttStrA}{12.9}
\newcommand{\obsAttStrB}{28.1}
\newcommand{\obsUndetCmp}{3.3}

\newcommand{\jgAEtaJ}{0.017}
\newcommand{\jgAEtaR}{0.017}
\newcommand{\jgACovJ}{94.7}
\newcommand{\jgACovR}{96.7}
\newcommand{\jgAWidJ}{1.04}
\newcommand{\jgAWidR}{1.04}
\newcommand{\jgBEtaJ}{0.032}
\newcommand{\jgBEtaR}{0.034}
\newcommand{\jgBCovJ}{84.7}
\newcommand{\jgBCovR}{90.0}
\newcommand{\jgBWidJ}{1.04}
\newcommand{\jgBWidR}{1.08}
\newcommand{\jgCEtaJ}{0.056}
\newcommand{\jgCEtaR}{0.069}
\newcommand{\jgCCovJ}{46.0}
\newcommand{\jgCCovR}{56.0}
\newcommand{\jgCWidJ}{1.05}
\newcommand{\jgCWidR}{1.23}
\newcommand{\jgDEtaJ}{0.104}
\newcommand{\jgDEtaR}{0.137}
\newcommand{\jgDCovJ}{6.7}
\newcommand{\jgDCovR}{7.3}
\newcommand{\jgDWidJ}{1.10}
\newcommand{\jgDWidR}{1.46}
\newcommand{\jgR}{150}
\newcommand{\jgEtaGainD}{24}
\newcommand{\jgEtaGainA}{0}
\newcommand{\bootCov}{91.7}
\newcommand{\bootWidth}{1.14}

\newcommand{\bootR}{120}
\newcommand{\bootB}{59}
\newcommand{\bootMCSE}{2.5}
\newcommand{\loaderZerr}{2.2\times10^{-16}}
\newcommand{\loaderTherr}{8.4\times10^{-14}}

\newcommand{\censOmegaErr}{0.0027}
\newcommand{\censSminOpen}{0.69}
\newcommand{\censSminTwo}{0.253}
\newcommand{\zeroLo}{0.90}
\newcommand{\zeroHi}{1.09}
\newcommand{\zeroTruth}{0.93}
\newcommand{\zeroPzero}{0.36}
\newcommand{\mnarBiasZeroA}{0.009}
\newcommand{\mnarBiasZeroB}{0.005}
\newcommand{\mnarBiasTwoA}{0.089}
\newcommand{\mnarBiasTwoB}{0.071}
\newcommand{\ptPop}{-0.074}
\newcommand{\ptMC}{-0.075}
\newcommand{\ptSignRev}{-0.482}
\newcommand{\ptConcTH}{0.154}
\newcommand{\ptConcMC}{0.167}
\newcommand{\simSafePct}{0.00}
\newcommand{\senCovR}{500}
\newcommand{\senCovZero}{85.4}

\newcommand{\senCovQuarter}{85.2}

\newcommand{\senCovHalf}{85.6}

\newcommand{\senCovFull}{86.6}

\newcommand{\senCovDouble}{90.2}

\newcommand{\senCovMCSEv}{1.0}

\hypersetup{
  pdftitle={Separating Time-Varying Network Composition from Predictive Dependence under Noisy Network Measurement},
  pdfauthor={Marios Papamichalis, Regina Ruane, and Theofanis Papamichalis}
}

\title{Separating Time-Varying Network Composition from Predictive Dependence under Noisy Network Measurement}
\author{
Marios Papamichalis\thanks{\raggedright Human Nature Lab, Yale University, New Haven, CT 06511, USA. \href{mailto:marios.papamichalis@yale.edu}{\nolinkurl{marios.papamichalis@yale.edu}}}
\and Regina Ruane\thanks{\raggedright Department of Statistics and Data Science, The Wharton School, University of Pennsylvania, Philadelphia, PA, USA. \href{mailto:ruanej@wharton.upenn.edu}{\nolinkurl{ruanej@wharton.upenn.edu}}}
\and Theofanis Papamichalis\thanks{\raggedright Department of Economics, Yale University, New Haven, CT, USA. \href{mailto:theofanis.papamichalis@yale.edu}{\nolinkurl{theofanis.papamichalis@yale.edu}}}
}
\date{}

\begin{document}
\maketitle

\begin{abstract}
A common question about networked time series is whether outcomes changed
because shocks transmit more strongly or because the pattern of
connections changed. Standard practice inserts a recorded network into an
outcome regression and reads movements of the fitted coefficient as
changes in transmission strength. When the network is latent, time
varying, and measured with error, this reading fails: changes in strength
and changes in composition can produce the same outcome distribution at a
single date, and the population coefficient moves under composition
changes alone. The question becomes answerable when outcomes are analyzed
jointly with repeated noisy measurements of the network, such as paired
reports of bilateral trade flows. For the joint model we establish
necessary and sufficient conditions for local identification, estimators
of the strength and composition paths, a simultaneous confidence band for
the strength path,
confidence sets that remain exact under weak identification, breakdown
bounds under common reporting bias, and an exactly sized test that
detects changes on the observed path and attributes them to strength or
to composition. Simulations assess each procedure at its stated boundary.
On a mirror-reported trade panel of eighteen economies over 1995 to 2020,
the diagnostics flag exactly the crisis years and the composition
coordinate attached to European Union membership declines by roughly two
thirds. The estimand is predictive dependence, not a causal effect.
\end{abstract}

\medskip
\noindent\textit{Keywords:} identification; errors in variables;
simultaneous confidence bands; weak identification; change detection;
international trade.

\section{Introduction}
\label{sec:introduction}

Many empirical questions concern transmission through a network that
changes over time and is measured imperfectly. A researcher relates a
country's output to lagged output in its trading partners and finds that
the fitted network coefficient falls. Did shocks begin to transmit less
strongly, did trade shift toward partners whose outcomes co-move
differently, or did the recorded trade network change while the underlying
network did not? The same ambiguity arises when financial exposures are
reported separately by two counterparties, when mobility links are
assembled from incomplete device coverage, and when contact networks are
reconstructed from survey responses. In each case the data are outcomes
and imperfect network measurements, and the substantive question is which
of two objects changed: the strength of transmission, or the composition
of the network.

Standard practice inserts a recorded or baseline network into an outcome
regression and interprets movements of the fitted coefficient as changes
in transmission strength. Figure~\ref{fig:false-attribution}
(page~\pageref{fig:false-attribution}) shows what this practice does when
only composition changes: with strength held fixed, the fixed-baseline
regression displays a large decline in the fitted coefficient, and a
conventional two-standard-error comparison detects a nonexistent strength
change in $\expOnePlugFalse\%$ of replications.
Theorem~\ref{thm:plugin-main} shows this behavior is generic: the plug-in
estimand is an exact projection, and the set of composition-only changes
that leave it unchanged has Lebesgue measure zero.

We ask three questions. First, under what conditions do outcomes and
repeated network reports jointly identify transmission strength separately
from network composition, and what do the common plug-in practices
estimate when identification fails? Second, can the full time paths of
strength and composition be estimated with simultaneous confidence
statements that remain valid under estimated covariances, non-Gaussian
errors, weak identification, and bounded failures of the modeling
commitments? Third, what change inference is possible on a single observed
path, and how far is it from information-theoretic limits?

The central difficulty is a rank deficiency. Outcomes depend on the
network only through a low-dimensional exposure such as $W_ty_{t-1}$, so
at a single date the outcome law is invariant on an explicit fiber of
dimension $N-3$ per row (Theorem~\ref{thm:identification-main}): outcomes
alone cannot separate strength from composition at any sample size at that
date. Repeated reports resolve the deficiency only if their mean
derivatives contain composition directions that survive row levels and the
declared reporter-bias design; unrestricted dyad-specific bias, and
unrestricted nonignorable selection, can each reproduce any composition
change. The fixed-dimensional chart $m_t(\eta_t)$ placed on composition is
therefore a declared modeling commitment. Section~\ref{sec:charts-main}
instantiates chart families and works a four-node example in closed form,
and Proposition~\ref{prop:pseudo-main} states what is estimated, and with
what sensitivity, when the commitments fail by bounded amounts.

Table~\ref{tab:neighbors} positions the paper against the nearest
literatures; the distinguishing feature is that outcomes and repeated
network measurements enter one estimating system with a
strength--composition decomposition as the target. Network autoregressions
take the weight matrix as supplied
\citep{ZhuEtAl2017NAR,KnightEtAl2020GNAR,ArmillottaFokianos2023PNAR};
dynamic-network and latent-space models reconstruct evolving ties without
a strength--composition decomposition
\citep{HoffRafteryHandcock2002,SewellChen2015,DuranteDunson2016,MatiasMiele2017,Krampe2019DynamicNetworks};
time-varying-parameter VARs model drift with no measurement channel
\citep{CogleySargent2005,Primiceri2005,BittoFS2019}. State-space spillover models with time-varying coefficients on a supplied
graph \citep{papamichalis2025state}, and Bayesian predictive synthesis
across network models, at the graphon level for random networks and for
dynamic networks \citep{papamichalis2025graphon,papamichalis2026bayesian},
target forecasting and model combination with the measured network taken
as given; the question studied here, whether strength and composition are
separately identified when the network is latent and measured with error,
is prior to those frameworks and delimits when their fitted spillover
paths carry a transmission interpretation. A panel literature
identifies an unknown $W$ from outcomes alone under sparsity and long
horizons \citep{dePaulaRasulSouza2024,Manresa2016,LamSouza2020}, including
kernel-based extensions to slowly varying $W$
\citep{dePaulaRasulSouza2024}: outcomes accumulate restrictions on rows
that are constant or slowly drifting, whereas here composition can move
every period, each date brings new unknowns, and outcomes alone never
accumulate; this is why repeated measurement enters, and why the target is
the datewise strength--composition decomposition instead of the network
itself.
Regression on noisy network-linked data \citep{LeLi2022} treats the
observed graph as a noisy version of a static structure entering through
smoothness penalties, with no time-varying composition target;
latent-network models for multiply reported ties \citep{DeBaccoEtAl2023}
handle reporter-specific distortion without an outcome channel or a
strength path; sampled-network and aggregated-relational-data econometrics
\citep{ChandrasekharLewis2016,BrezaEtAl2020} supply measurement designs to
which our identification analysis applies. The estimation theory builds on
Neyman-orthogonal cross-fitting \citep{Neyman1959,ChernozhukovEtAl2018},
the weak-identification sets on score inversion
\citep{AndersonRubin1949,StockWright2000,Kleibergen2005,Dufour1997}, the
change analysis on detection and localization theory and its lower bounds
\citep{WangSamworth2018,WangYuRinaldo2020,VerzelenEtAl2023,IngsterSuslina2003},
and the benchmark link on equivalence of experiments
\citep{LeCam1986,BrownLow1996,Nussbaum1996}.

\begin{table}[t]
\centering
\small
\caption{Nearest neighbors and the boundary. ``Both channels'' means that
outcomes and repeated network measurements enter one estimating system.}
\label{tab:neighbors}
\begin{tabular}{lcccc}
\toprule
& network & network & both & strength--composition\\
& latent? & time-varying? & channels? & path + inference?\\
\midrule
network autoregressions & no & sometimes & no & no\\
TVP-VARs & -- & -- & no & no\\
latent-space / dynamic blocks & yes & yes & no & no\\
identify $W$ from outcomes & yes & slowly & no & no\\
noisy network-linked regression & partly & no & partly & no\\
multiply reported latent networks & yes & sometimes & no & no\\
ARD / sampled networks & yes & no & no & no\\
\textbf{this paper} & yes & yes & yes & yes\\
\bottomrule
\end{tabular}
\end{table}

The paper makes four contributions. First, identification: the exact
strength--composition confounding (outcome-equivalent fibers of dimension
$N-3$ per row), a necessary and sufficient local condition after profiling
all mean and covariance nuisances, global identification for the paired
mirror-report design, and exact nonidentification for common dyad bias,
unrestricted selection, pooled zeros, and support change
(Theorem~\ref{thm:identification-main}; Sections~\ref{sec:boundaries},
\ref*{sec:si-boundaries}). Second, the plug-in failure as a theorem:
pseudo-true static and concurrent plug-in coefficients in closed form,
with attenuation, collapse, sign reversal, and generic composition-only
false attribution, verified across a dose--response grid against the
matched population formula (Theorem~\ref{thm:plugin-main},
Figure~\ref{fig:dose-response}). Third, feasible inference:
a cross-fitted orthogonal estimator of both paths with uniform expansion,
and a simultaneous strength-path band, exact for oracle Gaussian scores,
asymptotically valid
with estimated covariances, and evaluated for finite-sample calibration
along an information ladder ($\gridCovA$--$\gridCovD\%$, with
$\covCalTwoFourGauss\%$ at the tripled design)
(Theorems~\ref{thm:path-main}--\ref{thm:bands-main}); validity of the band
under sub-exponential scores subject to a computable leverage condition
(Theorem~\ref{thm:robust-band-main}); score-inversion sets that are exact
under weak identification, with a prespecified switching rule
(Theorem~\ref{thm:weakid-main}); and pseudo-true targets with closed-form
sensitivity and breakdown values under bounded failures of the chart and
the bias design, with exact score-inversion inference and first-order
accurate Wald inference for those targets
(Proposition~\ref{prop:pseudo-main}). Fourth, change inference on the
observed path: an exactly sized score-inversion constancy test with
localization sets and attribution certificates, implemented by a
prespecified search protocol whose operating characteristics are measured
(Theorem~\ref{thm:obsdetect-main}), calibrated against benchmark
guarantees, minimax obstructions, and a verified designed-experiment
transfer (Theorem~\ref{thm:change-main},
Proposition~\ref{prop:transfer-instance-main}), with the observational
optimality gap stated as an open problem. The evidence for all four is a
Monte Carlo program generated from the joint experiment (calibration,
robustness boundaries, weak identification, a joint-versus-two-step
information grid, the observation layer, observational change inference)
and an application of the complete prespecified protocol to a real
mirror-reported trade panel (eighteen economies, 1995 to 2020), with a
synthetic end-to-end validation and a validated loader in the Supplement
(Sections~\ref{sec:simulation}--\ref{sec:application}).

The statistical object throughout is predictive network dependence: the
component of a conditional outcome mean associated with a latent
row-normalized network. It carries no causal interpretation without
additional assumptions on assignment, interference, confounding, and
network formation; we state this scope once here. The Supplement contains
complete proofs and the exact observed-data likelihood; the replication
code generates every reported number.

\paragraph{Notation.}
$P_A$ is the orthogonal projection onto the column span of $A$ and
$M_A=I-P_A$; $\norm\cdot$, $\norm\cdot_{\op}$, $\norm\cdot_\infty$,
and $\inner\cdot\cdot$ are the Euclidean, spectral, and entrywise
maximum norms and the Euclidean inner product. The date-$t$ target is
$\theta_t=(\beta_t,\eta_t)\in\R^d$, $d=1+q$: strength $\beta_t$ and
composition $\eta_t$. $\Phi,\phi$ are the standard normal distribution
and density, $\chi^2_{d,p}$ the $p$ quantile of $\chi^2_d$. Conditions
(E1)--(E6) are stated in Section~\ref{sec:estimation}; uniform stochastic
order symbols are defined there at first use.

\section{Joint outcome--network report model}
\label{sec:model}

\subsection{Support, composition, and predictive exposure}

Dates are $t=1,\ldots,T_n$. At each date, a prespecified conditioning sigma-field $\mathcal F_{t-1}$ contains the past, node covariates, the eligible directed-dyad set $\mathcal E_t$, and all predictable design variables. Every retained receiving row has at least one eligible dyad. Let $C_t$ be the row-incidence matrix on $\mathcal E_t$. Positive latent log flow is decomposed as
\begin{equation}
\ell_t^+=C_t\kappa_t+m_t(\eta_t),
\qquad
C_t^\trans m_t(\eta_t)=0,
\label{eq:logflow}
\end{equation}
where $\kappa_t$ contains row levels and $\eta_t\in\R^q$ is a fixed-dimensional composition coordinate. The row-normalized network is
\begin{equation}
W_{ij,t}(\eta_t)
=
\frac{\exp\{m_{ij,t}(\eta_t)\}}
{\sum_{r:(i,r)\in\mathcal E_t}\exp\{m_{ir,t}(\eta_t)\}},
\qquad (i,j)\in\mathcal E_t,
\label{eq:softmax}
\end{equation}
with off-support entries excluded from the target. For a predictable lag vector $y_{t-1}$, define
\begin{equation}
g_t(\eta)=W_t(\eta)y_{t-1},
\qquad
G_t(\eta)=D_\eta g_t(\eta).
\label{eq:exposure}
\end{equation}
Other differentiable network summaries are allowed if the derivatives and information conditions below are verified. Concrete chart families (gravity, block, and latent-position charts) and a fully worked four-node example are given in Section~\ref{sec:charts-main}.

\subsection{Outcome and report channels}

For honest cross-fitting, date $t$ is partitioned into a fixed number $K_f$ of folds. Conditional on a training sigma-field $\mathcal T_{tk}\supseteq\mathcal F_{t-1}$, the held-out outcome and report channels are
\begin{align}
Y_{tk}
&=X_{tk}\gamma_t+\beta_tg_{tk}(\eta_t)
+\Sigma_{tk}^{1/2}\xi_{tk},
\label{eq:outcome-model}\\
z_{tk}
&=U_{tk}\lambda_t+A_{tk}m_{tk}(\eta_t)
+\Omega_{tk}^{1/2}\upsilon_{tk}.
\label{eq:report-model}
\end{align}
Here $X_{tk}\gamma_t$ contains intercepts, own lags, observed common factors, and other outcome nuisances. The report nuisance $U_{tk}\lambda_t$ contains absolute row levels and prespecified reporter or valuation effects. The map $A_{tk}$ selects or aggregates latent dyad reports. The covariance $\Omega_{tk}$ may correlate exporter and importer reports; correlation is not treated as an extra independent replication. Conditional on $\mathcal T_{tk}$, $(\xi_{tk},\upsilon_{tk})$ is centered Gaussian with identity covariance and independent blocks. The target is
\begin{equation}
\theta_t=(\beta_t,\eta_t^\trans)^\trans\in\R^d,
\qquad d=1+q.
\label{eq:target}
\end{equation}
The nuisance parameters may vary with $t$ and are not components of the target path.

The continuous model is the base experiment used for estimation. Actual reports may be missing, censored, interval recorded, or labelled as structural zeros. Section~\ref*{sec:si-observation} of the Supplement defines the corresponding parameter-free observation map and exact pushforward likelihood. Coarsening cannot increase Fisher information. If selection or the eligible support is parameter dependent, its law must enter the full likelihood.

\subsection{Datewise information}

All matrix square roots are the principal symmetric positive-definite roots. For a positive-definite $V$, write $L(V)=V^{-1/2}$ and
\[
M_B=I-B(B^\trans B)^\dagger B^\trans.
\]
At a fixed date and fold define the nuisance residualizers
\begin{equation}
R^Y_{tk}=M_{L(\Sigma_{tk})X_{tk}}L(\Sigma_{tk}),
\qquad
R^z_{tk}=M_{L(\Omega_{tk})U_{tk}}L(\Omega_{tk}).
\label{eq:residualizers}
\end{equation}
The residual vector and target Jacobian are
\begin{align}
e_{tk}(\theta)
&=
\begin{pmatrix}
R^Y_{tk}\{Y_{tk}-\beta g_{tk}(\eta)\}\\
R^z_{tk}\{z_{tk}-A_{tk}m_{tk}(\eta)\}
\end{pmatrix},
\label{eq:residual-vector}\\
J_{tk}(\theta)
&=
\begin{pmatrix}
R^Y_{tk}g_{tk}(\eta)&\beta R^Y_{tk}G_{tk}(\eta)\\
0&R^z_{tk}A_{tk}\dot m_{tk}(\eta)
\end{pmatrix}.
\label{eq:jacobian}
\end{align}
The oracle score and information are
\begin{equation}
S_t=\sum_{k=1}^{K_f}J_{tk}(\theta_t)^\trans e_{tk}(\theta_t),
\qquad
\mathcal I_t=\sum_{k=1}^{K_f}J_{tk}(\theta_t)^\trans J_{tk}(\theta_t).
\label{eq:oracle-score-information}
\end{equation}

\section{Composition charts and a worked example}
\label{sec:charts-main}

Every result in this paper conditions on a declared composition chart
$m_t(\eta)$ with $C_t^\trans m_t(\eta)=0$ and fixed dimension $q$. The chart
is a modeling commitment on the same footing as the reporter-bias design, so
we now exhibit concrete families, show how the identification condition of
Theorem~\ref{thm:identification-main} is checked in closed form, and work one
example end to end. All displayed numbers are produced by the replication
code; none is illustrative hand arithmetic.

\subsection{Three chart families}

\paragraph{Dyadic-covariate (gravity) charts.}
For prespecified dyadic covariates $\psi_1,\ldots,\psi_q$ (distance,
contiguity, agreement or bloc membership, sectoral similarity), set
\[
m_t(\eta)=\Psi_t\eta,
\qquad
\Psi_t=\big[\tilde\psi_{1,t},\ldots,\tilde\psi_{q,t}\big],
\]
where $\tilde\psi_{l,t}$ is $\psi_{l,t}$ demeaned within each receiving row on
$\mathcal E_t$, which enforces $C_t^\trans\Psi_t=0$ without changing
within-row contrasts. The chart is linear: $\dot m_t=\Psi_t$ does not depend
on $\eta$, the report channel \eqref{eq:report-model} is exactly linear in
$(\lambda_t,\eta_t)$, and the report information $K_c=Q^\trans Q$ with
$Q=M_{L_zU}L_zA\Psi_t$, where $L_z$ is any whitener of the report
covariance ($L_z^\trans L_z=\Omega^{-1}$), is a fixed matrix that can be
computed, and its minimum singular value inspected, \emph{before outcomes
are examined}. Row
normalization gives the interpretation: $\eta_l$ is the semi-elasticity of
within-row composition with respect to covariate $l$. Linearity also supplies
the honest pilot of Section~\ref{sec:estimation} in closed form: generalized
least squares in the report channel identifies $\eta$ on each training fold
whenever $\rank(Q)=q$ there, so (E6) reduces to the explicit separation
condition of the Supplement.

\paragraph{Block charts.}
For a prespecified partition of nodes into $B$ blocks, let
$m_t(\eta)$ assign the (row-centered) value $\eta_{b(i)b(j)}$ to dyad
$(i,j)$. This is again linear with $q=B^2-B$ effective coordinates after row
centering; it is the natural chart when composition change means
reallocation between blocs (regions, currency areas, sectors).

\paragraph{Latent-position charts.}
Nonlinear charts $m_{ij,t}(\eta)=u_i^\trans v_j$, with fixed dimension
after gauge fixing ($k^2$ coordinates; row centering absorbs one additive
gauge per row), connect the model to latent-space network analysis
\citep{HoffRafteryHandcock2002,SewellChen2015,MatiasMiele2017}; the rank
condition $\rank\{\dot m(\eta)\}=q$ is then checked at the working
point. The linear families are primary because their identification
analysis is a finite computation.

Two silent failure modes require emphasis; both are consequences of
Theorem~\ref{thm:identification-main}. A covariate that is constant within
every receiving row is annihilated by row centering: its column of $Q$ is
exactly zero and the corresponding coordinate is unidentified no matter how
many reports arrive. If the chart is misspecified, in that the true
log-flow perturbation does not lie in the span of the declared $\Psi_t$,
then $\eta_t$ estimates the information projection of the truth onto the
chart; the report-channel specification diagnostics reported in
Section~\ref{sec:application} are the check, and the target
must be interpreted as the declared chart's best approximation, exactly as
a linear regression coefficient is interpreted under approximation error.

\subsection{A worked example}
\label{sec:worked-example}

Take $N=4$ nodes on the complete zero-diagonal support, mirror double
reports, and the gravity chart with $q=2$: $\psi_1$ a symmetric
distance-decay covariate, $\psi_2$ a same-bloc indicator for the partition
$\{1,2\}\,|\,\{3,4\}$, both row-centered. Fix
$\beta=0.5$, $\eta=(0.8,0.6)$, lag vector $y_-=(1.2,-0.6,0.4,-1.0)$, outcome
nuisances $X=[\mathbf 1,y_-]$, $\sigma_y=0.5$, report scales
$\sigma_E=\sigma_I=0.8$, and mirror correlation $\rho=0.5$. The chart and
normalization produce the row-softmax matrix
\[
W(\eta)=\exW .
\]
The residualized derivatives evaluate to $\norm r=\exRnorm$ (strength
survives the own-lag and intercept projection), report information
$K_c=\exKc$ with eigenvalues $(\exKcEvMin,\exKcEvMax)$, and joint information
\[
\mathcal I_c=\exIc ,
\qquad
\lambda(\mathcal I_c)=(\exIcEvMin,\ \text{\ldots},\ \exIcEvMax),
\qquad
\sigma_{\min}(B_c)=\exSmin ,
\]
where $B_c$ is the scaled score Jacobian of the joint experiment, the
matrix with $B_c^\trans B_c=\mathcal I_c$ formed by stacking the whitened
residualized derivatives of both channels
(Section~\ref*{sec:si-identification}).
The example makes the theorem's three claims concrete. First, outcomes
alone fail: the outcome-only information $I_Y$ has minimum eigenvalue
$\exIYEvMin$, exactly singular as part (a) predicts here, because after the
intercept and own lag are projected out the outcome channel retains only
$N-2=2$ residual dimensions, fewer than $d=3$ target directions. Second,
reports restore exactly the confounded directions: the joint
$\mathcal I_c$ has minimum eigenvalue $\exIcEvMin$, and at $\beta=0$
identification survives through $K_c$ alone (minimum eigenvalue
$\exBetaZeroEvMin$). Third, the failure modes are exact, not approximate:
replacing $\psi_1,\psi_2$ by row-constant covariates gives
$\max|Q|=\exBadQ$, and appending an unrestricted common dyad-bias column to
the report nuisances gives $\max|Q|=\exCommonBiasQ$, zero to machine
precision, which is the common-bias nonidentification of
Theorem~\ref{thm:identification-main}(d) realized in a $4$-node
computation.

The example also explains the diagnostic used throughout the empirical
sections: $\sigma_{\min}$ of the scaled score Jacobian ($\exSmin$ here) is
the quantity that decreases continuously as the design approaches either
failure mode, and it is computable at every date from training data alone.

\section{What plug-in practice estimates}
\label{sec:plugin}

Before developing estimation theory for the joint experiment, we settle what
the common alternatives estimate. Applied practice regresses outcomes on an
exposure built from a \emph{fixed baseline} network (the static plug-in) or
from the \emph{current period's} recorded network (the concurrent plug-in).
Both are linear projections, so their population values are exact objects,
and the false-attribution phenomenon of Figure~\ref{fig:false-attribution}
is a theorem, not a simulation artifact.

Fix a date $t$ and condition on $\mathcal F_{t-1}$, so $y_{t-1}$, $X_t$, and
any predetermined baseline matrix $\bar W$ are fixed. Write
$M_t=M_{X_t}$ for the projection off the outcome nuisances and
\[
u_t=M_t\bar Wy_{t-1},
\qquad
v_t(\eta)=M_tW(\eta)y_{t-1},
\qquad
\varphi_t(\eta)=\inner{u_t}{v_t(\eta)} .
\]

\begin{theorem}[Pseudo-true plug-in coefficients]
\label{thm:plugin-main}
Assume the outcome model \eqref{eq:outcome-model} at date $t$ with true
composition $\eta_t$ and $u_t\ne0$.
\begin{enumerate}[label=\textup{(\alph*)}]
\item The population static plug-in coefficient, the linear projection of
$Y_t$ on $[\,X_t,\ \bar Wy_{t-1}\,]$, equals
\[
\beta^{\mathrm{plug}}_t
=\beta_t\,\frac{\varphi_t(\eta_t)}{\norm{u_t}^2},
\]
free of the noise level: the distortion is pure projection geometry. It
equals $\beta_t$ if and only if $v_t(\eta_t)-u_t\perp u_t$; it is
attenuated, zero, or sign-reversed according to whether
$0<\varphi_t(\eta_t)<\norm{u_t}^2$, $\varphi_t(\eta_t)=0$, or
$\varphi_t(\eta_t)<0$, and all three regimes occur already for $N=4$.
\item Under a composition-only change $\eta^{(0)}\to\eta^{(1)}$ with
$\beta_t\equiv\beta$ fixed, the plug-in path shifts by
\[
\Delta^{\mathrm{plug}}
=\beta\,\frac{\varphi_t(\eta^{(1)})-\varphi_t(\eta^{(0)})}{\norm{u_t}^2}.
\]
If $\varphi_t$ is nonconstant, equivalently if
$u_t^\trans M_tG_t(\eta)\ne0$ for some $\eta$, then
$\varphi_t$ is a nonconstant real-analytic function of $\eta$, and the set
of composition-only changes with $\Delta^{\mathrm{plug}}=0$ is
Lebesgue-null in $\R^{2q}$: for $\beta\ne0$, false attribution is
generic. Off that null set $\Delta^{\mathrm{plug}}\ne0$, so any
procedure whose power against fixed nonzero shifts of the projection
coefficient tends to one will report a strength change that did not
occur.
\item (Concurrent plug-in.) If the exposure is rebuilt each date from noisy
reports, $\widetilde W_t=W(m_t+\varepsilon_t)$ with
$\tilde u_t=M_t\widetilde W_ty_{t-1}$, the population projection coefficient
is
\[
\beta^{\mathrm{conc}}_t
=\beta_t\,
\frac{\inner{E\tilde u_t}{v_t(\eta_t)}}
{\norm{E\tilde u_t}^2+E\norm{\tilde u_t-E\tilde u_t}^2},
\]
the errors-in-variables form: report noise enters the denominator but not
the numerator, so $|\beta^{\mathrm{conc}}_t|$ is strictly smaller than the
corresponding noiseless-baseline coefficient
$|\beta_t\inner{E\tilde u_t}{v_t(\eta_t)}|/\norm{E\tilde u_t}^2$
whenever the noise is nondegenerate, and this attenuation does not vanish
with the number of dates.
\end{enumerate}
\end{theorem}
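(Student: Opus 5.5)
The plan is to reduce everything to Frisch--Waugh--Lovell and then study $\varphi_t$ as a function of $\eta$.

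\textbf{Part (a).} Conditional on $\mathcal F_{t-1}$, the population projection coefficient on $\bar W y_{t-1}$ after partialling out $X_t$ is
\[
\inner{u_t}{M_tE[Y_t\mid\mathcal F_{t-1}]}/\norm{u_t}^2 .
\]
Substituting \eqref{eq:outcome-model} gives
\[
M_tE[Y_t\mid\mathcal F_{t-1}]=\beta_tM_tg_t(\eta_t)=\beta_tv_t(\eta_t),
\]
because $M_tX_t\gamma_t=0$ and the noise is centered. This yields the display. The noise covariance $\Sigma$ never enters, since the population projection uses only first moments. Here I take the unweighted (OLS) projection, matching the definition of $M_t$.

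Next, $\beta^{\mathrm{plug}}_t=\beta_t$ iff $\varphi_t(\eta_t)=\norm{u_t}^2$, i.e. $\inner{u_t}{v_t-u_t}=0$ (taking $\beta_t\ne0$). The three regimes are read off from the sign and size of $\varphi_t(\eta_t)/\norm{u_t}^2$. For $N=4$, I would exhibit explicit instances using the worked-example design with $\bar W$ chosen as the chart at different baselines $\bar\eta$. Moving $\eta_t$ along a direction over which $v_t(\eta)$ rotates relative to $u_t$ gives, by continuity, values with $\varphi>0$ small, $\varphi=0$, and $\varphi<0$. The intermediate value theorem delivers the zero, provided one finds one positive and one negative value. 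I would certify those two values numerically, or pick $\bar W$ so that $u_t$ is nearly orthogonal to $v_t$.

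\textbf{Part (b).} The shift formula is immediate from (a). For genericity, first note that $W(\eta)$ is a softmax of the chart $m_t(\eta)$, which is real-analytic in $\eta$ (linear for the gravity and block charts, polynomial for the latent-position charts). Hence $\varphi_t(\eta)=u_t^\trans M_tW(\eta)y_{t-1}$ is real-analytic. Its gradient is $u_t^\trans M_tG_t(\eta)$, so on the connected domain $\R^q$, $\varphi_t$ is nonconstant iff this gradient is nonzero somewhere.

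The function $F(\eta^{(0)},\eta^{(1)})=\varphi_t(\eta^{(1)})-\varphi_t(\eta^{(0)})$ is real-analytic on $\R^{2q}$ and is not identically zero. The zero set of a nonzero real-analytic function on a connected open set is Lebesgue-null. I would cite this standard fact, or prove it by induction on dimension via Fubini. This gives the null-set claim. The consequence for consistent tests is then a restatement: off the null set the projection coefficient moves by a fixed nonzero amount while $\beta$ is unchanged.

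\textbf{Part (c).} Now the regressor $\widetilde W_ty_{t-1}$ is random and independent of the outcome noise given $\mathcal F_{t-1}$, since the report and outcome blocks are independent. The population coefficient is
\[
E\inner{\tilde u_t}{M_tY_t}/E\norm{\tilde u_t}^2 .
\]
The numerator equals $\inner{E\tilde u_t}{\beta_tv_t(\eta_t)}$ by this independence. The denominator decomposes as $\norm{E\tilde u_t}^2+E\norm{\tilde u_t-E\tilde u_t}^2$.

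Strict attenuation follows whenever the variance term is positive, i.e. whenever the noise is nondegenerate in the sense that $\tilde u_t$ is not almost surely constant. This holds when the Gaussian $\varepsilon_t$ has a nonsingular covariance and $M_tD_mW\,y_{t-1}$ is not identically zero; I would state this as the meaning of nondegenerate. Because the report noise does not shrink as dates accumulate (each date has its own fixed $\Omega_t$), the ratio stays bounded away from the noiseless value, so the bias persists for every $T_n$.

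\textbf{Main obstacle.} I expect the delicate step to be the rigorous $N=4$ existence of all three regimes, which needs an explicit configuration rather than a purely abstract argument. The analytic null-set argument is routine, and the care there lies only in using analyticity of the softmax composed with the chart.
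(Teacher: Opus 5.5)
Your proposal is correct and follows the paper's proof essentially step for step: Frisch--Waugh--Lovell partialling for (a); real-analyticity of $\varphi_t$ with $D_\eta\varphi_t=u_t^\trans M_tG_t(\eta)$ and the Lebesgue-null zero set of a nonconstant analytic function for (b); and, for (c), independence of report and outcome noise together with the variance decomposition of $E\norm{\tilde u_t}^2$. The differences are minor. You allow any analytic chart where the paper uses the linear chart on the complete support, and you add the needed caveat $\beta_t\ne0$ in (a). You also organize the $N=4$ regime check as an intermediate-value argument from one certified sign-reversing point, whereas the paper simply points to the worked design in the replication code.
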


The proof is elementary and given in Section~\ref*{sec:si-plugin}; the
measure-zero statement uses only that zero sets of nonconstant real-analytic
functions are Lebesgue-null \citep{KrantzParks2002}. The value of the
theorem is exactness: parts (a)--(c) give
Figure~\ref{fig:false-attribution} a closed-form population counterpart. In a twelve-node
instance of the same design family, the population formula evaluates to
$\ptPop$ post-change against a Monte Carlo static plug-in mean of $\ptMC$
(agreement within one MCSE); a sign-reversing composition direction exists
with $\beta^{\mathrm{plug}}=\ptSignRev$ against a true $\beta=0.5$; and
the errors-in-variables formula puts the concurrent plug-in at $\ptConcTH$
against a Monte Carlo value of $\ptConcMC$ (the residual difference is the
finite-$N$ second-order term the formula's expectation ratio omits).
Figure~\ref{fig:dose-response} traces the theorem across a dose--response
grid of composition-change magnitudes. The theorem's value is conditional
on the baseline actually used, so the population curve is evaluated at each
replication's estimated baseline, constructed from the first eight report
waves exactly as the plug-in constructs it; the Monte Carlo shifts track
this matched population curve across the grid (largest absolute deviation
$\doseMaxGap$, with two-MCSE margins up to $\doseMaxTwoMCSE$;
$R=\doseRg$ per dose). The oracle-baseline curve is displayed for
reference: at the top dose it reaches $\doseOracleTop$ against
$\doseMatchedTop$ for the estimated baseline, an attenuation by the factor
$\doseAttFactor$ that is the errors-in-variables geometry of part (c)
acting on the baseline itself. The joint estimator of
Section~\ref{sec:estimation} stays at zero at every dose.

\begin{figure}[t]
\centering
\includegraphics[width=0.52\textwidth]{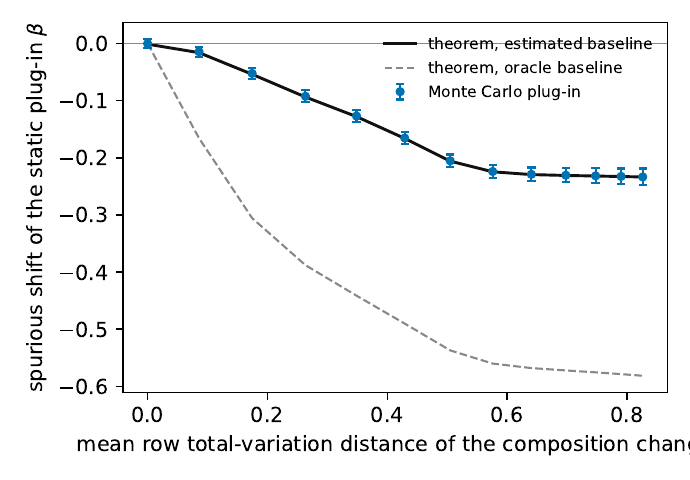}
\caption{Dose--response of the spurious strength shift under
composition-only changes of increasing size. Solid line:
Theorem~\ref{thm:plugin-main}(b)'s population formula evaluated at each
replication's estimated baseline, matched to the Monte Carlo procedure;
dashed line: the same formula at the oracle baseline $W(\eta^{(0)})$;
points: Monte Carlo ($\pm2$ MCSE, $R=\doseRg$ per dose). The joint
estimator is unaffected at every dose.}
\label{fig:dose-response}
\end{figure}

Two remarks connect the theorem to the rest of the paper. First, the
theorem delimits what the plug-in failure is not: it is not small-sample
noise, not report error (part (a) has none), and not a property of one
severe design; it is the projection geometry of
Theorem~\ref{thm:identification-main}(a) expressed in the estimand. Second,
no choice of baseline resolves the problem: any predetermined baseline
faces part (b), and any concurrent rebuild faces part (c). Separating
strength from composition requires the report channel to enter the
estimating equations jointly, which is what the remainder of the paper
constructs.

\section{Orthogonal path estimator}
\label{sec:estimation}

Training observations in fold complement $-k$ produce $\mathcal T_{tk}$-measurable covariance estimates $\widehat\Sigma_{tk}^{(-k)}$, $\widehat\Omega_{tk}^{(-k)}$ and a pilot $\theta^0_{t,-k}$. Estimated covariances are symmetrized and spectrally clipped to a fixed interval $[\underline v,\overline v]\subset(0,\infty)$ before principal inverse square roots are computed. Replacing the oracle covariances in \eqref{eq:residualizers}--\eqref{eq:jacobian} gives $\widehat R^Y_{tk}$, $\widehat R^z_{tk}$, $\widehat e_{tk}$, and $\widehat J_{tk}$. Put
\begin{equation}
\widehat S_{tk}(\theta)=\widehat J_{tk}(\theta)^\trans\widehat e_{tk}(\theta),
\qquad
\widehat{\mathcal I}_{tk}(\theta)
=\widehat J_{tk}(\theta)^\trans\widehat J_{tk}(\theta),
\label{eq:feasible-score-information}
\end{equation}
and
\begin{equation}
\widehat{\mathcal I}_t
=\sum_{k=1}^{K_f}\widehat{\mathcal I}_{tk}(\theta^0_{t,-k}).
\label{eq:feasible-information}
\end{equation}
For a prespecified $c_I>0$, define the everywhere-defined safe inverse
\begin{equation}
\widehat{\mathcal I}_t^{\ominus}
=
\begin{cases}
\widehat{\mathcal I}_t^{-1},
&\lambda_{\min}(\widehat{\mathcal I}_t)\ge c_In/2,\\
2(c_In)^{-1}I_d,&\text{otherwise}.
\end{cases}
\label{eq:safe-inverse}
\end{equation}
The fold-aggregated one-step estimator is
\begin{equation}
\widehat\theta_t
=\widehat{\mathcal I}_t^{\ominus}
\sum_{k=1}^{K_f}
\left\{
\widehat{\mathcal I}_{tk}(\theta^0_{t,-k})\theta^0_{t,-k}
+\widehat S_{tk}(\theta^0_{t,-k})
\right\}.
\label{eq:estimator}
\end{equation}
Residualization makes the realized score exactly invariant to the linear mean-nuisance values. Its conditional expectation is also orthogonal to admissible covariance perturbations. The complete statement and proof are in Section~\ref*{sec:si-estimation}.

\subsection{Uniform conditions}

All limits are uniform over a model class $\mathcal P_n$. Constants below do not depend on $P\in\mathcal P_n$, $t$, or $k$.

\begin{description}[style=nextline]
\item[(E1) Conditional experiment and folds.]
The target dimension, nuisance dimensions, and $K_f$ are fixed;
$c_fn\le n_{tk}\le C_fn$; $\ell_n=\log(2T_nK_f)=o(n)$.
The truth and designs are $\mathcal F_{t-1}$-measurable. Held-out innovations satisfy the conditional Gaussian model \eqref{eq:outcome-model}--\eqref{eq:report-model}, are independent of $\mathcal T_{tk}$, and are conditionally independent over folds. All training estimates and residualizers are $\mathcal T_{tk}$-measurable. Covariance nuisances are variation-independent of the target; target derivatives hold the training quantities and covariance nuisances fixed.

\item[(E2) Covariance and linear-design regularity.]
The spectra of $\Sigma_{tk}$ and $\Omega_{tk}$ lie in a fixed compact subset of $(0,\infty)$. The nonzero eigenvalues of $n_{tk}^{-1}X_{tk}^\trans\Sigma_{tk}^{-1}X_{tk}$ and $n_{tk}^{-1}U_{tk}^\trans\Omega_{tk}^{-1}U_{tk}$ are bounded away from zero and infinity; redundant nuisance columns are removed or handled with Moore--Penrose projections of locally constant rank.

\item[(E3) Local smoothness.]
For a fixed $r_0>0$, the $r_0$-ball about every $\theta_t$ lies in the parameter space. Uniformly on that ball,
\[
\sup_{0<\norm{\theta-\theta_t}\le r_0}
\frac{\norm{e_{tk}(\theta)-e_{tk}(\theta_t)}}{\norm{\theta-\theta_t}}
\le C\sqrt{n_{tk}},
\qquad
\sup_{\norm{\theta-\theta_t}\le r_0}
\norm{D^jJ_{tk}(\theta)}_{\op}
\le C\sqrt{n_{tk}},
\]
for $j=0,1,2$, with analogous bounds for feasible quantities on their covariance-good events.

\item[(E4) Joint information.]
For fixed $0<c_I<C_I<\infty$,
\[
c_InI_d\preceq\mathcal I_t\preceq C_InI_d
\qquad\text{simultaneously for }t\le T_n.
\]

\item[(E5) Feasible covariance and design rate.]
For a deterministic $\delta_n=o(1)$, with probability tending to one uniformly over $\mathcal P_n$,
\[
\max_{t,k}
\left\{
\norm{\widehat\Sigma_{tk}^{(-k)}-\Sigma_{tk}}_{\op}
+\norm{\widehat\Omega_{tk}^{(-k)}-\Omega_{tk}}_{\op}
\right\}\le\delta_n.
\]
Any estimated nonlinear design, reporting map, or sampling weight obeys the same Jacobian perturbation rate required in the Supplement.

\item[(E6) Honest pilot.]
For a deterministic $a_n=o(1)$, the pilots are $\mathcal T_{tk}$-measurable and
\[
\sup_{P\in\mathcal P_n}
P\left(\max_{t,k}\norm{\theta^0_{t,-k}-\theta_t}>a_n\right)\to0.
\]
\end{description}

Condition (E6) is substantive. Full-date information does not imply that a leave-fold training sample identifies the target. Section~\ref*{sec:si-estimation} gives both a counterexample and sufficient profiled-pilot conditions. For the linear charts of Section~\ref{sec:charts-main} the pilot is available in closed form: generalized least squares in the report channel identifies $\eta$ on each training fold, and one weighted regression then recovers $\beta$, so (E6) reduces to an explicit and checkable separation condition.

For deterministic $r_n>0$, $X_n=O_{\PP}^{\mathrm{unif}}(r_n)$ means that for every $\varepsilon>0$ there are finite $M$ and $n_0$ such that $\sup_{P\in\mathcal P_n}P\{\norm{X_n}>Mr_n\}<\varepsilon$ for every $n\ge n_0$.

\paragraph{Finite-sample studentization.}
On the event that each fold information matrix is nonsingular (probability
tending to one under (E4)), the estimator \eqref{eq:estimator} is the
information-weighted average of the per-fold one-steps $\widehat\theta^{(k)}_t=\theta^0_{t,-k}+\widehat{\mathcal
I}_{tk}^{-1}\widehat S_{tk}(\theta^0_{t,-k})$, so the folds' disagreement
measures the estimator's actual sampling noise, pilot included. The
implementation therefore (i) polishes each training pilot by two
Gauss--Newton steps on the training-fold joint criterion (training
measurable, so (E6) is unaffected), (ii) re-evaluates
$\widehat{\mathcal I}_t$ at $\widehat\theta_t$ for studentization, and
(iii) inflates the variance by
$\widehat\gamma^2=\max\{1,\ \mathrm{mean}_t\,D_t^2/(4\widehat
v_{\beta,t})\}$ with $D_t=\widehat\beta^{(1)}_t-\widehat\beta^{(2)}_t$; the
implementation fixes $K_f=2$, so the two fold one-steps are the complete
set and $4v_{\beta,t}$ is the null variance of $D_t$. By
the expansion of Theorem~\ref{thm:path-main}, $\widehat\gamma^2\to1$
uniformly, so every asymptotic statement is unchanged, while in finite
samples the factor absorbs the pilot-induced variance the leading term
omits. At small per-date information a fixed-design parametric bootstrap
critical value replaces the \v{S}id\'{a}k constant; both are evaluated in
Section~\ref{sec:sim-band}.

\section{Identification and inference}
\label{sec:theory}

\subsection{Sharp one-date condition}

Suppress date and fold indices. Let $L_y^\trans L_y=\Sigma^{-1}$ and $L_z^\trans L_z=\Omega^{-1}$, and define
\begin{equation}
\begin{aligned}
r&=M_{L_yX}L_yg(\eta),
&H&=M_{L_yX}L_yG(\eta),\\
U&=[AC,B],
&Q&=M_{L_zU}L_zA\dot m(\eta),
\qquad K_c=Q^\trans Q.
\end{aligned}
\label{eq:one-date-residuals}
\end{equation}
Here $C\kappa$ contains row levels and $B\alpha$ contains the declared report biases. All ranks are evaluated at an interior differentiable-in-quadratic-mean point, after removal of redundant composition coordinates.

\begin{theorem}[Outcome-only failure and joint identification]
\label{thm:identification-main}
Assume the conditional Gaussian model in Section~\ref{sec:model}, variation-independent mean and covariance nuisances, positive-definite covariances, a nonempty eligible support in every retained row, and a minimal differentiable composition chart with $C^\trans m(\eta)=C^\trans\dot m(\eta)=0$ and $\rank\{\dot m(\eta)\}=q$.

\begin{enumerate}[label=\textup{(\alph*)}]
\item After profiling outcome nuisances, the outcome-only information for $(\beta,\eta)$ is
\[
\mathcal I_Y=[\,r\ \ \beta H\,]^\trans[\,r\ \ \beta H\,].
\]
Thus outcomes alone identify the target locally if and only if $\rank[\,r\ \ \beta H\,]=q+1$. In particular, composition is not identified from outcomes when $\beta=0$ and $q>0$. In the unrestricted zero-diagonal row-stochastic model, a nonconstant lag row has an exact outcome-equivalent fiber of dimension $N-3$ when $N\ge4$.

\item After profiling $(\kappa,\alpha)$ and covariance nuisances, the continuous-report information for composition is $K_c=Q^\trans Q$. Reports identify composition locally if and only if
\[
\rank(Q)=q
\quad\Longleftrightarrow\quad
A\dot m(\eta)h\notin\col([AC,B])\ \text{for every }h\ne0.
\]
Correlated mirror-report errors are allowed through $\Omega$.

\item The joint efficient information is
\begin{equation}
\mathcal I_c=
\begin{pmatrix}
r^\trans r&\beta r^\trans H\\
\beta H^\trans r&\beta^2H^\trans H+K_c
\end{pmatrix}.
\label{eq:joint-information-main}
\end{equation}
The target $(\beta,\eta)$ is regularly locally identified if and only if
\begin{equation}
r\ne0,
\qquad
K_c+\beta^2H^\trans P_r^\perp H\succ0,
\qquad
P_r^\perp=I-\frac{rr^\trans}{r^\trans r}.
\label{eq:joint-identification-main}
\end{equation}
For $\beta\ne0$, this is equivalently
\[
\ker(Q)\cap\{h:Hh\in\Span(r)\}=\{0\}.
\]
For $\beta=0$, it reduces to $r\ne0$ and $K_c\succ0$.

\item Under paired exporter/importer means containing only row levels and receiver/supplier effects, the report law identifies the row-softmax network on the eligible support. An unrestricted common dyad effect invalidates this conclusion by absorbing every composition derivative.
\end{enumerate}
\end{theorem}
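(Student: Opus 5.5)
The plan is to reduce every part to Gaussian-regression Fisher information with the nuisances profiled out. Three standard facts do most of the work. In a Gaussian model $W=\mu(\vartheta)+V^{1/2}\xi$ with $V$ variation-independent of the mean, the information for the mean parameters is $D\mu^\trans V^{-1}D\mu$, and the mean and covariance blocks of the information are orthogonal. For a linear nuisance $X\gamma$, profiling gives the Schur complement, which is the Gram matrix of $M_{LX}L\,D\mu$ with $L^\trans L=V^{-1}$. The two channels are conditionally independent, so their informations add.

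For (a), I apply these facts to the outcome channel with mean $X\gamma+\beta g(\eta)$. The target derivative is $[g,\ \beta G]$, so profiling $\gamma$ gives $\mathcal I_Y=[r\ \beta H]^\trans[r\ \beta H]$. Local identification at a regular point is then equivalent to nonsingularity of $\mathcal I_Y$, i.e.\ to the rank condition; if $\beta=0$ the last $q$ columns vanish. For the fiber, fix row $i$ with eligible set of size $N-1$. The outcome depends on the row $w_i$ of the simplex only through the linear functional $w_i^\trans y_{t-1}$. The constraints $\mathbf 1^\trans w_i=1$ and $y^\trans w_i=c$ are independent when the lag row is nonconstant on the support, so the level set in the open simplex is a manifold of dimension $(N-1)-2=N-3$. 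That dimension is positive once $N\ge4$, and the softmax is a diffeomorphism onto the open simplex, so the fiber transfers to log-flow coordinates.

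For (b), I treat the report channel $z=AC\kappa+B\alpha+Am(\eta)+\Omega^{1/2}\upsilon$ in the same way. Profiling $(\kappa,\alpha)$ through $M_{L_zU}$ gives $K_c=Q^\trans Q$. Then $\rank Q=q$ holds if and only if there is no $h\ne0$ with $L_zA\dot m h\in\col(L_zU)$. Since $L_z$ is invertible, this is the stated column-space condition. Correlation in $\Omega$ enters only through $L_z$.

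For (c), summing the two informations gives \eqref{eq:joint-information-main}. By Rothenberg's theorem, regular local identification is equivalent to $\mathcal I_c\succ0$ at a point of locally constant rank. I then take the Schur complement on the $(1,1)$ block: $\mathcal I_c\succ0$ if and only if $r^\trans r>0$ and $\beta^2H^\trans H+K_c-\beta^2H^\trans rr^\trans H/r^\trans r\succ0$, which is \eqref{eq:joint-identification-main}. For $\beta\ne0$, the quadratic form $h^\trans(K_c+\beta^2H^\trans P_r^\perp H)h=\norm{Qh}^2+\beta^2\norm{P_r^\perp Hh}^2$ vanishes only if $Qh=0$ and $Hh\in\Span(r)$, which gives the kernel form. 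For $\beta=0$ the condition reduces immediately.

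For (d), the global argument uses paired means. I would write $E z^E_{ij}=a_i+b_j+\ell_{ij}$ and $E z^I_{ij}=a'_j+b'_i+\ell_{ij}$, and assume both reports are observed. The report law identifies these means, since Gaussian laws are determined by their means and covariances. The goal is then to show that within-row contrasts of $\ell$, modulo row levels, are recovered. Differencing the exporter mean across $j$ within row $i$ gives $\ell_{ij}-\ell_{ij'}+b_j-b_{j'}$. The receiver effects $b_j$ are pinned down by the importer report, where they appear as row effects of a different index, by comparing the two mirror reports: $Ez^E_{ij}-Ez^I_{ij}=a_i-b'_i+b_j-a'_j$, an additive two-way structure whose identified pieces remove the supplier terms up to a common constant. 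That constant cancels in within-row contrasts, so the softmax, which depends only on those contrasts, is identified. This bookkeeping of gauge constants, and checking that the eligible support is rich enough (connectivity of the bipartite support graph) for the two-way decomposition to be identified, is where I expect the main difficulty. The counterexample is short: an unrestricted common dyad effect $\delta_{ij}$ added to both reports puts $A$ times every $\dot m h$ in $\col(B)$, so $Q=0$ and (b) fails.
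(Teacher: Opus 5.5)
Your parts (a)--(c) follow the paper's route: Gaussian mean scores after whitening, covariance scores orthogonal to them, the Gram matrix of the residualized derivatives, additivity across the two channels, and elimination of the $\beta$ direction. Your Schur complement is the same computation as the paper's minimization over $b$. The common-dyad counterexample in (d) also matches the paper.

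The gap is the global identification argument in (d), because the design you write down is not the theorem's. You give each report both a row-indexed and a column-indexed effect, $Ez^E_{ij}=a_i+b_j+\ell_{ij}$ and $Ez^I_{ij}=a'_j+b'_i+\ell_{ij}$, and under that design the conclusion is false. For any vector $\epsilon$, the map $(\ell_{ij},b_j,a'_j)\mapsto(\ell_{ij}-\epsilon_j,\,b_j+\epsilon_j,\,a'_j+\epsilon_j)$ leaves both mean arrays unchanged. After row-centering, $-\epsilon_j$ is an admissible log-flow perturbation, and $\ell_{ij}-\ell_{ij'}$ moves by $\epsilon_{j'}-\epsilon_j$. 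So the row softmax changes whenever $\epsilon$ is nonconstant across some row's eligible partners. Correspondingly, the step where the mirror discrepancy ``pins down'' $b_j$ fails. The discrepancy $Ez^E_{ij}-Ez^I_{ij}=(a_i-b'_i)+(b_j-a'_j)$ identifies only $b_j-a'_j$ (up to component constants), not $b_j$. No connectivity condition on the bipartite support repairs this: the obstruction is a column effect common to both reports, a restricted instance of the common dyad bias you dismiss at the end.

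The missing observation is that in the theorem's paired design the importer bias is indexed by the receiving row and is absorbed together with $\kappa_i$. That design is $z^E_{ij}=\kappa_i+m_{ij}(\eta)+a^E_j+u_{ij}$ and $z^I_{ij}=\kappa_i+m_{ij}(\eta)+a^I_i+v_{ij}$, so $Ez^I_{ij}-Ez^I_{ij'}=m_{ij}-m_{ij'}$ gives every within-row contrast directly from the importer report. Equivalently, as in the paper, two parameter values with equal mean arrays satisfy $\Delta\ell_{ij}=-\Delta a^E_j=-\Delta a^I_i$. The difference is row constant and cancels in the softmax. No two-way decomposition is needed to recover $W$. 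The exporter report and the connectivity of the receiver--supplier graph enter only in counting the location normalizations of the nuisances $(\kappa,a^E,a^I)$, one per connected component.
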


The proof in Section~\ref*{sec:si-identification} also gives the exact row fibers, scale ambiguity, component-wise gauge count, and weak-information geometry. The condition is first-order and local; it is not a claim of global identification for a noninjective chart.

\subsection{Uniform estimation of the path}

\begin{theorem}[Orthogonal expansion and uniform path rate]
\label{thm:path-main}
Suppose (E1)--(E6) hold. Then, uniformly over $P\in\mathcal P_n$,
\[
P\left\{
\min_{t\le T_n}\lambda_{\min}(\widehat{\mathcal I}_t)
\ge c_In/2
\right\}\to1,
\]
and
\begin{align}
\max_{t\le T_n}
\norm{\widehat\theta_t-\theta_t-\mathcal I_t^{-1}S_t}
&=
O_{\PP}^{\mathrm{unif}}\!\left[
a_n^2+(a_n+\delta_n)\sqrt{\frac{\ell_n}{n}}
\right],
\label{eq:path-expansion-main}\\
\max_{t\le T_n}\norm{\widehat\theta_t-\theta_t}
&=
O_{\PP}^{\mathrm{unif}}\!\left[
\sqrt{\frac{\ell_n}{n}}+a_n^2
\right],
\label{eq:path-rate-main}\\
\max_{t\le T_n}n^{-1}
\norm{\widehat{\mathcal I}_t-\mathcal I_t}_{\op}
&=O_{\PP}^{\mathrm{unif}}(a_n+\delta_n).
\label{eq:information-rate-main}
\end{align}
If $a_n^2=O\{\sqrt{\ell_n/n}\}$, the feasible estimator has the oracle maximum-over-date order $\sqrt{\ell_n/n}$. This is an upper-rate result conditional on an honest pilot, not a minimax claim.
\end{theorem}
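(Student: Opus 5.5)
The argument is a standard one-step expansion carried out fold by fold and then unioned over the $T_nK_f$ date--fold pairs. Fix $(t,k)$ and work conditionally on $\mathcal T_{tk}$, on the intersection $\mathcal G_n$ of the covariance-good event of (E5) and the pilot event $\max_{t,k}\norm{\theta^0_{t,-k}-\theta_t}\le a_n$ of (E6). By (E5)--(E6), $P(\mathcal G_n)\to1$ uniformly. On $\mathcal G_n$ the pilot lies in the $r_0$-ball of (E3), so all Taylor bounds apply. Because the pilot, the clipped covariances and the residualizers are $\mathcal T_{tk}$-measurable, and the held-out innovations are independent of $\mathcal T_{tk}$ (E1), every held-out score is conditionally a Gaussian linear form with a fixed coefficient matrix. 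This is the only probabilistic input. A sub-Gaussian maximal inequality over $2T_nK_f$ indices then supplies the factor $\sqrt{\ell_n}$, and no empirical-process argument over $\theta$ is needed.

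\textbf{Step 1: information.} I decompose
\[
\widehat{\mathcal I}_{tk}(\theta^0)-\mathcal I_{tk}(\theta_t)=\{\widehat{\mathcal I}_{tk}(\theta^0)-\mathcal I_{tk}(\theta^0)\}+\{\mathcal I_{tk}(\theta^0)-\mathcal I_{tk}(\theta_t)\}.
\]
\begin{itemize}
\item For the first bracket, the spectral clipping and (E2) make $\Sigma\mapsto\Sigma^{-1/2}$ and the projection $M_{L(\Sigma)X}$ Lipschitz in operator norm. Hence $\norm{\widehat R-R}_{\op}=O(\delta_n)$, and with $\norm{J}_{\op}\le C\sqrt n$ from (E3) this gives $\norm{\widehat J_{tk}-J_{tk}}_{\op}=O(\delta_n\sqrt n)$.
\item For the second bracket, the $j=1$ bound of (E3) gives an error of order $a_n n$.
\end{itemize}
Summing over folds yields \eqref{eq:information-rate-main}. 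Combined with (E4), Weyl's inequality gives $\lambda_{\min}(\widehat{\mathcal I}_t)\ge c_In-o(n)\ge c_In/2$ for all $t$ simultaneously on $\mathcal G_n$. This proves the first display and shows that the safe inverse is the true inverse, so that $\norm{\widehat{\mathcal I}_t^{-1}}_{\op}\le 2/(c_In)$.

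\textbf{Step 2: score expansion.} On this event,
\[
\widehat\theta_t-\theta_t=\widehat{\mathcal I}_t^{-1}\sum_k\bigl\{\widehat S_{tk}(\theta^0)-\widehat{\mathcal I}_{tk}(\theta^0)(\theta_t-\theta^0)\bigr\}.
\]
Write $\widehat e_{tk}(\theta^0)=\widehat e_{tk}(\theta_t)-\widehat J_{tk}(\theta^0)(\theta^0-\theta_t)+\rho_{tk}$.
\begin{itemize}
\item Here $\rho_{tk}$ is a second-order Taylor remainder. Using $D J$ bounded by $C\sqrt n$ in (E3), it satisfies $\norm{\widehat J^\trans\rho}=O(n a_n^2)$.
\item The linear terms cancel exactly against $\widehat{\mathcal I}_{tk}(\theta^0)(\theta_t-\theta^0)$, since $\widehat{\mathcal I}=\widehat J^\trans\widehat J$.
\item What remains is $\sum_k\widehat J_{tk}(\theta^0)^\trans\widehat e_{tk}(\theta_t)$.
\end{itemize}
By residualization, $\widehat e_{tk}(\theta_t)=\widehat R^Y\Sigma^{1/2}\xi$ (respectively $\widehat R^z\Omega^{1/2}\upsilon$) exactly. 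The mean nuisances are annihilated because $\widehat R^Y X=0$, which is the orthogonality in the mean direction. I then split
\[
\widehat J(\theta^0)^\trans\widehat e(\theta_t)-J(\theta_t)^\trans e(\theta_t)
\]
into the difference $\bigl(\widehat J(\theta^0)^\trans\widehat R-J(\theta_t)^\trans R\bigr)$ applied to $\Sigma^{1/2}\xi$. This difference is a $\mathcal T_{tk}$-measurable matrix with operator norm $O\{(a_n+\delta_n)\sqrt n\}$, acting on an independent standard Gaussian vector of fixed output dimension $d$. A Gaussian tail bound and a union bound over the $T_nK_f$ pairs give $O\{(a_n+\delta_n)\sqrt{n\ell_n}\}$.

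\textbf{Step 3: conclusion.} Multiplying by $\norm{\widehat{\mathcal I}_t^{-1}}_{\op}=O(n^{-1})$ and replacing $\widehat{\mathcal I}_t^{-1}$ by $\mathcal I_t^{-1}$ costs
\[
\norm{\widehat{\mathcal I}_t^{-1}-\mathcal I_t^{-1}}\,\norm{S_t}=O(n^{-1}(a_n+\delta_n))\cdot O(\sqrt{n\ell_n}),
\]
where $\max_t\norm{S_t}=O(\sqrt{n\ell_n})$ again follows from conditional Gaussianity with $\Var(S_t)=\mathcal I_t\preceq C_InI_d$ and a union bound. Together these give \eqref{eq:path-expansion-main}. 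Then $\max_t\norm{\mathcal I_t^{-1}S_t}=O(\sqrt{\ell_n/n})$, and since $(a_n+\delta_n)\sqrt{\ell_n/n}=o(\sqrt{\ell_n/n})$, the rate \eqref{eq:path-rate-main} follows. Uniformity over $\mathcal P_n$ holds because every constant comes from (E1)--(E6), which are uniform by assumption.

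\textbf{Main obstacle.} The hard part is the cross term in Step 2 under covariance misspecification. $\widehat e(\theta_t)$ is whitened with the wrong $\widehat\Sigma$, so its conditional covariance is no longer an identity restricted to a range. The expansion must still be centered at the oracle $J^\trans e$ rather than at a feasible analogue. I would handle this by never using an identity covariance for the feasible residual. Instead I write everything as a measurable matrix times the true standardized innovation, so that only operator-norm perturbation bounds for the principal root and for the projection under clipping are needed. Keeping the perturbation constant uniform is exactly what the spectral clipping to $[\underline v,\overline v]$ and the locally constant rank clause of (E2) are for.
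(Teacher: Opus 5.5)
Your proof is correct and gets the same rates. It organizes the pilot expansion differently from the paper.

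\textbf{The paper's route.} The paper Taylor-expands the feasible score itself around the truth,
$\widehat S_{tk}(\theta_t+d_{tk})=\widehat S_{tk}(\theta_t)-\widehat{\mathcal I}_{tk}(\theta_t)d_{tk}+Z_{tk}d_{tk}+r_{tk}$,
with $Z_{tk}h=\{D\widehat J_{tk}(\theta_t)[h]\}^\trans\widehat e_{tk}(\theta_t)$. It then has to do four separate things:
\begin{enumerate}
\item bound $Z_{tk}$ by conditional Gaussian concentration;
\item control $r_{tk}$ through $\sup\norm{D^2\widehat S_{tk}}_{\op}=O(n)$, which draws on the $j=2$ clause of (E3);
\item absorb the mismatch between the curvature $\widehat{\mathcal I}_{tk}(\theta^0_{t,-k})$ that the estimator uses and $\widehat{\mathcal I}_{tk}(\theta_t)$;
\item replace $\sum_k\widehat S_{tk}(\theta_t)$ by $S_t$ in a separate maximal bound.
\end{enumerate}

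\textbf{Your route.} You freeze the Jacobian at the pilot and expand only the residual, whose derivative is exactly $-\widehat J_{tk}$. This has three consequences:
\begin{enumerate}
\item The Gauss--Newton identity $\widehat{\mathcal I}_{tk}=\widehat J_{tk}^\trans\widehat J_{tk}$ cancels the linear terms with no curvature mismatch.
\item The remainder $\widehat J^\trans\rho$ is deterministic and uses only the $j\le1$ bounds.
\item The pilot-induced and covariance-induced errors merge into the single $\mathcal T_{tk}$-measurable matrix $\widehat J_{tk}(\theta^0_{t,-k})^\trans\widehat R_{tk}-J_{tk}(\theta_t)^\trans R_{tk}$ acting on the held-out innovation.
\end{enumerate}

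\textbf{Comparison.} Your argument is shorter and needs less smoothness. The paper's keeps the two error sources in separate displays and exposes the curvature decomposition $-\widehat{\mathcal I}_{tk}+Z_{tk}$, which it reuses in its misspecification analysis.

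\textbf{One bookkeeping correction.} You should not condition on the global event $\mathcal G_n$. Under cross-fitting, the held-out innovations of fold $k$ enter the training samples of the other folds, and hence enter $\mathcal G_n$. Conditioning on $\mathcal G_n$ would therefore distort their law. Instead, apply the conditional Gaussian tail bound on the fold-local event
$\{\norm{\widehat\Sigma^{(-k)}_{tk}-\Sigma_{tk}}_{\op}+\norm{\widehat\Omega^{(-k)}_{tk}-\Omega_{tk}}_{\op}\le\delta_n,\ \norm{\theta^0_{t,-k}-\theta_t}\le a_n\}$,
which is $\mathcal T_{tk}$-measurable. Then union-bound over $(t,k)$ and add $P(\mathcal G_n^c)$. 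The paper draws exactly this distinction.
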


\subsection{A simultaneous band for the coefficient path}

Let $e_1=(1,0,\ldots,0)^\trans$ and
\[
v_{\beta,t}=e_1^\trans\mathcal I_t^{-1}e_1,
\qquad
\widehat v_{\beta,t}=e_1^\trans\widehat{\mathcal I}_t^{\ominus}e_1.
\]

\begin{theorem}[Simultaneous Gaussian band]
\label{thm:bands-main}
In addition to Theorem~\ref{thm:path-main}, suppose there is a filtration $\mathcal F_{n,0}\subseteq\cdots\subseteq\mathcal F_{n,T_n}$ such that $S_t$ is $\mathcal F_{n,t}$-measurable, $\mathcal I_t$ is $\mathcal F_{n,t-1}$-measurable, and
\begin{equation}
\mathcal L(S_t\mid\mathcal F_{n,t-1})=N_d(0,\mathcal I_t)
\quad\text{almost surely}.
\label{eq:sequential-score-main}
\end{equation}
Assume $c_InI_d\preceq\mathcal I_t\preceq C_InI_d$ (the same constants as in (E4)) and
\[
b_n:=\sqrt{n\ell_n}\,a_n^2+(a_n+\delta_n)\ell_n\to0.
\]
For a fixed level $\alpha\in(0,1)$, let
\[
c_{1-\alpha,T_n}
=\Phi^{-1}\!\left(
\frac{1+(1-\alpha)^{1/T_n}}2
\right)
\]
and
\[
C_t=
\left[
\widehat\beta_t-c_{1-\alpha,T_n}\sqrt{\widehat v_{\beta,t}},
\widehat\beta_t+c_{1-\alpha,T_n}\sqrt{\widehat v_{\beta,t}}
\right].
\]
Then
\[
\sup_{P\in\mathcal P_n}
\left|
P\{\beta_t\in C_t\text{ for all }t\le T_n\}-(1-\alpha)
\right|\to0.
\]
\end{theorem}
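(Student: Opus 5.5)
The plan is to reduce the feasible band to an oracle band whose coverage is exactly $1-\alpha$, and then show that the feasible and oracle events differ only on a set of vanishing probability.

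\textbf{Oracle statistics.} Define the oracle studentized statistics
\[
Z_t=\frac{e_1^\trans\mathcal I_t^{-1}S_t}{\sqrt{v_{\beta,t}}},\qquad t\le T_n .
\]
By \eqref{eq:sequential-score-main}, and because $\mathcal I_t$ is $\mathcal F_{n,t-1}$-measurable, the conditional law of $Z_t$ given $\mathcal F_{n,t-1}$ is exactly $N(0,1)$. This law is free of the conditioning, so $Z_t$ is independent of $\mathcal F_{n,t-1}$. Since $Z_1,\dots,Z_{t-1}$ are $\mathcal F_{n,t-1}$-measurable, iterating the tower property from $t=T_n$ down to $t=1$ shows that $Z_1,\dots,Z_{T_n}$ are i.i.d.\ standard normal. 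Consequently
\[
P\{\max_t|Z_t|\le c\}=\{2\Phi(c)-1\}^{T_n},
\]
and at $c=c_{1-\alpha,T_n}$ this equals $1-\alpha$ exactly, for every $n$ and every $P$. This is the exact oracle statement.

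\textbf{Feasible-to-oracle gap.} Write $\widehat Z_t=(\widehat\beta_t-\beta_t)/\sqrt{\widehat v_{\beta,t}}$. On the event of Theorem~\ref{thm:path-main} where $\min_t\lambda_{\min}(\widehat{\mathcal I}_t)\ge c_In/2$, the safe inverse equals the true inverse. Combining \eqref{eq:information-rate-main} with $c_In\preceq\mathcal I_t$ gives
\[
\max_t|\widehat v_{\beta,t}/v_{\beta,t}-1|=O_{\PP}^{\mathrm{unif}}(a_n+\delta_n),
\]
because $v_{\beta,t}\asymp n^{-1}$ and the inverse map is Lipschitz on the good event. Then
\[
\widehat Z_t-Z_t=\frac{e_1^\trans(\widehat\theta_t-\theta_t-\mathcal I_t^{-1}S_t)}{\sqrt{\widehat v_{\beta,t}}}+Z_t\Big(\sqrt{v_{\beta,t}/\widehat v_{\beta,t}}-1\Big).
\]
Using \eqref{eq:path-expansion-main} and $\widehat v_{\beta,t}^{-1/2}\asymp\sqrt n$, the first term is $O_{\PP}^{\mathrm{unif}}[\sqrt n\,a_n^2+(a_n+\delta_n)\sqrt{\ell_n}]$ uniformly in $t$. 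Since $\max_t|Z_t|=O_{\PP}(\sqrt{\ell_n})$ by the Gaussian maximal inequality ($\ell_n\asymp\log T_n$), the second term is $O_{\PP}^{\mathrm{unif}}[(a_n+\delta_n)\sqrt{\ell_n}]$. Hence
\[
\epsilon_n:=\max_t|\widehat Z_t-Z_t|=o_{\PP}(\ell_n^{-1/2})\times O(b_n)
\]
in the sense that $\sqrt{\ell_n}\,\epsilon_n=O_{\PP}^{\mathrm{unif}}(b_n)\to0$. This is precisely why $b_n$ carries the extra factor $\sqrt{\ell_n}$ relative to \eqref{eq:path-expansion-main}.

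\textbf{Anti-concentration (the main obstacle).} The step I expect to be hardest is turning this $o(\ell_n^{-1/2})$ perturbation into a vanishing coverage error, because the critical value diverges like $\sqrt{2\log T_n}$. Fix $\epsilon=\varepsilon'/\sqrt{\ell_n}$. Coverage of the feasible band is sandwiched as
\[
P\{\max_t|Z_t|\le c-\epsilon\}-o(1)\;\le\;P\{\beta_t\in C_t\ \forall t\}\;\le\;P\{\max_t|Z_t|\le c+\epsilon\}+o(1).
\]
Because the $Z_t$ are exactly i.i.d.\ normal, this can be computed directly rather than through a generic Nazarov bound. We have
\[
\big|\{2\Phi(c\pm\epsilon)-1\}^{T_n}-\{2\Phi(c)-1\}^{T_n}\big|\le 2T_n\,\epsilon\,\phi(c-\epsilon)\cdot(\text{bounded factor}).
\]
Since $2T_n\{1-\Phi(c)\}\approx-\log(1-\alpha)$ and $\phi(c)\approx c\{1-\Phi(c)\}$, the product $T_n\phi(c)$ is of order $c\asymp\sqrt{\ell_n}$. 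Moreover $\phi(c-\epsilon)/\phi(c)=e^{c\epsilon-\epsilon^2/2}$ stays bounded because $c\epsilon=O(\varepsilon')$. The difference is therefore $O(\varepsilon')$, and letting $\varepsilon'\downarrow0$ after $n\to\infty$ gives the result.

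All bounds are uniform over $\mathcal P_n$, since the constants in (E1)--(E6), Theorem~\ref{thm:path-main}, and the exact oracle law are $P$-free. The supremum over $P$ of the coverage error therefore vanishes.
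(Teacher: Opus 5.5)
Your proposal is correct and follows the paper's proof essentially step by step. The shared steps are: exact i.i.d.\ $N(0,1)$ oracle statistics from the sequential Gaussian law; variance-ratio consistency from the information rate and the safe inverse; the same two-term decomposition of $\widehat Z_t-Z_t$, giving $\sqrt{\ell_n}\max_t|\widehat Z_t-Z_t|=O_{\PP}^{\mathrm{unif}}(b_n)$; and anti-concentration from the explicit law $\{2\Phi(x)-1\}^{T_n}$ of $\max_t|Z_t|$. The only cosmetic difference is in that last step: you bound the derivative of this law locally at the critical value using the Mills ratio, whereas the paper proves a uniform density bound $\sup_x f_T(x)\le C\sqrt{\log(2T)}$ and obtains Kolmogorov convergence at every $x$.
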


Condition \eqref{eq:sequential-score-main}, not cross-fitting alone, yields independent standardized leading scores and the exact \v{S}id\'{a}k calibration. Section~\ref*{sec:si-bands} proves that estimation and studentization errors are negligible uniformly over the full path.

\subsection{Change attribution in an exact Gaussian benchmark}

The change theory uses a separate experiment. At each date, one observes $n_{\mathrm c}$ independent copies $X_{a,t}=x_t+\nu\xi_{a,t}$, $a\le n_{\mathrm c}$, with iid $N_d(0,I_d)$ noise. Strength and composition correspond to a fixed known orthogonal decomposition $\R^d=\mathcal V_{\mathrm{str}}\oplus\mathcal V_{\mathrm{cmp}}$.

\begin{theorem}[Gaussian change guarantees and obstructions]
\label{thm:change-main}
Consider a piecewise-constant path with $K\ge1$ changes, minimum spacing $\Delta$, minimum jump $\varsigma_{\min}$, and three independent copies. Let the screening window be $h$, $T\ge4h$, $M_h=T-4h+1$, and define
\[
\lambda_\alpha
=\nu\left\{\sqrt d+\sqrt{2\log(3M_h/\alpha)}\right\},
\quad
\varepsilon_{\alpha,K}
=\frac{\nu}{\sqrt h}
\left\{\sqrt d+\sqrt{2\log(6K/\alpha)}\right\}.
\]
If
\[
\Delta\ge5h,
\qquad
\sqrt{h/2}\,\varsigma_{\min}>3\lambda_\alpha,
\qquad
\varsigma_{\min}\ge8\varepsilon_{\alpha,K},
\]
then the three-copy procedure defined in Section~\ref*{sec:si-change} detects all $K$ changes, localizes each preliminary estimate within $h$, and refines each location to error less than
\[
r_\alpha=
\frac{800\nu^2}{9\varsigma_{\min}^2}
\log(12K/\alpha)
\]
simultaneously with probability at least $1-\alpha$. It correctly labels a strength or composition block whenever that projected jump is either zero or at least $6\varepsilon_{\alpha,K}$.

Conversely, in the $n_{\mathrm c}$-copy Gaussian experiment the following necessary conditions hold in the respective hard submodels: the detection condition assumes both type-I and worst-case type-II errors at most $\alpha_0<1/2$; the attribution condition assumes complete-vector error at most $\alpha_0$; and the localization condition assumes Hausdorff error probability at most $\alpha_0$.
\begin{align*}
\frac{n_{\mathrm c}\Delta\varsigma^2}{\nu^2}
&\ge\log\{1+4M(1-2\alpha_0)^2\}
&&\text{for detection among $M$ blocks},\\
\frac{n_{\mathrm c}h\varsigma^2}{\nu^2}
&\ge\mathfrak q_{J,\alpha_0}
&&\text{for simultaneous attribution of $J$ changes},\\
\frac{n_{\mathrm c}r\varsigma^2}{\nu^2}
&\ge\mathfrak q_{J,\alpha_0}
&&\text{for localization error below $r/2$},
\end{align*}
where
\[
\mathfrak q_{J,\alpha_0}
=2\left[\Phi^{-1}\{(1-\alpha_0)^{1/J}\}\right]^2
=4\log J+O(\log\log J).
\]
\end{theorem}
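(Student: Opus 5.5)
The proof has two halves: an achievability argument for the three-copy procedure and information-theoretic lower bounds. For achievability, I would use the three independent copies to split the roles. Copy 1 is for screening, copy 2 for refinement, and copy 3 for attribution. Averaging each copy's $n_{\mathrm c}/3$ replicates makes the per-date noise $N_d(0,\nu^2 I_d)$ after absorbing the copy count into $\nu$.

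\textbf{Screening.} At each admissible centre $s$ (there are $M_h$ of them), form the CUSUM-type contrast
\[
D_s=\sqrt{h/2}\,(\bar X_{s+h:s+2h}-\bar X_{s-h:s}),
\]
i.e.\ the difference of two adjacent window means of length $h$. Under no change in the window, $D_s/\nu\sim N_d(0,I_d)$. Gaussian norm concentration, $\norm{Z}\le\sqrt d+\sqrt{2\log(1/p)}$ with probability $1-p$, plus a union bound over $M_h$ centres at level $\alpha/3$, gives $\max\norm{D_s}\le\lambda_\alpha$ on null windows. At a true change, $\norm{D_s}\ge\sqrt{h/2}\,\varsigma_{\min}-\lambda_\alpha>2\lambda_\alpha$. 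Using $\Delta\ge5h$ so that windows contain at most one change, thresholding at $\lambda_\alpha$ with local maxima and non-maximum suppression yields exactly $K$ detections, each within $h$ of a true change.

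\textbf{Refinement.} On copy 2, within each $h$-neighbourhood, run a one-dimensional least-squares change-point fit along the estimated jump direction. I would use the direction estimated from copy 3's window means; its error is at most $\varepsilon_{\alpha,K}$ by the same norm bound with a union over $K$ changes. The condition $\varsigma_{\min}\ge8\varepsilon_{\alpha,K}$ keeps the direction's inner product with the true jump at least a fixed fraction of $\varsigma_{\min}$. The standard argmax argument then applies: a one-sided random-walk drift of $c\varsigma^2$ per step against Gaussian fluctuations, handled by a maximal inequality over geometric shells. This gives location error below $C\nu^2\log(12K/\alpha)/\varsigma_{\min}^2$. I expect tracking constants to reach $800/9$ to be the main obstacle. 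I would first try Doob's inequality on the exponential martingale with drift fraction $3/4\cdot\ldots$, and adjust the window to make the constant explicit.

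\textbf{Attribution.} Project the copy-3 jump estimate onto $\mathcal V_{\mathrm{str}}$ and onto $\mathcal V_{\mathrm{cmp}}$, and threshold each projection at $3\varepsilon_{\alpha,K}$. Orthogonal projection does not increase the error, which is at most $\varepsilon_{\alpha,K}$ on the good event. A zero block is therefore never flagged, and a block of size at least $6\varepsilon_{\alpha,K}$ is always flagged. A union bound over the three stages gives probability $1-\alpha$.

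\textbf{Lower bounds.} For detection among $M$ blocks, use Ingster's mixture method. Compare the null with a uniform mixture over $M$ disjoint placements of a bump of size $\varsigma$ over length $\Delta$ along a fixed direction. The $\chi^2$ divergence is
\[
\chi^2=M^{-1}\bigl(e^{n_{\mathrm c}\Delta\varsigma^2/\nu^2}-1\bigr),
\]
and combining this with $\mathrm{TV}\le\tfrac12\sqrt{\chi^2}$ and the requirement $\mathrm{TV}\ge1-2\alpha_0$ gives the stated log bound.

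For attribution, take $J$ well-separated changes, each of which is independently either a pure strength jump or a pure composition jump of size $\varsigma$ with orthogonal directions. The problems decouple into $J$ independent binary Gaussian tests. The sufficient statistic for each is a difference of means over $h$ dates, whose two hypotheses are separated by $\sqrt{2n_{\mathrm c}h}\,\varsigma/\nu$ in standard deviations. The Bayes success probability is then $\Phi\bigl(\sqrt{n_{\mathrm c}h/2}\,\varsigma/\nu\bigr)$, and requiring the product over $J$ tests to be at least $1-\alpha_0$ yields $\mathfrak q_{J,\alpha_0}$.

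Localization is handled the same way, with two hypotheses per change: locations shifted by $r$, differing on $r$ dates.

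Finally, the asymptotic expansion $\Phi^{-1}(1-x/J)^2=2\log J+O(\log\log J)$ gives $\mathfrak q_{J,\alpha_0}=4\log J+O(\log\log J)$.
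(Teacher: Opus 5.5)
Your screening, attribution, and lower-bound arguments follow the paper's route. The roles of copies~2 and~3 are swapped, which is immaterial. The gap is the refinement step: it is the only source of the stated $r_\alpha$, you leave it open, and two ingredients are missing.

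First, the refinement copy must use the other copy's window means $\widehat\mu_{j,-},\widehat\mu_{j,+}$ themselves as fixed segment levels in the criterion $Q_j(k)$, not just the direction of $\widehat\delta_j$. Then, conditionally on the screening and anchor copies, $Q_j(k)-Q_j(\tau_j)$ for $k>\tau_j$ is a sum of iid $N(m_+,v_+^2)$ increments. Here $m_+=\norm{\delta_j-a_-}^2-\norm{a_+}^2$ and $v_+^2=4\nu^2\norm{\delta_j+a_+-a_-}^2$, with $a_\pm$ the anchor errors. The case $k<\tau_j$ is symmetric, with $m_-=\norm{\delta_j+a_+}^2-\norm{a_-}^2$. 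If you refit the levels on the refinement copy, you lose this iid-increment structure, and your drift argument depends on it. With fixed levels, projecting onto $\widehat\delta_j$ changes nothing, because each increment $\norm{X_t-\widehat\mu_{j,-}}^2-\norm{X_t-\widehat\mu_{j,+}}^2$ depends on $X_t$ only through $\inner{X_t}{\widehat\delta_j}$.

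Second, peeling over geometric shells will not produce $800/9$. The paper uses the positive-drift lemma $P\{\min_{r\le\ell\le N}S_\ell\le0\}\le2\exp\{-m^2r/(8v^2)\}$. It is proved by a Gaussian tail bound on $\{S_r\le rm/2\}$, together with Ville's inequality for the mean-one martingale $\exp(-2mT_n/v^2)$ of the later partial sums $T_n$, which would have to fall below $-rm/2$. With $\norm{a_\pm}\le\varepsilon_{\alpha,K}$ and $\varsigma_j\ge8\varepsilon_{\alpha,K}$, you get $m_\pm\ge\varsigma_j^2-2\varsigma_j\varepsilon_{\alpha,K}\ge\tfrac34\varsigma_j^2$ and $v_\pm^2\le\tfrac{25}{4}\nu^2\varsigma_j^2$. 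Hence $m_\pm^2/(8v_\pm^2)\ge9\varsigma_j^2/(800\nu^2)$. The two sides give $4\exp\{-9\varsigma_{\min}^2r/(800\nu^2)\}$, and a union bound over the $K$ changes at total level $\alpha/3$ gives exactly $r_\alpha$.

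Several smaller statements also need correcting.
\begin{itemize}
\item The $\log(6K/\alpha)$ in $\varepsilon_{\alpha,K}$ pays for a union bound over the $2K$ anchor means, each within $\varepsilon_{\alpha,K}$. So the jump estimate is within $2\varepsilon_{\alpha,K}$, not $\varepsilon_{\alpha,K}$. The thresholds $3\varepsilon_{\alpha,K}$ and $6\varepsilon_{\alpha,K}$ are calibrated to this margin: a zero block stays below $2\varepsilon_{\alpha,K}$ and a large block stays above $4\varepsilon_{\alpha,K}$. Your attribution conclusion therefore survives.
\item Your non-maximum suppression needs an explicit radius. The paper links retained indices within $2h$ and takes one maximizer per component; under $\Delta\ge5h$ this gives exactly one component per change.
\item In the attribution lower bound, the separation $\sqrt{2n_{\mathrm c}h}\,\varsigma/\nu$ you quote is that of the post-change block mean with a known pre-change level. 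It is not that of a difference of means, which would halve the squared separation.
\item The attribution hard submodel must also make only $h$ observations per copy informative. The paper does this by generating all other observations through a label-independent kernel; a long post-change segment would give a bound not expressed in terms of $h$.
\item Your localization submodel (single steps shifted by $r$, differing on $r$ dates) is valid. It in fact yields the threshold $2\mathfrak q_{J,\alpha_0}$, which implies the stated one. The paper's shifted pulses differ on $2r$ dates and give $\mathfrak q_{J,\alpha_0}$ exactly.
\item Drop the remark about averaging $n_{\mathrm c}/3$ replicates. The upper bound is stated for three copies, each with noise level $\nu$.
\end{itemize}
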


Theorem~\ref{thm:change-main} is exact for the Gaussian benchmark.
Section~\ref*{sec:si-transfer} gives the Le Cam transfer conditions (common
information, a shrinking local neighborhood, and a parameter-free
reconstruction kernel), and Proposition~\ref{prop:transfer-instance-main}
below verifies all of them, with explicit constants, for a fully specified
pinned-design replication experiment.

\subsection{Inference beyond Gaussian scores and strong identification}
\label{sec:robust-inference}

Two assumptions in Theorems~\ref{thm:path-main}--\ref{thm:bands-main} are
restrictive in applications: the exact sequential Gaussian score law
\eqref{eq:sequential-score-main}, and the information floor (E4) with its
honest pilot (E6). This subsection removes each in turn; proofs are in
Sections~\ref*{sec:si-robust} and \ref*{sec:si-weakid} of the Supplement.

For the band, the Gaussian law can be replaced by moment conditions plus a
computable leverage diagnostic. Write the standardized oracle score as
$Z_t=a_t^\trans\varepsilon_t$ with predictable unit-norm weights $a_t$, and
define the leverage $\mathrm{lev}_n=\max_{t\le T_n}\norm{a_t}_\infty$.

\begin{theorem}[The band without Gaussianity]
\label{thm:robust-band-main}
Replace the conditional Gaussian specification by: conditionally on
$\mathcal F_{n,t-1}$, the held-out innovation coordinates are independent,
centered, unit variance, uniformly sub-exponential, and independent of the
past. Assume (E1)--(E6), $\mathrm{lev}_n\{1+(\log T_n)^{3/2}\}\to0$, and
$\{\sqrt na_n^2+(a_n+\delta_n)\sqrt{\ell_n}\}\sqrt{\ell_n}\to0$. Then the
band of Theorem~\ref{thm:bands-main}, with the same \v{S}id\'{a}k critical
value, has simultaneous coverage converging to $1-\alpha$ uniformly over
$\mathcal P_n$.
\end{theorem}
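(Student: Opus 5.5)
The proof reuses the reduction behind Theorem~\ref{thm:bands-main} and replaces only the Gaussian step. The first step invokes the expansion \eqref{eq:path-expansion-main} of Theorem~\ref{thm:path-main}. Its proof uses only (E1)--(E6), conditional independence of held-out innovations, and sub-Gaussian-type maximal inequalities over $T_nK_f$ terms. Under uniformly sub-exponential coordinates, Bernstein's inequality replaces the Gaussian tail. Because the weights are predictable with bounded norm, the maximal bound over $t\le T_n$ picks up at most an extra $\ell_n/\sqrt n$ term, which is negligible since $\ell_n=o(n)$. With \eqref{eq:information-rate-main}, this gives
\[
\max_{t\le T_n}\Bigl|\frac{\widehat\beta_t-\beta_t}{\sqrt{\widehat v_{\beta,t}}}-Z_t\Bigr|
=O_{\PP}^{\mathrm{unif}}\bigl[\sqrt n a_n^2+(a_n+\delta_n)\sqrt{\ell_n}\bigr],
\qquad
Z_t=\frac{e_1^\trans\mathcal I_t^{-1}S_t}{\sqrt{v_{\beta,t}}}=a_t^\trans\varepsilon_t .
\]
By the rate assumption, this remainder times $\sqrt{\ell_n}$ tends to zero. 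The Šidák critical value satisfies $c_{1-\alpha,T_n}\asymp\sqrt{\log T_n}$, and the Gaussian maximum has an anti-concentration bound of order $\epsilon\sqrt{\log T_n}$ on an $\epsilon$-window. A sandwich argument therefore reduces the problem to showing
\[
\sup_P\bigl|P\{\max_t|Z_t|\le c\}-(1-\alpha)\bigr|\to0,\qquad c=c_{1-\alpha,T_n}.
\]

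The second step, the core, is a sequential Lindeberg replacement. Conditionally on $\mathcal F_{n,t-1}$, each $Z_t$ is a weighted sum of independent centered unit-variance coordinates with $\norm{a_t}=1$, so it has mean zero and unit variance. Write $\{\max_t|Z_t|\le c\}=\bigcap_t\{|Z_t|\le c\}$ and, going backward in $t$, replace each $Z_t$ by an independent $N(0,1)$ variable $G_t$. The tower property reduces the error at step $t$ to
\[
\bigl|E[\phi_{T}(Z_t)\mid\mathcal F_{n,t-1}]-E[\phi_{T}(G_t)]\bigr|,
\]
with $\phi_T=\ind\{|\cdot|\le c\}$ smoothed at scale $\epsilon$. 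For this conditional one-dimensional comparison I would use a non-uniform Berry--Esseen bound for sums of independent sub-exponential terms. The third-moment ratio is at most $C\sum_j|a_{tj}|^3\le C\,\mathrm{lev}_n$.

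The main obstacle is the accumulation over $T_n$ dates. A naive Berry--Esseen bound summed over $t$ gives $T_n\mathrm{lev}_n$, which does not vanish. The fix is to note that only tail events matter. Per date, the true difference is between $P(|Z_t|>c)$ and $P(|G_t|>c)=1-(1-\alpha)^{1/T_n}\asymp\alpha/T_n$. So I would use a moderate-deviation (Cramér-type) bound for self-normalized sub-exponential sums:
\[
\frac{P(|Z_t|>x\mid\mathcal F_{n,t-1})}{2\{1-\Phi(x)\}}=1+O\bigl(\mathrm{lev}_n(1+x^3)\bigr)\qquad\text{uniformly for }0\le x\le c .
\]
This bound holds because the cumulant ratio is controlled by $\max_j|a_{tj}|$ and $x\le c\lesssim\sqrt{\log T_n}$. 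Its range condition $x=o(\mathrm{lev}_n^{-1/3})$ is implied by $\mathrm{lev}_n(\log T_n)^{3/2}\to0$. Multiplying the conditional non-exceedance probabilities along the filtration, each factor is $1-(\alpha/T_n)\{1+o(1)\}$ uniformly. Their product converges to $1-\alpha$, since the $o(1)$ relative error multiplies a total tail mass of order $\alpha$. The required uniformity $O(\mathrm{lev}_n(1+x^3))$ uses exactly the hypothesis $\mathrm{lev}_n\{1+(\log T_n)^{3/2}\}\to0$.

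Finally, combine the two steps: the smoothing windows at $c\pm\epsilon_n$ give coverage $1-\alpha+o(1)$ for the feasible band. All constants depend only on the sub-exponential norm and on (E1)--(E6), which gives uniformity over $\mathcal P_n$.
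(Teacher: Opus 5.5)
Your proposal is correct and follows essentially the paper's route: Bernstein replacements for the Gaussian maximal bounds in the path expansion, the backward-induction (telescoping) bound over the filtration, and a per-date Cram\'er--Petrov moderate-deviation comparison whose relative error $\mathrm{lev}_n(1+c^3)$ multiplies a tail mass of order $1/T_n$, which sums to $\mathrm{lev}_n\{1+(\log T_n)^{3/2}\}\to0$. The only variation is that you handle the feasible--oracle discrepancy by applying the comparison at $c\pm\epsilon_n$ and then using the anti-concentration of the Gaussian maximum, whereas the paper bounds the ring probability directly by a union over dates with the lemma at both radii; one slip to fix is that the per-date factors are $1-\{1-(1-\alpha)^{1/T_n}\}\{1+o(1)\}$, not $1-(\alpha/T_n)\{1+o(1)\}$, since the latter product tends to $e^{-\alpha}$ rather than $1-\alpha$.
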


The leverage $\mathrm{lev}_n$ certifies that no single held-out coordinate
dominates any date's score, so per-date central limit behavior holds far
enough into the tails to control a union over $T_n$ dates. It is computable
from the design and reported in every experiment below. The Supplement also
gives a finite-sample conservative alternative (a leverage-adjusted
Bernstein band) whose critical value uses the computed
$\norm{a_t}_\infty$ directly, and Section~\ref{sec:sim-band} reports both
the robustness and its empirical boundary.

For weak identification, the appropriate tool is score inversion: at a
hypothesized $\theta^0$ the residualized score is exactly centered and
exactly standardized whatever the information level, so validity requires
neither an estimator, nor a pilot, nor (E4). Let
$\mathrm{AR}_t(\theta^0)=S_t(\theta^0)^\trans\mathcal I_t(\theta^0)^\dagger
S_t(\theta^0)$.

\begin{theorem}[Identification-robust confidence sets]
\label{thm:weakid-main}
Under the conditional Gaussian model with oracle covariances and
(E1)--(E3): (a) $P\{\mathrm{AR}_t(\theta_t)>\chi^2_{d,1-\alpha}\}\le\alpha$
exactly, at every $n$ and every information level, with equality under full
rank; (b) under the sequential condition, the sets with per-date level
$(1-\alpha)^{1/T_n}$ are exactly simultaneously valid; (c) projections
deliver valid sets for $\beta_t$ alone; (d) with estimated covariances and a
ridge dominating the covariance-estimation error (in the rate sense made
precise in the Supplement), validity holds asymptotically uniformly,
without (E4) or (E6), and with exact asymptotic size under an information
floor.
\end{theorem}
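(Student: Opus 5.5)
Part (a) is the core and rests on one observation: at the true $\theta_t$ the residualized score is an exact linear image of the held-out Gaussian innovations. Under \eqref{eq:outcome-model}--\eqref{eq:report-model} with oracle covariances, the residualizers $R^Y_{tk}$ and $R^z_{tk}$ annihilate $X_{tk}\gamma_t$ and $U_{tk}\lambda_t$. Hence
\[
e_{tk}(\theta_t)=\begin{pmatrix}R^Y_{tk}\Sigma_{tk}^{1/2}\xi_{tk}\\ R^z_{tk}\Omega_{tk}^{1/2}\upsilon_{tk}\end{pmatrix}.
\]
Moreover $R^Y_{tk}\Sigma_{tk}^{1/2}$ is an orthogonal projection composed with an orthogonal matrix, because $L(\Sigma)\Sigma^{1/2}=I$ for principal roots. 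So conditionally on $\mathcal T_{tk}$ the vector $e_{tk}(\theta_t)$ is $N(0,P_{tk})$ with $P_{tk}$ the block-diagonal projection $\mathrm{diag}(M_{L_yX},M_{L_zU})$.

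The columns of $J_{tk}(\theta_t)$ already lie in the range of $P_{tk}$. By (E1) the Jacobian is $\mathcal T_{tk}$-measurable, and folds are conditionally independent. It follows that
\[
S_t(\theta_t)\mid\mathcal F\sim N_d\bigl(0,\textstyle\sum_k J_{tk}^\trans P_{tk}J_{tk}\bigr)=N_d(0,\mathcal I_t(\theta_t)),
\]
where $\mathcal F$ denotes the joint conditioning. Write $\mathcal I_t=V\Lambda V^\trans$ with rank $\rho\le d$. Then $S_t$ lies almost surely in $\col(\mathcal I_t)$, and $\mathrm{AR}_t=\norm{\Lambda_+^{-1/2}V_+^\trans S_t}^2\sim\chi^2_\rho$ exactly, conditionally and hence unconditionally. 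Since $\chi^2_\rho$ is stochastically dominated by $\chi^2_d$, we get $P\{\mathrm{AR}_t>\chi^2_{d,1-\alpha}\}\le\alpha$, with equality iff $\rho=d$.

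None of this uses an estimator, a pilot, or (E4). (E3) enters only to guarantee that $J_{tk}(\theta)$ is well defined on the parameter space, so that the set $\{\theta:\mathrm{AR}_t(\theta)\le\chi^2\}$ is a measurable, well-defined inversion.

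For (b), I invoke the sequential structure of Theorem~\ref{thm:bands-main}. Conditionally on $\mathcal F_{n,t-1}$, the statistic $\mathrm{AR}_t(\theta_t)$ has the exact (pivotal-dominated) law above, so the coverage events $E_t=\{\mathrm{AR}_t(\theta_t)\le\chi^2_{d,p}\}$, with $p=(1-\alpha)^{1/T_n}$, satisfy $P(E_t\mid\mathcal F_{n,t-1})\ge p$ almost surely. Iterated conditioning, peeling off $t=T_n$ first, gives $P(\cap_tE_t)\ge p^{T_n}=1-\alpha$, with equality under full rank. Part (c) is immediate: the projection $\{\beta:\exists\eta,(\beta,\eta)\in\mathrm{CS}_t\}$ contains $\beta_t$ whenever $\theta_t\in\mathrm{CS}_t$.

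Part (d) is the main obstacle. With $\widehat\Sigma,\widehat\Omega$ training-measurable, the feasible residual is $\widehat e_{tk}(\theta_t)=\widehat R\,\Sigma^{1/2}\xi$, still exactly centered and still free of mean nuisances by residualization. Its covariance, however, is $\widehat R\Sigma\widehat R^\trans=P+\Delta_{tk}$ with $\norm{\Delta_{tk}}_{\op}\lesssim\delta_n$ on the (E5) event. So conditionally $\widehat S_t\sim N(0,\widehat J^\trans(P+\Delta)\widehat J)$.

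I would use the ridge-regularized statistic with information $\widehat{\mathcal I}_t+\tau_nnI$. The plan is to choose $\tau_n$ so that $\widehat J^\trans\Delta\widehat J\preceq\tau_nnI$; this is possible because $\norm{\widehat J}_{\op}^2\le Cn$ by (E3) and $\tau_n/\delta_n\to\infty$. Then the true score covariance is dominated by the ridged matrix, and the statistic is stochastically dominated by $\chi^2_d$ via a Gaussian comparison of quadratic forms. The comparison says that if $\mathrm{Cov}\preceq V$ then $S^\trans V^{-1}S$ is dominated by a weighted $\chi^2$ with weights at most one, hence by $\chi^2_d$ at upper quantiles; I would check this carefully, because dominance for weights at most one requires the standard Anderson-type argument.

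Uniformity over $\mathcal P_n$ and the union over dates then follow because the (E5) event has probability tending to one uniformly. Exact asymptotic size under an information floor follows because $\tau_nn\,\mathcal I_t^{-1}\to0$ once $\mathcal I_t\succeq c_InI$.
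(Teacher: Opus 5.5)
Parts (a)--(c) follow the paper's argument: exact annihilation of the mean nuisances, $S_t(\theta_t)\mid\mathcal F_{t-1}\sim N_d\{0,\mathcal I_t(\theta_t)\}$ at any rank, $\chi^2_{\rank}$ domination, the backward-induction product bound, and projection.

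One correction to your justification in (a). Gaussianity of the \emph{sum} over folds does not follow from ``$J_{tk}$ is $\mathcal T_{tk}$-measurable and the folds are conditionally independent.'' It holds because, with oracle covariances, $R^Y_{tk}$, $R^z_{tk}$ and $J_{tk}(\theta_t)$ depend only on the truth, the designs, and the true covariances. All of these are $\mathcal F_{t-1}$-measurable by (E1). So every fold's innovations are integrated against matrices fixed by one common sigma-field.

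That distinction is exactly where (d) breaks. Your key step, ``conditionally $\widehat S_t\sim N(0,\widehat J^\trans(P+\Delta)\widehat J)$,'' holds only fold by fold, given $\mathcal T_{tk}$. Under the paper's cross-fitting, $\widehat\Sigma^{(-k)}_{tk}$ and $\widehat\Omega^{(-k)}_{tk}$ are computed from the complementary folds, and hence so are $\widehat R_{tk}$ and $\widehat J_{tk}$. Those complementary folds' held-out innovations are precisely the ones driving the other folds' scores. There is therefore no sigma-field given which all residualizers are fixed while all date-$t$ innovations remain Gaussian. Both $\widehat S_t=\sum_k\widehat S_{tk}$ and $\widehat{\mathcal I}_t$ are nonlinear functions of the same innovations, so your covariance-domination argument has no exact Gaussian law to act on.

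The repair is to compare with the oracle score, as the paper does. Write $\widehat S_t(\theta_t)=S_t(\theta_t)+\Delta_S$ and $\widehat{\mathcal I}_t(\theta_t)=\mathcal I_t(\theta_t)+\Delta_I$. Here $S_t(\theta_t)$ is exactly $N_d(0,\mathcal I_t)$ given $\mathcal F_{t-1}$. Per-fold conditional Gaussian bounds plus a union bound are insensitive to cross-fold dependence, and they give $\norm{\Delta_S}=O_{\PP}^{\mathrm{unif}}(\delta_n\sqrt{n\ell_n})$ and $\norm{\Delta_I}_{\op}=O_{\PP}^{\mathrm{unif}}(\delta_nn)$.
\begin{itemize}
\item If $\varrho_nn\ge2\norm{\Delta_I}_{\op}$, then $(\widehat{\mathcal I}_t+\varrho_nnI_d)^{-1}\preceq(\mathcal I_t+\tfrac12\varrho_nnI_d)^{-1}$.
\item Apply $(x+\delta)^\trans A(x+\delta)\le(1+\epsilon)x^\trans Ax+(1+\epsilon^{-1})\delta^\trans A\delta$, with $\delta^\trans A\delta\le\norm{\Delta_S}^2/(\varrho_nn)=O_{\PP}(\delta_n^2\ell_n/\varrho_n)$.
\item This reduces the claim to (a) plus Slutsky.
\end{itemize}

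This route needs the additional rate $\delta_n^2\ell_n/\varrho_n\to0$, which your argument never invokes. That is a sign you were getting for free something the cross-fitted design does not supply. Your conclusion (domination on the (E5) event with only $\varrho_n\gtrsim\delta_n$) would be correct only if the covariances were estimated from a sample independent of every date-$t$ held-out innovation. That is not the design the theorem covers.

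Finally, the domination you flag as needing an Anderson-type argument is pointwise. If $S=C^{1/2}\zeta$ and $C\preceq V$, then $S^\trans V^{-1}S=\zeta^\trans C^{1/2}V^{-1}C^{1/2}\zeta\le\zeta^\trans\zeta$. This is the same bound the paper uses for the ridge version in (a).
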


The division of labor is prespecified and used in both empirical sections:
Wald bands where the information-floor diagnostic
$\lambda_{\min}(\widehat{\mathcal I}_t)/n$ clears its threshold, score sets
where it does not, the diagnostic itself reported always, and the realized
performance of this switching rule evaluated in
Section~\ref{sec:sim-weakid}.

\subsection{Misspecified charts and bounded reporter bias}
\label{sec:pseudo-main}

The chart and the reporter-bias design are declared commitments, so the
theory must state what happens when they fail by bounded amounts. Both
failures enter the report-channel mean and propagate through one fixed
linear map; Section~\ref*{sec:si-pseudo} proves the following and the
replication code verifies each display numerically.

\begin{proposition}[Pseudo-true targets, inference for them, sensitivity]
\label{prop:pseudo-main}
(a) Under off-chart composition $m^\dagger=\Psi\eta^\dagger+w$, the
population score has a root
$\theta^*=\theta^\dagger+\mathcal I_c^{-1}(0;\,Q^\trans R_zAw)+O(\norm w^2)$,
which for the report channel is exactly the information projection of the
truth onto the chart. With $\theta^*$ as the target, the score-inversion
statements of Theorem~\ref{thm:weakid-main} remain exact, and the
Wald-type path and band statements hold with an additional coverage error
of order $O(\norm w)$, vanishing as the misspecification shrinks and
measured directly in the replication suite: the declared-chart analysis
estimates its chart's best approximation, with uncertainty statements for
that object that are exact for score inversion and first-order accurate
for Wald inference, and a report-channel residual statistic with
computable noncentrality diagnoses the misspecification. (b) Under a
common dyad bias $\norm c_\infty\le\delta$, the target shifts by
$\Lambda c$ with a design-computable passthrough matrix $\Lambda$; each
coordinate moves by at most $\delta\norm{\Lambda_{l\cdot}}_1$, contrasts
$a^\trans\theta$ by at most $\delta\norm{a^\trans\Lambda}_1$ (doubled
across change windows), and the breakdown value $\delta^*$, the smallest
common bias that could overturn a stated conclusion, is a closed-form
number reported alongside the conclusion. As $\delta\to\infty$ the
sensitivity regions recover the impossibility result of
Theorem~\ref{thm:identification-main}(d).
\end{proposition}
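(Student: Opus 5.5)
The proof is a perturbation argument in the linear (gravity) chart, where the report channel is exactly linear in $(\lambda,\eta)$. Both failures then become one additive perturbation of the report mean: off-chart composition adds $Aw$ and common dyad bias adds $c$. I would write the population score at fixed covariances as $\bar S(\theta)=E\,J(\theta)^\trans e(\theta)$ and expand it around the declared-model value $\theta^\dagger$. At $\theta^\dagger$ the outcome block is centered, and the report block of $\bar S$ equals $Q^\trans R_z(Aw)$ because $R_z$ annihilates $U$. The residualizer $R_z$ also removes the row-level part of $Aw$, since $C^\trans w$ can be taken zero after row-centering.

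\emph{Part (a).} I would apply the implicit function theorem to $\bar S(\theta)=0$ near $\theta^\dagger$. The derivative there is $-\mathcal I_c$, which is nonsingular under \eqref{eq:joint-identification-main} by Theorem~\ref{thm:identification-main}(c). This gives $\theta^*=\theta^\dagger+\mathcal I_c^{-1}(0;Q^\trans R_zAw)+O(\norm w^2)$. The remainder is controlled by the second-derivative bounds in (E3), since the outcome channel is nonlinear in $\eta$. Restricted to the report channel, the equation $Q^\trans(R_zAw - Q(\eta-\eta^\dagger))=0$ is exactly the normal equation of GLS projection. This yields the information-projection interpretation with no remainder, because the chart is linear.

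Score-inversion exactness follows because, at $\theta^*$, the residualized score is a linear image of Gaussian noise plus a mean term. The key step is to \emph{redefine} the score-inversion target as the parameter making the full residual mean zero. I expect this to be the main obstacle: the outcome block has no misspecification, so a common root need not zero the entire mean vector, only $J^\trans$ times it. I would therefore argue exactness for the projected statistic with the noncentral remainder lying in $\ker J^\trans$, which $\mathrm{AR}$ ignores. If that fails, I would fall back to the first-order statement.

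For the Wald statements, I would rerun the expansion of Theorem~\ref{thm:path-main} around $\theta^*$. Then $\widehat\theta_t-\theta^*_t=\mathcal I_t^{-1}S_t^*+$ (old remainder) $+O(\norm w)$ bias from the information mismatch. That bias feeds into Theorem~\ref{thm:bands-main}'s argument as an $O(\norm w)$ coverage shift, using the Lipschitz continuity of Gaussian probabilities. The residual statistic $\norm{M_{[L_zU,Q]}L_z z}^2$ is noncentral $\chi^2$ with noncentrality $\norm{M_{[L_zU,Q]}L_zAw}^2$, a standard Gaussian fact.

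\emph{Part (b).} Repeating the computation with $c$ in place of $Aw$ gives an exactly linear shift $\Lambda c$, with $\Lambda=\mathcal I_c^{-1}\binom{0}{Q^\trans R_z}$ to first order. Hölder's inequality gives the coordinate and contrast bounds $\delta\norm{a^\trans\Lambda}_1$. Across change windows the two independent biases contribute additively, which doubles the bound. The breakdown value is $\delta^*=|a^\trans\widehat\theta|/\norm{a^\trans\Lambda}_1$, or its analogue with the band margin, solved in closed form. For the limit $\delta\to\infty$, the sensitivity set $\{\Lambda c\}$ is a nonzero linear image, hence unbounded. I would cite Theorem~\ref{thm:identification-main}(d) to show that its span contains every composition direction, because the common-bias column absorbs $A\dot m$.
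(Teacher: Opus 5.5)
Your proposal follows the paper's proof essentially step for step. The shared steps are: the implicit function theorem applied to the population score, whose value at $\theta^\dagger$ is $(0;Q^\trans R_zAw)$; the exact GLS normal equation for the report-only subproblem; exact score-inversion validity because $J(\theta^*)^\trans Ee(\theta^*)=0$ forces $S_t(\theta^*_t)=J^\trans R\varepsilon$; a curvature-mismatch argument for the $O(\norm w)$ Wald error; the noncentral $\chi^2$ residual diagnostic; and the passthrough, H\"older, window-doubling, and $\mathrm{range}(\Lambda)$ arguments for part (b).

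The score-inversion step is exactly the paper's step (i). So you should drop the literal ``full residual mean zero'' redefinition and the fallback; neither is needed.

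The remaining differences are cosmetic. You control the remainder through (E3) rather than the softmax Hessian constants. You also take the derivative at $\theta^\dagger$ to be exactly $-\mathcal I_c$, where the paper writes $-\mathcal I_c+O(\norm w)$. Yours is correct for the linear chart, since the report block of $J$ is constant and the outcome block is centered there.
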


\subsection{Change inference: benchmark limits and an observational test}
\label{sec:change-main-restructured}

The three-copy procedure and its finite-sample guarantee
(Theorem~\ref{thm:change-main}, upper part) are benchmark results: they
calibrate what is achievable, and Section~\ref*{sec:si-simchange} of the
Supplement measures how conservative their constants are. The minimax obstructions (lower part)
bind every procedure, present or future, and are the part of the change
theory that belongs with the identification analysis. For a fully specified
designed experiment, Proposition~\ref{prop:transfer-instance-main}
transfers both directions with explicit constants. None of this provides
inference on the path the motivating applications observe. The following
does, by inverting Theorem~\ref{thm:weakid-main}; the proof and the
computational protocol are in Section~\ref*{sec:si-obsdetect}.

\begin{theorem}[Score-inversion change inference, observational experiment]
\label{thm:obsdetect-main}
Under the conditions of Theorem~\ref{thm:weakid-main} with per-date level
$(1-\alpha)^{1/T_n}$: (a) the constancy test that rejects when
$\min_{\theta^0}\max_{t}\mathrm{AR}_t(\theta^0)$ exceeds
$\chi^2_{d,(1-\alpha)^{1/T_n}}$ has exact size at most $\alpha$ on the
observational path, with no pilot and no information floor; (b) under a
single change, the accepted split dates form a confidence set containing the
true change date, and the accepted parameter pairs induce a confidence
region for the jump, each with probability at least $1-\alpha$; a change is
certified \emph{inconsistent with composition-only} when no accepted split
admits a pair with $\Delta_\beta=0$, an error of probability at most
$\alpha$; (c) all statements hold asymptotically with estimated
covariances, without (E4) or (E6).
\end{theorem}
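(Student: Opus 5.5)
The approach is to reduce every claim to the simultaneous validity of the per-date score-inversion sets in Theorem~\ref{thm:weakid-main}(b), through one fixed event. Let $c=\chi^2_{d,(1-\alpha)^{1/T_n}}$ and define
\[
\mathcal G=\{\mathrm{AR}_t(\theta_t)\le c\ \text{for all } t\le T_n\}.
\]
Theorem~\ref{thm:weakid-main}(a) gives exact centering and standardization at the true $\theta_t$ at every information level. The sequential condition makes the standardized scores $\mathcal I_t(\theta_t)^{\dagger/2}S_t(\theta_t)$ conditionally $N(0,P_t)$ given $\mathcal F_{n,t-1}$, with $P_t$ the projection onto the range of $\mathcal I_t$. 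Conditioning backwards through the filtration then yields $P(\mathcal G)\ge(1-\alpha)^{T_n\cdot(1/T_n)}=1-\alpha$, which is exactly part (b) of that theorem. After this, every claim is a deterministic consequence of $\mathcal G$.

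For (a), suppose the path is constant, $\theta_t\equiv\bar\theta$. On $\mathcal G$ we have
\[
\min_{\theta^0}\max_t\mathrm{AR}_t(\theta^0)\le\max_t\mathrm{AR}_t(\bar\theta)\le c,
\]
so the test rejects only on $\mathcal G^c$, which has probability at most $\alpha$. No estimator or pilot enters this argument, and no information floor is needed, because the pseudo-inverse handles rank deficiency.

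For (b), fix the single-change model with change date $\tau$ and pre- and post-change values $(\theta^{(0)},\theta^{(1)})$. A split $s$ is accepted if some pair $(\vartheta_0,\vartheta_1)$ satisfies $\mathrm{AR}_t(\vartheta_0)\le c$ for $t<s$ and $\mathrm{AR}_t(\vartheta_1)\le c$ for $t\ge s$. On $\mathcal G$, the true split $\tau$ with the true pair is accepted. Hence the split set covers $\tau$, and the set of accepted pairs, mapped to jumps $\vartheta_1-\vartheta_0$, covers the true jump. Both statements hold on the same event $\mathcal G$, so they hold jointly with probability at least $1-\alpha$.

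For the attribution certificate, suppose the truth is composition-only, so $\Delta_\beta=0$. On $\mathcal G$, the true split admits the true pair, which has $\Delta_\beta=0$. The certificate therefore cannot be issued on $\mathcal G$, and its error probability is at most $\alpha$.

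The point needing care is that the accepted set is a \emph{union} over splits and pairs, so I must not claim coverage split by split with separate levels. It suffices that a single event $\mathcal G$, which involves only the true parameters, controls every statement. I will also note that the minimization over $\theta^0$ and over splits is a deterministic operation on the statistic field, so it introduces no further multiplicity.

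For (c), I would replace $\mathcal G$ by its feasible analogue built with the ridge-regularized estimated covariances. Theorem~\ref{thm:weakid-main}(d) gives asymptotic simultaneous validity uniformly over $\mathcal P_n$ without (E4) or (E6). Parts (a) and (b) then follow by the same deterministic implications, with $1-\alpha$ replaced by $1-\alpha-o(1)$ uniformly.

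The main obstacle is making sure that part (d) of Theorem~\ref{thm:weakid-main} controls the \emph{simultaneous} event over all $T_n$ dates, not only each date separately. If it only gives per-date asymptotic validity, I would redo its argument with a union-bound remainder. This requires the ridge to dominate the covariance error by a margin of order $\ell_n$, uniformly over $t$, so that the maximal perturbation of $\mathrm{AR}_t$ at the true parameter is $o_{\PP}(1)$ relative to the gap between $c$ and the corresponding $\chi^2_d$ quantile. I would then use anti-concentration of the maximum of independent $\chi^2_d$ variables near $c$ to absorb that perturbation.
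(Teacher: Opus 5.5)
Your argument is the paper's: the paper also works on the single event that $\mathrm{AR}_t$ evaluated at the true path stays below $\chi^2_{d,(1-\alpha)^{1/T_n}}$ at every date. This is your $\mathcal G$, which the paper builds ``segmentwise'' in the single-change case. The paper bounds its probability by the backward-induction product argument behind Theorem~\ref{thm:weakid-main}(b), and reads off size, split and jump coverage, and the certificate's error bound as deterministic consequences of that event.

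Your worry about (c) is well placed. The paper settles (c) by invoking Theorem~\ref{thm:weakid-main}(d) ``uniformly over dates,'' although (d) is stated per date at a fixed level. Your repair is the right way to make that step rigorous. It has two parts:
\begin{enumerate}
\item A maximal-over-dates bound on the feasible perturbation of $\mathrm{AR}_t(\theta_t)$, under a ridge condition strengthened by a factor $\ell_n$. For example, $\delta_n^2\ell_n^2/\varrho_n\to0$ makes the cross term negligible against a critical value of order $\log T_n$.
\item The bounded density of the maximum of $T_n$ independent $\chi^2_d$ variables near its $(1-\alpha)$-quantile.
\end{enumerate}
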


Theorem~\ref{thm:obsdetect-main} is exact for the idealized minimax test
but makes no optimality claim. Its implementation is search-based:
acceptance is certified by exhibiting a feasible value, while rejection
records exhaustion of a prespecified multistart search, a numerical
statement and not a global optimality certificate; search failure can
move the constancy test only toward rejection, whose error rate is
measured under the null, and attribution returns a three-state verdict
(consistent, inconsistent by search, undetermined) so search limits are
reported, never converted into certificates. Realized size, power, and
verdict rates are in Section~\ref{sec:sim-obsdetect}, bounded above by
the benchmark obstructions.

\begin{openproblem*}
Exhibit a procedure for the observational outcome--report experiment
(realized lags, estimated dependent covariances, one path) whose
detection, attribution, and localization thresholds match the benchmark
lower bounds of Theorem~\ref{thm:change-main} up to constants, or prove
that a gap is unavoidable.
\end{openproblem*}

Finally, the change benchmark is connected to a buildable experiment.

\begin{proposition}[A verified transfer instance]
\label{prop:transfer-instance-main}
In the pinned-design replication experiment (fixed designs across dates
and replicates, linear chart, known covariances, prespecified exposure
basket $\bar y$ chosen so that $H^\trans r=0$, and a local path of radius
$\varpi_n$ with $A_nT_n\bar n\varpi_n^4\to0$, where $A_n$ is the number of
independent replicate panels and $\bar n=N+2|\mathcal E|$ the per-replicate
dimension; the Supplement writes the radius as $b_n$), conditions
(C1)--(C3) hold with explicit
constants ($\nu_n=1$, $\mathcal I_0$ block diagonal and computable, and
quadratic constant $L_\star$ given in the Supplement), and an exact
parameter-free reconstruction kernel exists. Consequently the three-copy
guarantees and the minimax obstructions of Theorem~\ref{thm:change-main}
apply to that experiment with deficiency error at most
$\rho_n=(L_\star/2)(A_nT_n\bar n\varpi_n^4)^{1/2}\to0$, with jumps measured in
the $\mathcal I_0$ metric.
\end{proposition}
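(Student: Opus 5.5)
The plan is to reduce the pinned-design replication experiment to a Gaussian shift experiment and then invoke the Le Cam transfer conditions (C1)--(C3) of Section~\ref*{sec:si-transfer}. With designs, covariances and the linear chart fixed across dates and replicates, each replicate at date $t$ is a Gaussian vector of dimension $\bar n=N+2|\mathcal E|$. Its mean is $\mu(\theta_t)$ plus the profiled nuisance terms, and its covariance is known. Because the chart is linear, the report mean is exactly linear in $\eta$. The only nonlinearity is the outcome mean $\beta g(\eta)$, through the softmax.

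The first step is to fix a centre $\theta_0$ and write $\mu(\theta)=\mu(\theta_0)+J_0(\theta-\theta_0)+\rho(\theta)$, where $J_0$ is the whitened residualized Jacobian \eqref{eq:jacobian} at $\theta_0$. By (E3)-type smoothness, $\norm{\rho(\theta)}\le (L_\star/2)\sqrt{\bar n}\,\norm{\theta-\theta_0}^2$, and this bound defines $L_\star$ explicitly.

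The block-diagonal information is obtained from \eqref{eq:joint-information-main}. The choice of basket $\bar y$ with $H^\trans r=0$ kills the off-diagonal block $\beta r^\trans H$. This leaves $\mathcal I_0=\diag(r^\trans r,\ \beta^2H^\trans H+K_c)$, so strength and composition are $\mathcal I_0$-orthogonal. This is what makes $\mathcal V_{\mathrm{str}}\oplus\mathcal V_{\mathrm{cmp}}$ an orthogonal decomposition in the $\mathcal I_0$ metric, as Theorem~\ref{thm:change-main} requires.

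Next comes the reconstruction kernel. The linearized experiment is $N(J_0\theta,I)$, replicated $A_n$ times per date. The sufficient statistic per replicate is $\mathcal I_0^{-1}J_0^\trans(\text{data})$, and its law is $N(\theta,\mathcal I_0^{-1})$. After the change of variables $x_t=\mathcal I_0^{1/2}\theta_t$, this is exactly the benchmark $X_{a,t}=x_t+\xi_{a,t}$ with $\nu_n=1$ and $n_{\mathrm c}=A_n$. The kernel is a fixed linear map, free of parameters, which gives (C3). Conversely, the benchmark data can be embedded back by adding independent Gaussian noise on the orthogonal complement of $\col(J_0)$. This gives a two-sided zero-deficiency equivalence between the linearized experiment and the benchmark.

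For (C2), the deficiency between the true experiment and the linearized one is bounded by the total-variation distance of the product Gaussian laws. That distance is at most $\tfrac12\{\sum_{t,a}\norm{\rho(\theta_t)}^2\}^{1/2}$, using $\mathrm{TV}\le\sqrt{\mathrm{KL}/2}$ and $\mathrm{KL}=\tfrac12\norm{\Delta\mu}^2$. On the local path of radius $\varpi_n$, this gives $\rho_n=(L_\star/2)(A_nT_n\bar n\varpi_n^4)^{1/2}\to0$. Finally, deficiency bounds transfer both the upper guarantees and the minimax lower bounds of Theorem~\ref{thm:change-main}, each with an additive error $\rho_n$ in error probabilities.

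The main obstacle is the explicit quadratic constant $L_\star$. It requires bounding the second derivative of $\beta g(\eta)=\beta W(\eta)\bar y$ uniformly on the $\varpi_n$-ball, including the softmax Hessian and the cross term $\beta\,G(\eta)$. It must also hold uniformly in $N$, with the $\sqrt{\bar n}$ factor tracked correctly. I would first bound softmax row Hessians by $\norm{\Psi}_\infty^2\norm{\bar y}_\infty$ entrywise, then sum over rows.

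A secondary check is that the sufficiency reduction discards nothing once the nuisance is profiled. Residualization is an exact projection, and the covariances are known, so this should hold.
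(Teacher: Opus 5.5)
Your proposal is correct and follows the paper's route. You linearize at the common pinned baseline, use $H^\trans r=0$ (together with (P1), which you need for $\mathcal I_0\succ0$ before inverting it) to get the date- and replicate-free block-diagonal information with $\nu_n=1$, pass to the Gaussian sequence via $\mathcal I_0^{1/2}\widetilde\theta$ and back by adding noise on $\col(J_0)^\perp$, and control the nonlinear remainder by KL and Pinsker; the paper does the same except that it verifies (C1)--(C3) and cites its general transfer theorem for the last three steps instead of re-deriving them inline.

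There are two slips. First, in the paper's labelling (C2) is the quadratic-remainder bound and (C3) is the rate condition, which is immediate from the local-path assumption; neither is the kernel or the TV bound. Second, $L_\star$ cannot be taken from (E3), which is an assumption with unspecified constants. It comes from $D^2_{\beta\eta}(\beta g)=G(\eta)$ and the softmax bound $|u^\trans D^2_\eta W_{ij}v|\le 8C_\Psi^2W_{ij}\norm{u}\norm{v}$ with $C_\Psi=\max_e\norm{\Psi_{e\cdot}}_2$. The weight $W_{ij}$ together with $\sum_jW_{ij}=1$ keeps each row's Hessian of $g$ free of $N$, and stacking the $N$ rows gives the $\sqrt N$ in $L_\star$.
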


The instance is a designed replication, with repeated independent panels
measured against a fixed reference basket, and not the observational
autoregression; its role is that every hypothesis of the benchmark is
checkable before data collection for at least one fully specified
mechanism, with the orthogonality $H^\trans r=0$ a solvable design equation
verified to machine precision in the replication code.

\section{Observation boundaries}
\label{sec:boundaries}

The information conditions in Section~\ref{sec:theory} do not license arbitrary preprocessing of network reports. Four boundaries determine whether the target remains meaningful.

\paragraph{Censoring and structural zeros.}
With independent Gaussian reports, known positive scales, and known finite thresholds, censoring multiplies each mean-score contribution by a strictly positive information weight. It therefore preserves exact first-order rank, although severe censoring can make indispensable composition directions arbitrarily weak. With unknown scales or correlated censored reports, the full observed-data nuisance projection is required instead. A recorded zero that pools true zeros with positive flows below a detection limit is different: without a model for the below-limit positive distribution, the structural-zero probability and latent mean are only partially identified. The sharp sets are given in Section~\ref*{sec:si-boundaries}.

\paragraph{Report selection.}
If selection may depend without restriction on the unobserved report, two distinct latent-report laws can induce the same observed distribution. Identification then fails even in a Gaussian location family. Under a specified differentiable selection likelihood, vector composition is locally identified precisely when every nonzero linear combination of its observed-data score has a nonzero residual after projection onto the complete nuisance tangent space, that is, precisely when the profiled score covariance is positive definite. Selection probabilities cannot be estimated in a preliminary step and then ignored unless that operation is justified by the joint score calculation.

\paragraph{Node turnover and support change.}
Datewise inference is valid under an exogenous, known eligible support if the date-specific information condition is checked. Comparing row-normalized networks across changing supports requires an additional estimand. For any declared comparison or change-analysis horizon, the default in this paper is one support fixed in advance as the within-row intersection across all dates in that horizon; composition is renormalized on that fixed support, with entry and exit reported separately as an extensive-margin quantity. Off-support counterfactual weights are not parameters of the observed model.

\paragraph{Reporter-bias diagnostics.}
Paired discrepancies imply cycle restrictions under additive receiver/supplier bias. With known Gaussian covariance, the whitened residual cycle norm has an exact chi-square law. This is a specification test, not a validation theorem. With the error law held fixed, an unrestricted dyad-specific bias common to both reports cancels from every discrepancy and can absorb any admissible within-row composition perturbation. External validation, a third measurement channel, or a scientifically justified restriction is then required.

Repeated measurement is necessary but not sufficient: the data must contain variation that survives the declared nuisance design, censoring mechanism, selection law, and support definition. Table~\ref{tab:boundary} collects the complete boundary; Section~\ref{sec:worked-example} shows each failure as an exact computation, Section~\ref{sec:sim-obslayer} as numbers.

\begin{table}[t]
\centering
\small
\caption{The measurement boundary at a glance. ``Sensitivity'' = the
bounded-departure intervals of Proposition~\ref{prop:pseudo-main}.}
\label{tab:boundary}
\begin{tabular}{lll}
\toprule
measurement situation & status of $(\beta_t,\eta_t)$ & remedy / object\\
\midrule
outcomes only, one date & not identified (exact fiber) & repeated reports\\
paired reports, additive reporter bias & identified (globally) & --\\
\quad + unrestricted common dyad bias & not identified & restriction, validation, or sensitivity\\
\quad + common bias $\norm{c}_\infty\le\delta$ & set-identified & closed-form intervals, breakdown $\delta^*$\\
censoring, known thresholds and scales & identified; information decays & weights $\omega_m$; monitor $\sigma_{\min}$\\
recorded zeros pooling true zeros & partially identified & sharp sets for mass and mean\\
selection, ignorable or specified law & identified iff profiled score p.d. & check the observed-data score\\
selection, unrestricted nonignorable & not identified & design, not analysis\\
support change across a horizon & different estimand & common support + extensive margin\\
off-chart composition & pseudo-true target identified & information projection + diagnostic\\
\bottomrule
\end{tabular}
\end{table}

\section{Monte Carlo evidence}
\label{sec:simulation}

All experiments generate both outcomes and primitive mirror reports from
the joint experiment of Section~\ref{sec:model}; no simulated network path
is treated as observed. The base design is the gravity chart of
Section~\ref{sec:charts-main} scaled up: complete zero-diagonal support,
mirror double reports with pair correlation $\rho=0.5$, row-centered dyadic
covariates, outcome nuisances containing an intercept, the own lag, and a
persistent node covariate, $n_y$ within-date outcome replications and $n_z$
report waves, so per-date information grows in both channels as the
theory's $n$ requires. Every number below is produced by the replication code; each table and
experiment states its own replication counts and per-cell Monte Carlo
standard errors, which differ across experiments (up to
$\bootMCSE$ points in Table~\ref{tab:coverage}, up to
$\obsPowMCSEmax$ points in the observational change experiment, whose
conditional cells also state effective counts). The safe-inverse fallback \eqref{eq:safe-inverse} triggered in
$\simSafePct\%$ of all fitted dates.

\subsection{False attribution: the theorem in simulation}
\label{sec:sim-false-attr}

Figure~\ref{fig:false-attribution} illustrates
Theorem~\ref{thm:plugin-main}. Strength is fixed at
$\beta_t=\expOneBeta$ for all $T=\expOneT$ dates ($N=\expOneN$,
$n_y=\expOneny$) and only composition moves at $t=21$ (mean row
total-variation distance $\expOneTV$; Figure~\ref{fig:dose-response} covers
the full dose range, with the population formula evaluated at the same
estimated baseline the plug-in uses). The static plug-in averages
$\expOneStatPre$ before the change and $\expOneStatPost$ after, and the
conventional two-standard-error comparison detects the nonexistent
strength change in $\expOnePlugFalse\%$ of $R=\expOneR$ replications. The
concurrent plug-in attenuates to $\expOneConcPre$/$\expOneConcPost$, the
errors-in-variables denominator of part (c). Two estimators that use the
reports are unaffected: the report-only two-step tracks $\expOneROPre$
before and $\expOneROPost$ after, and the joint estimator
$\expOneJointPre$ and $\expOneJointPost$. For the joint estimator the
constancy verdict rejects only if no horizontal line fits inside the
simultaneous band; false detection then occurs in $\expOneJointFalse\%$ of
replications, and the calibrated band covers the constant path in
$\expOneJointCover\%$ (nominal $95\%$, MCSE $\expOneCoverMCSE$). Ignoring
the sampling error of $\widehat\eta$ reduces the report-only band to
$\expOneROCovNaive\%$ coverage against $\expOneROCovProp\%$ with
delta-method propagation, which the joint information matrix supplies
automatically; the controlled comparison of the two estimators across
report-information levels is in Section~\ref{sec:sim-jointgain}.

\begin{figure}[t]
\centering
\includegraphics[width=\textwidth]{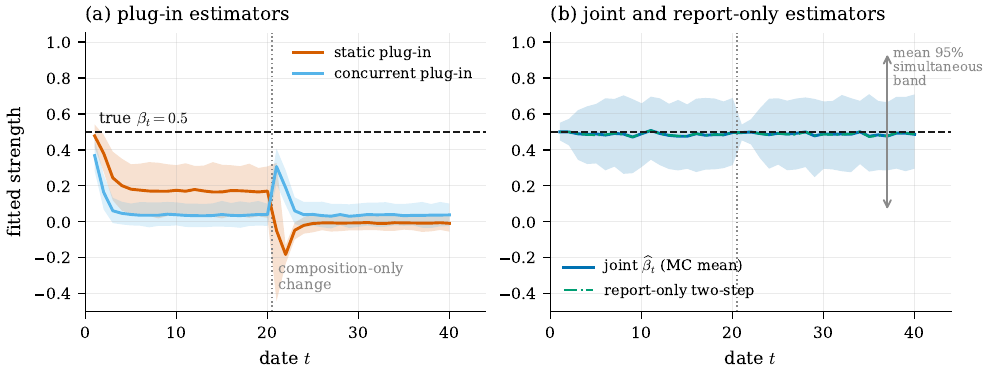}
\caption{Composition-only change at the dotted line; true strength constant
(dashed). (a) Plug-in paths (MC mean, 10--90\% bands over $\expOneR$
replications): the static plug-in shows a spurious strength decline
(Theorem~\ref{thm:plugin-main}b), the concurrent plug-in attenuates
(Theorem~\ref{thm:plugin-main}c). (b) The joint estimator and the
report-only two-step on the same scale (MC mean, 10--90\% band), with the
average calibrated $95\%$ simultaneous band width marked.}
\label{fig:false-attribution}
\end{figure}

\subsection{Calibration of the simultaneous band}
\label{sec:sim-band}

Table~\ref{tab:coverage} reports simultaneous coverage of the $\beta$ path
($T=25$, $N=18$) for the calibrated feasible estimator of
Section~\ref{sec:estimation}: polished training pilots, variance
re-evaluated at the estimate, and the fold-disagreement studentization
factor (average $\widehat\gamma=\calGamma$). The oracle-score rows isolate
the \v{S}id\'{a}k calibration: $\covOracleGauss\%$ under Gaussian
innovations and $\covOracleCexp\%$ under the centered-exponential family
covered by Theorem~\ref{thm:robust-band-main}; the $t_5$ column lies
outside that theorem's sub-exponential hypothesis and is reported as an
out-of-class check ($\covOracleTfive\%$ at this information level). The
calibrated feasible rows attain $\covCalEightGauss\%$ at the base design
and $\covCalTwoFourGauss\%$ when information triples in both channels
under Gaussian innovations; under centered-exponential innovations the
base design undercovers at $\covCalEightCexp\%$ and reaches
$\covCalTwoFourCexp\%$ at the tripled design, so the asymptotic claim, not
a finite-sample one, is what the non-Gaussian feasible rows support. The
mechanism decomposes into two parts: without the studentization factor,
the coverage gap decreases at the expansion's $n^{-1/2}$ rate
($\gapRawEight$ to $\gapRawTwoFour$ points, ratio
$\gapRawRatio\approx\sqrt3$); the studentization absorbs most of the
remainder, leaving $\gapEight$ points at the base design and slight
overcoverage at the tripled design. A fixed-design parametric bootstrap at
the base design gives $\bootCov\%$ ($B=\bootB$), no improvement over the
studentized band; this attributes the residual base-design gap to
cross-replication design variation and not to the critical value.
Consistently, Figure~\ref{fig:coverage-curve} shows the gap closing as
information grows, with the ladder reaching the nominal level at the
second rung ($\gridCovA$, $\gridCovB$, $\gridCovC$, $\gridCovD\%$). The
requirement that information grow in both channels is verified directly:
growing the outcome channel alone (the $n_y{=}24$, $n_z{=}1$ row) worsens
coverage to $\covUnmatchedGauss$--$\covUnmatchedCexp\%$, because the bands
narrow while report-channel pilot error is unchanged; this is the joint
information floor of condition (E4). The stress rows shrink per-date information to $n_y=1$, $N=10$, $T=60$:
Gaussian coverage is unchanged while skewed innovations undercover
($\covStressSidakReal\%$), locating the empirical boundary of the
leverage condition of Theorem~\ref{thm:robust-band-main}; the
leverage-adjusted Bernstein band, computed with the per-date design
leverage (mean $\norm{a_t}_\infty=\stressKappaMean$, maximum
$\stressKappaMax$), restores validity conservatively at
$\covStressBernReal\%$, at a width cost of the factor $\bernWidthRatio$
on average ($\bernWidthRatioMax$ at the maximum-leverage date).

\begin{table}[t]
\centering
\caption{Simultaneous coverage of the $\beta$ path (nominal $95\%$;
$R=\calR$ for calibrated rows and $R=\bootR$ for the bootstrap row;
per-cell MCSEs at most $\covCalMCSEmax$ points in the calibrated rows and
at most $\bootMCSE$ points elsewhere; width in parentheses where
informative). The $t_5$ column is outside
Theorem~\ref{thm:robust-band-main}'s hypotheses (out-of-class check).}
\label{tab:coverage}
\small
\begin{tabular}{lccc}
\toprule
& Gaussian & $t_5$ & centered exp.\\
\midrule
oracle score, $n_y=8$ & $\covOracleGauss$ & $\covOracleTfive$ & $\covOracleCexp$\\
calibrated feasible, $n_y=8$, $n_z=1$ & $\covCalEightGauss$ ($\widthCalEightGauss$) & $\covCalEightTfive$ & $\covCalEightCexp$\\
calibrated feasible, $n_y=24$, $n_z=3$ & $\covCalTwoFourGauss$ ($\widthCalTwoFourGauss$) & $\covCalTwoFourTfive$ & $\covCalTwoFourCexp$\\
\quad outcome channel only, $n_y=24$, $n_z=1$ & $\covUnmatchedGauss$ & $\covUnmatchedTfive$ & $\covUnmatchedCexp$\\
+ parametric bootstrap, $n_y=8$ & $\bootCov$ ($\bootWidth$) & -- & --\\
\midrule
stress, oracle score ($n_y{=}1$, $N{=}10$, $T{=}60$) & $\covStressGauss$ & -- & $\covStressSidakReal$\\
\quad + Bernstein band, computed leverage & -- & -- & $\covStressBernReal$\\
\bottomrule
\end{tabular}
\end{table}

\begin{figure}[t]
\centering
\includegraphics[width=0.5\textwidth]{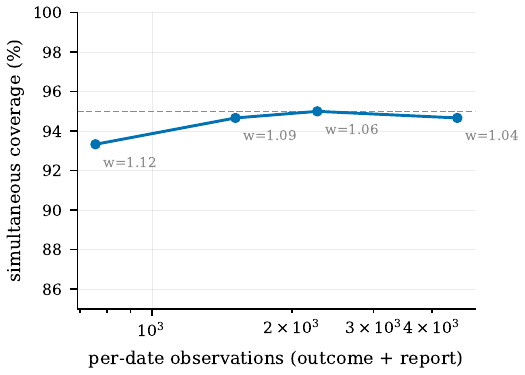}
\caption{Coverage--width ladder for the calibrated feasible band as
per-date information grows in both channels (Gaussian innovations,
$R=\gridR$ per point; $w$ = mean band width).}
\label{fig:coverage-curve}
\end{figure}

\subsection{Weak identification: validity and cost}
\label{sec:sim-weakid}

Table~\ref{tab:weakid} evaluates Theorem~\ref{thm:weakid-main} under both
covariance regimes. The computation of every profiled quantity was
repaired for this revision: folds and training-covariance inputs are drawn
once per replication and frozen, so the profiled objective is a
deterministic function of the hypothesized value; the inner minimization
is Gauss--Newton with backtracking and multistart, run to a verified
tolerance; and grid scans warm-start from neighboring solutions. Across
all configurations the profile failed to converge in at most
$\widConvFailMax\%$ of evaluations. (The previous implementation redrew
folds inside the objective, recorded convergence failure in every
replication, and its projected-set columns were accordingly withdrawn and
recomputed; the single-evaluation score columns were unaffected.)
Validity: the score statistic with oracle covariances holds its level at
every information level ($\widWkBAROr$--$\widStrAAROr\%$ across all seven
configurations), as Theorem~\ref{thm:weakid-main}(a) requires, while Wald
coverage degrades where the information diagnostic is lowest, to
$\widWkAWaldB\%$ for $\beta$ at the weak $\beta{=}0.4$ design. (With the
repaired estimator the Wald degradation at these designs is moderate, not
the collapse reported by the earlier implementation, whose weak-design
fits predated the pilot polish; the repaired numbers supersede them.)
Feasibility has a measurable cost: with covariances estimated on training
folds at the hypothesized value, the joint score rows run
$\widWkCAR$--$\widWkAAR\%$ at these fold sizes, the covariance-estimation
error that the ridge of Theorem~\ref{thm:weakid-main}(d) absorbs at a cost
in power. Projected sets are conservative everywhere
($\widWkCProj$--$\widStrAProj\%$). The switching rule (Wald when
$\lambda_{\min}(\widehat{\mathcal I}_t)/n$ clears $\widFloor$, projected
score otherwise) behaves as designed below the floor, where it uses the
projected set and covers ($\widVwBSwitch\%$ at very weak reports), and
inherits Wald's dip when the diagnostic clears the floor at a design where
Wald nevertheless undercovers ($\widWkASwitch\%$ at weak $\beta{=}0.4$,
diagnostic $\widWkALam$ against floor $\widFloor$): the floor level is a
tuning choice, and the table makes its consequence visible. The costs of
the protection are length and power: median
projected length grows from $\widStrALen$ (strong reports) to $\widWkALen$
(weak), $\widVwCUnb\%$ of very-weak-report scans reach the scan boundary
and yield near-vacuous sets, and power against a $2$-unit local
alternative is $\widStrAPowTwo\%$ even under strong identification.

\begin{table}[t]
\centering
\caption{One-date coverage under weakening report information
($R=\widR$; nominal $95\%$; MCSE at most $2.1$ points per coverage cell
and $2.7$ per power cell). Score
columns: oracle covariances (exact theory) vs covariances estimated at the
hypothesized value (feasible). ``switch'' = prespecified Wald/score rule
(feasible); ``len'' = median projected length; ``pow$_2$'' =
projected-test rejection of a $2$-unit local alternative. All profiled
columns use the frozen deterministic objective with verified convergence.}
\label{tab:weakid}
\small
\begin{tabular}{lcccccccc}
\toprule
config & $\lambda_{\min}/n$ & Wald $\beta$ & \multicolumn{2}{c}{score joint} & score proj. & switch & len & pow$_2$\\
& & & oracle & feasible & & & &\\
\midrule
strong, $\beta{=}0.4$ & $\widStrALam$ & $\widStrAWaldB$ & $\widStrAAROr$ & $\widStrAAR$ & $\widStrAProj$ & $\widStrASwitch$ & $\widStrALen$ & $\widStrAPowTwo$\\
strong, $\beta{=}0.1$ & $\widStrBLam$ & $\widStrBWaldB$ & $\widStrBAROr$ & $\widStrBAR$ & $\widStrBProj$ & $\widStrBSwitch$ & -- & --\\
weak, $\beta{=}0.4$ & $\widWkALam$ & $\widWkAWaldB$ & $\widWkAAROr$ & $\widWkAAR$ & $\widWkAProj$ & $\widWkASwitch$ & $\widWkALen$ & $\widWkAPowTwo$\\
weak, $\beta{=}0.1$ & $\widWkBLam$ & $\widWkBWaldB$ & $\widWkBAROr$ & $\widWkBAR$ & $\widWkBProj$ & $\widWkBSwitch$ & -- & --\\
weak, $\beta{=}0$ & $\widWkCLam$ & $\widWkCWaldB$ & $\widWkCAROr$ & $\widWkCAR$ & $\widWkCProj$ & $\widWkCSwitch$ & $\widWkCLen$ & $\widWkCPowTwo$\\
very weak, $\beta{=}0.1$ & $\widVwBLam$ & $\widVwBWaldB$ & $\widVwBAROr$ & $\widVwBAR$ & $\widVwBProj$ & $\widVwBSwitch$ & -- & --\\
very weak, $\beta{=}0$ & $\widVwCLam$ & $\widVwCWaldB$ & $\widVwCAROr$ & $\widVwCAR$ & $\widVwCProj$ & $\widVwCSwitch$ & $\widVwCLen$ & $\widVwCPowTwo$\\
\bottomrule
\end{tabular}
\end{table}

\subsection{Joint versus two-step: where the outcome channel matters}
\label{sec:sim-jointgain}

The report-only two-step with propagated uncertainty is the strongest
simple alternative to the joint estimator, and at the base design it is
nearly as accurate, so the comparison must be made where the theory says
the estimators separate: Theorem~\ref{thm:identification-main}(c) gives
the outcome channel's composition information as
$\beta^2H^\trans P_r^\perp H$, which is negligible when report information
is strong and becomes the binding source when it weakens.
Table~\ref{tab:jointgain} varies the report noise scale at the base design
with $\beta=0.5$ and reports composition RMSE, band coverage, and band
width for both estimators. At the strong-report end the two estimators are
equivalent in practice (composition RMSE $\jgAEtaJ$ against $\jgAEtaR$)
and the simpler two-step is adequate. As report noise grows the joint
estimator's advantage appears where predicted and grows monotonically: the
composition RMSE reduction is $\jgEtaGainA\%$ at the strongest reports and
$\jgEtaGainD\%$ at the weakest ($\jgDEtaJ$ against $\jgDEtaR$), and the
joint $\beta$ band stays narrow ($\jgAWidJ$ to $\jgDWidJ$) while the
two-step band widens ($\jgAWidR$ to $\jgDWidR$). The coverage columns
document a second, structural point: the two weakest rows are the weak and
very-weak regimes of Table~\ref{tab:weakid} (the same noise scales), where
Wald bands fail for \emph{both} estimators; the prespecified protocol
switches to score-inversion sets there, and the score construction is
built on the joint model's score, which the two-step does not supply. The
recommendation is design-based and prespecified: compute the
report-channel information $\sigma_{\min}(Q)$ before outcomes are
examined; when reports are strong the two-step with propagated uncertainty
is a legitimate simplification, and when they are weak the joint
estimator both improves composition recovery and is the only one of the
two with valid (score-inversion) inference.

\begin{table}[t]
\centering
\caption{Joint estimator versus report-only two-step (propagated
uncertainty) as report information weakens ($\beta=0.5$, base design,
$R=\jgR$ per row; nominal $95\%$ simultaneous bands). RMSE is for the
composition path; width is the median simultaneous band width for
$\beta$.}
\label{tab:jointgain}
\small
\begin{tabular}{lcccccc}
\toprule
report noise & \multicolumn{2}{c}{$\eta$ RMSE} & \multicolumn{2}{c}{$\beta$ band cov.} & \multicolumn{2}{c}{$\beta$ band width}\\
$\sigma_{\mathrm{rep}}$ & joint & two-step & joint & two-step & joint & two-step\\
\midrule
$0.8$ & $\jgAEtaJ$ & $\jgAEtaR$ & $\jgACovJ$ & $\jgACovR$ & $\jgAWidJ$ & $\jgAWidR$\\
$1.6$ & $\jgBEtaJ$ & $\jgBEtaR$ & $\jgBCovJ$ & $\jgBCovR$ & $\jgBWidJ$ & $\jgBWidR$\\
$3.2$ & $\jgCEtaJ$ & $\jgCEtaR$ & $\jgCCovJ$ & $\jgCCovR$ & $\jgCWidJ$ & $\jgCWidR$\\
$6.4$ & $\jgDEtaJ$ & $\jgDEtaR$ & $\jgDCovJ$ & $\jgDCovR$ & $\jgDWidJ$ & $\jgDWidR$\\
\bottomrule
\end{tabular}
\end{table}

\subsection{Observational change inference}
\label{sec:sim-obsdetect}

Theorem~\ref{thm:obsdetect-main} is evaluated as a deployable procedure on
the observational experiment ($N=18$, $n_y=24$, $n_z=3$, $T=25$),
repaired for this revision in three ways: the split scan covers every
eligible split and the true change date enters only the scoring of
verdicts; attribution is evaluated over the entire accepted-split set with
a multistart constrained search; and verdicts take three states
(consistent with composition-only, inconsistent by search, undetermined).
Monotonicity of the minimax value in the date set implies the
accepted-split set is an interval, and the implementation certifies its
two endpoints (Section~\ref*{sec:si-obsdetect}).

Size. With oracle covariances the realized size of the constancy test is
$\obsSize\%$ ($R=\obsSizeR$, MCSE $\obsSizeMCSE$); rejection is a
certified-search statement, so this measures statistical plus search
error under the null, and it is statistically consistent with the nominal
$5\%$. The plug-in feasible variant (covariances estimated once on
training folds at a pooled pilot value) over-rejects at $\obsSizeFeas\%$
($R=\obsSizeFeasR$): estimated covariances without the ridge of
Theorem~\ref{thm:weakid-main}(d) cost real size, consistent with the
feasible score rows of Table~\ref{tab:weakid}; the ridge is the
prescribed remedy.

Power and localization. Detection rates are $\obsPowA\%$ at
$\Delta\beta=0.25$, $\obsPowB\%$ at $\Delta\beta=0.4$, and $\obsPowCmp\%$
for the composition reallocation ($R=\obsPowR$ each; per-cell MCSE at most
$\obsPowMCSEmax$ points). For the composition change ($\obsDetCmp$
detections) the accepted interval contains the true date in
$\obsSplitCov\%$ of detections with mean cardinality $\obsSplitW$ dates.
Small strength jumps are detected but poorly localized: the accepted
interval averages $\obsSplitWA$ dates at $\Delta\beta=0.25$ (covering the
true date in $\obsSplitCovA\%$ of $\obsDetA$ detections) and
$\obsSplitWB$ dates at $\Delta\beta=0.4$ ($\obsSplitCovB\%$ of
$\obsDetB$). Conditional-on-detection coverage may fall below $95\%$
without contradicting the theorem, whose guarantee is unconditional;
unconditionally the coverage events hold in at least $94\%$ of
replications in every case here.

Attribution. The composition-only change is correctly certified
consistent with composition-only in $\obsAttCmp\%$ of its detections
(MCSE $\obsAttCmpMCSE$ points on $\obsDetCmp$ detections); the false
certificate ``inconsistent with composition-only'' occurs in
$\obsFalseInc\%$, well within the theorem's $5\%$ bound, and
$\obsUndetCmp\%$ return undetermined. The multistart search over the
full accepted set repaired the single-start asymmetry of the earlier
implementation. Certification power against strength jumps is low at this
design ($\obsAttStrA\%$ at $\Delta\beta=0.25$, $\obsAttStrB\%$ at
$\Delta\beta=0.4$): the union quantifier that makes the composition
certificate valid also lets a small strength jump hide at an off-center
accepted split where a common $\beta$ fits both segments. This is the
single-path attribution difficulty the benchmark obstructions quantify,
now visible in a deployable procedure; the benchmark evaluation in
Section~\ref*{sec:si-simchange} bounds the possible improvement.

\subsection{The observation layer in numbers}
\label{sec:sim-obslayer}

Three short experiments give the Supplement's observation-model results
numerical content. Censoring: the information weight $\omega(a)$ of
Section~\ref*{sec:si-boundaries} matches the Monte Carlo score variance
to $\censOmegaErr$ uniformly over thresholds, and in the worked four-node
design $\sigma_{\min}(Q_{\mathrm{cen}})$ decays from $\censSminOpen$ to
$\censSminTwo$ at a two-standard-deviation threshold: rank preserved,
information declining, as the theory states. Pooled zeros: with a
detection limit pooling true zeros and small flows, the realized sharp
identified set for the latent mean is $[\zeroLo,\ \zeroHi]$ around a
truth of $\zeroTruth$, and the structural-zero mass is identified only up
to $[0,\ \zeroPzero]$. Nonignorable selection: when reporting probability
depends on the latent report, composition estimates acquire bias growing
from $\mnarBiasZeroA$/$\mnarBiasZeroB$ (none) to
$\mnarBiasTwoA$/$\mnarBiasTwoB$ at strong selection, the finite-sample
counterpart of the nonidentification result and the reason the selection
model is a declared commitment.

\section{An application to mirror-reported trade}
\label{sec:application}

This section applies the complete prespecified protocol to a real
mirror-reported panel: annual bilateral goods trade among $\rlN$ economies
over \rlYearA--\rlYearB\ ($T=\rlT$), with the exporter-reported
free-on-board value and the importer-reported cost-insurance-freight value
of the same flow supplying the paper's mirror double report. Flows are the
IMF Direction of Trade Statistics pair as distributed in the CEPII Gravity
database \citep{ConteCotterlazMayer2022}; outcomes are World Bank real GDP
log growth \citep{WorldBankWDI}; the static node covariate is log initial
GDP. The chart is the gravity chart of Section~\ref{sec:charts-main} with
$q=2$: the negative log great-circle distance between capitals and a
same-EU bloc indicator fixed at membership as of 2000, both row-centered.
A dyad-year enters only when both mirror reports are present and positive
($\rlAvail\%$ of dyad-years). The outcome channel has one observation per
economy-year ($n_y=1$), the thinnest configuration the theory covers; the
consequences are reported below exactly as the estimation theory predicts
them. All design choices were fixed before outcomes were examined; the
deviation log and the loader validation are in the replication package.

Diagnostics come first, and the first one delivers a substantive verdict
on the measurement design. Mirror discrepancies average $\rlDisc$ log
points (dispersion $\rlDiscSD$), the magnitude the measurement model
assumes for the CIF/FOB wedge plus reporter noise. The reporter-cycle
specification test of Theorem~\ref{thm:identification-main}(d), run
yearly with the extract's pilot covariance, rejects the purely additive
receiver/supplier design in every year ($\rlCyclePass\%$ pass rate): real
CIF/FOB discrepancies contain a dyad-specific component, as expected when
freight and insurance costs depend on the route, so a full deployment
should add wedge columns (distance interacted with the importer report,
at minimum) to the declared bias design, and the deviation log records
this. Two consequences must be separated. A route-specific wedge that is
approximately constant over time shifts the level of the composition
coordinates through its projection on the chart but cancels from
across-time contrasts, so the decline results below are robust to it. A
time-varying component common to both reports is exactly the direction
that no test can detect (Theorem~\ref{thm:identification-main}(d)) and
that the sensitivity analysis below prices. The information-floor
diagnostic
then partitions the sample exactly as Theorem~\ref{thm:weakid-main}
anticipates: at $\rlNSafe$ of $\rlT$ dates the scaled minimum eigenvalue
falls to between $\rlLamSafeMin$ and $\rlLamSafeMax$, against a median of
$\rlLamGoodMed$ (minimum $\rlLamGoodMin$) at the remaining dates, and the
flagged dates are precisely the crisis years 1998, 2001--2002, 2009--2010,
2015, and 2019--2020, where the single-observation outcome cross-section
is most volatile. At those dates the joint Wald fit is not licensed and
the safe-inverse fallback engages; composition is then read from the
closed-form report-channel generalized least squares fit, which is exact
for the linear chart, uses the $2|\mathcal E_t|$ report observations
alone, and is stable at every date. Figure~\ref{fig:real-application}
shades the flagged years. This is the paper's prespecified switching
discipline operating on data no one tuned.

\begin{figure}[t]
\centering
\includegraphics[width=\textwidth]{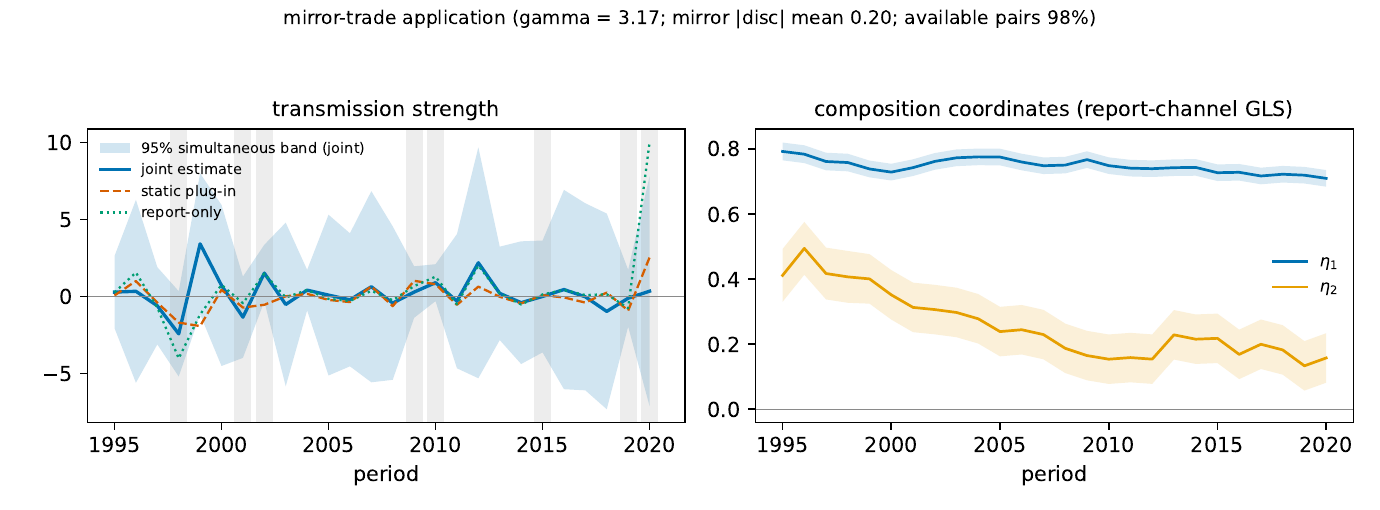}
\caption{Mirror-trade application, $\rlN$ economies, \rlYearA--\rlYearB.
Left: joint strength estimate with $95\%$ simultaneous band, static
plug-in, and report-only two-step; shaded years fail the information
floor. Right: composition coordinates from the exact report-channel GLS
with \v{S}id\'{a}k simultaneous bands built from the GLS standard errors
($\eta_1$ distance decay, $\eta_2$ EU bloc).}
\label{fig:real-application}
\end{figure}

The composition path is the substantive finding. Distance decay is high
and stable, drifting from $\rlEtaOneFirst$ in \rlYearA\ to
$\rlEtaOneLast$ in \rlYearB. The EU-bloc coordinate declines from a peak
of $\rlEtaTwoPeak$ in \rlEtaTwoPeakYear\ to $\rlEtaTwoEnd$ by \rlYearB:
the within-row trade-share premium attached to common EU membership fell
by roughly two thirds over the sample, a gradual reallocation consistent
with the rise of extra-bloc trade over this period. The prespecified
contrast (mean of the first five years minus mean of the last five) is
$\rlDeclineEtaTwo$ with standard error $\rlDeclineEtaTwoSE$
($\rlDeclineEtaTwoZ$ standard errors), against $\rlDeclineEtaOne$ (SE
$\rlDeclineEtaOneSE$) for distance decay. The sensitivity discipline of
Proposition~\ref{prop:pseudo-main} attaches the breakdown value: with
passthrough $\ell_1$ rows $(\rlSensBeta,\ \rlSensEtaOne,\ \rlSensEtaTwo)$
computed from the design, a common dyad bias would need to reach
$\delta^*=\rlDeltaStarEtaTwo$ on the log-report scale, sustained with
opposite signs across the two windows, before the bloc-decline conclusion
could be an artifact of correlated misreporting; the bound concerns the
component common to both reports, which mirror discrepancies cannot
reveal and only substantive knowledge of reporting practice can exclude.

The strength path illustrates the paper's honesty requirements; it does
not carry a substantive conclusion. With $n_y=1$ the calibrated simultaneous
band has median half-width $\rlBandHalfMed$ growth-percentage units per
unit of exposure: the constancy verdict correctly does not reject, but
the band contains every economically relevant value of $\beta_t$, so the
annual outcome channel cannot bound transmission strength usefully, and
the paper reports exactly that. The fold-disagreement studentization is
$\widehat\gamma=\rlGamma$, the finite-sample honesty factor at work. Two comparators make the
joint estimator's stability visible at the hardest date: in 2020 the
joint estimate is $\rlBetaTwenty$ while the report-only two-step returns
$\rlROTwenty$ and the static plug-in $\rlPlugTwenty$. A sharper strength
analysis requires more outcome information per date; quarterly national
accounts (four within-year replications) are the natural extension and
are listed in the deployment requirements below.

\subsection{Deployment requirements and deviations}
\label{sec:protocol}

The prespecified requirements of a deployment are: the eligible support
rule and common-support horizon (Section~\ref{sec:boundaries}); the chart
covariates and their provenance; the reporter-bias columns; the selection
model for missing returns; censoring thresholds; the information floor
licensing Wald against score inference; the common-bias sensitivity bound
reported with every attribution; and a held-out forecast block. This
application fixes all of them and records two deviations from the ideal:
the extract is the CEPII redistribution of the DOTS mirror pair, which
ends in 2020 (the IMF's own portal was mid-migration at the time of the
run, as documented in the replication log), and the outcome channel is
annual with $n_y=1$, which the strength analysis above reports as its
binding limitation. The loader is validated end to end on the documented
schema (reconstruction to $\loaderZerr$, estimates matching the direct
pipeline to $\loaderTherr$), so extending to quarterly outcomes or a
Comtrade extract requires only the data files. A synthetic end-to-end
validation of the full pipeline under known truth, including the
attribution and forecasting steps that this application's information
level does not support, is in Section~\ref*{sec:si-vignette} of the
Supplement.

\section{Conclusion}
\label{sec:conclusion}

This paper asked when transmission strength can be separated from network
composition when the network is latent, time varying, and measured with
error. The answer is an exact local criterion: strength is identified when
its exposure survives the outcome nuisances, and composition is identified
when every composition direction confounded with strength in the outcome
channel is recovered by the report channel. Both requirements are
summarized by positive definiteness of a profiled joint information
matrix that is computable date by date from training data, with its
report block available before outcomes are examined.

Around this criterion the paper provides an inferential chain whose exact
and asymptotic parts are separated explicitly. Closed-form pseudo-true
coefficients establish that plug-in practice generically converts
composition changes into apparent strength changes, and the population
formula is verified against Monte Carlo at the matched baseline across a
full dose grid. A cross-fitted orthogonal estimator recovers both paths;
the strength-path simultaneous band is exact for oracle Gaussian scores,
asymptotically valid with estimated covariances, valid under
sub-exponential scores subject to a computable leverage condition, and
measured for finite-sample calibration along an information ladder.
Score-inversion sets are exact under weak identification and replace the
bands under a prespecified switching rule. Pseudo-true targets,
sensitivity intervals, and breakdown values quantify bounded failures of
the chart and the bias design, with score-inversion inference exact for
the pseudo-true target and Wald inference accurate to first order in the
misspecification. Change inference on the observed path is exactly sized
for the idealized test; its search-based implementation is prespecified,
its verdicts include an undetermined state, and its operating
characteristics are measured against benchmark guarantees and minimax
obstructions. The complete protocol runs on a real mirror-reported trade
panel of eighteen economies over 1995 to 2020: the information
diagnostics select exactly the crisis years, and the report channel
identifies a stable distance-decay coefficient alongside a two-thirds
decline in the EU-bloc composition coordinate.

The limitations are structural and are stated where they bind. The
composition chart, the reporter-bias design, the selection model, and the
support rule are declared commitments: unrestricted common dyad bias and
unrestricted nonignorable selection destroy composition identification
exactly, pooled zeros are only partially identified, and off-support
counterfactuals are not parameters of the observed model. The estimand is
predictive dependence, not a causal effect. The application's outcome
channel is annual with one observation per economy-year, so its strength
path is reported as uninformative, exactly as the theory predicts at that
information level; the composition findings and the diagnostics carry the
section. Wald inference for pseudo-true targets
carries a first-order error in the misspecification size, measured in the
verification suite. The observational change test carries no optimality
guarantee, and rejection by its implementation is a certified-search
statement whose error rate is measured, not a global optimality
certificate.

Three directions follow. First, sharpening the application: quarterly
outcome data (four within-year replications) to make the strength path
informative, and a Comtrade extract extending the mirror panel beyond
2020; both are drop-in under the validated loader and the deviation log
of Section~\ref{sec:protocol}. Second, procedures for the observational
experiment whose detection and attribution thresholds match the benchmark
lower bounds, or a proof that a gap is unavoidable; this is the main open
theoretical problem the paper defines. Third, data-driven chart selection
with valid post-selection inference, and sensitivity regions for selection
departures analogous to the common-bias analysis.

\bibliographystyle{plainnat}
{\footnotesize
\bibliography{references}}

\clearpage
\setcounter{section}{0}
\renewcommand{\thesection}{S\arabic{section}}
\setcounter{equation}{0}
\numberwithin{equation}{section}
\setcounter{table}{0}
\renewcommand{\thetable}{S\arabic{table}}
\setcounter{figure}{0}
\renewcommand{\thefigure}{S\arabic{figure}}
\begin{center}
{\LARGE Supplementary Material}\\[10pt]
{\normalsize Sections S1--S16 contain the complete observed-data
experiment, all proofs, and additional results. Section, theorem,
equation, table, and figure numbers in this part carry the prefix S.}
\end{center}
\medskip

\section{Notation and the complete observed-data experiment}
\label{sec:si-observation}
\subsection{Prespecified observation map and likelihood}

Fix date $t$ and condition on $\mathcal H_t$, which contains the past, prespecified covariates, master node registry, active and lag-eligible node sets, eligible dyads $\mathcal E_t$, complete report opportunities $\mathcal J_t$, the report-to-dyad map, and every known measurement protocol and bound. These objects are parameter free conditional on $\mathcal H_t$, or their law is included in the full likelihood.

For $e\in\mathcal E_t$, an optional label $S_{e,t}\in\{0,1\}$ records structural positivity and has differentiable mass $p_{\xi_t}(s\mid\mathcal H_t)$. The main text conditions on a prespecified eligible support and targets the positive-flow composition coordinate $\eta_t$; $\xi_t$ is therefore profiled as a nuisance whenever this optional hurdle layer is present. An analyst who instead normalizes $\rho_{e,t}(\xi_t)\exp\{\ell^+_{e,t}(\kappa_t,\eta_t)\}$ and targets the resulting network must enlarge the target to $(\eta_t,\xi_t)$ and rederive both the outcome derivative and the joint information. No such enlarged-target claim is made here.

Let $M_t=|\mathcal J_t|$ and write
\begin{equation}
Z_t^\star\mid\mathcal H_t
\sim N_{M_t}\!\left(
\mathsf A_t\{C_t\kappa_t+m_t(\eta_t)\}+B_t\alpha_t,
\Omega_t(\lambda_t)\right),
\qquad \Omega_t(\lambda_t)\succ0.
\label{eq:si-latent-report}
\end{equation}
Assume $Z_t^\star\perp S_t\mid\mathcal H_t$ unless a conditional report law given $S_t$ is specified instead. The receipt vector $R_t$ has prespecified law
\begin{equation}
P_{\chi_t}(R_t=r\mid S_t=s,Z_t^\star=z,\mathcal H_t)
=g_{\chi_t}(r\mid s,z,\mathcal H_t),
\label{eq:si-selection}
\end{equation}
which is invariant to counterfactual report coordinates attached to structurally absent dyads.

Each opportunity is exact, left-censored at a known threshold, or recorded in one cell of a prespecified finite, disjoint, exhaustive Borel partition. Nonreturn and structural absence have separate labels. Imputed, reconciled, revised, or otherwise transformed series are either excluded by a prespecified rule or assigned separate report modes with their own declared observation maps. The declared protocol defines a measurable, parameter-free map
\[
\Gamma_t:(s,z,r)\longmapsto(D_t,Y_t^o).
\]
Let $D_t$ collect the discrete pattern, let $\mathcal P(D_t)$ be the exactly returned coordinates, and let $Y_t^o$ collect their values. The observed space is the disjoint union
\[
\mathfrak O_t=\bigsqcup_{d\in\mathfrak D_t}
\bigl(\{d\}\times\R^{|\mathcal P(d)|}\bigr),
\qquad
\mu_t^o=\sum_{d\in\mathfrak D_t}\delta_d\otimes
\operatorname{Leb}^{|\mathcal P(d)|}.
\]
For $o=(d,y)$, let $\mathcal S(d)$ be compatible structural states and let $\mathcal D(d,y,s)$ contain exactly those unreturned coordinates $z_{-\mathcal P(d)}$ for which inserting $y$ in the returned coordinates makes $\Gamma_t\{s,z,r(d)\}=(d,y)$.

\begin{proposition*}[Exact pushforward likelihood]
Let $\nu_t=(\kappa_t,\alpha_t,\lambda_t,\chi_t,\xi_t)$ and
\[
\mu_t(\kappa_t,\eta_t,\alpha_t)
=\mathsf A_t\{C_t\kappa_t+m_t(\eta_t)\}+B_t\alpha_t.
\]
With respect to counting measure on patterns and Lebesgue measure on exact coordinates, the observed-report density is
\begin{align}
L_t^o(\eta_t,\nu_t;o)
&=\sum_{s\in\mathcal S(d)}p_{\xi_t}(s\mid\mathcal H_t)
\int_{\mathcal D(d,y,s)}
\phi_{\Omega_t(\lambda_t)}\{z-\mu_t(\kappa_t,\eta_t,\alpha_t)\}
\notag\\[-2pt]
&\hspace{4em}\times
g_{\chi_t}\{r(d)\mid s,z,\mathcal H_t\}
\,dz_{-\mathcal P(d)},
\label{eq:si-observed-likelihood}
\end{align}
evaluated at $z_{\mathcal P(d)}=y$. If the risk set, node process, or eligible support is not ancillary conditional on $\mathcal H_t$, this is only one factor of the full joint likelihood.
\end{proposition*}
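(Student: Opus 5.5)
The plan is to compute the law of $O_t=\Gamma_t(S_t,Z_t^\star,R_t)$ directly from the joint law of $(S_t,Z_t^\star,R_t)$ given $\mathcal H_t$, and then verify it against the dominating measure $\mu_t^o$. Conditional on $\mathcal H_t$, the triple has a density with respect to counting measure on $(s,r)$ times Lebesgue measure on $\R^{M_t}$. The density factors as
\[
p_{\xi_t}(s\mid\mathcal H_t)\,
\phi_{\Omega_t(\lambda_t)}\{z-\mu_t(\kappa_t,\eta_t,\alpha_t)\}\,
g_{\chi_t}(r\mid s,z,\mathcal H_t).
\]
The first factor comes from the hurdle mass. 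The second is the latent report law \eqref{eq:si-latent-report}, which is valid because $Z_t^\star\perp S_t\mid\mathcal H_t$. The third is the receipt law \eqref{eq:si-selection}.

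Fix a measurable set $E\subseteq\mathfrak O_t$ and decompose it by pattern as $E=\bigsqcup_d\{d\}\times E_d$. This is legitimate because the pattern space $\mathfrak D_t$ is finite. Then
\[
P(O_t\in E)=\sum_d\sum_s\sum_r\int \ind\{\Gamma_t(s,z,r)\in\{d\}\times E_d\}\cdots\,dz .
\]
The key structural step is to show that the event $\Gamma_t(s,z,r)=(d,y)$ pins down three things. First, it pins down the receipt vector as $r=r(d)$, since nonreturn carries its own label. Second, it forces the returned coordinates to satisfy $z_{\mathcal P(d)}=y$, because exact returns are recorded verbatim. Third, it restricts the unreturned coordinates to the set $\mathcal D(d,y,s)$, which is defined exactly as the compatible set. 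In addition, $s$ must lie in $\mathcal S(d)$.

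With this in hand, I split $z=(z_{\mathcal P(d)},z_{-\mathcal P(d)})$ and apply Fubini. I integrate over $z_{-\mathcal P(d)}\in\mathcal D(d,y,s)$ inside, and over $y=z_{\mathcal P(d)}\in E_d$ outside. The result is $P(O_t\in E)=\int_E L_t^o\,d\mu_t^o$, with $L_t^o$ as displayed in \eqref{eq:si-observed-likelihood}. By uniqueness of Radon--Nikodym derivatives, this identifies $L_t^o$ as the density.

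Measurability needs a short check. The set $\{(y,z_{-\mathcal P(d)}):\Gamma_t(s,(y,z_{-}),r(d))=(d,y)\}$ is Borel, because $\Gamma_t$ is measurable and the censoring and partition cells are Borel. Its sections $\mathcal D(d,y,s)$ are therefore measurable, and the inner integral is a measurable function of $y$ by Tonelli.

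I also need to handle the invariance of $g_{\chi_t}$ to counterfactual coordinates for structurally absent dyads. This guarantees that integrating those coordinates out is consistent with $\mathcal D$, so the formula does not depend on arbitrary values assigned to absent dyads. Separating nonreturn labels from structural-absence labels ensures that $\mathcal S(d)$ and $r(d)$ are well defined functions of $d$.

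The final sentence of the statement, that the formula is only one factor when the support is not ancillary, follows from the chain rule. The full likelihood factors as the law of $\mathcal H_t$-components times the conditional law derived above, and the first factor depends on parameters unless it is ancillary.

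I expect the main obstacle to be the bookkeeping showing that $r$ is determined by $d$ and that the exact coordinates are read off verbatim. That is, I must rule out two distinct $(s,z,r)$ producing the same $(d,y)$ through different receipt vectors or transformed values. I would handle this using the stipulation that imputed or transformed series are either excluded or given separate report modes. This makes $\Gamma_t$ an identity on $\mathcal P(d)$ and injective in its $r$-component.
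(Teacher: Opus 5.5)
Your proposal is correct and follows the paper's proof essentially step for step: you factor the full-data density against counting measure in $(s,r)$ and Lebesgue measure in $z$, push it through $\Gamma_t$ by Tonelli, split $z$ on each pattern into the exact coordinates $y$ and the unreturned block in $\mathcal D(d,y,s)$, and read off the Radon--Nikodym density. The paper tests against general nonnegative measurable $\psi$ where you use sets $E$, which is equivalent, and your bookkeeping that $r=r(d)$ and $z_{\mathcal P(d)}=y$ on $\{\Gamma_t=(d,y)\}$ spells out what the paper builds into its definitions of $r(d)$ and $\mathcal D(d,y,s)$.
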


\begin{proof}
With respect to counting measure in $(s,r)$ and Lebesgue measure in $z$, the full-data density is
\[
f_t(s,z,r)=p_{\xi_t}(s\mid\mathcal H_t)
\phi_{\Omega_t(\lambda_t)}\{z-\mu_t\}
g_{\chi_t}(r\mid s,z,\mathcal H_t).
\]
For every nonnegative measurable $\psi$, Tonelli's theorem gives
\[
E[\psi\{\Gamma_t(S_t,Z_t^\star,R_t)\}\mid\mathcal H_t]
=\sum_s\sum_r\int\psi\{\Gamma_t(s,z,r)\}f_t(s,z,r)\,dz.
\]
The declared categories form a measurable partition. On pattern $d$, split $z$ into its exact coordinates $y$ and the remaining coordinates in $\mathcal D(d,y,s)$. This transformation has unit Jacobian. A second use of Tonelli yields the integral of $\psi(o)$ against \eqref{eq:si-observed-likelihood}. Hence that display is the Radon--Nikodym density; setting $\psi=1$ verifies normalization. A nonancillary support or risk-set process contributes its own likelihood factor before the same pushforward operation.
\end{proof}

\begin{lemma*}[DQM under a parameter-free observation map]
Let $\{P_\theta:\theta\in\Theta\subset\R^p\}$ be differentiable in quadratic mean at $\theta_0$ with score $\dot\ell\in L_2^0(P_{\theta_0})$, and suppose $P_{\theta_0+h}\ll P_{\theta_0}$ for every sufficiently small $h$. Let $K$ be a parameter-free Markov kernel. Then $Q_\theta=P_\theta K$ is differentiable in quadratic mean with score
\[
\dot\ell^o(O)=E_{\theta_0}(\dot\ell\mid O),
\]
and its information satisfies
\[
I^o=E(\dot\ell^o\dot\ell^{o\trans})\preceq E(\dot\ell\dot\ell^\trans)=I.
\]
\end{lemma*}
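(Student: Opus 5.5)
The plan is to work with square-root densities and exploit the contraction property of conditional expectation under a parameter-free kernel. Fix a $\sigma$-finite dominating measure $\mu$ for $\{P_\theta\}$ near $\theta_0$; by the absolute-continuity hypothesis we may take $\mu=P_{\theta_0}$. Write $p_\theta=dP_\theta/dP_{\theta_0}$, so $p_{\theta_0}=1$. Because $K$ is parameter free, the joint law of $(X,O)$ under $\theta$ is $P_\theta\otimes K$. It therefore has density $p_\theta(x)$ with respect to $P_{\theta_0}\otimes K$. Consequently $Q_\theta\ll Q_{\theta_0}$ with density
\[
q_\theta(o)=E_{\theta_0}\{p_\theta(X)\mid O=o\}.
\]
This is the only place the parameter-free structure enters: the likelihood ratio of the observation is the conditional expectation of the full-data likelihood ratio.

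Next, verify DQM for $Q$ with candidate score $\dot\ell^o=E_{\theta_0}(\dot\ell\mid O)$. The target is
\[
E_{\theta_0}\bigl\{\sqrt{q_{\theta_0+h}}-1-\tfrac12h^\trans\dot\ell^o\bigr\}^2=o(\norm h^2).
\]
Write $p=p_{\theta_0+h}$ and $s=\sqrt p$. DQM for $P$ gives $s=1+\tfrac12h^\trans\dot\ell+\rho_h$ with $E\rho_h^2=o(\norm h^2)$. Then
\[
q=E(s^2\mid O)=\{E(s\mid O)\}^2+\Var(s\mid O).
\]
Three estimates follow.
\begin{enumerate}
\item The difference $E(s\mid O)-1-\tfrac12h^\trans\dot\ell^o=E(\rho_h\mid O)$ has second moment at most $E\rho_h^2=o(\norm h^2)$, by Jensen.
\item We have $\Var(s\mid O)\le E\{(s-1)^2\mid O\}$, and $E(s-1)^2=O(\norm h^2)$.
\item To pass from $\{E(s\mid O)\}^2+\Var(s\mid O)$ to a square root, use the elementary inequality
\[
\bigl|\sqrt{a^2+b}-|a|\bigr|\le\min\{\sqrt b,\,b/|a|\}.
\]
Hence $E\{\sqrt q-E(s\mid O)\}^2\le E\min\{\Var(s\mid O),\,\Var(s\mid O)^2/E(s\mid O)^2\}$.
\end{enumerate}

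The main obstacle is step 3. The crude bound gives only $O(\norm h^2)$, but $o(\norm h^2)$ is needed. My approach is to show $\Var(s\mid O)$ is of order $o(\norm h)$ in $L_1$ after dividing appropriately. The conditional variance of $s$ equals that of $\tfrac12h^\trans\dot\ell+\rho_h$. So
\[
E\Var(s\mid O)\le\tfrac12\norm h^2E\norm{\dot\ell}^2+2E\rho_h^2=O(\norm h^2).
\]
Its square, restricted to the region where $E(s\mid O)\ge1/2$, is then bounded using uniform integrability of $\{(h^\trans\dot\ell/\norm h)^2\}$ over unit directions. On the complementary region, where $E(s\mid O)<1/2$, the probability is $o(1)$ uniformly. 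The min bound together with uniform integrability gives $o(\norm h^2)$ there as well. If this truncation argument proves delicate, the fallback is the standard result (van der Vaart, Asymptotic Statistics, or Le Cam's theorem on Hellinger differentiability of kernel images) that Hellinger differentiability is preserved under Markov kernels, with score equal to the conditional expectation. That result is proved exactly by this decomposition.

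Finally, the information inequality. Since $\dot\ell^o=E(\dot\ell\mid O)$, the Pythagorean identity gives
\[
E(\dot\ell\dot\ell^\trans)=E(\dot\ell^o\dot\ell^{o\trans})+E\{(\dot\ell-\dot\ell^o)(\dot\ell-\dot\ell^o)^\trans\},
\]
because the cross terms vanish by the tower property. The last matrix is positive semidefinite, so $I^o\preceq I$. The score $\dot\ell^o$ is centered because $\dot\ell$ is.
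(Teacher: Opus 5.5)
Your proposal is correct and follows essentially the same route as the paper: the observed likelihood ratio as $E_{\theta_0}(\cdot\mid O)$ of the full-data ratio, the decomposition $q=\{E(s\mid O)\}^2+\Var(s\mid O)$, truncation on $\{E(s\mid O)\ge 1/2\}$ using uniform integrability of $\Var(s\mid O)/\norm h^2$, and the conditional-variance (Pythagorean) argument for $I^o\preceq I$. That uniform integrability comes from the linear part $(h^\trans\dot\ell)^2/\norm h^2$ together with the $L_1$-negligible remainder $E(\rho_h^2\mid O)/\norm h^2$, and the paper carries out your step 3 exactly this way, so the fallback citation is not needed.
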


\begin{proof}
Write $P_0=P_{\theta_0}$ and $L_h=dP_{\theta_0+h}/dP_0$. DQM gives
\[
\sqrt{L_h}=1+\tfrac12h^\trans\dot\ell+\Delta_h,
\qquad \norm{\Delta_h}_{L_2(P_0)}=o(\norm h).
\]
Under the joint law induced by $P_0$ and $K$, the observed likelihood ratio is $\bar L_h=E_0(L_h\mid O)$. Put
\[
u_h=\sqrt{L_h}-1,
\qquad m_h=E_0(u_h\mid O),
\qquad v_h=E_0\{(u_h-m_h)^2\mid O\}.
\]
Since $1+m_h=E_0(\sqrt{L_h}\mid O)\ge0$,
\[
\bar L_h=(1+m_h)^2+v_h,
\qquad
d_h:=\sqrt{\bar L_h}-(1+m_h)\ge0.
\]
We show $\norm{d_h}_{L_2(Q_0)}=o(\norm h)$. Always $d_h\le\sqrt{v_h}$; on $\{1+m_h\ge1/2\}$,
\[
d_h=\frac{v_h}{\sqrt{(1+m_h)^2+v_h}+1+m_h}\le v_h.
\]
DQM implies that $u_h^2/\norm h^2$ is uniformly integrable along every sequence $h\to0$. Conditional expectation preserves uniform integrability, so the same is true of $v_h/\norm h^2$. For fixed $\delta>0$,
\[
\frac{E(d_h^2\ind\{1+m_h\ge1/2\})}{\norm h^2}
\le
\delta\frac{Ev_h}{\norm h^2}
+\frac{E\{v_h\ind(v_h>\delta)\}}{\norm h^2}
=O(\delta)+o(1).
\]
Moreover, conditional Jensen and DQM give $P(1+m_h<1/2)=O(\norm h^2)$. Uniform integrability then yields
\[
\frac{E(d_h^2\ind\{1+m_h<1/2\})}{\norm h^2}
\le
\frac{E(v_h\ind\{1+m_h<1/2\})}{\norm h^2}=o(1).
\]
Letting $\delta\downarrow0$ proves the claim. Finally,
\[
m_h=\tfrac12h^\trans E_0(\dot\ell\mid O)+E_0(\Delta_h\mid O),
\qquad
\norm{E_0(\Delta_h\mid O)}_2=o(\norm h),
\]
so
\[
\sqrt{\bar L_h}
=1+\tfrac12h^\trans E_0(\dot\ell\mid O)+o_{L_2(Q_0)}(\norm h).
\]
This proves DQM and the score formula. For every $a\in\R^p$,
\[
a^\trans(I-I^o)a=E\{\Var(a^\trans\dot\ell\mid O)\}\ge0,
\]
which proves the information inequality. If the observation map depends on the parameter, an additional kernel score or moving-boundary term is generally present.
\end{proof}

\subsection{Coarsened-report identification}

Let $S_{\eta,t}^o$ be the observed-data report score for positive-flow composition, let $\mathcal T_{\nu,t}^o$ be the closed nuisance tangent space (including the structural-margin score when present), and let $S_{\eta,t}^{o,\eff}$ denote the residual after orthogonal projection. Put
\[
K_{\eta,t}^o=E(S_{\eta,t}^{o,\eff}S_{\eta,t}^{o,\eff\trans}\mid\mathcal H_t).
\]
Combining it with the residualized outcome derivatives $r_t,H_t$ gives
\[
\mathcal I_t^o=
\begin{pmatrix}
r_t^\trans r_t&\beta_tr_t^\trans H_t\\
\beta_tH_t^\trans r_t&\beta_t^2H_t^\trans H_t+K_{\eta,t}^o
\end{pmatrix}.
\]
Under the conditional-independence, disjoint-nuisance, DQM, and locally constant-rank conditions of the main identification theorem, the same quadratic-form argument shows that $(\beta_t,\eta_t)$ is regularly first-order locally identified exactly when $r_t\ne0$ and
\[
K_{\eta,t}^o+\beta_t^2H_t^\trans P_{r_t}^\perp H_t\succ0.
\]
This is the general coarsened-report extension. It does not state that an arbitrary selection or censoring model has positive information; that must be checked from its observed score.

\subsection{Common-support target}

Datewise identification does not create counterfactual weights outside an observed support. For a comparison or change-analysis horizon $\mathcal T$, a prespecified alternative is to choose once, before inspecting outcomes, the within-row common support $\mathcal E_\cap=\bigcap_{t\in\mathcal T}\mathcal E_t$, require it to be nonempty in every retained row, renormalize every date on $\mathcal E_\cap$, and report entry and exit as a separate extensive-margin target. Holding this support fixed across the entire horizon defines a common composition path but changes the target and discards off-intersection information. Recomputing pairwise intersections after each candidate change would instead make the estimand candidate-dependent and is not the path target used here.

\section{Proof of the one-date identification theorem}
\label{sec:si-identification}
\subsection{Full statement and proof of the main identification theorem}

\begin{theorem*}[Theorem~\ref*{thm:identification-main}, restated]
Fix a date and condition on a sigma-field containing the prespecified outcome and report designs, the lag vector $y_-$, and the eligible support $\mathcal E$. Assume the model in Section~\ref*{sec:model}, differentiability in quadratic mean, locally constant nuisance-score ranks, variation-independent positive-definite covariance nuisances, a nonempty eligible support in each receiving row, and a minimal chart satisfying $C^\trans m=C^\trans\dot m=0$ and $\rank(\dot m)=q$. Assume the outcome and report channels are conditionally independent apart from the common target $\eta$.

Let $X=[D,y_-]$ and define
\[
P_X=L_yX\{(L_yX)^\trans L_yX\}^\dagger(L_yX)^\trans,
\quad M_X=I-P_X,
\quad r=M_XL_yg(\eta),
\quad H=M_XL_yG(\eta),
\]
and, with $U=[AC,B]$,
\[
P_U=L_zU\{(L_zU)^\trans L_zU\}^\dagger(L_zU)^\trans,
\quad M_U=I-P_U,
\quad Q=M_UL_zA\dot m(\eta),
\quad K_c=Q^\trans Q.
\]

\begin{enumerate}[label=\textup{(\alph*)}]
\item The outcome-only efficient information is
\[
I_Y=
\begin{pmatrix}
r^\trans r&\beta r^\trans H\\
\beta H^\trans r&\beta^2H^\trans H
\end{pmatrix}
=[\,r\ \ \beta H\,]^\trans[\,r\ \ \beta H\,].
\]
Thus outcome-only regular local identification holds exactly when $\rank[\,r\ \ \beta H\,]=q+1$.

For the unrestricted complete zero-diagonal row-stochastic family, fix $\beta\ne0$, $\gamma$, the covariance, and all rows except row $i$. Put $x_i=y_{-,-i}$ and
\[
\mathcal F_i(W;y_-)=
\{w\in\R_{++}^{N-1}:\mathbf1^\trans w=1,\ x_i^\trans w=x_i^\trans W_{i,-i}^\trans\}.
\]
This is the exact $W$-only outcome-equivalent fiber in row $i$, of dimension
\[
(N-1)-\rank[\,\mathbf1_{N-1}\ \ y_{-,-i}\,].
\]
It is $N-3$ for a nonconstant lag row and $N-2$ for a constant lag row. At $\beta=0$, the whole row simplex is outcome equivalent. If row normalization is absent and $W(s,\zeta)=s\bar W(\zeta)$, the outcome law is invariant under $(\beta,s)\mapsto(\beta/c,cs)$ whenever admissible.

\item The report efficient information is $K_c$. Reports locally identify composition if and only if $\rank(Q)=q$, equivalently
\[
A\dot m(\eta)h\notin\col(U)\quad\text{for every }h\ne0.
\]
Under the paired mean design
\[
z^E_{ij}=\kappa_i+m_{ij}(\eta)+a^E_j+u_{ij},
\qquad
z^I_{ij}=\kappa_i+m_{ij}(\eta)+a^I_i+v_{ij},
\]
with no further unknown mean effects and positive-definite joint error covariance, the report law globally identifies the row-softmax matrix on the eligible support. If the associated receiver--supplier bipartite graph has $c_{\mathcal E}$ connected components, exactly one location normalization per component is necessary and sufficient for the active linear nuisances.

\item The joint information is
\[
I_c=
\begin{pmatrix}
r^\trans r&\beta r^\trans H\\
\beta H^\trans r&\beta^2H^\trans H+K_c
\end{pmatrix}.
\]
The following are equivalent: $I_c\succ0$; no $(b,h)\ne0$ satisfies $rb+\beta Hh=0$ and $Qh=0$; and
\[
r\ne0,
\qquad
S_{\eta,c}:=K_c+\beta^2H^\trans P_r^\perp H\succ0.
\]
For $\beta\ne0$, this is equivalent to
\[
\ker(Q)\cap\{h:Hh\in\Span(r)\}=\{0\};
\]
for $\beta=0$, it is equivalent to $r\ne0$ and $K_c\succ0$.

\item Let
\[
B_c=\begin{pmatrix}r&\beta H\\0&Q\end{pmatrix}.
\]
Then $I_c=B_c^\trans B_c$ and, for every direction $(b,h)$,
\[
(b,h^\trans)I_c(b,h^\trans)^\trans
=\norm{rb+\beta Hh}^2+\norm{Qh}^2.
\]
Hence the smallest singular value of a prespecified coordinate scaling of $B_c$ is a finite-experiment sensitivity diagnostic and vanishes exactly at first-order nonidentification. If $D$ excludes the own lag, define
\[
u_D=M_{L_yD}L_yy_-,
\qquad
v_D=M_{L_yD}L_yg(\eta).
\]
When both are nonzero,
\[
\norm r^2=\norm{v_D}^2\{1-\cos^2(u_D,v_D)\},
\]
which isolates loss of strength information caused by own-lag collinearity.
\end{enumerate}
\end{theorem*}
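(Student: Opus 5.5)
The plan is to compute everything channel by channel and then stack. By the assumed conditional independence of the two channels given the common target, the joint log-likelihood is the sum of an outcome Gaussian log-likelihood and a report Gaussian log-likelihood. The covariance nuisances are variation independent and enter a Gaussian location model. Their scores are therefore quadratic in the centered errors and uncorrelated with the linear mean scores, so the covariance block of the information is orthogonal and drops out of the efficient information. What remains is a linear-Gaussian location problem in each channel. Whitening by $L_y$ and $L_z$ makes the mean scores the whitened mean derivatives. Profiling the linear mean nuisances $\gamma$ and $(\kappa,\alpha)$ then amounts to orthogonal projection onto the complement of their whitened column spans, and the locally-constant-rank assumption justifies the Moore--Penrose form.

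For the outcome channel, the mean derivative in $(\beta,\eta)$ is $[g\ \ \beta G]$. Projecting off $L_yX$ gives $[r\ \ \beta H]$, so $I_Y=[r\ \ \beta H]^\trans[r\ \ \beta H]$. Regular local identification is then equivalent to this matrix being positive definite, i.e.\ to full column rank $q+1$.

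The fiber claim is a direct computation. Row $i$ of the outcome mean depends on $W$ only through $W_{i,-i}x_i$. The admissible set is therefore the relative interior of the simplex cut by one further affine constraint, and it is nonempty because it contains the true row. Its dimension is $(N-1)-\rank[\mathbf 1\ \ x_i]$, which equals $N-3$ or $N-2$ according as $x_i$ is nonconstant or constant. At $\beta=0$ the constraint disappears. The scale invariance follows because the mean depends only on the product $\beta s$.

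For the report channel, the same projection argument with $U=[AC,B]$ gives $K_c=Q^\trans Q$. Then $\rank Q=q$ iff $Qh\ne0$ for all $h\ne0$, i.e.\ iff $L_zA\dot m h\notin\col(L_zU)$. Since $L_z$ is invertible, this is the stated condition $A\dot m h\notin\col(U)$.

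For global identification under the paired design I would argue directly from the mean, since a Gaussian law determines its mean and covariance. The discrepancy $z^E_{ij}-z^I_{ij}$ has mean $a^E_j-a^I_i$. Subtracting, the mean of $z^E_{ij}$ minus $a^E_j$ identifies $\kappa_i+m_{ij}$ up to the additive gauge of the bipartite receiver--supplier graph: one constant per connected component. That constant shifts whole rows and cancels in the row softmax. Hence $W$ is identified on $\mathcal E$, and exactly one normalization per component is necessary and sufficient for the linear nuisances. By contrast, adding an unrestricted common dyad column $c$ lets $c=-m$ absorb any $A\dot m h$, so $Q=0$; this is part (d) of the main statement.

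For part (c), I will stack the residualized derivatives into $B_c$, which gives $I_c=B_c^\trans B_c$ and the quadratic form $\norm{rb+\beta Hh}^2+\norm{Qh}^2$. This identity already yields the diagnostic of (d) and the equivalence of $I_c\succ0$ with the kernel condition. To get the Schur form, I minimize the quadratic form over $b$ at fixed $h$. If $r\ne0$, the minimizer is $b=-\beta r^\trans Hh/\norm r^2$ and the minimum is $h^\trans(K_c+\beta^2H^\trans P_r^\perp H)h$. If $r=0$, then $(b,0)$ with $b\ne0$ lies in the kernel.

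For $\beta\ne0$, a kernel vector with $h\ne0$ exists iff $Qh=0$ and $Hh\in\Span(r)$. A kernel vector with $h=0$ would force $rb=0$, so it is excluded once $r\ne0$. A subtle point is that the stated equivalence for $\beta\ne0$ silently includes $r\ne0$; I would flag that and keep $r\ne0$ alongside it. The case $\beta=0$ is immediate.

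Finally, for (d), I use $X=[D,y_-]$ and the nested projection $M_X=M_{L_yD}-P_{u_D}$. This gives $r=v_D-u_D\langle u_D,v_D\rangle/\norm{u_D}^2$, whose squared norm is $\norm{v_D}^2(1-\cos^2)$.

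The main obstacle is making the profiling rigorous: showing that the efficient score after projection onto the full nuisance tangent space, covariance directions included, equals the residualized mean score. Concretely, I need the covariance scores to be orthogonal to the target mean scores under variation independence. I would use the Gaussian third-moment vanishing, so that $E[\epsilon\,\epsilon^\trans V\epsilon]=0$, together with locally constant ranks so that the projection is continuous.
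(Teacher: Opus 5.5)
Your proposal is correct and follows essentially the paper's own route: Gaussian third-moment orthogonality removes the covariance scores, whitened residualized mean derivatives give each channel's information, the paired-design gauge is one constant per receiver--supplier component (cancelled by row normalization), minimizing $\norm{rb+\beta Hh}^2+\norm{Qh}^2$ over $b$ gives the Schur condition, and the nested projection gives the own-lag identity. You are also right that the $\beta\ne0$ kernel formulation tacitly needs $r\ne0$: when $r=0$ the direction $(b,0)$ is null, and the paper's proof only derives the kernel form in its $r\ne0$ branch.
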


\begin{proof}
All scores and projections are evaluated at the truth.

\paragraph{Step 1: the row gauge and softmax derivative.}
Adding $Cc$ to a dyad log-flow vector adds the constant $c_i$ to every eligible entry in receiving row $i$, so it cancels from row normalization. Conversely, equality of two positive row-softmax vectors implies equality of every within-row log ratio; their log-flow difference is therefore row constant. The constraint $C^\trans m=0$ removes this gauge. Differentiation gives, for eligible $(i,j)$,
\[
\frac{\partial W_{ij}}{\partial\eta^\trans}
=W_{ij}\left\{
\dot m_{ij}^\trans-
\sum_{k:(i,k)\in\mathcal E}W_{ik}\dot m_{ik}^\trans
\right\}.
\]
Thus $G(\eta)=D_\eta\{W(\eta)y_-\}$ is well defined on the minimal chart.

\paragraph{Step 2: outcome score and information.}
Whitening the outcome and projecting off $L_yX$ leaves mean derivatives $r$ for $\beta$ and $\beta H$ for $\eta$. Gaussian mean scores are linear in the whitened innovation. Gaussian covariance scores are centered quadratic forms, and their covariance with a centered linear form is zero by vanishing third moments. Hence covariance profiling removes no additional mean information. The efficient information is the Gram matrix of $[r,\beta H]$, proving part (a)'s rank statement.

For the exact fiber, the conditional mean in row $i$ depends on $w$ only through $x_i^\trans w$ once the row-sum constraint is imposed. Intersecting the positive simplex with these linear equalities gives exactly $\mathcal F_i$. At an interior point its dimension is the ambient dimension $N-1$ minus $\rank[\mathbf1,x_i]$. The remaining special cases follow by the stated ranks, by setting $\beta=0$, and by observing that $\beta s$ is invariant under reciprocal rescaling.

\paragraph{Step 3: report score and paired identification.}
After whitening the report channel, the composition derivative is $L_zA\dot m$ and the linear nuisance derivative is $L_zU$. Orthogonal projection off the nuisance columns gives $Q$, so the efficient information is $Q^\trans Q$. For $h\ne0$,
\[
h^\trans K_ch=\norm{Qh}^2,
\]
which is positive exactly under the displayed rank condition.

Under the paired design, equality of two parameterized mean arrays implies
\[
\Delta\ell_{ij}+\Delta a_j^E=0,
\qquad
\Delta\ell_{ij}+\Delta a_i^I=0.
\]
Thus the log-flow difference is constant along every connected receiver--supplier component and, in particular, constant within each receiving row. Row normalization removes these constants, proving global identification of $W_{\mathcal E}$. The null space of an incidence-type nuisance design contains one additive location per connected component and no others, proving the normalization count. Appending an unrestricted common dyad column makes every composition derivative a nuisance direction and destroys the conclusion.

\paragraph{Step 4: joint information and Schur complement.}
Conditional independence of the two channels adds their efficient information matrices, yielding $I_c$. Its quadratic form is
\[
\norm{rb+\beta Hh}^2+\norm{Qh}^2.
\]
Therefore $I_c$ is positive definite exactly when the two expressions cannot vanish jointly outside the zero direction. If $r\ne0$, minimizing the first term over $b$ leaves
\[
\beta^2\norm{P_r^\perp Hh}^2+\norm{Qh}^2
=h^\trans S_{\eta,c}h.
\]
This proves the Schur-complement condition and the kernel formulation. If $r=0$, the coefficient direction is null; if $\beta=0$, the composition information is exactly $K_c$.

\paragraph{Step 5: geometry and own-lag collinearity.}
The factorization $I_c=B_c^\trans B_c$ proves the singular-value claims. To partial out the own lag, first residualize both $y_-$ and $g(\eta)$ against $D$. Residualizing $v_D$ once more against the span of $u_D$ gives
\[
r=\left(I-\frac{u_Du_D^\trans}{u_D^\trans u_D}\right)v_D.
\]
Pythagoras yields the angular identity. This completes the proof.
\end{proof}

\begin{remark}[Scope]
The theorem is local and first order except for the explicitly stated exact fibers, scale invariance, and paired-design identification of the row-softmax matrix. Small positive singular values diagnose weak finite-experiment separation only after a coordinate scaling is fixed; an asymptotic weak-identification theorem requires a specified sequence of experiments.
\end{remark}

\section{Proof of the plug-in theorem}
\label{sec:si-plugin}
\subsection{Proof of the plug-in theorem}

\begin{proof}[Proof of Theorem~\ref*{thm:plugin-main}]
Throughout, condition on $\mathcal F_{t-1}$ and suppress $t$.

\paragraph{(a).}
The population linear projection of $Y$ on $[X,\ \bar Wy_-]$ solves the
normal equations; partialling $X$ out by $M=M_X$, the coefficient on
$\bar Wy_-$ is
\[
\beta^{\mathrm{plug}}
=\frac{\inner{M\bar Wy_-}{M\,E(Y\mid\mathcal F_{t-1})}}{\norm{M\bar Wy_-}^2}
=\frac{\inner{u}{M\{X\gamma+\beta W(\eta)y_-\}}}{\norm u^2}
=\beta\,\frac{\inner{u}{v(\eta)}}{\norm u^2},
\]
because $MX=0$ and the innovation is conditionally centered. The noise level
cancels. Equality with $\beta$ holds iff $\inner{u}{v-u}=0$; the attenuation,
zero, and sign-reversal regimes are immediate from the sign and size of
$\varphi(\eta)=\inner uv$. The worked $N=4$ design of
Section~\ref*{sec:charts-main} realizes all three regimes (replication code;
the sign-reversing direction is reported there).

\paragraph{(b).}
The shift formula is (a) applied at $\eta^{(1)}$ and $\eta^{(0)}$. For the
genericity claim: on the complete support with the linear chart,
$\eta\mapsto W(\eta)$ has entries
$W_{ij}(\eta)=\exp(\Psi_{ij}^\trans\eta)/\sum_k\exp(\Psi_{ik}^\trans\eta)$,
a ratio of finite sums of exponentials of linear functions, hence real
analytic on $\R^q$; so is
$\varphi(\eta)=u^\trans MW(\eta)y_-$, and
$\psi(\eta^{(0)},\eta^{(1)})=\varphi(\eta^{(1)})-\varphi(\eta^{(0)})$ is real
analytic on $\R^{2q}$. Since
$D_\eta\varphi(\eta)=u^\trans MG(\eta)$, the stated condition makes
$\varphi$ nonconstant, hence $\psi$ is not identically zero, and the zero
set of a nonconstant real-analytic function on a connected open set has
Lebesgue measure zero (see, e.g., \citealp{KrantzParks2002}). Off that null set $\Delta^{\mathrm{plug}}\ne0$, and
any test consistent against fixed mean shifts of the projection coefficient
rejects with probability tending to one as information accumulates.

\paragraph{(c).}
With stochastic regressor $\tilde u=M\widetilde Wy_-$ independent of the
date-$t$ outcome innovation, the population projection coefficient on the
partialled exposure is
\[
\beta^{\mathrm{conc}}
=\frac{E\inner{\tilde u}{MY}}{E\norm{\tilde u}^2}
=\beta\,\frac{E\inner{\tilde u}{v(\eta)}}{E\norm{\tilde u}^2}
=\beta\,\frac{\inner{E\tilde u}{v(\eta)}}
{\norm{E\tilde u}^2+E\norm{\tilde u-E\tilde u}^2},
\]
using $E\inner{\tilde u}{v}=\inner{E\tilde u}{v}$ (the target vector $v$ is
$\mathcal F_{t-1}$-measurable) and the variance decomposition of
$E\norm{\tilde u}^2$. Nondegenerate report noise makes the second
denominator term strictly positive; softmax and projection are measurable
bounded maps, so all expectations exist. Attenuation is strict and does not
average out across dates because each date contributes its own noise
denominator.
\end{proof}

\begin{remark}[Scope]
The theorem is a statement about linear projections, so it applies verbatim
to weighted or heteroskedasticity-robust variants of plug-in practice (the
projection geometry changes with the weighting, not the structure of the
result). It does not require the chart to be correctly specified: parts
(a)--(c) hold with $W(\eta_t)$ replaced by the true network path, with
$\varphi$ evaluated along it.
\end{remark}

\section{Orthogonality, path expansion, and pilot conditions}
\label{sec:si-estimation}
\subsection{Exact nuisance orthogonality}

\begin{proposition*}[Exact nuisance orthogonality]
Fix $(t,k)$ and positive-definite working covariances $V_Y,V_z$ measurable with respect to the conditioning/training sigma-field. Define
\[
R^Y=M_{V_Y^{-1/2}X}V_Y^{-1/2},
\qquad
R^z=M_{V_z^{-1/2}U}V_z^{-1/2}.
\]
For arbitrary candidate linear nuisances $\bar\gamma,\bar\lambda$, let
\[
\widetilde e(\theta,\bar\gamma,\bar\lambda)=
\begin{pmatrix}
R^Y\{Y-X\bar\gamma-\beta g(\eta)\}\\
R^z\{z-U\bar\lambda-Am(\eta)\}
\end{pmatrix},
\]
let $J(\theta;V_Y,V_z)$ be the Jacobian in (\ref*{eq:jacobian}) with these residualizers, and put $\widetilde S=J^\trans\widetilde e$.

\begin{enumerate}[label=\textup{(\alph*)}]
\item With the working covariances fixed, $D_{\bar\gamma}\widetilde S=D_{\bar\lambda}\widetilde S=0$ sample by sample.
\item At the truth, $E(\widetilde S\mid\mathcal C)=0$ for every admissible $(\bar\gamma,\bar\lambda,V_Y,V_z)$. Consequently, derivatives of the conditional expected score with respect to admissible covariance directions are zero.
\item Under the conditional Gaussian model, every score coordinate is $L_2$-orthogonal to every allowed variation-independent Gaussian covariance score. At the true covariances, if $X$ and $U$ span the complete linear mean-nuisance tangent spaces, $\widetilde S$ is the nuisance-projected Gaussian likelihood score.
\end{enumerate}
\end{proposition*}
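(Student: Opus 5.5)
The plan is to work directly with the residualizers, using only two algebraic facts: $R^YX=M_{V_Y^{-1/2}X}V_Y^{-1/2}X=0$ and $R^zU=0$. These hold because $M_B$ annihilates the column span of $B$, and the Moore--Penrose form of $M_B$ makes this true even when $B$ has redundant columns.

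For part (a), I will note that $\bar\gamma$ enters $\widetilde e$ only through $R^YX\bar\gamma=0$, and $\bar\lambda$ only through $R^zU\bar\lambda=0$. The Jacobian $J(\theta;V_Y,V_z)$ does not involve $(\bar\gamma,\bar\lambda)$ at all, since its entries are $R^Yg$, $\beta R^YG$ and $R^zA\dot m$. Hence $\widetilde S$ is constant in $(\bar\gamma,\bar\lambda)$ realization by realization, and both derivatives vanish identically.

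For part (b), I will evaluate at the true $\theta$ and substitute the models \eqref{eq:outcome-model}--\eqref{eq:report-model}. The top block becomes $R^Y\{X(\gamma-\bar\gamma)+\Sigma^{1/2}\xi\}=R^Y\Sigma^{1/2}\xi$, and the bottom block becomes $R^z\Omega^{1/2}\upsilon$. The conditioning field $\mathcal C$ contains the training field, so $J$ and the working covariances $V_Y,V_z$ are $\mathcal C$-measurable. By (E1) the innovations are centered and independent of training, so $E(\widetilde S\mid\mathcal C)=J^\trans E(\widetilde e\mid\mathcal C)=0$ for every admissible value of $(\bar\gamma,\bar\lambda,V_Y,V_z)$. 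A function that is identically zero on the admissible set has zero derivative along every admissible direction. This covers perturbations of the working covariances and of the true covariance nuisances alike: by variation independence the latter leave the mean unchanged, so the conditional mean of $\widetilde e$ stays zero.

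For part (c), every coordinate of $\widetilde S$ is a fixed linear form $c^\trans(\xi,\upsilon)$ in the Gaussian innovation. A Gaussian covariance score is a centered quadratic form $\tfrac12\{w^\trans Dw-\tr(D)\}$ in the standardized innovation $w$. The product of a linear form and a centered quadratic form has expectation given by third moments, which vanish for a centered Gaussian, so the two are $L_2$-orthogonal.

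For the efficiency claim, I will take the true covariances, $V_Y=\Sigma$ and $V_z=\Omega$. The mean-parameter likelihood score for $\theta$ is then $J_0^\trans(\text{whitened residual})$, where $J_0$ is the unresidualized whitened Jacobian. The linear mean-nuisance tangent space is spanned by the whitened nuisance scores $(L_yX)^\trans L_y\Sigma^{1/2}\xi$ and its report analogue. Projecting the $\theta$-score onto this space in $L_2$ applies $M_{L_yX}$ and $M_{L_zU}$ to the derivative columns, by the usual Gaussian linear-regression (Frisch--Waugh) computation. This produces exactly $J^\trans e$ with the residualizers \eqref{eq:residualizers}.

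Orthogonality to the covariance scores has already been established, so projecting off the covariance tangent directions changes nothing. The conditional independence of the two channels makes the projections blockwise. The result is the nuisance-projected score, which is the same calculation as Steps 2--3 of the identification proof.

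The only delicate point is (b) with covariance perturbations: one must check that admissible covariance directions do not move the mean, and this is exactly variation independence in (E1). I expect this bookkeeping, rather than any computation, to be the main obstacle.
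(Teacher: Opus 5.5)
Your proposal is correct and follows the paper's proof essentially step for step: the annihilation identities $R^YX=R^zU=0$ give (a), and substituting the model at the truth with conditional centering of the held-out innovations given $\mathcal C$ gives (b) and its derivative consequence. Vanishing first and third Gaussian moments then give the orthogonality in (c), and your Frisch--Waugh computation spells out the paper's one-line remark that GLS projection off the complete mean-nuisance columns is the likelihood-score projection, since the linear target score is already orthogonal to the covariance tangent directions and channel independence makes the projection blockwise.
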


\begin{proof}
Because $R^YX=R^zU=0$, changing $\bar\gamma$ or $\bar\lambda$ leaves $\widetilde e$, and hence $\widetilde S$, unchanged. At the truth, the same annihilation reduces $\widetilde e$ to fixed residualizer matrices multiplying centered held-out innovations. Conditional centering therefore gives $E(\widetilde S\mid\mathcal C)=0$ for every working covariance; differentiation of this pointwise identity proves the expected-score orthogonality.

For a covariance direction $H$, a Gaussian covariance score has the form
\[
s_H(\varepsilon)=\frac12\left\{
\varepsilon^\trans\mathcal V^{-1}H\mathcal V^{-1}\varepsilon
-\tr(\mathcal V^{-1}H)
\right\},
\]
whereas each target-score coordinate is $a^\trans\varepsilon$. Their covariance vanishes by centered first and third Gaussian moments. At the true covariances, generalized least-squares projection off the complete linear mean-nuisance columns is precisely likelihood-score projection.
\end{proof}

\subsection{Uniform expansion}

\begin{proof}[Proof of Theorem~\ref*{thm:path-main}]
Write $\ell_n=\log(2T_nK_f)$. Define the simultaneous good events
\[
\mathcal E^V_{tk}=
\left\{
\norm{\widehat\Sigma_{tk}^{(-k)}-\Sigma_{tk}}_{\op}
+\norm{\widehat\Omega_{tk}^{(-k)}-\Omega_{tk}}_{\op}
\le\delta_n
\right\},
\]
and $\mathcal E^\theta_{tk}=\{\norm{\theta^0_{t,-k}-\theta_t}\le a_n\}$. Conditions (E5)--(E6) make their global intersection have probability tending to one uniformly. Conditional score calculations below condition only on $\mathcal T_{tk}$ and use the fold-local event; they do not condition on the global intersection.

\paragraph{1. Residualizer perturbation.}
On the clipped spectral interval, the inverse-square-root map is operator Lipschitz. This follows from
\[
V^{-1/2}=\frac1\pi\int_0^\infty s^{-1/2}(V+sI)^{-1}\,ds
\]
and the resolvent identity. The nonzero singular-value bounds in (E2), the formula $P_B=B(B^\trans B)^\dagger B^\trans$, and standard inverse perturbation therefore yield
\begin{equation}
\max_{t,k}\left(
\norm{\widehat R^Y_{tk}-R^Y_{tk}}_{\op}
+\norm{\widehat R^z_{tk}-R^z_{tk}}_{\op}
\right)\lesssim\delta_n.
\label{eq:si-residualizer-perturbation}
\end{equation}
Together with the columnwise norm bounds in (E3), the operator bound
\eqref{eq:si-residualizer-perturbation} gives
\begin{equation}
\max_{t,k}\sup_{\norm{\theta-\theta_t}\le r_0/2}
\norm{\widehat J_{tk}(\theta)-J_{tk}(\theta)}_{\op}
\lesssim\delta_n\sqrt{n_{tk}}.
\label{eq:si-jacobian-perturbation}
\end{equation}
Consequently, if $\widehat{\mathcal I}_t^\star=\sum_k\widehat{\mathcal I}_{tk}(\theta_t)$,
\begin{equation}
\max_t n^{-1}\norm{\widehat{\mathcal I}_t^\star-\mathcal I_t}_{\op}
=O_{\PP}^{\mathrm{unif}}(\delta_n).
\label{eq:si-information-truth}
\end{equation}

\paragraph{2. Gaussian maximal bounds.}
At the truth the whitened residual covariance is an orthogonal projection, so
\[
S_t\mid\mathcal F_{t-1}\sim N_d(0,\mathcal I_t).
\]
Condition (E4) and a Gaussian union bound imply
\begin{equation}
\max_{t\le T_n}\norm{S_t}
=O_{\PP}^{\mathrm{unif}}(\sqrt{n\ell_n}).
\label{eq:si-score-max}
\end{equation}
Conditionally on $\mathcal T_{tk}$, the feasible-oracle score difference is a centered Gaussian linear form. Equations \eqref{eq:si-residualizer-perturbation}--\eqref{eq:si-jacobian-perturbation} and a union bound give
\begin{equation}
\max_t\norm{\sum_k\{\widehat S_{tk}(\theta_t)-S_{tk}(\theta_t)\}}
=O_{\PP}^{\mathrm{unif}}(\delta_n\sqrt{n\ell_n}).
\label{eq:si-feasible-score-max}
\end{equation}

\paragraph{3. Derivatives and Taylor expansion.}
Holding training quantities fixed,
\[
D\widehat e_{tk}(\theta)[h]=-\widehat J_{tk}(\theta)h,
\]
and hence
\[
D\widehat S_{tk}(\theta_t)[h]
=-\widehat{\mathcal I}_{tk}(\theta_t)h+Z_{tk}h,
\quad
Z_{tk}h=\{D\widehat J_{tk}(\theta_t)[h]\}^\trans\widehat e_{tk}(\theta_t).
\]
Conditional Gaussian concentration and (E3) give
\[
\max_{t,k}\norm{Z_{tk}}_{\op}
=O_{\PP}^{\mathrm{unif}}(\sqrt{n\ell_n}),
\qquad
\max_{t,k}\sup_{\norm{\theta-\theta_t}\le r_0}
\norm{D^2\widehat S_{tk}(\theta)}_{\op}
=O_{\PP}^{\mathrm{unif}}(n).
\]
With $d_{tk}=\theta^0_{t,-k}-\theta_t$, Taylor's theorem gives
\begin{equation}
\widehat S_{tk}(\theta_t+d_{tk})
=\widehat S_{tk}(\theta_t)
-\widehat{\mathcal I}_{tk}(\theta_t)d_{tk}
+Z_{tk}d_{tk}+r_{tk},
\quad
\max_{t,k}\norm{r_{tk}}=O_{\PP}^{\mathrm{unif}}(na_n^2).
\label{eq:si-score-taylor}
\end{equation}
Similarly,
\[
\max_{t,k}
\norm{\widehat{\mathcal I}_{tk}(\theta^0_{t,-k})
-\widehat{\mathcal I}_{tk}(\theta_t)}_{\op}
\lesssim n_{tk}a_n.
\]
Combining this with \eqref{eq:si-information-truth} proves (\ref*{eq:information-rate-main}). Weyl's inequality and (E4) then imply simultaneous nonsingularity and equality of the safe and ordinary inverses with probability tending to one.

\paragraph{4. Cancellation and oracle replacement.}
Substituting \eqref{eq:si-score-taylor} into (\ref*{eq:estimator}) cancels the first-order pilot term exactly. The inverse bound, the bounds on $Z_{tk}$ and $r_{tk}$, and fixed $K_f$ yield
\[
\max_t
\norm{\widehat\theta_t-\theta_t
-\widehat{\mathcal I}_t^{-1}\sum_k\widehat S_{tk}(\theta_t)}
=O_{\PP}^{\mathrm{unif}}\!\left(
a_n^2+a_n\sqrt{\ell_n/n}
\right).
\]
Finally, using
\[
\widehat{\mathcal I}_t^{-1}-\mathcal I_t^{-1}
=\widehat{\mathcal I}_t^{-1}
(\mathcal I_t-\widehat{\mathcal I}_t)\mathcal I_t^{-1}
\]
with \eqref{eq:si-score-max}, \eqref{eq:si-feasible-score-max}, and (\ref*{eq:information-rate-main}) proves (\ref*{eq:path-expansion-main}). Adding the oracle term $\mathcal I_t^{-1}S_t=O_{\PP}^{\mathrm{unif}}(\sqrt{\ell_n/n})$ proves (\ref*{eq:path-rate-main}).
\end{proof}

\subsection{Why the leave-fold pilot requires its own condition}

\begin{proposition*}[Full-date information does not imply pilot consistency]
There is a Gaussian triangular array satisfying (E1)--(E5) for which no pilot measurable with respect to one leave-fold training sample can satisfy (E6) uniformly over a two-point model class.
\end{proposition*}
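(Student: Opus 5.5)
The plan is a two-point (Le Cam) construction in which the full date experiment is strongly identified while each leave-fold training sample is not. Take $d=2$ (one strength and one composition coordinate, $q=1$), $K_f=2$, and let the two folds carry complementary information. The outcome design makes fold~1's composition derivative collinear with its strength derivative. Concretely, in fold~1 the report channel has $Q_{t1}=0$, for example because the report design contains only row-constant covariates there, which Section~\ref{sec:charts-main} shows are annihilated. In fold~1 the outcome Jacobian $[\,r_{t1}\ \ \beta H_{t1}\,]$ is also chosen of rank one. Fold~2 is built symmetrically, with a different rank-one direction. Then each fold information $\mathcal I_{tk}$ is singular, but $\mathcal I_t=\mathcal I_{t1}+\mathcal I_{t2}\succeq c_InI_2$. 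All of this is a linear Gaussian design, so (E1)--(E3) are immediate. Covariances are known, so (E5) holds with $\delta_n=0$. (E4) follows from the sum of two rank-one Gram matrices with linearly independent directions, each scaled by $n$.

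Next I choose the two-point class $\{P_0,P_1\}$ with $\theta^{(0)}\ne\theta^{(1)}$ at a fixed separation $\varepsilon>0$ independent of $n$. The difference $\theta^{(1)}-\theta^{(0)}$ lies in the kernel of the fold-1 Jacobian, so the fold-1 held-out law is identical under both points. The training sample for the fold~2 pilot is fold~1, so $\mathcal T_{t2}$ has the same distribution under $P_0$ and $P_1$. This exact equality needs the fold-1 mean map to be affine, which is why I use the linear gravity chart and a linear outcome mean near the truth. If the exposure $g(\eta)$ is nonlinear, I would instead define the outcome design so that $\beta g_{t1}(\eta)$ depends on $\theta$ only through one linear functional. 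Alternatively, I would pick the two points on the same level set of the fold-1 mean, using the exact fiber of Theorem~\ref{thm:identification-main}(a), with dimension $N-3\ge1$ when $N\ge4$.

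Given equal laws, any $\mathcal T_{t2}$-measurable pilot $\theta^0_{t,-2}$ has the same distribution under both points. The triangle inequality gives
\[
P_0\{\norm{\theta^0-\theta^{(0)}}>\varepsilon/2\}+P_1\{\norm{\theta^0-\theta^{(1)}}>\varepsilon/2\}\ge1.
\]
Since $a_n<\varepsilon/2$ eventually, one of the two probabilities is at least $1/2$ along the sequence. This contradicts the uniform convergence in (E6). Maximizing over $t,k$ only makes the violation stronger.

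The main obstacle is arranging, within the paper's specific model (row-softmax chart, residualizers, fixed fold-size ratios $c_fn\le n_{tk}\le C_fn$), that each fold is exactly uninformative in one direction while the sum keeps a uniform floor. Exact non-informativeness across the whole segment between the two points, not just first-order singularity, is also required. I would first try the cleanest realization: the report channel contributes zero in fold~1 and full-rank $Q$ in fold~2. The outcome channel in fold~1 is placed on the fiber $\mathcal F_i$ so that it cannot move $\eta$ at all. Then the pilot trained on fold~1 cannot see $\eta$, while the pooled information still identifies it.
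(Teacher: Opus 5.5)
Your argument is correct in substance, and its core is the paper's. You exhibit two parameter values at fixed separation under which the leave-fold training sample has the same law, so any training-measurable pilot misses one of them by more than $a_n$ with probability at least $1/2$.

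What differs is the construction. The paper takes $T_n=1$, a scalar target, $Y_1=\theta\mathbf 1_m+\varepsilon_1$ and $Y_2=\varepsilon_2$. The training sample for held-out fold~1 is then pure noise, so equality of its law is trivial, (E4) comes from fold~1 alone, and (E1)--(E5) are immediate. Your version locates the failure inside the paper's own outcome--report class, which says more about when (E6) bites in practice, but it is heavier than needed. One failing pilot already violates the maximum in (E6), so making both folds singular with complementary rank-one directions is superfluous.

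Three points need tightening.

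First, a separating direction in the kernel of a rank-one fold-1 Jacobian gives only first-order indistinguishability. Since $\beta g(\eta)$ is not affine in $(\beta,\eta)$, a mixed separation generally changes the fold-1 outcome mean at fixed distance. With a linear chart, $Q_{t1}=0$ already makes the fold-1 report law exactly $\eta$-free up to the nuisance, so the repair belongs on the outcome side. Separate the points only in $\eta$ with a common $\beta$, and make $\eta\mapsto g_{t1}(\eta)$ exactly constant on the whole segment. A constant fold-1 lag vector does this, and it reduces your design to the paper's structure with one fold uninformative. Having one point on a fiber $\mathcal F_i$ is not enough; the entire chart segment between the two points must stay in it.

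Second, $\mathcal T_{tk}\supseteq\mathcal F_{t-1}$ contains the past. With $T_n\ge2$ and a common path, earlier dates' fold-2 data identify $\eta$, so the training laws differ at later dates. Take $T_n=1$ as the paper does, or argue only at $t=1$ with an uninformative $\mathcal F_0$.

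Third, your displayed inequality fails with the strict $>\varepsilon/2$: a pilot fixed at the midpoint makes both terms zero. Use $\ge\varepsilon/2$, or argue directly with the disjoint balls of radius $a_n<\varepsilon/2$, which is all your conclusion uses.
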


\begin{proof}
Let $T_n=1$, $K_f=2$, $m=n/2$, and consider a scalar target. Put
\[
Y_1=\theta\mathbf1_m+\varepsilon_1,
\qquad
Y_2=\varepsilon_2,
\qquad
\varepsilon_1,\varepsilon_2\stackrel{\mathrm{ind}}\sim N_m(0,I_m).
\]
The full-data information is $m\asymp n$, and all other regularity conditions are immediate. If fold 1 is held out, the training law is the same under two distinct interior values $\theta_0,\theta_1$. For any training-measurable pilot $\widetilde\theta$ and any $a<|\theta_1-\theta_0|/2$, the two events $\{|\widetilde\theta-\theta_j|\le a\}$ are disjoint under their common training law. Hence at least one parameter has pilot-error probability at least $1/2$. No deterministic $a_n\to0$ can satisfy (E6).
\end{proof}

\subsection{A sufficient profiled pilot}

Stack the training folds $j\ne k$ as
\[
O_{t,-k}=V_{t,-k}c_t+\mu_{t,-k}(\theta_t)+\varepsilon_{t,-k},
\]
where $V_{t,-k}$ contains all prespecified constrained linear outcome and report nuisances. Let $R_{t,-k}$ be the oracle whitened residualizer and $\widehat R_{t,-k}$ its training estimate. On a fixed compact convex $\Theta$ containing $B(\theta_t,r_0)$ for every $t$, with $\mu$ twice differentiable on a neighborhood of $\Theta$, define the measurable lexicographically selected minimizer
\[
\theta^0_{t,-k}=\operatorname{lexmin}\arg\min_{\theta\in\Theta}
\frac12\norm{\widehat R_{t,-k}\{O_{t,-k}-\mu_{t,-k}(\theta)\}}^2.
\]

\begin{theorem*}[Uniform rate of the stacked profiled pilot]
Let $L_n=\log(2T_nK_f)+x_n$, where $x_n\to\infty$ and $L_n=o(n)$. Assume: conditional Gaussian training errors with $O(n)$ dimension and uniformly bounded covariance; a fixed compact target with twice differentiable $\mu$ and derivatives $O(\!(\sqrt n))$; $O(n)$ residual degrees of freedom and uniformly stable whitening; the leave-fold separation
\[
\norm{R_{t,-k}\{\mu_{t,-k}(\theta)-\mu_{t,-k}(\theta_t)\}}^2
\ge cn\min\{\norm{\theta-\theta_t}^2,r_0^2\};
\]
and
\[
P\left\{\max_{t,k}\norm{\widehat R_{t,-k}-R_{t,-k}}_{\op}>C\rho_n^{\mathrm{tr}}\right\}
\le\epsilon_n,
\qquad \rho_n^{\mathrm{tr}}=o(1).
\]
Then, for constants $C_1,c_1>0$,
\[
\sup_{P\in\mathcal P_n}
P\left\{
\max_{t,k}\norm{\theta^0_{t,-k}-\theta_t}
>C_1\left(\sqrt{L_n/n}+\rho_n^{\mathrm{tr}}\right)
\right\}
\le\epsilon_n+C_1e^{-c_1x_n}.
\]
Thus (E6) holds when $a_n=C_1\{\sqrt{L_n/n}+\rho_n^{\mathrm{tr}}\}=o(1)$.
\end{theorem*}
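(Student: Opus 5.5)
The plan is the standard M-estimation argument: a uniform deviation bound for the criterion, turned into a parameter bound through the separation condition. Everything is done on the event $\mathcal G=\{\max_{t,k}\norm{\widehat R_{t,-k}-R_{t,-k}}_{\op}\le C\rho_n^{\mathrm{tr}}\}$, which has probability at least $1-\epsilon_n$; a Gaussian union bound over the $T_nK_f$ folds will supply the $C_1e^{-c_1x_n}$ term.

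Write $\Delta(\theta)=\mu_{t,-k}(\theta)-\mu_{t,-k}(\theta_t)$ and use the residual at the truth, $\widehat R\{O-\mu(\theta)\}=\widehat R\varepsilon-\widehat R\Delta(\theta)$. The linear nuisances drop out because $\widehat R V=0$ by construction. Since $\theta^0_{t,-k}$ is a minimizer, its criterion value is at most that at $\theta_t$, which gives the basic inequality
\[
\norm{\widehat R\Delta(\theta^0)}^2\le 2\inner{\widehat R\varepsilon}{\widehat R\Delta(\theta^0)}.
\]
Next I pass from $\widehat R$ to $R$:
\[
\norm{\widehat R\Delta}\ge\norm{R\Delta}-C\rho_n^{\mathrm{tr}}\norm{\Delta},\qquad \norm{\Delta(\theta)}\lesssim\sqrt n\,\norm{\theta-\theta_t}.
\]
The second bound comes from the $O(\sqrt n)$ derivative bound on the compact convex $\Theta$. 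Combined with the leave-fold separation, this yields $\norm{\widehat R\Delta(\theta)}^2\ge (c/2)\,n\min\{\norm{\theta-\theta_t}^2,r_0^2\}$ once $\rho_n^{\mathrm{tr}}$ is small.

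\textbf{Main obstacle: the cross term.} The difficulty is that $\widehat R$ is training-measurable and so is not independent of $\varepsilon_{t,-k}$. I would split the cross term as
\[
\inner{\widehat R\varepsilon}{\widehat R\Delta}=\inner{\varepsilon}{R^\trans R\Delta}+\inner{\varepsilon}{(\widehat R^\trans\widehat R-R^\trans R)\Delta}.
\]
The second piece is bounded crudely by $\norm\varepsilon\,\rho_n^{\mathrm{tr}}\norm\Delta\lesssim\sqrt n\,\rho_n^{\mathrm{tr}}\cdot\sqrt n\,\norm{\theta-\theta_t}$, using $\norm\varepsilon=O(\sqrt n)$ with exponential probability. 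This crude bound is exactly where the $\rho_n^{\mathrm{tr}}$ term in the rate comes from.

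For the first piece, a Gaussian process over $\theta$, I would use a localized empirical-process bound:
\[
\sup_{\theta}\frac{|\inner{\varepsilon}{R^\trans R\Delta(\theta)}|}{\sqrt n\,\norm{\theta-\theta_t}}\lesssim\sqrt{d}+\sqrt{\log(2T_nK_f)+x_n}.
\]
The supremum runs over a compact set of fixed dimension $d$. The argument would be Lipschitzness of $\theta\mapsto\Delta(\theta)/\norm{\theta-\theta_t}$, which follows from the $C^2$ smoothness of $\mu$ via a first-order Taylor representation, plus Gaussian concentration of suprema (Borell--TIS) together with chaining over an $\varepsilon$-net. A union bound over $(t,k)$ then costs $e^{-c_1x_n}$.

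Putting the pieces together gives
\[
n\min\{\norm{\theta^0-\theta_t}^2,r_0^2\}\lesssim\norm{\theta^0-\theta_t}\,\bigl(\sqrt{nL_n}+n\rho_n^{\mathrm{tr}}\bigr).
\]
Because $L_n=o(n)$ and $\rho_n^{\mathrm{tr}}=o(1)$, the case $\norm{\theta^0-\theta_t}>r_0$ would force $nr_0^2\lesssim\mathrm{diam}(\Theta)\,\bigl(\sqrt{nL_n}+n\rho_n^{\mathrm{tr}}\bigr)=o(n)$, which is impossible for large $n$. So $\theta^0$ lies in the $r_0$-ball, where the minimum in the separation condition is $\norm{\theta^0-\theta_t}^2$. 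Dividing by $n\norm{\theta^0-\theta_t}$ gives $\norm{\theta^0-\theta_t}\le C_1(\sqrt{L_n/n}+\rho_n^{\mathrm{tr}})$ simultaneously over $(t,k)$. Measurability of the lexicographic selection is only needed for the probability statements to make sense, and plays no role in the bound.
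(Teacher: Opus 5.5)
Your proposal is correct and follows the paper's proof essentially step for step: annihilation of $V$ and the basic inequality at the minimizer, Dudley/Borell control of the oracle Gaussian cross term with a union bound over $(t,k)$, a crude Cauchy--Schwarz bound on the residualizer-perturbation part that produces the $\rho_n^{\mathrm{tr}}$ term, global curvature to force the minimizer into the $r_0$-ball, and division to get the rate. Your one variation, obtaining the global stochastic bound from the normalized one times $\mathrm{diam}(\Theta)$ rather than from a separate global process bound, is harmless. One small repair is needed: $\theta\mapsto\Delta(\theta)/\norm{\theta-\theta_t}$ is not Lipschitz in $\theta$ near $\theta_t$, because its limit there depends on the direction of approach; index the process instead by $(u,\rho)$ with $\theta=\theta_t+\rho u$, where the integral Taylor form $\int_0^1 D\mu(\theta_t+s\rho u)\,ds\,u$ is $O(\sqrt n)$-Lipschitz, and the polynomial covering bound then follows.
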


\begin{proof}
Write $\Delta(\theta)=\mu(\theta)-\mu(\theta_t)$. Exact annihilation of $V$ gives
\[
Q(\theta)-Q(\theta_t)
=\tfrac12\norm{\widehat R\Delta(\theta)}^2
-\inner{\widehat R\varepsilon}{\widehat R\Delta(\theta)}.
\]
The fixed-dimensional, twice differentiable image classes formed by $R\Delta(\theta)$ and $R\Delta(\theta)/\norm{\theta-\theta_t}$ have polynomial covering numbers and radius $O(\!(\sqrt n))$. Dudley's inequality and Borell concentration, followed by a union bound over $(t,k)$, give simultaneous stochastic-process bounds $O(\!(\sqrt{nL_n}))$, locally per unit $\norm{\theta-\theta_t}$ and globally.

Residualizer perturbation changes the inner product by at most $Cn\rho_n^{\mathrm{tr}}\norm{\theta-\theta_t}$ locally. The separation assumption and $\rho_n^{\mathrm{tr}}=o(1)$ give curvature $cn\norm{\theta-\theta_t}^2$ inside the $r_0$ ball and $cnr_0^2$ outside it. The global curvature dominates the global stochastic term, so every minimizer lies in the local ball. Comparing $Q(\theta^0)$ with $Q(\theta_t)$ then gives
\[
cn\norm{\theta^0-\theta_t}^2
\le C\{\sqrt{nL_n}+n\rho_n^{\mathrm{tr}}\}\norm{\theta^0-\theta_t},
\]
which proves the rate after division.
\end{proof}

\begin{remark}[Nuisance extensions]
Prespecified reporter, partner, and report-mode coefficients are profiled exactly, although the separation condition depends on their design. Fixed-dimensional covariance or conditionally exogenous sampling parameters contribute to $\rho_n^{\mathrm{tr}}$ through their residualizer perturbation rate. Nonignorable selection requires the joint observed-data likelihood and is not covered by conditioning or inverse-probability weighting unless its own overlap, score, and separation assumptions are verified.
\end{remark}

\section{Proof of the simultaneous coefficient-path band}
\label{sec:si-bands}
\begin{proof}[Proof of Theorem~\ref*{thm:bands-main}]
Write $T=T_n$ and $q_n=a_n+\delta_n$. If $q_n=0$ along a subsequence, the information approximation is exact there and the following bounds hold with zero right side; otherwise restrict to the eventually positive subsequence. Put
\[
Z_t=
\frac{e_1^\trans\mathcal I_t^{-1}S_t}
{\sqrt{v_{\beta,t}}},
\qquad
v_{\beta,t}=e_1^\trans\mathcal I_t^{-1}e_1.
\]

\paragraph{1. Exact leading law.}
Conditionally on $\mathcal F_{n,t-1}$, the numerator is Gaussian with variance $v_{\beta,t}$, so $Z_t\mid\mathcal F_{n,t-1}\sim N(0,1)$. This conditional distribution is nonrandom, hence $Z_t$ is independent of $\mathcal F_{n,t-1}$. Since earlier $Z_s$ are measurable with respect to that sigma-field, iteration gives
\begin{equation}
Z_1,\ldots,Z_T\stackrel{\mathrm{iid}}\sim N(0,1).
\label{eq:si-band-iid}
\end{equation}

\paragraph{2. Safe inverse and variance consistency.}
The information bounds imply
\[
(C_In)^{-1}\le v_{\beta,t}\le(c_In)^{-1}.
\]
Let $D_n=\max_tn^{-1}\norm{\widehat{\mathcal I}_t-\mathcal I_t}_{\op}$. By Theorem~\ref*{thm:path-main}, $D_n=O_{\PP}^{\mathrm{unif}}(q_n)=o_{\PP}^{\mathrm{unif}}(1)$. On $\{D_n\le c_I/2\}$, Weyl's inequality makes the safe inverse equal the ordinary inverse, and the inverse identity yields
\[
\max_t\norm{\widehat{\mathcal I}_t^\ominus-\mathcal I_t^{-1}}_{\op}
=O_{\PP}^{\mathrm{unif}}(q_n/n).
\]
The safe inverse is $O(n^{-1})$ on every sample, so the same stochastic order holds without an exceptional undefined event. Consequently,
\begin{equation}
\max_t\left|\frac{\widehat v_{\beta,t}}{v_{\beta,t}}-1\right|
=O_{\PP}^{\mathrm{unif}}(q_n)
=o_{\PP}^{\mathrm{unif}}(1).
\label{eq:si-band-var}
\end{equation}

\paragraph{3. Studentized approximation.}
Let
\[
r_t=\widehat\theta_t-\theta_t-\mathcal I_t^{-1}S_t,
\qquad
\widehat Z_t=\frac{\widehat\beta_t-\beta_t}{\sqrt{\widehat v_{\beta,t}}}.
\]
Then
\[
\widehat Z_t-Z_t
=\frac{e_1^\trans r_t}{\sqrt{\widehat v_{\beta,t}}}
+Z_t\left(\sqrt{\frac{v_{\beta,t}}{\widehat v_{\beta,t}}}-1\right).
\]
The path expansion, \eqref{eq:si-band-var}, and a Gaussian union bound give
\begin{align*}
\Delta_n
&:=\max_t|\widehat Z_t-Z_t|\\
&=O_{\PP}^{\mathrm{unif}}\left(
\sqrt n\,a_n^2+q_n\sqrt{\ell_n}
\right),
\end{align*}
and therefore
\begin{equation}
\Delta_n\sqrt{\ell_n}
=O_{\PP}^{\mathrm{unif}}\left(
\sqrt{n\ell_n}\,a_n^2+q_n\ell_n
\right)
=o_{\PP}^{\mathrm{unif}}(1).
\label{eq:si-band-small}
\end{equation}

\paragraph{4. Anti-concentration and transfer.}
Let $M_T=\max_t|Z_t|$. Its distribution function and density on $x>0$ are
\[
F_T(x)=\{2\Phi(x)-1\}^T,
\qquad
f_T(x)=2T\phi(x)\{2\Phi(x)-1\}^{T-1}.
\]
Using $T(1-u)u^{T-1}\le1$ and standard Mills bounds gives the uniform density bound
\[
\sup_{x>0}f_T(x)\le C\sqrt{\log(2T)}.
\]
Hence
\[
\sup_xP(|M_T-x|\le\varepsilon)
\le2C\varepsilon\sqrt{\log(2T)}.
\]
From \eqref{eq:si-band-small}, choose a deterministic $\varepsilon_n=o(\ell_n^{-1/2})$ such that $\sup_PP(\Delta_n>\varepsilon_n)\to0$. The maximum map is one-Lipschitz, so the feasible maximum $\widehat M_T=\max_t|\widehat Z_t|$ obeys
\[
\sup_{P\in\mathcal P_n}\sup_x
|P(\widehat M_T\le x)-F_T(x)|\to0.
\]

\paragraph{5. Calibration.}
By definition of $c_{1-\alpha,T}$,
\[
F_T(c_{1-\alpha,T})
=\{2\Phi(c_{1-\alpha,T})-1\}^T
=1-\alpha.
\]
The event $\{\beta_t\in C_t\text{ for all }t\}$ equals $\{\widehat M_T\le c_{1-\alpha,T}\}$. Evaluating the preceding uniform cdf approximation at this critical value proves the result.
\end{proof}

\begin{remark}[What supplies exact calibration]
The sequential conditional-Gaussian law makes the standardized leading scores iid standard normal even when the predictable information matrices vary with $t$. Cross-fitting is used for nuisance estimation, but it does not by itself establish this sequential law. Under non-Gaussian or temporally dependent leading scores, a different high-dimensional approximation and critical-value construction is required.
\end{remark}

\section{The band under non-Gaussian scores}
\label{sec:si-robust}
This section removes the exact conditional Gaussian law from the simultaneous
band. The leading standardized scores are martingale differences with unit
conditional variance whatever the innovation law; what Gaussianity added was
the exact conditional $N(0,1)$ distribution. We show that the same
\v{S}id\'{a}k band remains uniformly valid when the held-out innovations are
conditionally centered, standardized, independent across coordinates, and
sub-exponential, provided a computable leverage quantity is small relative to
$\log T_n$. We also give a finite-sample conservative alternative that
requires no distributional approximation at all.

\subsection{Conditions}

Recall the oracle leading term $Z_t=e_1^\trans\mathcal I_t^{-1}S_t/\sqrt{v_{\beta,t}}$.
Because $S_t=\sum_k J_{tk}^\trans R_{tk}\varepsilon_{tk}$ is linear in the
held-out innovation vector $\varepsilon_t$ (the stacked
$(\xi_{tk},\upsilon_{tk})$ over folds), we may write
\[
Z_t=a_t^\trans\varepsilon_t,
\qquad
a_t\ \text{is}\ \mathcal F_{n,t-1}\text{-measurable},
\qquad
\norm{a_t}_2=1 ,
\]
where the unit norm is the definition of the standardization when the
conditional covariance of $\varepsilon_t$ is the identity.

\begin{description}[style=nextline]
\item[(B1) Standardized sub-exponential innovations.]
Conditionally on $\mathcal F_{n,t-1}$, the coordinates of $\varepsilon_t$ are
independent, centered, have unit variance, and satisfy
$\sup_{t,i}\norm{\varepsilon_{t,i}}_{\psi_1}\le\bar b$ for a fixed $\bar b$.
The conditional law may vary with $t$, $i$, and $P$, and may be skewed.

\item[(B2) Sequential structure.]
As in Theorem~\ref*{thm:bands-main}: $S_t$ is $\mathcal F_{n,t}$-measurable,
$\mathcal I_t$ and the weight vectors $a_t$ are
$\mathcal F_{n,t-1}$-measurable, and conditionally on $\mathcal F_{n,t-1}$
the date-$t$ held-out innovations are independent of the past. Under (E1)
this holds by construction of the folds.

\item[(B3) Leverage.]
$\displaystyle\mathrm{lev}_n:=\max_{t\le T_n}\norm{a_t}_\infty$ satisfies
$\mathrm{lev}_n\{1+(\log T_n)^{3/2}\}\to0$.
\end{description}

The quantity $\mathrm{lev}_n$ is the maximal contribution of a single
held-out coordinate to a standardized score and is computable from the
design; it is the exact analogue of a leverage diagnostic. Under the
balanced-design bounds in (E2)--(E4), with rowwise Jacobian norms of order
one, $\mathrm{lev}_n=O(n^{-1/2})$, and (B3) reduces to
$(\log T_n)^{3}=o(n)$, the standard moderate-deviations regime.

\subsection{A Cram\'er-type conditional coverage lemma}

\begin{lemma*}[Conditional relative moderate deviations]
Let $\zeta=\sum_ia_i\varepsilon_i$ with independent centered unit-variance
$\varepsilon_i$, $\norm{\varepsilon_i}_{\psi_1}\le\bar b$, $\norm a_2=1$,
$\norm a_\infty\le\kappa$. There are constants $c_0,C_0$ depending only on
$\bar b$ such that for all $0\le x\le c_0\kappa^{-1/3}$,
\[
\left|\frac{P(\zeta>x)}{1-\Phi(x)}-1\right|\le C_0(1+x^3)\kappa ,
\qquad
\left|\frac{P(\zeta<-x)}{1-\Phi(x)}-1\right|\le C_0(1+x^3)\kappa .
\]
\end{lemma*}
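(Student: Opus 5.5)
The plan is the classical Cramér conjugate-measure (exponential tilting) argument, adapted to weighted sums with sub-exponential summands. The role of the usual $1/\sqrt n$ is played by $\kappa=\norm a_\infty$. By symmetry (replace $\varepsilon_i$ by $-\varepsilon_i$, which preserves all hypotheses) it suffices to treat the upper tail. Since $\norm a_2=1$, the number of coordinates with $|a_i|$ comparable to $\kappa$ is at most $\kappa^{-2}$; $\kappa$ is the only smallness parameter.

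\textbf{Step 1: cumulant expansion.} Let $\Lambda_i(s)=\log E e^{s\varepsilon_i}$. Sub-exponentiality gives analyticity of $\Lambda_i$ on $|s|\le c/\bar b$, with $\Lambda_i(s)=s^2/2+\tfrac{\mu_{3,i}}6 s^3+O(|s|^4)$ uniformly in $i$. For $0\le h\le c_1\kappa^{-1}$ set $\Lambda(h)=\sum_i\Lambda_i(a_ih)$; note $|a_ih|\le c_1$ lies in the analyticity disk. Using $\sum a_i^2=1$, $\sum|a_i|^3\le\kappa$ and $\sum a_i^4\le\kappa^2$, I obtain
\[
\Lambda(h)=\tfrac{h^2}2+O(\kappa h^3),\qquad \Lambda'(h)=h+O(\kappa h^2),\qquad \Lambda''(h)=1+O(\kappa h).
\]

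\textbf{Step 2: conjugate measure.} Define $dP_h=e^{h\zeta-\Lambda(h)}dP$. Under $P_h$ the $\varepsilon_i$ remain independent, $\zeta$ has mean $m(h)=\Lambda'(h)$ and variance $\sigma^2(h)=\Lambda''(h)$. Choose $h=h(x)$ solving $\Lambda'(h)=x$; this is possible by monotonicity for $x\le c_0\kappa^{-1/3}$, and gives $h=x+O(\kappa x^2)$. Then
\[
P(\zeta>x)=e^{\Lambda(h)-hx}\,E_h\bigl[e^{-h(\zeta-x)}\ind\{\zeta>x\}\bigr].
\]
The prefactor equals $e^{-x^2/2+O(\kappa x^3)}$. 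The expectation is $\int_0^\infty e^{-h\sigma u}\,dF_h^\ast(u)$, where $F_h^\ast$ is the law of $(\zeta-x)/\sigma$ under $P_h$.

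\textbf{Step 3: Berry--Esseen under the tilt.} This is the main obstacle. The tilted summands $a_i\varepsilon_i$ have third absolute moments bounded by $C|a_i|^3$ uniformly for $|a_ih|\le c_1$, so the tilted Lyapunov ratio is $O(\kappa)$ and Berry--Esseen gives $\sup_u|F_h^\ast(u)-\Phi(u)|\le C\kappa$.

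Integrating by parts replaces $F_h^\ast$ by $\Phi$ at cost $O(\kappa)$, and the Gaussian integral is $e^{(h\sigma)^2/2}\{1-\Phi(h\sigma)\}$. However, a naive absolute error of $O(\kappa)$ compared with $\psi(h\sigma)=e^{(h\sigma)^2/2}\{1-\Phi(h\sigma)\}\asymp(1+h)^{-1}$ only yields a relative error of $O(\kappa(1+x))$. This is fine, since it is dominated by $(1+x^3)\kappa$. The integration-by-parts error is bounded by $\sup|F^\ast-\Phi|\cdot\int h\sigma e^{-h\sigma u}du=O(\kappa)$, so the relative error is $O(\kappa(1+x))$.

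\textbf{Step 4: assembly.} Combine the terms:
\[
e^{-x^2/2}\psi(h\sigma)\,e^{O(\kappa x^3)}\bigl(1+O(\kappa(1+x))\bigr),
\]
and compare with $1-\Phi(x)=\phi(x)\sqrt{2\pi}\,\psi(x)$, up to the constant absorbed consistently. Since $h\sigma=x+O(\kappa x^2)$ and $\psi$ has logarithmic derivative $O(1+x)$, the ratio $\psi(h\sigma)/\psi(x)$ is $1+O(\kappa x^2(1+x))$. On the range $x\le c_0\kappa^{-1/3}$ every exponent is $O(1)$, so $e^{O(\kappa x^3)}-1=O(\kappa x^3)$. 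Collecting the errors gives $|P(\zeta>x)/(1-\Phi(x))-1|\le C_0(1+x^3)\kappa$.

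For small $x\le1$ one can bypass tilting altogether: direct Berry--Esseen together with $1-\Phi(x)\ge c$ gives the bound. The delicate bookkeeping is the uniformity of the tilted third moments and of the cumulant remainders in $i$. Both follow from a single bound $E|\varepsilon_i|^ke^{s|\varepsilon_i|}\le C_k$ for $|s|\le c/\bar b$.
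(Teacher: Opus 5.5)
Your proof is correct, but it takes a different route from the paper's. The paper's proof is a reduction to a cited theorem. It bounds the cumulants, $|\Gamma_j(\zeta)|\le\sum_i|a_i|^j\,j!\,(C\bar b)^j\le j!\,(C\bar b)^j\kappa^{j-2}$ for $j\ge3$, and recognizes this as the Statulevi\v{c}ius condition with $\Delta\asymp\kappa^{-1}$. It then invokes the Cram\'er--Petrov/Saulis--Statulevi\v{c}ius expansion, restricted to $x\le c_0\Delta^{1/3}$ so that the Cram\'er series is absorbed into $(1+x^3)\kappa$.

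You instead prove the needed expansion from scratch by the conjugate-distribution method. You tilt, solve $\Lambda'(h)=x$, and apply the Berry--Esseen bound to the tilted independent summands; the Lyapunov ratio is $O(\kappa)$ because $\sum_i|a_i|^3\le\kappa$ and $\Lambda''\ge1/2$. You then accept the crude relative error $O(\kappa(1+x))$, which the target $(1+x^3)\kappa$ tolerates. The range $x\le c_0\kappa^{-1/3}$ enters exactly where you linearize $e^{O(\kappa x^3)}$ and $\psi(h\sigma)/\psi(x)$.

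Your route is self-contained, needs only uniformly bounded tilted moments, and shows transparently why the range is $\kappa^{-1/3}$. The paper's route is shorter and more portable: cumulant bounds are also available for dependent summands, and the cited lemma holds on a range of order $\kappa^{-1}$ with the Cram\'er series kept. Its cost is reliance on a black box whose own proof is a tilting argument.

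Two small corrections do not affect validity. First, $(\log\psi)'(y)=y-\phi(y)/\{1-\Phi(y)\}$ lies in $(-1,0)$ on $[0,\infty)$, so the shift from $x$ to $h\sigma$ costs only $O(\kappa x^2)$, not $O(\kappa x^2(1+x))$. Second, $1-\Phi(x)=e^{-x^2/2}\psi(x)$ is an exact identity, so no constant needs absorbing.

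You should also state two facts used in $h=x+O(\kappa x^2)$. The bound $\Lambda''\ge1/2$ on $[0,c_1\kappa^{-1}]$ gives $h\le 2x$. Choosing $c_0\le c_1/2$ (using $\kappa\le1$) keeps the saddlepoint inside that window.
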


\begin{proof}[Proof sketch and source]
This is the classical Cram\'er--Petrov relative moderate-deviation expansion
for sums of independent, non-identically distributed random variables under a
Bernstein--Statulevi\v{c}ius moment condition, which the $\psi_1$ bound and
the weight normalization imply with cumulant growth constant proportional to
$\kappa$; see \citet[Ch.~VIII]{Petrov1975} and
\citet[Ch.~3]{SaulisStatulevicius1991}. The one-line reduction: the $j$th
cumulant of $\zeta$ is bounded by $\sum_i|a_i|^j\,j!\,(C\bar b)^j\le
j!\,(C\bar b)^j\kappa^{j-2}$ for $j\ge3$, which is the Statulevi\v{c}ius
condition with parameter $\Delta\asymp\kappa^{-1}$, and the cited expansion
applies on $0\le x\le c_0\Delta^{1/3}$.
\end{proof}

\subsection{Robust validity of the \v{S}id\'{a}k band}

\begin{theorem*}[Theorem~\ref*{thm:robust-band-main}, restated]
Assume (E1)--(E6) with the conditional Gaussian specification in
(\ref*{eq:outcome-model})--(\ref*{eq:report-model}) replaced by (B1), and
assume (B2), (B3), and
\[
b_n^{\mathrm{rob}}
:=\left(\sqrt n\,a_n^2+(a_n+\delta_n)\sqrt{\ell_n}\right)\sqrt{\ell_n}\to0 .
\]
Then the band of Theorem~\ref*{thm:bands-main}, with the same critical value
$c_{1-\alpha,T_n}$, satisfies
\[
\sup_{P\in\mathcal P_n}
\left|P\{\beta_t\in C_t\ \text{for all}\ t\le T_n\}-(1-\alpha)\right|
\longrightarrow0 .
\]
\end{theorem*}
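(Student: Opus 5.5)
The proof follows the five-step architecture of the proof of Theorem~\ref*{thm:bands-main}. Only Step~1 (the exact iid $N(0,1)$ law of the leading terms) and the maximal inequalities used in Steps~2--3 change. Everything else is a deterministic perturbation argument and transfers verbatim.

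\textbf{Step A: approximation bounds without Gaussianity.} I will first re-derive the conclusions of Theorem~\ref*{thm:path-main} under (B1). The Gaussian union bounds behind $\max_t\norm{S_t}=O_{\PP}^{\mathrm{unif}}(\sqrt{n\ell_n})$, the feasible--oracle score difference, and $\max_{t,k}\norm{Z_{tk}}_{\op}$ are replaced by Bernstein's inequality for weighted sums of independent sub-exponential variables. Each such quantity is a linear form $b^\trans\varepsilon_t$ with $\mathcal T_{tk}$-measurable weights, and the tail bound is $2\exp[-c\min\{x^2/\norm b_2^2,\ x/(\bar b\norm b_\infty)\}]$. With $x\asymp\norm b_2\sqrt{\ell_n}$, the sub-Gaussian branch dominates as long as $\norm b_\infty\sqrt{\ell_n}\lesssim\norm b_2$. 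For the leading scores this is exactly what (B3) provides, since $\norm{a_t}_\infty\to0$ faster than $(\log T_n)^{-3/2}$. For the remainder terms, the balanced-design bounds in (E2)--(E3) give the same ratio. The Hessian bound is deterministic given (E3) and is unaffected. Thus the expansion \eqref{eq:path-expansion-main} and the information rate hold unchanged. Steps~2--3 of the band proof then give
\[
\Delta_n=\max_t|\widehat Z_t-Z_t|=O_{\PP}^{\mathrm{unif}}\bigl(\sqrt n a_n^2+(a_n+\delta_n)\sqrt{\ell_n}\bigr),
\]
and hence $\Delta_n\sqrt{\ell_n}=o_{\PP}^{\mathrm{unif}}(1)$ by the assumption $b_n^{\mathrm{rob}}\to0$. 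The factor $Z_t(\sqrt{v/\widehat v}-1)$ needs $\max_t|Z_t|=O_{\PP}(\sqrt{\ell_n})$, which the lemma below also supplies.

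\textbf{Step B: the leading maximum.} This is the main obstacle. The $Z_t=a_t^\trans\varepsilon_t$ are no longer iid normal. Under (B2), however, conditionally on $\mathcal F_{n,t-1}$ each $Z_t$ is a weighted sum with predictable weights, so I can telescope conditional probabilities. Write $p_t(x)=P(|Z_t|\le x\mid\mathcal F_{n,t-1})$. Then
\[
P(\max_t|Z_t|\le x)=E\prod_t p_t(x)\ \text{exactly},
\]
by iterated conditioning, since $\{|Z_s|\le x\}_{s<t}$ is $\mathcal F_{n,t-1}$-measurable. The Cram\'er-type lemma with $\kappa=\mathrm{lev}_n$ gives, for $x$ in the relevant range $x\asymp\sqrt{2\log T_n}\le c_0\mathrm{lev}_n^{-1/3}$ (guaranteed by (B3)),
\[
1-p_t(x)=2\{1-\Phi(x)\}\{1+\rho_t\},\qquad |\rho_t|\le C_0(1+x^3)\mathrm{lev}_n .
\]
Because $T_n\{1-\Phi(x)\}=O(1)$ near the critical value, I obtain
\[
\Bigl|\log\prod_t p_t(x)-T_n\log(2\Phi(x)-1)\Bigr|\lesssim (\log T_n)^{3/2}\mathrm{lev}_n\to0,
\]
uniformly over such $x$ and deterministically, not just in expectation. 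Hence $P(\max_t|Z_t|\le x)=F_{T_n}(x)+o(1)$ uniformly on a window around $c_{1-\alpha,T_n}$. Outside that window, monotonicity of both cdfs handles the rest. The delicate points are keeping the relative error uniform over $P$ and over $x$ in the window, and ensuring that lower-tail values $x$ much smaller than $\sqrt{\log T_n}$ do not matter. They don't matter because there both probabilities are $o(1)$.

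\textbf{Step C: transfer and calibration.} From Step~B, $\max_t|Z_t|$ has cdf within $o(1)$ of $F_{T_n}$ near the critical value. The density bound $\sup_x f_{T}(x)\le C\sqrt{\log 2T}$ of Step~4 of the band proof then yields anti-concentration for $\max_t|Z_t|$ up to an $o(1)$ error: compare $P(|M-x|\le\varepsilon)$ with $F_T(x+\varepsilon)-F_T(x-\varepsilon)$. Choose a deterministic $\varepsilon_n=o(\ell_n^{-1/2})$ dominating $\Delta_n$. The one-Lipschitz maximum argument then transfers the approximation to $\widehat M_T=\max_t|\widehat Z_t|$. Evaluating at $c_{1-\alpha,T_n}$, where $F_{T_n}=1-\alpha$, completes the proof exactly as in Step~5 of the band proof.
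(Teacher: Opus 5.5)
Your outline follows the paper's proof. Bernstein bounds replace the Gaussian maximal inequalities (the paper's Step~3). The leading maximum is handled by sequential conditioning plus the Cram\'er-type lemma at $\kappa=\mathrm{lev}_n$ (Steps~1--2). A perturbation and anti-concentration argument then moves the result to the feasible maximum (Step~4).

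\textbf{The gap in Step~B.} The identity $P(\max_t|Z_t|\le x)=E\prod_tp_t(x)$, claimed ``exactly, by iterated conditioning'', is false here. With $A_s=\{|Z_s|\le x\}$, conditioning gives only
\[
P\Big(\bigcap_{s\le t}A_s\Big)=E\Big[\ind\Big\{\bigcap_{s<t}A_s\Big\}\,p_t(x)\Big].
\]
Peeling off further factors would need the $p_t(x)$ to be nonrandom, or at least $\mathcal F_{n,0}$-measurable. Without Gaussianity they are not, because the conditional law of $Z_t=a_t^\trans\varepsilon_t$ depends on the predictable weights $a_t$, which may depend on $Z_{t-1}$. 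Already for $T=2$ with trivial $\mathcal F_{n,0}$, you have $P(A_1\cap A_2)=p_1E(p_2\mid A_1)$ but $E(p_1p_2)=p_1E(p_2)$. These differ as soon as $p_2$ is correlated with $\ind_{A_1}$, for instance when $a_2$ is chosen according to whether $A_1$ occurred. In the Gaussian band proof the product formula holds only because $p_t\equiv2\Phi(x)-1$ is deterministic.

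\textbf{The repair.} Your per-date estimate holds almost surely, which is exactly what is needed. Iterating the one-step identity gives the sandwich
\[
\prod_{t\le T_n}\operatorname*{ess\,inf}p_t(x)\le P\Big(\max_{t\le T_n}|Z_t|\le x\Big)\le\prod_{t\le T_n}\operatorname*{ess\,sup}p_t(x).
\]
Since $|p_t(x)-\{2\Phi(x)-1\}|\le2C_0(1+x^3)\mathrm{lev}_n\{1-\Phi(x)\}$ almost surely, your logarithmic comparison with $T_n\log\{2\Phi(x)-1\}$ applies to both ends. This is equivalent to the paper's telescoping bound $|P(\bigcap_tA_t)-\pi^T|\le\sum_t\operatorname*{ess\,sup}|P(A_t\mid\mathcal F_{n,t-1})-\pi|$. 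With that change the argument goes through.

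\textbf{Other remarks.} Your Step~C differs mildly from the paper's. You combine a uniform cdf approximation near $c_{1-\alpha,T_n}$ with the density bound for $F_{T_n}$. The paper instead bounds the ring probability directly by a union over dates, applying the lemma at $c\pm\varepsilon_n$. Both work at the same cost.

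One inaccuracy in Step~A: the Hessian bound is not deterministic, since $D^2\widehat S_{tk}$ contains $(D^2\widehat J_{tk})^\trans\widehat e_{tk}$. It is, however, controlled by the same Bernstein bounds for linear forms in the innovations.
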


\begin{proof}
Write $T=T_n$, $c=c_{1-\alpha,T}$, $\pi=2\Phi(c)-1=(1-\alpha)^{1/T}$, and
$A_t=\{|Z_t|\le c\}$.

\paragraph{1. Backward induction to per-date conditional coverage.}
For any events adapted as above,
\[
P\Big(\bigcap_{s\le t}A_s\Big)
=\pi\,P\Big(\bigcap_{s\le t-1}A_s\Big)
+E\Big[\ind\Big\{\bigcap_{s\le t-1}A_s\Big\}
\big\{P(A_t\mid\mathcal F_{n,t-1})-\pi\big\}\Big].
\]
Iterating and using $\pi\le1$,
\begin{equation}
\Big|P\Big(\bigcap_{t\le T}A_t\Big)-\pi^T\Big|
\le\sum_{t\le T}\operatorname*{ess\,sup}
\big|P(A_t\mid\mathcal F_{n,t-1})-\pi\big| .
\label{eq:si-rb-telescope}
\end{equation}

\paragraph{2. Per-date error via the lemma.}
Conditionally on $\mathcal F_{n,t-1}$, $Z_t=a_t^\trans\varepsilon_t$
satisfies the lemma's hypotheses with $\kappa\le\mathrm{lev}_n$. Since
$c\le\sqrt{2\log(2T/\alpha)}$ and (B3) keeps $c\le c_0\,\mathrm{lev}_n^{-1/3}$
eventually,
\[
\big|P(A_t\mid\mathcal F_{n,t-1})-\pi\big|
\le2C_0(1+c^3)\,\mathrm{lev}_n\{1-\Phi(c)\}
\le C_1(1+c^3)\,\mathrm{lev}_n\,\frac{1}{T},
\]
where we used $1-\Phi(c)=(1-\pi)/2=\{1-(1-\alpha)^{1/T}\}/2\le
-\log(1-\alpha)/(2T)$, and $-\log(1-\alpha)$ is absorbed into $C_1$
(it depends only on the fixed level $\alpha$).
Summing over $t\le T$ in \eqref{eq:si-rb-telescope},
\begin{equation}
\Big|P\Big(\max_{t\le T}|Z_t|\le c\Big)-(1-\alpha)\Big|
\le C_1(1+c^3)\,\mathrm{lev}_n
\le C_2\,\mathrm{lev}_n\{1+(\log T)^{3/2}\}\to0 .
\label{eq:si-rb-oracle}
\end{equation}

\paragraph{3. Sub-exponential replacements for the Gaussian maximal bounds.}
In the proof of Theorem~\ref*{thm:path-main}, the only distributional inputs
were conditional Gaussian tail bounds for linear forms of held-out
innovations. Under (B1), every such linear form with
$\mathcal T_{tk}$-measurable weight vector $w$ satisfies the Bernstein
inequality
\[
P\big(|w^\trans\varepsilon|>\norm w_2\sqrt{2x}+C\bar b\norm w_\infty x
\,\big|\,\mathcal T_{tk}\big)\le2e^{-x},
\]
so all maximal bounds hold with $\sqrt{n\ell_n}$ replaced by
$\sqrt{n\ell_n}+\bar b\,\ell_n\lesssim\sqrt{n\ell_n}$ under $\ell_n=o(n)$.
The expansion (\ref*{eq:path-expansion-main}) and the variance consistency
\eqref{eq:si-band-var} therefore continue to hold, and
\[
\Delta_n:=\max_t|\widehat Z_t-Z_t|
=O_{\PP}^{\mathrm{unif}}\!\left(\sqrt n\,a_n^2+(a_n+\delta_n)\sqrt{\ell_n}\right).
\]

\paragraph{4. Ring bound and conclusion.}
Choose deterministic $\varepsilon_n\to0$ with
$\sup_PP(\Delta_n>\varepsilon_n)\to0$ and
$\varepsilon_n\sqrt{\ell_n}\to0$, possible by $b_n^{\mathrm{rob}}\to0$. On
$\{\Delta_n\le\varepsilon_n\}$, the feasible and oracle coverage events differ
only inside the ring $\{c-\varepsilon_n<\max_t|Z_t|\le c+\varepsilon_n\}$.
By the same telescoping argument, the ring probability is at most
\[
\sum_{t\le T}\operatorname*{ess\,sup}
P\big(c-\varepsilon_n<|Z_t|\le c+\varepsilon_n\mid\mathcal F_{n,t-1}\big)
\le T\left\{2\varepsilon_n\,\phi(c-\varepsilon_n)
+C_1(1+c^3)\,\mathrm{lev}_n\frac1T\right\},
\]
using the lemma at both radii. Since $T\phi(c)\le C(1+c^2)\,T\{1-\Phi(c)\}/c\le C'\sqrt{\log T}$
by the Mills-ratio bound and $T\{1-\Phi(c)\}\le-\log(1-\alpha)/2$, the first term is $O(\varepsilon_n\sqrt{\ell_n})\to0$
and the second repeats \eqref{eq:si-rb-oracle}. Combining with
\eqref{eq:si-rb-oracle} completes the proof.
\end{proof}

\subsection{A finite-sample conservative alternative}

\begin{proposition*}[Leverage-adjusted Bernstein band]
Assume (B1)--(B2) and oracle scores. For $L_T=\log(2T_n/\alpha)$ define
\[
c^{\mathrm B}_t=\sqrt{2L_T}+C_{\mathrm B}\bar b\,\norm{a_t}_\infty L_T ,
\]
with the absolute constant $C_{\mathrm B}$ from Bernstein's inequality for
$\psi_1$ variables. Then
\[
P\big(|Z_t|\le c^{\mathrm B}_t\ \text{for all}\ t\le T_n\big)\ge1-\alpha
\qquad\text{for every }n,\ \text{every }P .
\]
Under the negligibility condition of the theorem, the feasible version holds
with asymptotic level at least $1-\alpha$ uniformly.
\end{proposition*}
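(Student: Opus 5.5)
The plan is a union bound over dates, with each date handled by a conditional Bernstein inequality. The simultaneous event then holds with probability at least $1-\alpha$ without any distributional approximation. Fix $t$ and condition on $\mathcal F_{n,t-1}$. By (B2) the weights $a_t$ are fixed under this conditioning, and the coordinates of $\varepsilon_t$ are independent, centered, unit variance and $\psi_1$-bounded by $\bar b$. They are also independent of the past.

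The first step is to invoke Bernstein's inequality for weighted sums of independent sub-exponential variables. This is the same inequality already used in step 3 of the proof of Theorem~\ref*{thm:robust-band-main}:
\[
P\big(|a_t^\trans\varepsilon_t|>\norm{a_t}_2\sqrt{2x}+C_{\mathrm B}\bar b\norm{a_t}_\infty x\,\big|\,\mathcal F_{n,t-1}\big)\le2e^{-x}.
\]
Setting $x=L_T=\log(2T_n/\alpha)$ and using $\norm{a_t}_2=1$, the threshold becomes exactly $c^{\mathrm B}_t$. The conditional exceedance probability is then at most $2e^{-L_T}=\alpha/T_n$. Since this bound is deterministic, it also holds unconditionally. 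A plain union bound over $t\le T_n$ gives $P(\exists t:|Z_t|>c^{\mathrm B}_t)\le\alpha$ for every $n$ and every $P$. Note that $c^{\mathrm B}_t$ is $\mathcal F_{n,t-1}$-measurable, so the random threshold is legitimate inside the conditional bound. The sequential structure is used only for this measurability; no independence across dates is needed.

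The one point to check carefully is the exact form of Bernstein's inequality, namely that the $\sqrt{2x}$ variance term carries constant one. I would use the version with the variance proxy $\sum_i a_i^2\Var(\varepsilon_i)=1$ and the scale $\max_i|a_i|\,\norm{\varepsilon_i}_{\psi_1}$. This version follows from the moment-generating-function bound $E e^{\lambda\varepsilon_i}\le\exp\{\lambda^2/(2(1-C\bar b|\lambda|))\}$, applied to centered unit-variance variables. Any slack from this step is absorbed into $C_{\mathrm B}$.

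For the feasible version, I would reuse the reduction from the robust-band proof. On the event $\{\Delta_n\le\varepsilon_n\}$, whose probability tends to one, we have $|\widehat Z_t|\le|Z_t|+\varepsilon_n$ for all $t$. The feasible band should be widened by $\varepsilon_n$, or equivalently we use the fact that $\varepsilon_n\to0$ while $\widehat v_{\beta,t}/v_{\beta,t}\to1$ uniformly by \eqref{eq:si-band-var}. Then the feasible coverage is at least $1-\alpha-o(1)$ uniformly. The main obstacle is that, without inflation, the feasible statement needs some slack. I would handle it through the ring bound already established: the probability that $\max_t|Z_t|/c^{\mathrm B}_t$ lies within $\varepsilon_n$ of one is controlled because $\varepsilon_n\sqrt{\ell_n}\to0$. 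Alternatively, since the claim is only \emph{at least} $1-\alpha$ asymptotically, one can simply compare against the slightly enlarged thresholds $c^{\mathrm B}_t-\varepsilon_n$ at level $\alpha+o(1)$. This works because $2e^{-x}$ is continuous in $x$ and the relative change in $x$ is $O(\varepsilon_n/\sqrt{L_T})\to0$.
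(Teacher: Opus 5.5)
Your oracle argument is the paper's proof: conditional Bernstein at $x=L_T$ with $\norm{a_t}_2=1$ gives $P(|Z_t|>c^{\mathrm B}_t\mid\mathcal F_{n,t-1})\le\alpha/T_n$, followed by expectation and a union bound. For the feasible part the paper also takes your last option, a union bound at the shifted radius $c^{\mathrm B}_t-\varepsilon_n$ on $\{\Delta_n\le\varepsilon_n\}$.

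One justification in that last step is wrong as stated. A vanishing \emph{relative} change in $x$, combined with continuity of $2e^{-x}$, is not enough, because the per-date bound is summed over $T_n$ dates. Let $x_t'$ solve $\sqrt{2x_t'}+\kappa_tx_t'=c^{\mathrm B}_t-\varepsilon_n$, where $\kappa_t=C_{\mathrm B}\bar b\norm{a_t}_\infty$. The union bound then gives $\sum_{t\le T_n}2e^{-x_t'}\le\alpha\max_t\exp(L_T-x_t')$. So it is the \emph{absolute} shift $L_T-x_t'$ that must vanish.

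It does vanish. The threshold map $x\mapsto\sqrt{2x}+\kappa_tx$ has derivative at least $(2L_T)^{-1/2}$ on $(0,L_T]$, so $L_T-x_t'\le\sqrt{2L_T}\,\varepsilon_n$. This tends to zero because $\varepsilon_n\sqrt{\ell_n}\to0$ by the choice of $\varepsilon_n$ in Step 4 of the robust-band proof, and $L_T\le\ell_n+\log(1/\alpha)$. Feasible coverage is therefore at least $1-\alpha\exp(\sqrt{2L_T}\,\varepsilon_n)-\sup_PP(\Delta_n>\varepsilon_n)$, whose limit inferior is at least $1-\alpha$ uniformly.

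You should drop the other two routes. Widening the band by $\varepsilon_n$ changes the object being certified. The ring bound would import the Cram\'er lemma and (B3), which this one-sided ``at least $1-\alpha$'' statement does not need. Also, $c^{\mathrm B}_t-\varepsilon_n$ is a reduced threshold, not an enlarged one.

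On the Bernstein point you flagged: once the variance term is pinned at $\sqrt{2x}$, the slack lands in the linear term. The standard moment argument for unit-variance $\psi_1$ variables then yields a scale of order $\bar b^{2}$ rather than $\bar b$, so $C_{\mathrm B}$ depends on $\bar b$. This is harmless for fixed $\bar b$, and the paper's ``absolute constant'' has the same looseness.
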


\begin{proof}
Bernstein's inequality for weighted sums of independent $\psi_1$ variables
gives $P(|Z_t|>c^{\mathrm B}_t\mid\mathcal F_{n,t-1})\le2e^{-L_T}=\alpha/T_n$
almost surely; see, e.g., \citet[Thm.~2.8.1]{Vershynin2018}. Taking
expectations and a union bound over $t$ proves the display. The feasible
statement follows from Step 3--4 of the theorem's proof with the ring bound
replaced by the same union bound at radius $c^{\mathrm B}_t\pm\varepsilon_n$.
\end{proof}

\begin{remark}[What is and is not being claimed]
The theorem keeps the \v{S}id\'{a}k critical value and shows its asymptotic
validity without Gaussianity; the price is the moderate-deviations condition
(B3), which fails when single coordinates dominate a date's score or when
$T_n$ grows almost exponentially in $n$. The proposition is exact in finite
samples but conservative, with width inflated by the leverage term. A
multiplier-bootstrap calibration in the regime of
\citet{ChernozhukovChetverikovKato2013} is a third option; we do not develop
it because the two results above already bracket the practical cases; the
simulation section reports where the plain \v{S}id\'{a}k band degrades
(small per-date information, skewed innovations), which locates the
boundary of (B3) empirically.
\end{remark}

\section{Identification-robust confidence sets}
\label{sec:si-weakid}
The Wald-type path inference of Sections~\ref*{sec:si-estimation} and
\ref*{sec:si-bands} requires the uniform information floor (E4) and the honest
pilot (E6). Both fail precisely where the identification analysis says the
problem is hard: $r_t$ near zero (strength weakly identified) or
$\sigma_{\min}(Q_t)$ near zero (composition weakly identified). This section
provides confidence sets that remain valid in those regimes. The construction
is the score (Anderson--Rubin-type) inversion: at a hypothesized target value
the realized score is exactly centered and exactly variance-standardized,
regardless of how weak the information is, so no estimator and no pilot is
required.

\subsection{The statistic}

Fix a date $t$ and a hypothesized $\theta^0=(b,h)$. With oracle covariances,
define
\[
S_t(\theta^0)=\sum_kJ_{tk}(\theta^0)^\trans e_{tk}(\theta^0),
\qquad
\mathcal I_t(\theta^0)=\sum_kJ_{tk}(\theta^0)^\trans J_{tk}(\theta^0),
\]
exactly as in (\ref*{eq:oracle-score-information}) but evaluated at $\theta^0$,
and the statistic
\[
\mathrm{AR}_t(\theta^0)
=S_t(\theta^0)^\trans\,\mathcal I_t(\theta^0)^{\dagger}\,S_t(\theta^0),
\]
with $\dagger$ the Moore--Penrose inverse. The feasible version
$\widehat{\mathrm{AR}}_t$ replaces oracle covariances by the training
estimates of Section~\ref*{sec:estimation} and adds a ridge:
$\widehat{\mathcal I}_t(\theta^0)^{\dagger}$ is replaced by
$\{\widehat{\mathcal I}_t(\theta^0)+\varrho_n nI_d\}^{-1}$ for a deterministic
$\varrho_n\ge0$.

\subsection{Exact validity without identification}

\begin{theorem*}[Theorem~\ref*{thm:weakid-main}, restated]
Assume the conditional Gaussian model, (E1)--(E3), and oracle covariances.
Fix $\alpha\in(0,1)$ and let $\chi^2_{d,1-\alpha}$ be the $\chi^2_d$ quantile.

\begin{enumerate}[label=\textup{(\alph*)}]
\item At the true value, for every $n$, every $t$, and every
$P\in\mathcal P_n$, including $P$ under which $\mathcal I_t$ is singular or
nearly singular,
\[
P\{\mathrm{AR}_t(\theta_t)>\chi^2_{d,1-\alpha}\mid\mathcal F_{t-1}\}\le\alpha
\quad\text{a.s.},
\]
with equality if and only if $\rank\{\mathcal I_t(\theta_t)\}=d$. The same
holds for the ridge version for every $\varrho_n\ge0$.

\item Under the sequential condition of Theorem~\ref*{thm:bands-main} (with
Gaussianity), the sets
$\mathcal C_t^{\mathrm{AR}}
=\{\theta^0:\mathrm{AR}_t(\theta^0)\le\chi^2_{d,1-\alpha_T}\}$ with
$1-\alpha_T=(1-\alpha)^{1/T_n}$ satisfy the exact simultaneous guarantee
\[
P\{\theta_t\in\mathcal C_t^{\mathrm{AR}}\ \text{for all}\ t\le T_n\}
\ge1-\alpha ,
\]
with equality under full rank at every date.

\item (Projection.) For any subvector or functional $\psi(\theta)$, the set
$\psi(\mathcal C_t^{\mathrm{AR}})$ covers $\psi(\theta_t)$ with at least the
same probability. In particular
$\{b:\min_h\mathrm{AR}_t(b,h)\le\chi^2_{d,1-\alpha}\}$ is a level-$(1-\alpha)$
confidence set for $\beta_t$ that requires neither (E4) nor (E6).

\item (Feasible version.) Assume in addition (E5),
$\varrho_n\ge C(\delta_n+\varrho_n')$ for the Jacobian perturbation rate
$\varrho_n'$ of Section~\ref*{sec:si-estimation}, and
$\delta_n^2\ell_n/\varrho_n\to0$. Then
\[
\limsup_n\ \sup_{P\in\mathcal P_n}\ \max_{t\le T_n}
\left[P\{\widehat{\mathrm{AR}}_t(\theta_t)>\chi^2_{d,1-\alpha}\}-\alpha\right]\le0 ,
\]
and when additionally $\liminf_n\lambda_{\min}(\mathcal I_t)/n>0$ and
$\varrho_n\to0$, the asymptotic size is exactly $\alpha$.
\end{enumerate}
\end{theorem*}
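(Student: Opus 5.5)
The argument rests on one fact: at the true value the oracle score is an exact Gaussian linear form. At $\theta^0=\theta_t$ the residualizers annihilate the linear nuisances, as in the exact nuisance orthogonality proposition of Section~\ref*{sec:si-estimation}. Hence $e_{tk}(\theta_t)$ equals a fixed $\mathcal T_{tk}$-measurable matrix times the held-out whitened innovation, and that matrix is the orthogonal projection $M_{L(\cdot)X}$ or $M_{L(\cdot)U}$. Because $J_{tk}$ already carries the residualizer, $J_{tk}^\trans e_{tk}=J_{tk}^\trans\varepsilon_{tk}$. Folds are conditionally independent, so
\[
S_t(\theta_t)\mid\mathcal F_{t-1}\sim N_d\{0,\mathcal I_t(\theta_t)\}
\]
exactly, with no rank assumption. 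This is step~2 of the proof of Theorem~\ref*{thm:path-main}, redone without (E4).

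For (a), write $\mathcal I=\mathcal I_t(\theta_t)$ with rank $\rho\le d$ and $S=\mathcal I^{1/2}\zeta$, where $\zeta\sim N_d(0,I_d)$. Then
\[
\mathrm{AR}=\zeta^\trans\mathcal I^{1/2}\mathcal I^\dagger\mathcal I^{1/2}\zeta=\zeta^\trans P\zeta\sim\chi^2_\rho,
\]
where $P$ is the projection onto $\col(\mathcal I)$. Since $\chi^2_\rho$ is stochastically dominated by $\chi^2_d$, strictly when $\rho<d$, we get the $\le\alpha$ bound, with equality exactly at full rank. For $\rho=0$ the statistic is identically zero. For the ridge version, $\mathcal I^{1/2}(\mathcal I+\varrho nI)^{-1}\mathcal I^{1/2}\preceq P$, so the statistic is pointwise no larger and the bound still holds.

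For (b), the conditional law of $\mathrm{AR}_t(\theta_t)$ given $\mathcal F_{n,t-1}$ is nonrandom ($\chi^2_{\rho_t}$). Step~1 of the proof of Theorem~\ref*{thm:bands-main} then shows the date events are independent. Multiplying the conditional coverage probabilities, each at least $(1-\alpha)^{1/T_n}$, gives the simultaneous bound, with equality when every date has full rank.

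Part (c) is immediate. If $\theta_t\in\mathcal C_t^{\mathrm{AR}}$, then $\psi(\theta_t)\in\psi(\mathcal C_t^{\mathrm{AR}})$. The displayed $\beta$-set is the projection written as a profile minimum.

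Part (d) is the main obstacle. Write $\widehat S=S+\Delta_S$ and $\widehat{\mathcal I}=\mathcal I+\Delta_I$. Conditionally on training, $\Delta_S$ is a centered Gaussian linear form. The residualizer and Jacobian perturbation bounds \eqref{eq:si-residualizer-perturbation}--\eqref{eq:si-jacobian-perturbation}, with a union bound, give
\[
\max_t\norm{\Delta_S}=O_{\PP}^{\mathrm{unif}}(\delta_n\sqrt{n\ell_n})
\quad\text{and}\quad
\norm{\Delta_I}_{\op}\lesssim n(\delta_n+\varrho_n').
\]
Choosing $\varrho_n\ge C(\delta_n+\varrho_n')$ gives, with probability tending to one,
\[
\widehat{\mathcal I}+\varrho_nnI\succeq\mathcal I+\tfrac12\varrho_nnI .
\]
I would then decompose
\[
\widehat{\mathrm{AR}}\le(1+\epsilon)\,S^\trans(\mathcal I+\tfrac12\varrho_nnI)^{-1}S+C_\epsilon\norm{\Delta_S}^2/(\varrho_nn).
\]
The first term is dominated by a $\chi^2_d$ variable, exactly as in (a). The second term is $O_{\PP}(\delta_n^2\ell_n/\varrho_n)=o_{\PP}(1)$. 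Letting $\epsilon\downarrow0$ and using continuity of the $\chi^2_d$ cdf gives the limsup bound. The delicate point is that this step must hold uniformly without a lower information bound, which is why only the upper bound $\mathcal I\preceq C_In$ and the ridge are used. Under an information floor with $\varrho_n\to0$, $\widehat{\mathcal I}+\varrho_nnI=\mathcal I\{I+o_{\PP}(1)\}$, so $\widehat{\mathrm{AR}}-\mathrm{AR}=o_{\PP}(1)$ and the asymptotic size is exactly $\alpha$.
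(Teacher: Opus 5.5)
Your proposal is correct and follows the paper's proof essentially step for step. It uses the exact conditional law $S_t(\theta_t)\mid\mathcal F_{t-1}\sim N_d\{0,\mathcal I_t(\theta_t)\}$, the $\chi^2_{\rho}$ law and pointwise ridge domination in (a), and the projection argument in (c). In (d) it uses the same $(1+\epsilon)$ splitting, with remainder $\norm{\Delta_S}^2/(\varrho_nn)=O_{\PP}(\delta_n^2\ell_n/\varrho_n)$.

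One small correction in (b). The rank of $\mathcal I_t(\theta_t)$ is only $\mathcal F_{n,t-1}$-measurable and may be random, so the conditional law $\chi^2_{\rho_t}$ is not in general nonrandom. The date events $A_s=\{\mathrm{AR}_s(\theta_s)\le\chi^2_{d,1-\alpha_T}\}$ therefore need not be independent. Obtain the product bound instead from the tower-property (backward-induction) step the paper uses, $P(\bigcap_{s\le t}A_s)\ge(1-\alpha_T)P(\bigcap_{s<t}A_s)$. Independence, and hence equality, is recovered when every date has full rank.
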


\begin{proof}
(a) At $\theta_t$, residualization annihilates the mean nuisances exactly, so
conditionally on $\mathcal F_{t-1}$ (and the training fields),
$e_{tk}(\theta_t)=R_{tk}\varepsilon_{tk}$ with
$\Var\{e_{tk}(\theta_t)\}=R_{tk}\,\mathcal V_{tk}R_{tk}^\trans$ an orthogonal
projection in the whitened coordinates; hence
$S_t(\theta_t)\mid\mathcal F_{t-1}\sim N_d\{0,\mathcal I_t(\theta_t)\}$
exactly, for any rank. Write
$S_t=\mathcal I_t^{1/2}\zeta$ with $\zeta\sim N_d(0,P)$, $P$ the projector
onto $\mathrm{range}(\mathcal I_t)$. Then
$\mathrm{AR}_t(\theta_t)=\zeta^\trans P\zeta\sim\chi^2_{\rank(\mathcal I_t)}$,
which is stochastically dominated by $\chi^2_d$, with equality of laws iff
the rank is $d$. For the ridge version, with spectral decomposition
$\mathcal I_t=\sum_j\lambda_ju_ju_j^\trans$,
\[
S_t^\trans(\mathcal I_t+\varrho nI)^{-1}S_t
=\sum_j\frac{\lambda_j}{\lambda_j+\varrho n}\,(u_j^\trans\zeta)^2
\le\zeta^\trans P\zeta ,
\]
a pointwise bound, so domination is preserved.

(b) Under the sequential condition, the conditional law in (a) is
$\mathcal F_{n,t-1}$-measurable and the standardized quadratic form is
conditionally pivotal; the backward-induction identity
\eqref{eq:si-rb-telescope} with per-date conditional coverage exactly
$\ge1-\alpha_T$ (equality under full rank) yields the product bound
$(1-\alpha_T)^{T_n}=1-\alpha$.

(c) Immediate: $\theta_t\in\mathcal C_t^{\mathrm{AR}}$ implies
$\psi(\theta_t)\in\psi(\mathcal C_t^{\mathrm{AR}})$.

(d) On the covariance-good event of (E5), the perturbation bounds
\eqref{eq:si-residualizer-perturbation}--\eqref{eq:si-jacobian-perturbation}
evaluated at the fixed point $\theta_t$ (no pilot enters: the statistic is
evaluated at the hypothesized value) give
$\widehat S_t(\theta_t)=S_t(\theta_t)+\Delta_S$ and
$\widehat{\mathcal I}_t(\theta_t)=\mathcal I_t(\theta_t)+\Delta_I$ with
$\norm{\Delta_S}=O_{\PP}^{\mathrm{unif}}(\delta_n\sqrt{n\ell_n})$,
$\norm{\Delta_I}_{\op}=O_{\PP}^{\mathrm{unif}}(\delta_nn)$. For the ridge
inverse with $\varrho_nn$ dominating $\norm{\Delta_I}_{\op}$,
\[
\widehat{\mathrm{AR}}_t(\theta_t)
\le(1+o_{\PP}(1))\,S_t^\trans(\mathcal I_t+\tfrac12\varrho_nnI)^{-1}S_t
+o_{\PP}(1)
\le(1+o_{\PP}(1))\,\chi^2\text{-dominated}+o_{\PP}(1),
\]
where the first inequality uses
$(x+\delta)^\trans A(x+\delta)\le(1+\epsilon)x^\trans Ax+(1+\epsilon^{-1})\delta^\trans A\delta$
and $\delta^\trans A\delta\le\norm{\Delta_S}^2/(\varrho_nn)
=O_{\PP}(\delta_n^2\ell_n/\varrho_n)=o_{\PP}(1)$ by the assumed rate
condition. Conservative asymptotic size follows from (a) and Slutsky
uniformly over $\mathcal P_n$; exact size under an information floor follows
because then $\varrho_n\to0$ makes the ridge negligible and the dominated
variable is full-rank $\chi^2_d$.
\end{proof}

\begin{remark}[Power and use]
Under strong identification and local alternatives
$\theta^0=\theta_t+\mathcal I_t^{-1/2}\mu$, $\mathrm{AR}_t$ is noncentral
$\chi^2_d(\norm\mu^2)$: the score set incurs a fixed $d$-degrees-of-freedom
widening relative to the Wald interval but never undercovers. The intended
use is exactly the one the identification theorem motivates: report Wald
bands when the diagnostic $\lambda_{\min}(\widehat{\mathcal I}_t)/n$ clears a
prespecified floor, and the projected score set
$\{b:\min_h\widehat{\mathrm{AR}}_t(b,h)\le\chi^2_{d,1-\alpha}\}$ at all dates
where it does not. The profiled minimum over $h$ is a $q$-dimensional smooth
minimization; incomplete minimization overstates the minimum, shrinks the
projected set, and can therefore undercover, so implementations must verify
convergence of the inner minimization or refine it on a grid. The simulation section quantifies both the Wald size distortion under
weak reports and the cost in length of the score set under strong
identification.
\end{remark}

\section{Pseudo-true targets and bounded-bias sensitivity}
\label{sec:si-pseudo}
This section formalizes what the estimator targets when the two declared
modeling commitments fail in bounded ways: the chart may not span the true
composition perturbation (off-chart error), and a common dyad bias of
bounded size may contaminate both mirror reports. Both failures act on the
report-channel mean and propagate to the target through one fixed linear
map, so pseudo-true targets, sensitivity intervals, and breakdown values are
exact linear algebra. Throughout, the chart is linear ($\dot m=\Psi$),
covariances are the working ones of Section~\ref*{sec:model}, and we write
$R_z=M_{L_zU}L_z$, $Q=R_zA\Psi$, and $\mathcal I_c$ for the joint
information \eqref{eq:joint-information-main} at the working point.

\subsection{Pseudo-true targets under off-chart composition}

Suppose the true dyad log-composition is $m^\dagger=\Psi\eta^\dagger+w$ with
$C^\trans w=0$ and $w$ outside the chart span. The report-channel population
score at $\eta$ is $Q^\trans\{R_zA(m^\dagger)-Q\eta\}$, and the outcome
channel is evaluated along the true network path.

\begin{proposition*}[Information-projection target]
Assume $\norm w$ small enough that the population joint score has a root in
the $r_0$-ball of Theorem~\ref*{thm:identification-main}. Then the root
$\theta^*=(\beta^*,\eta^*)$ satisfies
\[
\theta^*=\theta^\dagger
+\mathcal I_c^{-1}
\begin{pmatrix}0\\ Q^\trans R_zAw\end{pmatrix}
+O(\norm w^2),
\]
with the remainder constant controlled by the softmax Hessian bound of
Section~\ref*{sec:si-instance}; for the report-only subproblem the formula
is exact: $\eta^*_{\mathrm{rep}}=\eta^\dagger+(Q^\trans Q)^{-1}Q^\trans
R_zAw$, the information projection of the true composition onto the chart.
Read conditions (E1)--(E6) at the pseudo-true path (the $r_0$-ball, the
information floor, and the pilot separation taken around $\theta^*_t$).
Then, with $\theta^*_t$ as the target:
(i) the score-inversion statements of Section~\ref*{sec:si-weakid} hold
\emph{exactly}: $S_t(\theta^*_t)$ is exactly conditionally centered with
conditional covariance $\mathcal I_t(\theta^*_t)$, so
$\mathrm{AR}_t(\theta^*_t)$ retains its exact $\chi^2$ domination at every
$n$;
(ii) the Wald-type statements of
Sections~\ref*{sec:si-estimation}--\ref*{sec:si-bands} (path expansion,
studentized intervals, simultaneous bands) hold with coverage and level
errors of order $O(\norm w)$, uniformly over dates, in addition to their
stated asymptotic errors; the error is not asymptotically negligible at
fixed $w$ and vanishes as $\norm w\to0$.
The declared-chart analysis estimates its chart's best information-metric
approximation; its Wald uncertainty statements for that object are correct
up to $O(\norm w)$, and its score-inversion statements are exact.
\end{proposition*}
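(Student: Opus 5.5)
The argument splits into three parts: the perturbation expansion of the population root, the exact report-only projection, and the transfer of the inferential statements to $\theta^*_t$.

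\textbf{Step 1: expansion of the root.} Write the population joint score as $F(\theta;w)$, stacking the outcome and report blocks. The outcome block is evaluated along the true network $W(\Psi\eta^\dagger+w)$. Under the linear chart, the report block is exactly
\[
Q^\trans\{R_zA(\Psi\eta^\dagger+w)-Q\eta\}=Q^\trans Q(\eta^\dagger-\eta)+Q^\trans R_zAw .
\]
From this:
\begin{itemize}
\item The report-only root is exact: $\eta^*_{\mathrm{rep}}=\eta^\dagger+(Q^\trans Q)^{-1}Q^\trans R_zAw$. Because $R_z$ is a whitened residualizer, this is the generalized least squares projection of the true report mean onto the chart, i.e.\ the information projection.
\item At $w=0$, $F(\theta^\dagger;0)=0$ and $D_\theta F(\theta^\dagger;0)=-\mathcal I_c$, which is invertible by Theorem~\ref{thm:identification-main}(c) under (E4).
\end{itemize}
The implicit function theorem then gives a root $\theta^*(w)$, smooth in $w$ on the $r_0$-ball, with
\[
\theta^*=\theta^\dagger+\mathcal I_c^{-1}D_wF(\theta^\dagger;0)\,w+O(\norm w^2).
\]
The $w$-derivative of the report block is $Q^\trans R_zA$. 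The outcome block's $w$-derivative involves the residual mean $\beta r^\trans M L_y\,\partial_w(W y_-)$. The stated formula has $0$ in the first slot, so I plan to interpret the outcome channel's pseudo-truth through the true network path: the outcome score is centered at the truth whenever $g$ is taken along that path, so that block contributes no first-order shift. If instead it must be absorbed into the remainder, I will bound it by $O(\norm w)$, which is the main delicacy. The remainder constant comes from the second derivatives of the softmax (Hessian bound) on the ball, controlled uniformly by (E3).

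\textbf{Step 2: exact score-inversion validity.} This step repeats the proof of Theorem~\ref{thm:weakid-main}(a) with the truth replaced by $\theta^*_t$. The point to verify is that $e_{tk}(\theta^*_t)$ equals $R_{tk}\varepsilon_{tk}$ plus a deterministic ($\mathcal F_{t-1}$-measurable) mean residual $\mu^*_{tk}$, and that $J_{tk}(\theta^*_t)^\trans\mu^*_{tk}$ summed over folds vanishes. That vanishing is precisely the definition of $\theta^*_t$ as a root of the population score. The conditional covariance is $\mathcal I_t(\theta^*_t)$ because the residualizers whiten to projections. Hence $S_t(\theta^*_t)\sim N_d(0,\mathcal I_t(\theta^*_t))$ exactly, and the $\chi^2$ domination, the ridge monotonicity, the sequential product bound, and projection all carry over verbatim.

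A subtlety needs care here. Exact centering requires the root property foldwise, or at least summed over folds with fold-independent designs. I will define $\theta^*_t$ as the root of the summed conditional expected score, which is what the statistic uses.

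\textbf{Step 3: Wald statements.} I rerun the proof of Theorem~\ref{thm:path-main} around $\theta^*_t$, with (E1)--(E6) read at the pseudo-true path. The leading term $\mathcal I_t(\theta^*)^{-1}S_t(\theta^*)$ is still exactly Gaussian by Step 2. The new ingredient is that the Jacobian-based information differs from the Hessian of the population criterion by the term $\sum_k (DJ_{tk})^\trans\mu^*_{tk}$, which is of order $n\norm w$ since $\norm{\mu^*}=O(\sqrt n\norm w)$.

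Two consequences follow:
\begin{itemize}
\item The one-step estimator's expansion acquires a pilot-dependent bias of order $a_n\norm w$.
\item The studentization misstates the sandwich variance by a relative $O(\norm w)$.
\end{itemize}
Feeding a relative variance error $O(\norm w)$ into the anti-concentration step of the band proof yields a coverage error of order $O(\norm w)$, up to a $\sqrt{\log T}$ factor that I expect to absorb by stating the error at fixed $T$ scaling, or by noting that the variance ratio enters through the Šidák cdf Lipschitzly.

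The main obstacle is this last point: making the $O(\norm w)$ coverage error uniform over dates without picking up a $\log T_n$ factor. I would first try bounding $|F_T(c(1+\epsilon))-F_T(c)|\le C\epsilon c\,T\phi(c)$, using the Mills-ratio computation already used in the proof of Theorem~\ref{thm:robust-band-main}. If that route leaves the factor, I would state the error as $O(\norm w\log T_n)$.
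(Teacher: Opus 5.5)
Your plan follows the paper's proof in outline: an implicit-function expansion of the population score (the paper writes the derivative at $\theta^\dagger$ as $-\mathcal I_c+O(\norm w)$ instead of perturbing in $w$ from $w=0$, which is equivalent), exact linearity of the report block, exact centering of $S_t(\theta^*_t)$ because $\theta^*_t$ is defined as the root of the summed conditional score, and the curvature defect $\sum_k\{D_\theta J_{tk}\}^\trans\mu^*_{tk}=O(n\norm w)$ for the Wald part. Your fold remark is the right way to define $\theta^*_t$.

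The fallback in Step 1 does not work: a term linear in $w$ cannot be absorbed into an $O(\norm w^2)$ remainder. Suppose the fitted exposure is the chart's $W(\Psi\eta)y_-$. Then $E\,e_Y(\theta^\dagger)=\beta^\dagger\Delta_w+O(\norm w^2)$, where $\Delta_w=M_{L_yX}L_y\,D_m\{W(m)y_-\}\,w$. The correction vector becomes $\bigl(\beta^\dagger r^\trans\Delta_w;\ (\beta^\dagger)^2H^\trans\Delta_w+Q^\trans R_zAw\bigr)$, which changes both slots, not only the first. The displayed formula therefore rests on the convention that the outcome channel is centered at $\theta^\dagger$; the paper's proof simply asserts this ("the outcome channel is centered at the truth"). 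You should state it as part of the setup rather than promise a bound.

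In Step 3 the obstacle you found is real and cannot be argued away, and the factor is $\log T_n$, not $\sqrt{\log T_n}$. Suppose the true variance exceeds the studentizing variance by a common factor $1+\epsilon$ at every date. Then each date's miss probability is multiplied by a factor of order $\exp\{c^2\epsilon/(2+2\epsilon)\}$, where $c=c_{1-\alpha,T_n}$ and $c^2\sim 2\log T_n$. So the simultaneous coverage error has exact order $\epsilon\log T_n$ when that quantity is small, and at fixed $\epsilon>0$ coverage tends to zero as $T_n\to\infty$. Your own bound $C\epsilon c\,T\phi(c)$ is already of this order, because $T\phi(c)\asymp c$. The paper's proof overlooks this when it says "no other step of the band proof is affected". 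What is provable is $O(\norm w)$ per date and $O(\norm w\log T_n)$ for the band.

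There is also a scaling gap in the same step. Your pilot term $\mathcal I_t^{-1}\sum_k\{D_\theta J_{tk}[d_{tk}]\}^\trans\mu^*_{tk}$ is $O(a_n\norm w)$ in $\theta$. After division by $\sqrt{v_{\beta,t}}\asymp n^{-1/2}$ it is $O(\sqrt n\,a_n\norm w)$. Condition (E6) together with $b_n\to0$ permits, for example, $a_n=n^{-1/3}$. An $O(\norm w)$ claim therefore needs a root-$n$ pilot, $\sqrt n\,a_n=O(1)$; otherwise the error must be stated as $O\{(1+\sqrt n\,a_n)\norm w\}$.

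Finally, this term is not separate from your "sandwich" variance error. The pilot offset $d_{tk}$ is trained on the other fold, whose held-out score sits in the leading term. That correlation is exactly what produces the relative $O(\norm w)$ variance misstatement.
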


\begin{proof}
Existence and the expansion are the implicit function theorem applied to the
population score map $\theta\mapsto E S(\theta)$, whose value at
$\theta^\dagger$ is $(0;\,Q^\trans R_zAw)$ (the outcome channel is
centered at the truth and the report channel's residualized mean error is
$R_zAw$) and whose derivative at $\theta^\dagger$ is
$-\mathcal I_c+O(\norm w)$: the exact derivative is
$-J^\trans J+(D_\theta J)^\trans E e$, the first term is
$-\mathcal I_c$ by Theorem~\ref*{thm:identification-main}, and the second
is $O(\norm w)$ because $Ee(\theta^\dagger)=(0;R_zAw)$; the correction
contributes only to the stated quadratic remainder. Second derivatives of
the score in $\theta$ are bounded by the constants of
Section~\ref*{sec:si-instance}, giving the quadratic remainder. For the
report-only subproblem the score is exactly linear in $\eta$, so the root
is exact.

(i) The pseudo-true path is defined date by date as the root of the
conditional population score, so
$E\{S_t(\theta^*_t)\mid\mathcal F_{t-1}\}=0$ exactly; the innovations are
unchanged, so $\Var\{S_t(\theta^*_t)\mid\cdot\}$ is the Gram matrix
$\mathcal I_t(\theta^*_t)$ of the same whitened residualized derivatives;
the proofs in Section~\ref*{sec:si-weakid} use only this conditional
centering and covariance, so they apply unchanged.

(ii) The Wald statements additionally use the linearization of the score
between the pilot and the target. At $\theta^*_t$ the residual
$e_t(\theta^*_t)$ has a deterministic conditional-mean component
$r_w=R\{\mu^\dagger-\mu(\theta^*_t)\}$ of norm $O(\sqrt n\norm w)$ which is
orthogonal to the score columns but not to their derivatives: the effective
curvature of the one-step is
$J^\trans J-(D_\theta J)^\trans e
=\mathcal I_t\{1+O(\norm w)\}+(D_\theta J)^\trans R\varepsilon$,
where the stochastic term is controlled exactly as in the exact-model proof
and the deterministic perturbation contributes a relative $O(\norm w)$ to
the linearization and hence to the studentized variance. Coverage of a
nominal-$(1-\alpha)$ Wald interval is therefore $1-\alpha+O(\norm w)$; no
other step of the band proof is affected, since the leading score term is
exactly centered with unchanged covariance by (i).
\end{proof}

\begin{remark}[Specification diagnostic]
The whitened report residual after projecting off $[L_zU,\ L_zA\Psi]$ has,
under the working covariances, an exact quadratic-form law with
noncentrality $\norm{(I-P_{[U,A\Psi]})L_zAw}^2$: zero on-chart, positive
off-chart. This is the report-channel analogue of the reporter-cycle test
and is reported alongside it in the case study; its noncentrality gives the
power curve against off-chart alternatives directly.
\end{remark}

\subsection{Sensitivity regions under bounded common dyad bias}

Theorem~\ref*{thm:identification-main}(d) shows an \emph{unrestricted}
common dyad bias $c$ (the same in both mirror reports) destroys
identification. The empirically relevant question is bounded failure:
$\norm c_\infty\le\delta$ for a declared $\delta$. Let
$c_2=(c;c)$ denote the stacked bias and define the \emph{passthrough
matrix} $\Lambda\in\R^{d\times|\mathcal E|}$ columnwise by
\[
\Lambda_{\cdot e}=\mathcal I_c^{-1}
\begin{pmatrix}0\\ Q^\trans R_z\,(\mathbf e_e;\mathbf e_e)\end{pmatrix},
\qquad e\in\mathcal E,
\]
so that the first-order target shift under bias $c$ is $\Lambda c$.

\begin{proposition*}[Sensitivity intervals and breakdown]
Under $\norm c_\infty\le\delta$ and the conditions above, the pseudo-true
target satisfies, coordinate by coordinate and to first order in $\delta$
(exactly for the report-channel subproblem),
\[
\theta^*_l\in
\big[\theta^\dagger_l-\delta\norm{\Lambda_{l\cdot}}_1,\
\theta^\dagger_l+\delta\norm{\Lambda_{l\cdot}}_1\big],
\qquad l=1,\ldots,d,
\]
and the bounds are attained at sign-pattern extreme points of the box. Any
conclusion drawn from a contrast $a^\trans\theta$ is robust to common bias
of size $\delta$ whenever the estimated contrast exceeds its critical value
by more than $\delta\norm{a^\trans\Lambda}_1$ (plus the window-doubling
factor when the contrast compares pre- and post-change windows and the bias
may differ across them); the \emph{breakdown value} $\delta^*$ is the
smallest $\delta$ at which this margin is exhausted. As
$\delta\to\infty$ the intervals cover the whole parameter space in the
directions $\mathrm{range}(\Lambda)$, recovering the impossibility theorem
as the limit of the sensitivity analysis.
\end{proposition*}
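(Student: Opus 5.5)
The plan is to treat the common bias exactly as the off-chart perturbation $w$ was treated in the preceding proposition: as a deterministic shift of the report-channel mean. A common dyad bias $c$ adds $A$-mapped $c$, which under mirror reporting is the stacked vector $c_2=(c;c)$, to the mean of $z$. Substituting this into the population joint score at $\theta^\dagger$ gives the outcome block $0$, since the outcome channel is untouched, and the report block $Q^\trans R_z c_2$. The implicit function argument of the information-projection proposition then yields
\[
\theta^*=\theta^\dagger+\mathcal I_c^{-1}\begin{pmatrix}0\\ Q^\trans R_z c_2\end{pmatrix}+O(\norm c^2).
\]
By linearity in $c$ and the columnwise definition of $\Lambda$, the leading term is exactly $\Lambda c$. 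For the report-only subproblem the score is linear in $\eta$ and $c$, so the shift is exact there, as in the report-only formula already proved.

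Next I would bound each coordinate. For $\norm c_\infty\le\delta$, H\"older's inequality gives $|(\Lambda c)_l|\le\delta\norm{\Lambda_{l\cdot}}_1$, with equality at $c=\delta\,\mathrm{sign}(\Lambda_{l\cdot})$, which is a vertex of the box. The image of the box under the linear map is a zonotope, so the coordinate extremes are attained at vertices. The same argument applied to $a^\trans\Lambda c$ gives the contrast bound $\delta\norm{a^\trans\Lambda}_1$. When the bias may differ across a pre-change window and a post-change window, the shift of the contrast is $a^\trans\Lambda c^{(1)}-a^\trans\Lambda c^{(0)}$ with both vectors in the box. Its supremum is $2\delta\norm{a^\trans\Lambda}_1$, which is the window-doubling factor. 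A conclusion of the form ``estimated contrast exceeds its critical value'' survives every admissible bias whenever the excess margin is larger than this worst-case shift. Since the shift is linear in $\delta$, the breakdown value is $\delta^*=\text{margin}/\norm{a^\trans\Lambda}_1$, halved in the windowed case, which is in closed form.

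For the limit $\delta\to\infty$, the set $\{\Lambda c:\norm c_\infty\le\delta\}=\delta\,\Lambda[-1,1]^{|\mathcal E|}$ exhausts $\mathrm{range}(\Lambda)$. The intervals therefore eventually cover the parameter space along those directions. I would connect this to Theorem~\ref{thm:identification-main}(d) by noting that each column $A\dot m h$ is reproduced by some common bias $c=\Psi h$, so the composition directions lie in the range of $\Lambda$. Concretely, taking $c=\Psi h$ gives $R_z c_2=Qh$, so $\Lambda c=\mathcal I_c^{-1}(0;Q^\trans Qh)$, which is nonzero whenever $Qh\neq0$. This shows that every identified composition direction can be moved.

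The main obstacle is the qualifier ``to first order'': the intervals must be stated for the linearized shift. Controlling the $O(\delta^2)$ remainder for a general nonlinear outcome channel needs the softmax Hessian bound that was also used for the off-chart case. I would invoke that bound and not attempt a fully nonlinear sensitivity region.
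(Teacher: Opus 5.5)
Your proposal is correct and follows the paper's proof essentially step for step: the implicit-function expansion giving $\theta^*-\theta^\dagger=\Lambda c+O(\norm c^2)$, H\"older's inequality with vertex attainment for coordinates and contrasts, doubling of the contrast shift when the pre- and post-window biases differ, and the $\delta\to\infty$ limit through the common-bias absorption of Theorem~\ref{thm:identification-main}(d). Your explicit choice $c=\Psi h$, which gives $R_z(c;c)=Qh$ under the mirror duplication map, is a cleaner way to show that $c\mapsto Q^\trans R_z(c;c)$ is onto $\R^q$. To get the full range claim, state that surjectivity explicitly from $Q^\trans Q\succ0$, rather than only noting that $\Lambda c\ne0$ whenever $Qh\ne0$.
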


\begin{proof}
By the same implicit-function expansion, $\theta^*(c)-\theta^\dagger=\Lambda
c+O(\norm c^2)$, linear in $c$. A linear functional of a box
$\{\norm c_\infty\le\delta\}$ has range $\pm\delta$ times the $\ell_1$ norm
of its coefficient vector, attained at vertices. The contrast and breakdown
statements are the same computation applied to $a^\trans\Lambda$; when the
pre- and post-window biases may take different values in the box, the
worst-case contrast shift doubles. The limit statement follows because the
first-order image of the growing box is $\mathrm{range}(\Lambda)$, and
this range contains every composition direction: by
Theorem~\ref*{thm:identification-main}(d), for every $h\in\R^q$ there are
$(\lambda,c)$ with $L_zA\Psi h=L_zU\lambda+L_z(c;c)$; applying
$M_{L_zU}$ gives $Q\Psi\text{-direction }=R_z(c;c)$ projected, hence
$Q^\trans Q\,h=Q^\trans R_z(c;c)$, and since $Q^\trans Q\succ0$ the map
$c\mapsto Q^\trans R_z(c;c)$ is surjective onto $\R^q$.
\end{proof}

\begin{remark}[Use and verification]
The passthrough matrix is computable from the design before outcomes are
seen, like every other diagnostic in this paper. The replication code
verifies the expansion against direct recomputation of the population root
(agreement to $10^{-10}$ at machine-checkable instances) and verifies the
MC centering of the estimator on $\eta^*$ under an off-chart perturbation.
The Wald coverage statement of the information-projection proposition is
measured with controls. At the suite's off-chart scale the
nominal-$95\%$ Wald ellipse for $\eta^*$ covers in $\senCovFull\%$ of
replications; the zero-misspecification control at the same design covers
in $\senCovZero\%$ ($R=\senCovR$ per row, MCSE $\approx\senCovMCSEv$
points). The gap is therefore the finite-sample Wald error of the
uncalibrated one-date fit at this deliberately small design, the
phenomenon the calibration of Section~\ref*{sec:si-estimation} and the
simulation ladder address, and not a misspecification effect: across an
eightfold range of scales coverage is
$\senCovQuarter$, $\senCovHalf$, $\senCovFull$, and $\senCovDouble\%$,
flat within Monte Carlo error at the smaller scales and rising at the
largest, the conservative direction (off-chart signal inflates the
estimated score covariance). Score-inversion sets are exact at every
scale by part (i). The case study reports $\delta^*$ for its headline attribution:
the size of common reporting bias that would have to be present before the
composition-only classification could be an artifact.
\end{remark}

\section{Score-inversion change inference on the observational path}
\label{sec:si-obsdetect}
The change guarantees of Section~\ref*{sec:si-change} hold in a replicated
Gaussian benchmark; the transfer instance of Section~\ref*{sec:si-instance}
is a designed experiment. This section gives change inference directly in
the observational experiment (realized lags, one dependent path, no
pilot, no information floor) by inverting the identification-robust score
sets of Section~\ref*{sec:si-weakid}. The construction trades power for
validity: it is exactly sized under the conditions of that section, its
confidence statements follow from inversion without asymptotic
approximation, and its power is evaluated by simulation and is not matched
to the benchmark lower bounds. Observational attribution optimality
(procedures attaining benchmark-order thresholds for a single dependent
path) remains open, and we state it as such.

\subsection{The score-inversion construction}

For a hypothesized common value $\theta^0$ over a horizon
$\mathcal T=\{1,\ldots,T\}$, each date supplies the statistic
$\mathrm{AR}_t(\theta^0)$ of Section~\ref*{sec:si-weakid}. Let
$\chi^2_{d,\pi_T}$ with $\pi_T=(1-\alpha)^{1/T}$ be the per-date critical
value.

\begin{description}[style=nextline]
\item[Constancy test.] Reject ``the target path is constant on
$\mathcal T$'' if and only if
\[
\min_{\theta^0}\ \max_{t\le T}\ \mathrm{AR}_t(\theta^0)\ >\ \chi^2_{d,\pi_T}.
\]
\item[Split confidence set.] For a candidate change date $s$, accept $s$ if
both segments $\{t\le s\}$ and $\{t>s\}$ pass their own constancy tests at
level determined by the same per-date $\chi^2_{d,\pi_T}$. The accepted set
$\widehat{\mathcal S}$ is a confidence set for the true change date.
\item[Attribution regions.] At each accepted split $s$, the set of
accepted pairs $(\theta^{(0)},\theta^{(1)})$ induces a confidence region
for the jump $\Delta=\theta^{(1)}-\theta^{(0)}$; the region reported for
the change is the union over accepted splits. A change is \emph{consistent
with composition-only} if some accepted pair at some accepted split has
$\Delta_\beta=0$, i.e.\ if the constrained minimax problem with a common
$\beta$ and free segment compositions is feasible at the critical value;
it is \emph{certified inconsistent} only when no accepted split admits
such a pair.
\end{description}

\begin{theorem*}[Theorem~\ref*{thm:obsdetect-main}, restated]
Assume the conditional Gaussian model, (E1)--(E3), oracle covariances, and
the sequential condition of Theorem~\ref*{thm:bands-main}.
\begin{enumerate}[label=\textup{(\alph*)}]
\item Under any constant path $\theta_t\equiv\theta^\star$, the constancy
test has size at most $\alpha$, exactly: with probability at least
$1-\alpha$, $\theta^\star$ itself satisfies
$\max_t\mathrm{AR}_t(\theta^\star)\le\chi^2_{d,\pi_T}$, so the minimum
cannot exceed the critical value.
\item Under a single change at $\tau$ with segment values
$(\theta^{(0)},\theta^{(1)})$, with probability at least $1-\alpha$ the true
split is accepted, $\tau\in\widehat{\mathcal S}$, and the true jump lies in
the attribution region. Consequently the composition-only classification
errs (rejects a true composition-only change, or fails to flag that a
strength change is inconsistent with $\Delta_\beta=0$ when every accepted
pair excludes it) with probability at most $\alpha$.
\item With estimated covariances and the ridge of
Theorem~\ref*{thm:weakid-main}(d), the same statements hold with asymptotic
size and coverage, uniformly over $\mathcal P_n$, without (E4) or (E6).
\end{enumerate}
\end{theorem*}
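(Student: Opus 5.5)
The whole argument reduces to one simultaneous-coverage event for the true path. Let
\[
\mathcal G=\bigl\{\mathrm{AR}_t(\theta_t)\le\chi^2_{d,\pi_T}\ \text{for all}\ t\le T\bigr\}.
\]
Under the conditional Gaussian model with oracle covariances and the sequential condition, Theorem~\ref*{thm:weakid-main}(b) gives $P(\mathcal G)\ge\pi_T^{T}=1-\alpha$. It applies at every information level and requires no estimator, pilot, (E4) or (E6). The plan is to show that, on $\mathcal G$, each claimed event of (a) and (b) holds deterministically. The only probabilistic input is then $P(\mathcal G)$, and part (c) follows by replacing (b) with its feasible counterpart.

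\textbf{Part (a).} Under a constant path $\theta_t\equiv\theta^\star$, the event $\mathcal G$ says $\max_t\mathrm{AR}_t(\theta^\star)\le\chi^2_{d,\pi_T}$. Since $\theta^\star$ is a feasible value in the outer minimization,
\[
\min_{\theta^0}\max_t\mathrm{AR}_t(\theta^0)\le\max_t\mathrm{AR}_t(\theta^\star),
\]
so on $\mathcal G$ the test does not reject. The size is therefore at most $\alpha$. No attainment of the minimum is needed, because only an upper bound on the minimum is used.

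\textbf{Part (b).} Take a single change at $\tau$. On $\mathcal G$, the value $\theta^{(0)}$ is a feasible common value for the segment $\{t\le\tau\}$, since every per-date statistic there is below the same critical value $\chi^2_{d,\pi_T}$. The same holds for $\theta^{(1)}$ on $\{t>\tau\}$. Hence both segment constancy tests accept, so $\tau\in\widehat{\mathcal S}$, and the pair $(\theta^{(0)},\theta^{(1)})$ is accepted at split $\tau$. It follows that $\Delta=\theta^{(1)}-\theta^{(0)}$ lies in the attribution region, which is the union over accepted splits.

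For the classification claim, suppose the true change is composition-only, so $\Delta_\beta=0$. On $\mathcal G$ the true pair is then an accepted pair with $\Delta_\beta=0$, and the certificate ``inconsistent'' cannot be issued. A false certificate therefore has probability at most $1-P(\mathcal G)\le\alpha$. The converse error in the statement, where every accepted pair excludes $\Delta_\beta=0$ yet the procedure fails to flag it, is excluded by construction of the verdict rule on the same event.

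The step I expect to be most delicate is the choice of per-date level. The same per-date level $\pi_T$ has to be used in the segment tests as in the full-horizon event $\mathcal G$, rather than a segment-length-adjusted level, so that one event covers every split at once. This is also why the union over splits costs no additional multiplicity.

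\textbf{Part (c).} With estimated covariances, I would replace the exact statement by Theorem~\ref*{thm:weakid-main}(d), applied jointly over dates:
\[
\liminf_n\ \inf_{P\in\mathcal P_n} P\bigl\{\widehat{\mathrm{AR}}_t(\theta_t)\le\chi^2_{d,\pi_T}\ \forall t\bigr\}\ge1-\alpha.
\]
This is the step needing real work. The proof of (d) is per-date with a fixed quantile, so the perturbation must be controlled at the growing quantile $\chi^2_{d,\pi_T}$, of order $\log T$.

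My first attempt is to reuse the pointwise ridge domination. The bound
\[
\widehat{\mathrm{AR}}_t\le(1+\epsilon_n)\,S_t^\trans\bigl(\mathcal I_t+\tfrac12\varrho_nnI\bigr)^{-1}S_t+\eta_n
\]
should hold uniformly in $t$ with $\epsilon_n,\eta_n\to0$, by the $\max_t$ union bounds of Theorem~\ref*{thm:path-main}, Step~2. I would then use the ridge-dominated oracle statistic, which is exactly $\chi^2$-dominated and sequentially independent. A ring/anti-concentration bound for $\max_t$ of $\chi^2_d$ variables at scale $\log T$, analogous to Step~4 of the band proof, absorbs $(1+\epsilon_n)$ and $\eta_n$ provided $\epsilon_n\log T+\eta_n\to0$. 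I would impose this strengthening of the rate condition explicitly if the stated one does not already imply it.

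Once this uniform bound on the feasible coverage event is in hand, the deterministic arguments of (a) and (b) go through verbatim. They never use (E4) or (E6), because every statistic is evaluated at hypothesized values and no pilot is involved.
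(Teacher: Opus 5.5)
Your proof of (a) and (b) is the paper's. There is a single event on which every true-path statistic $\mathrm{AR}_t(\theta_t)$ lies below the common per-date critical value $\chi^2_{d,\pi_T}$; it has probability at least $1-\alpha$ by Theorem~\ref{thm:weakid-main}(b), and on it the true value (or the true segment pair at the true split) is a feasible witness, so every conclusion follows deterministically. Your $\mathcal G$ is exactly the paper's ``segmentwise'' event, and you are right that the second classification error is excluded by the definition of the verdict rule itself.

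For (c), the paper only says the argument carries over ``exactly as there,'' and your caution is better founded than that one-liner. Theorem~\ref{thm:weakid-main}(d) is a per-date statement at a fixed quantile, but here the quantile $\chi^2_{d,\pi_T}$ grows like $\log T_n$. At that quantile, the Cauchy--Schwarz cross term between the oracle score and the covariance-estimation perturbation is only $O_{\PP}^{\mathrm{unif}}(\delta_n\ell_n/\sqrt{\varrho_n})$ uniformly over dates. Your requirement $\epsilon_n\log T_n+\eta_n\to0$ therefore amounts to roughly $\delta_n^2\ell_n^2/\varrho_n\to0$. That is one factor of $\ell_n$ beyond the stated $\delta_n^2\ell_n/\varrho_n\to0$, unless $T_n$ stays bounded. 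On this route, then, the explicit strengthening you propose is genuinely needed rather than a mere precaution.
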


\begin{proof}
(a) By Theorem~\ref*{thm:weakid-main}(b), the event
$\mathcal A=\{\mathrm{AR}_t(\theta^\star)\le\chi^2_{d,\pi_T}\ \forall
t\le T\}$ has probability at least $(1-\alpha_T)^T=1-\alpha$ exactly, by the
backward-induction product argument with the exact conditional
$\chi^2$-domination at every date. On $\mathcal A$ the minimax value is at
most the critical value and the test accepts. (b) Apply the same argument
segmentwise at the true $(\tau,\theta^{(0)},\theta^{(1)})$: the $T$ per-date
events again have joint probability at least $1-\alpha$ (the product
telescopes across the two segments in date order because the sequential
conditioning does not care where the segment boundary lies). On that event
the true split, the true pair, and hence the true jump are accepted, which
is simultaneously the coverage of $\widehat{\mathcal S}$, of the attribution
region, and the error bound for the classification rule. (c) follows by
replacing exact conditional domination with the asymptotic domination of
Theorem~\ref*{thm:weakid-main}(d) uniformly over dates, exactly as there.
\end{proof}

\begin{remark}[Structure of the accepted-split set]
For fixed data, the minimax value
$v(\mathcal D)=\min_{\theta}\max_{t\in\mathcal D}\mathrm{AR}_t(\theta)$ is
nondecreasing in the date set $\mathcal D$: for every $\theta$ the inner
maximum over a superset is at least the maximum over a subset, and taking
the minimum over $\theta$ preserves the inequality. Prefix feasibility
$\{v(\{1,\ldots,s-1\})\le\chi^2_{d,\pi_T}\}$ is therefore nonincreasing in
$s$ and suffix feasibility nondecreasing, so the accepted-split set is an
interval whenever it is nonempty. An implementation that scans every
split needs only to certify the interval's two endpoints; the replication
code does exactly this, warm-starting each segment's search from the
neighboring segment's feasible point.
\end{remark}

\begin{remark}[Computation, three-state verdicts, and power]
Acceptance requires exhibiting a feasible $\theta^0$; rejection is
licensed only when a candidate search (pooled and segmentwise starting
values, an information-scaled grid of $\pm3$ standardized units around
them, and smoothed Gauss--Newton refinement from every start) fails to
exhibit one. Rejection is therefore a certified-search statement, not a
global optimality certificate; its error rate under the null is part of
the realized size measured in Section~\ref*{sec:sim-obsdetect}. A feasible
point exists in the true value's neighborhood with probability $1-\alpha$,
and the search explores that neighborhood at resolution finer than the
acceptance region, so the implemented size tracks the nominal level, and
search failure can move the constancy test only toward rejection. The
attribution step uses the same discipline symmetrically: the constrained
(common-$\beta$) search is multistart, is run at every accepted split, and
the verdict takes three states: consistent with composition-only (a
feasible constrained pair was exhibited at some accepted split),
inconsistent by search (the multistart search was exhausted at every
accepted split), or undetermined (the accepted-split set is empty although
constancy was rejected). Undetermined outcomes are reported, never
converted into certificates. Power, the cardinality of
$\widehat{\mathcal S}$, and the verdict rates are evaluated by simulation
in the observational design itself (Section~\ref*{sec:sim-obsdetect}); no
optimality claim is made, and the benchmark lower bounds of
Section~\ref*{sec:si-minimax} bound the possible improvement. Closing that
gap for dependent observational paths is the main open problem this paper
defines.
\end{remark}

\section{The change benchmark against its own constants}
\label{sec:si-simchange}
This section evaluates the three-copy Gaussian benchmark procedure against
its own constants. It is a benchmark evaluation, not evidence about the
observational experiment; it is placed in the Supplement for that reason,
and its role is to measure how conservative the finite-sample guarantee of
Theorem~\ref*{thm:change-main} is and where the minimax obstructions sit
relative to practice.

Figure~\ref{fig:change-benchmark} runs the exact three-copy procedure of
Section~\ref*{sec:si-change} ($d=3$, $T=\chgT$, $h=\chgH$, two changes,
$R=\chgR$ per jump size). The procedure has a sharp empirical phase
transition: detection with correct attribution moves from failure to
$95\%$ success near $\varsigma\approx\chgAttKninefive$, about half the
finite-sample guarantee threshold $\varsigma\ge\chgKappaUB$ of
Theorem~\ref*{thm:change-main}, so the guarantee is conservative by
roughly a factor of two at this design; the attribution lower-bound scale
$\varsigma\approx\chgKappaAttLB$ lies an order of magnitude lower. These two
values bracket the performance attainable by any observational procedure,
including Theorem~\ref*{thm:obsdetect-main}. Localization is sharp once
detection holds: at $\varsigma=5$ the median (90th percentile) refined error
is $\chgLocMedAtFive$ ($\chgLocQnineAtFive$) dates against a guaranteed
bound of $\chgLocBoundAtFive$. Under $K=0$ the screening stage
false-alarms in $\chgNullFP\%$ of replications at $\alpha=0.05$.

\begin{figure}[t]
\centering
\includegraphics[width=0.55\textwidth]{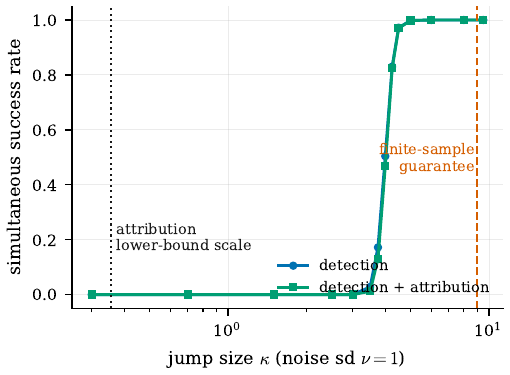}
\caption{Three-copy Gaussian benchmark: simultaneous success rates across
jump sizes, with the finite-sample guarantee threshold (dashed) and the
minimax attribution scale (dotted).}
\label{fig:change-benchmark}
\end{figure}

\section{Synthetic end-to-end validation of the deployment pipeline}
\label{sec:si-vignette}
This section validates the full deployment pipeline end to end under
known truth, complementing the real application of
Section~\ref*{sec:application} of the main text with the steps whose
evaluation requires a verifiable target (attribution labeling and
forecasting). Bilateral trade is the canonical instance of the paper's
measurement model:
the same flow from exporter $j$ to importer $i$ is reported twice, free on
board by the exporter and cost-insurance-freight by the importer, with
systematic reporter-side wedges, correlated errors, missing returns, and
well-documented two-sided discrepancies
\citep{Bhagwati1964,FismanWei2004,JavorcikNarciso2008}. The paired design
of Theorem~\ref*{thm:identification-main}(d) was written for exactly this
structure.

The panel is simulated from the declared experiment, so ground truth is
known and every step of the protocol has a verifiable target; it is
labeled synthetic in every figure and table and is evidence about the
pipeline, not about trade.

\subsection{Design and prespecified diagnostics}

The panel has $N=\appN$ economies over $T=\appT$ quarters with
$n_y=\appny$ within-quarter outcome observations, complete eligible
support, and a $q=3$ gravity chart (distance decay, two-bloc affinity,
agreement pair, row-centered). True strength is constant at
$\beta_t=\appBeta$; the distance elasticity drifts slowly; bloc affinity
drops at quarter $\appTau$, a reallocation of the type produced by a trade
conflict. Importer reports carry a CIF-type wedge, reporter biases follow
the declared receiver/supplier design, mirror errors are correlated, and
$\appMissPct\%$ of pairs are missing at random each quarter under the
prespecified ignorable rule; all descriptive statistics, plug-in
comparators, and the estimator itself use only available pairs.

Diagnostics come first, as the protocol requires. Mirror discrepancies on
available pairs average $\appDiscMean$ log points (dispersion
$\appDiscSD$), consistent with the declared additive wedge. The
reporter-cycle test, run quarterly with the declared covariance, is
consistent with the additive receiver/supplier design in
$\appCyclePassPct\%$ of quarters at the $5\%$ level
(Figure~\ref{fig:application}(d)); by
Theorem~\ref*{thm:identification-main}(d), passing certifies only the
absence of non-additive discrepancy structure; the common-bias direction
is invisible to this test, so its effect is quantified by the sensitivity
analysis below. The identification diagnostic
$\{\lambda_{\min}(\widehat{\mathcal I}_t)/n\}^{1/2}$ averages
$\appSigMinMean$ (minimum $\appSigMinMin$), clearing the prespecified floor
at every quarter, so Wald-type path inference is licensed and the score
sets of Theorem~\ref*{thm:weakid-main} remain available.

\subsection{Paths, verdicts, attribution, sensitivity, forecasting}

Figure~\ref{fig:application}(a)--(b) shows the estimates from the
calibrated estimator ($\widehat\gamma=\appGamma$). The joint
$\widehat\beta_t$ tracks the constant truth; its $95\%$ simultaneous band
covers the full constant path, and the band-based constancy verdict, which
rejects only if no horizontal line fits inside the band, correctly does
not reject ($\appConstReject$ rejections). The static plug-in on the same
data moves from $\appPlugPre$ to $\appPlugPost$ across the composition
break, a standardized spurious shift of $\appPlugZ$ under the naive rule
and the single-dataset counterpart of Theorem~\ref*{thm:plugin-main}. The
report-only two-step also stays centered (its disadvantage is precision,
not bias, as in Section~\ref*{sec:sim-false-attr}). The composition path
recovers all three chart coordinates including the bloc-affinity drop.

Attribution is then reported descriptively and stress-tested. The
eight-quarter window contrast separates coordinates by an order of
magnitude: bloc affinity moves $\appJumpZetaTwo$ standard errors against
$\appJumpZbeta$ (strength), $\appJumpZetaOne$ (distance, whose true path
drifts), and $\appJumpZetaThree$ (agreement); under the
$6\varepsilon$-type labeling rule only the bloc coordinate is active, so
the episode is classified composition-only, in agreement with the truth.
Two caveats are quantified. First, the classification is descriptive;
exact observational change inference is available via
Theorem~\ref*{thm:obsdetect-main} and its simulation profile in
Section~\ref*{sec:sim-obsdetect}. Second, the common-bias sensitivity of
Proposition~\ref*{prop:pseudo-main}: the design-computed passthrough gives
per-unit-$\delta$ sensitivities of $\appSensBeta$ (strength) and
$\appSensEtaTwo$ (bloc affinity), so the composition-only classification
survives any common dyad bias up to the breakdown value
$\delta^*=\appDeltaStar$ on the log-report scale, a number that can be
compared with institutional knowledge of reporting practices.

The safe-inverse fallback never triggered ($\appSafePct\%$ of quarters).
A pseudo-out-of-sample forecast comparison is part of the prespecified
deployment protocol below; on this synthetic panel it is not reported as
evidence, since forecast rankings among estimators that share the
generating model carry no information about practical value. The
information grid of Section~\ref*{sec:sim-jointgain} is the controlled
comparison of the joint estimator with the two-step alternative.

\begin{figure}[t]
\centering
\includegraphics[width=\textwidth]{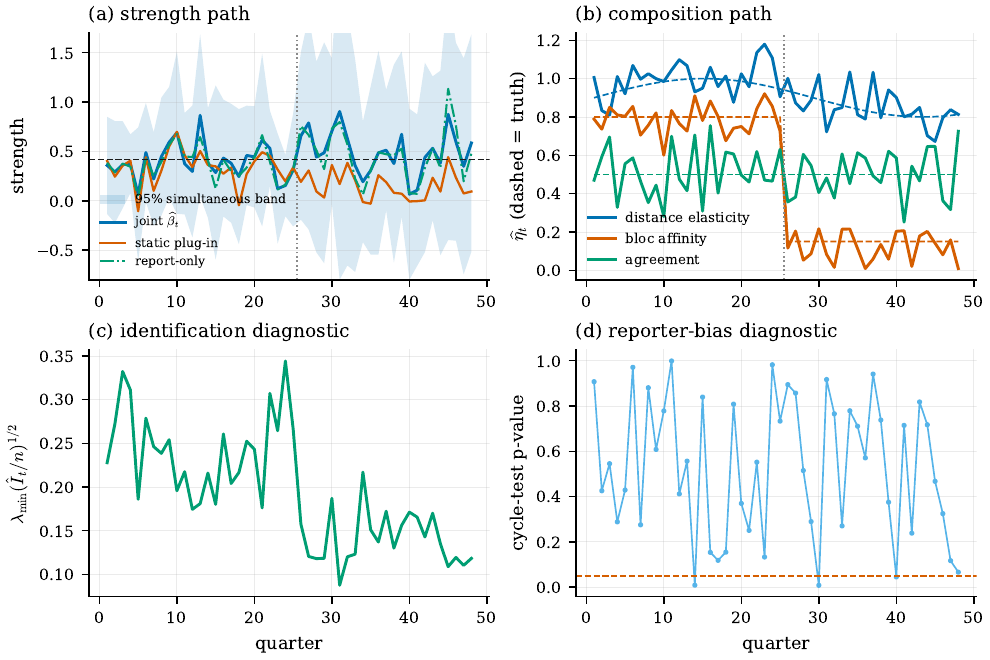}
\caption{Synthetic vignette, mirror-trade design (composition break at the
dotted line). (a) Strength: calibrated joint estimate with $95\%$
simultaneous band, static plug-in, and report-only two-step. (b)
Composition coordinates (dashed: truth). (c) Identification diagnostic.
(d) Reporter-cycle specification test.}
\label{fig:application}
\end{figure}

\section{Complete Gaussian change procedure and proof}
\label{sec:si-change}
\begin{lemma*}[Gaussian positive-drift bound]
Let $Y_1,\ldots,Y_N$ be iid $N(m,v^2)$ with $m,v>0$. For $1\le r\le N$,
\[
P\left\{\min_{r\le\ell\le N}\sum_{i=1}^{\ell}Y_i\le0\right\}
\le2\exp\left(-\frac{m^2r}{8v^2}\right).
\]
\end{lemma*}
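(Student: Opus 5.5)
The plan is to split the event at the first index $r$: first control where the partial sum $S_r=\sum_{i\le r}Y_i$ sits, then control how far the walk can fall after time $r$ using an exponential martingale. Write $S_\ell=\sum_{i=1}^\ell Y_i$ and $\mathcal G_\ell=\sigma(Y_1,\ldots,Y_\ell)$.

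\textbf{Step 1: exponential martingale.} Set $\lambda=2m/v^2$. Then
\[
E e^{-\lambda Y_1}=\exp(-\lambda m+\lambda^2v^2/2)=1 .
\]
Hence $M_\ell=\exp(-\lambda S_\ell)$ is a positive martingale with respect to $\mathcal G_\ell$. Moreover, $\{\min_{r\le\ell\le N}S_\ell\le0\}=\{\max_{r\le\ell\le N}M_\ell\ge1\}$.

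\textbf{Step 2: conditional maximal inequality.} Conditionally on $\mathcal G_r$, the process $(M_\ell/M_r)_{\ell\ge r}$ is a nonnegative martingale started at $1$. Apply Ville's inequality, or equivalently Doob's maximal inequality for the finite horizon $N$, to $(M_\ell)_{r\le\ell\le N}$ at level $1$. This gives, almost surely,
\[
P\Big\{\min_{r\le\ell\le N}S_\ell\le0\,\Big|\,\mathcal G_r\Big\}\le\min\{1,\ e^{-2mS_r/v^2}\}.
\]
This maximal step is the only genuinely probabilistic point, and it is where I expect care to be needed. The point to check is that the bound holds uniformly in $N$ because Ville's inequality does not depend on the horizon. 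If one prefers to avoid Ville, stop at the first $\ell\ge r$ with $S_\ell\le0$ and apply optional stopping to the bounded stopped martingale up to time $N$; this gives the same conclusion.

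\textbf{Step 3: integrate over $S_r\sim N(mr,v^2r)$.} Split on the event $\{S_r\le mr/2\}$. On that event, bound the conditional probability by $1$ and use the Gaussian tail bound
\[
P\{N(0,v^2r)\le-mr/2\}\le\exp\{-(mr/2)^2/(2v^2r)\}=\exp\{-m^2r/(8v^2)\}.
\]
On the complement $\{S_r>mr/2\}$, the conditional bound from Step 2 is at most $\exp(-m^2r/v^2)$, which is no larger than $\exp(-m^2r/(8v^2))$. Adding the two pieces yields the claimed bound
\[
P\Big\{\min_{r\le\ell\le N}S_\ell\le0\Big\}\le2\exp\Big(-\frac{m^2r}{8v^2}\Big).
\]
The constant $8$ comes entirely from the half-mean split in Step 3. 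A sharper split would improve it, but this is unnecessary for the downstream use in the change procedure.
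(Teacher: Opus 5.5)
Your proof is correct and is essentially the paper's argument: the same split on $\{S_r\le mr/2\}$ with the Gaussian tail bound, and the same exponential martingale with $\lambda=2m/v^2$ controlled by Ville's inequality on the complement. The only difference is cosmetic. You condition on $\mathcal G_r$ to get the bound $e^{-2mS_r/v^2}$, whereas the paper applies Ville's inequality unconditionally to the martingale of the post-$r$ increments at threshold $e^{m^2r/v^2}$.
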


\begin{proof}
Let $S_\ell=\sum_{i=1}^\ell Y_i$ and $B_r=\{S_r\le rm/2\}$. Gaussian tails give $P(B_r)\le\exp\{-m^2r/(8v^2)\}$. For $n\le N-r$, put $T_n=\sum_{i=r+1}^{r+n}Y_i$ and $M_n=\exp\{-2mT_n/v^2\}$. The normal moment-generating function shows that $(M_n)$ is a nonnegative mean-one martingale. On $B_r^c$, if some $S_{r+n}\le0$, then $T_n<-rm/2$ and $M_n>\exp(m^2r/v^2)$. Ville's inequality bounds this event by $\exp(-m^2r/v^2)$. Adding the two probabilities proves the claim.
\end{proof}

\subsection{Procedure}

Fix $d,h,T$ with $T\ge4h$, a known $\nu>0$, and piecewise-constant means $\mu_t$ with change points $0<\tau_1<\cdots<\tau_K<T$. Observe three independent copies
\begin{equation}
X_t^{(s)}=\mu_t+\xi_t^{(s)},
\qquad
\xi_t^{(s)}\stackrel{\mathrm{iid}}\sim N_d(0,\nu^2I_d),
\qquad s=1,2,3.
\label{eq:si-three-copy}
\end{equation}
Let $\mathcal I_h=\{2h,\ldots,T-2h\}$, $M_h=T-4h+1$, and
\[
C_h(k)=\sqrt{h/2}\left\{
h^{-1}\sum_{t=k+1}^{k+h}X_t^{(1)}
-h^{-1}\sum_{t=k-h+1}^{k}X_t^{(1)}
\right\}.
\]
With
\[
\lambda_\alpha=\nu\left\{\sqrt d+\sqrt{2\log(3M_h/\alpha)}\right\},
\]
retain $\mathcal A_h=\{k:\norm{C_h(k)}>2\lambda_\alpha\}$ and connect retained indices within distance $2h$. Order the connected components by their smallest elements. Each component gives one preliminary change $q_j$, chosen as the smallest index attaining the component maximum of $\norm{C_h(k)}$.

For each preliminary location define
\[
L_j=\{q_j-2h+1,\ldots,q_j-h\},
\qquad
R_j=\{q_j+h+1,\ldots,q_j+2h\}.
\]
Using copy 2, set $\widehat\mu_{j,-}=h^{-1}\sum_{t\in L_j}X_t^{(2)}$, $\widehat\mu_{j,+}=h^{-1}\sum_{t\in R_j}X_t^{(2)}$, and $\widehat\delta_j=\widehat\mu_{j,+}-\widehat\mu_{j,-}$. For a fixed orthogonal decomposition $\R^d=\mathcal V_{\rm str}\oplus\mathcal V_{\rm cmp}$, label block $b$ active when
\[
\norm{P_b\widehat\delta_j}>3\varepsilon_{\alpha,\widehat K},
\quad
\varepsilon_{\alpha,m}=\frac\nu{\sqrt h}
\left\{\sqrt d+\sqrt{2\log(6(m\vee1)/\alpha)}\right\}.
\]
Using copy 3, let $\mathcal R_j=\{q_j-h+1,\ldots,q_j+h-1\}$ and
\[
Q_j(k)=\sum_{t=q_j-h+1}^{k}\norm{X_t^{(3)}-\widehat\mu_{j,-}}^2
+\sum_{t=k+1}^{q_j+h}\norm{X_t^{(3)}-\widehat\mu_{j,+}}^2.
\]
The refined location is the smallest minimizer of $Q_j$ over $\mathcal R_j$. When $j>\widehat K$, set $q_j=2h$ and define all component-indexed quantities by the same formulas; this measurable padding is used only off $\{\widehat K=K\}$.

\begin{theorem*}[Upper-bound part of Theorem~\ref*{thm:change-main}, restated]
Let the boundary-inclusive minimum spacing be $\Delta$ and the minimum jump norm be $\varsigma_{\min}$. If
\[
\Delta\ge5h,
\quad
\sqrt{h/2}\,\varsigma_{\min}>3\lambda_\alpha,
\quad
\varsigma_{\min}\ge8\varepsilon_{\alpha,K},
\]
then, with probability at least $1-\alpha$,
\begin{enumerate}[label=\textup{(\roman*)}]
\item $\widehat K=K$ and every preliminary location is within $h$ of its true change;
\item every refined location has error below
\[
r_\alpha=\frac{800\nu^2}{9\varsigma_{\min}^2}\log(12K/\alpha);
\]
\item every block whose projected jump is either zero or at least $6\varepsilon_{\alpha,K}$ is classified correctly, so eligible changes are labelled strength-only, composition-only, or joint.
\end{enumerate}
If $K=0$, the screening rule returns no changes with probability at least $1-\alpha/3$.
\end{theorem*}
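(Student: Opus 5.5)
The three copies are independent and each stage uses a different one, so I would prove the result by placing three events, each with failure probability at most $\alpha/3$, on the three copies. All later stages then argue conditionally on the earlier copies.

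\textbf{Screening (copy 1).} Write $C_h(k)=\mathbb E C_h(k)+\sqrt{h/2}\,\bar\xi$, where the noise term is $N_d(0,\nu^2 I_d)$. Gaussian norm concentration, $\|N_d(0,\nu^2I)\|\le\nu\{\sqrt d+\sqrt{2x}\}$ with probability $1-e^{-x}$, plus a union bound over the $M_h$ indices, gives the event $\mathcal E_1=\{\max_k\|C_h(k)-\mathbb E C_h(k)\|\le\lambda_\alpha\}$ with probability at least $1-\alpha/3$. This event alone gives the $K=0$ claim.

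On $\mathcal E_1$ I would use the spacing $\Delta\ge5h$ so that each window of width $2h$ contains at most one change. At a change $\tau$ the mean contrast has norm $\sqrt{h/2}\,\varsigma\ge\sqrt{h/2}\,\varsigma_{\min}>3\lambda_\alpha$, so $\|C_h(\tau)\|>2\lambda_\alpha$ and $\tau$ is retained. At $k$ with no change within $h$, $\mathbb E C_h(k)=0$, so $\|C_h(k)\|\le\lambda_\alpha$ and $k$ is excluded. Hence every retained index lies within $h$ of a change. Separation $\Delta\ge5h>2h+2h$ ensures that retained clusters around distinct changes are not merged. Within one cluster, the mean contrast is a tent: $\mathbb E C_h(k)=\sqrt{h/2}\,(1-|k-\tau|/h)\delta$. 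So the component maximizer lies within $h$, which gives $\widehat K=K$ and part (i).

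\textbf{Attribution (copy 2).} Conditional on copy 1, $L_j$ and $R_j$ lie entirely on opposite sides of $\tau_j$ and inside constant segments. This uses $|q_j-\tau_j|\le h$ and $\Delta\ge5h$; I expect this containment to be the fiddly step. Then $\widehat\delta_j-\delta_j\sim N_d(0,2\nu^2I/h)$. A union over $K$ changes and the projections gives $\|P_b(\widehat\delta_j-\delta_j)\|\le\varepsilon_{\alpha,K}$ with total probability $\alpha/3$, after adjusting the factor $\sqrt2$ into the constant through $6K$. Thresholding at $3\varepsilon_{\alpha,K}$ is then correct whenever the projected jump is $0$ or at least $6\varepsilon_{\alpha,K}$. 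The same event gives $\|\widehat\mu_{j,\pm}-\mu_\pm\|\le\varepsilon_{\alpha,K}\le\varsigma_{\min}/8$.

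\textbf{Refinement (copy 3) — the main obstacle.} For $k>\tau_j$ I would write
\[
Q_j(k)-Q_j(\tau_j)=\sum_{t=\tau_j+1}^{k}Y_t,\qquad
Y_t=\|X_t-\widehat\mu_-\|^2-\|X_t-\widehat\mu_+\|^2 ,
\]
and symmetrically for $k<\tau_j$. Conditionally on copy 2, the $Y_t$ are iid Gaussian, since the expression is linear in $\xi_t$ after the quadratic terms cancel. Their mean is $m=\|\mu_+-\widehat\mu_-\|^2-\|\mu_+-\widehat\mu_+\|^2\ge c\,\varsigma_{\min}^2$, using the estimate error bounds $\le\varsigma_{\min}/8$. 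Their variance is $v^2=4\nu^2\|\widehat\mu_+-\widehat\mu_-\|^2\le C\nu^2\varsigma_{\min}^2$.

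Applying the Gaussian positive-drift lemma with $r=r_\alpha$ on each side of each change, a union over $2K$ events gives failure probability $2\cdot2K\exp(-m^2r/8v^2)\le\alpha/3$. The value of $r$ this forces is $\asymp(\nu^2/\varsigma_{\min}^2)\log(12K/\alpha)$. The constant $800/9$ must come from the explicit bounds: with $\|\widehat\mu-\mu\|\le\varsigma/8$ I get $m\ge\varsigma^2\{(7/8)^2-(1/8)^2\}=3\varsigma^2/4$ and $\|\widehat\delta\|\le5\varsigma/4$, so $v^2\le25\nu^2\varsigma^2/4$. Then $m^2r/(8v^2)=9\varsigma^2 r/(800\nu^2)$, which matches $r_\alpha$ exactly.

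I still need to check two things. First, the search range $\mathcal R_j$ must cover the relevant $k$. Second, the sums must stay within constant segments; this holds because $|q_j-\tau_j|\le h$ and the spacing is at least $5h$. The final union of the three $\alpha/3$ events completes the proof.
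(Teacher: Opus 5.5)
Your proposal follows the paper's proof essentially step by step. It uses the same three $\alpha/3$ events on the three independent copies. The screening argument is the same: norm concentration of $C_h(k)-E C_h(k)\sim N_d(0,\nu^2I_d)$ with a union over $M_h$ indices, zero mean contrast at distance at least $h$ from every change, $\sqrt{h/2}\,\varsigma_j>3\lambda_\alpha$ at the change, and spacing keeping components apart. The refinement also matches: the positive-drift lemma on both sides of each change, with identical constants $m_\pm\ge\tfrac34\varsigma_j^2$, $v_\pm^2\le\tfrac{25}{4}\nu^2\varsigma_j^2$, exponent $9\varsigma_j^2/(800\nu^2)$, and union bound $4K\cdot\alpha/(12K)=\alpha/3$.

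One step needs repair. In the attribution stage, the factor $\sqrt2$ in $\widehat\delta_j-\delta_j\sim N_d(0,2\nu^2I_d/h)$ scales the entire radius $\sqrt d+\sqrt{2\log(\cdot)}$. It therefore cannot be absorbed by enlarging the argument of the logarithm to $6K$, and the claim $\|P_b(\widehat\delta_j-\delta_j)\|\le\varepsilon_{\alpha,K}$ does not follow. In any case, an event on $\widehat\delta_j$ alone does not give the anchor bounds $\|\widehat\mu_{j,-}-\mu^{(j)}\|,\ \|\widehat\mu_{j,+}-\mu^{(j+1)}\|\le\varepsilon_{\alpha,K}$ that your refinement step uses.

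Use the paper's event instead: bound each of the $2K$ anchor means separately at failure level $\alpha/(6K)$, which is where the $6K$ comes from. This gives the anchor bounds directly and also $\|\widehat\delta_j-\delta_j\|\le2\varepsilon_{\alpha,K}$. That suffices because projection is contractive. A zero block stays at most $2\varepsilon_{\alpha,K}<3\varepsilon_{\alpha,K}$, and a block of norm at least $6\varepsilon_{\alpha,K}$ stays at least $4\varepsilon_{\alpha,K}$.

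Two smaller points:
\begin{itemize}
\item Bound $v^2$ by $\tfrac{25}{4}\nu^2\varsigma_j^2$, not by a multiple of $\varsigma_{\min}^2$. The variance grows with the actual jump, so use $\varsigma_j$ in both $m$ and $v^2$. Only the ratio passes to $\varsigma_{\min}$, via $m^2/(8v^2)\ge9\varsigma_j^2/(800\nu^2)\ge9\varsigma_{\min}^2/(800\nu^2)$.
\item The search-range check you left open is settled as follows. On the screening event a retained index has $\|E C_h(k)\|>\lambda_\alpha>0$, hence $|q_j-\tau_j|\le h-1$ and $\tau_j\in\mathcal R_j$. That is all the comparison $Q_j(\widehat\tau_j)\le Q_j(\tau_j)$ needs.
\end{itemize}
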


\begin{proof}
For each $k$, $C_h(k)-EC_h(k)\sim N_d(0,\nu^2I_d)$. Gaussian norm concentration and a union bound imply, with probability at least $1-\alpha/3$,
\[
\max_{k\in\mathcal I_h}\norm{C_h(k)-EC_h(k)}\le\lambda_\alpha.
\]
Away from all $h$-neighborhoods of changes the expectation is zero, so no retained index lies there. At each true change, the centered window contrast has norm $\sqrt{h/2}\,\varsigma_j>3\lambda_\alpha$, hence its observed norm exceeds $2\lambda_\alpha$. The spacing condition makes the retained neighborhoods disjoint under the $2h$ connectivity rule. Thus there is exactly one component per change and each preliminary location is within $h$.

On this event, the copy-2 anchor blocks are wholly within their adjacent constant segments. Gaussian norm concentration and a union bound over $2K$ blocks give, with probability at least $1-\alpha/3$,
\[
\max_j\{\norm{\widehat\mu_{j,-}-\mu^{(j)}},
\norm{\widehat\mu_{j,+}-\mu^{(j+1)}}\}
\le\varepsilon_{\alpha,K}.
\]
Therefore $\norm{\widehat\delta_j-\delta_j}\le2\varepsilon_{\alpha,K}$. Projection is contractive. A zero block remains below $3\varepsilon_{\alpha,K}$, while a block of norm at least $6\varepsilon_{\alpha,K}$ remains above that threshold. This proves attribution.

It remains to refine a fixed change. Let $\mathcal F_{12}=\sigma\{X_t^{(1)},X_t^{(2)}:1\le t\le T\}$ and condition on the joint screening-and-anchor event. The anchors and search intervals are then fixed while copy 3 remains independent. Write $a_-=\widehat\mu_{j,-}-\mu^{(j)}$, $a_+=\widehat\mu_{j,+}-\mu^{(j+1)}$, and $\delta_j=\mu^{(j+1)}-\mu^{(j)}$. For $k>\tau_j$, the criterion difference is a sum of iid Gaussian increments with
\[
m_+=\norm{\delta_j-a_-}^2-\norm{a_+}^2,
\qquad
v_+^2=4\nu^2\norm{\delta_j+a_+-a_-}^2;
\]
the reverse-order increments for $k<\tau_j$ have
\[
m_-=\norm{\delta_j+a_+}^2-\norm{a_-}^2,
\qquad
v_-^2=4\nu^2\norm{\delta_j+a_+-a_-}^2.
\]
On the anchor event and under $\varsigma_{\min}\ge8\varepsilon_{\alpha,K}$,
\[
m_\pm\ge\varsigma_j^2-2\varsigma_j\varepsilon_{\alpha,K}
\ge\frac34\varsigma_j^2,
\qquad
v_\pm^2\le\frac{25}{4}\nu^2\varsigma_j^2,
\]
so $m_\pm^2/(8v_\pm^2)\ge9\varsigma_j^2/(800\nu^2)$. The Gaussian positive-drift bound above, applied on each side of the change, yields
\[
P\{|\widehat\tau_j-\tau_j|\ge r\mid\text{anchors}\}
\le4\exp\left(-\frac{9\varsigma_j^2r}{800\nu^2}\right).
\]
With the displayed $r_\alpha$, a union bound over $K$ changes is at most $\alpha/3$. Intersecting screening, anchor, and refinement events proves the joint probability. Under $K=0$, only the screening bound is needed.
\end{proof}

\begin{remark}[Benchmark status]
The proof uses genuinely independent copies for screening, attribution, and refinement. Dividing one dependent path computationally does not establish \eqref{eq:si-three-copy}. Section~\ref{sec:si-transfer} states the stronger experiment-comparison conditions under which a transfer is valid.
\end{remark}

\section{Gaussian minimax obstructions}
\label{sec:si-minimax}
Let $\mathcal I_T=\{1,\ldots,T\}$ and $\mathcal B_T=\{0,1,\ldots,T\}$. For $a=1,\ldots,n_{\mathrm c}$, observe
\begin{equation}
X_{a,t}=x_t+\nu\xi_{a,t},
\qquad
\xi_{a,t}\stackrel{\mathrm{iid}}\sim N_d(0,I_d).
\label{eq:si-minimax-model}
\end{equation}
Let $\mathbb P_x$ be the joint law, fix $\varsigma>0$ and $\alpha_0\in(0,1/2)$, and define
\[
\mathfrak q_{J,\alpha_0}
=2\left[\Phi^{-1}\{(1-\alpha_0)^{1/J}\}\right]^2.
\]

\begin{theorem*}[Lower-bound part of Theorem~\ref*{thm:change-main}, restated]
The following finite-sample statements hold.

\begin{enumerate}[label=\textup{\arabic*.}]
\item \textbf{Detection.} Suppose there are $M\ge2$ mutually disjoint interior blocks $B_m$, each of length $\Delta$, with gaps at least $\Delta$. Let $P_0=\mathbb P_0$ and $P_m$ have mean $u\mathbf1\{t\in B_m\}$ for a fixed $\norm u=\varsigma$. Every test $\phi$ satisfies
\[
P_0\phi+\max_mP_m(1-\phi)
\ge1-\frac12\left\{
\frac{\exp(n_{\mathrm c}\Delta\varsigma^2/\nu^2)-1}{M}
\right\}^{1/2}.
\]
Thus simultaneous type-I and worst-case type-II errors at most $\alpha_0$ require
\[
\frac{n_{\mathrm c}\Delta\varsigma^2}{\nu^2}
\ge\log\{1+4M(1-2\alpha_0)^2\}.
\]

\item \textbf{Attribution.} Let $d\ge2$, $u_s=\varsigma e_1$, $u_c=\varsigma e_2$, and suppose $J$ disjoint informative blocks have known change times and $h$ post-change observations per copy. If all remaining observations arise through a label-independent Markov kernel from these blocks, then every classifier $\widehat v\in\{s,c\}^J$ satisfies
\[
\sup_vP_v(\widehat v\ne v)
\ge1-(1-p_h)^J,
\qquad
p_h=\Phi\left(-\sqrt{n_{\mathrm c}h\varsigma^2/(2\nu^2)}\right).
\]
Uniform complete-vector error at most $\alpha_0$ therefore requires
\[
\frac{n_{\mathrm c}h\varsigma^2}{\nu^2}\ge\mathfrak q_{J,\alpha_0}.
\]

\item \textbf{Localization.} Let $J,L,r\in\N$ with $1\le r<L/4$, and choose base boundaries $s_1,\ldots,s_J\in\mathcal B_T$. Define
\[
I_j(q)=\{s_j+qr+1,\ldots,s_j+qr+L\},\qquad q\in\{0,1\},
\]
and assume all these intervals lie in $\mathcal I_T$. Let $\mathcal E_j=\{s_j,s_j+r,s_j+L,s_j+r+L\}$ and require every endpoint from different macro-blocks to be separated by at least $L-r$. For $b\in\{0,1\}^J$, put
\[
x_t^{(b)}=u\ind\!\left\{t\in\bigcup_{j=1}^JI_j(b_j)\right\},
\quad \norm u=\varsigma,
\qquad
T_b=\bigcup_{j=1}^J\{s_j+b_jr,s_j+b_jr+L\}.
\]
For nonempty finite sets use the usual Hausdorff distance, extended by $d_H(\varnothing,T)=d_H(T,\varnothing)=\infty$ for $T\ne\varnothing$ and $d_H(\varnothing,\varnothing)=0$. Every measurable, possibly randomized estimator $\widehat T$ taking values in the finite subsets of $\mathcal B_T$ satisfies
\[
\sup_bP_b\{d_H(\widehat T,T_b)\ge r/2\}
\ge1-(1-p_r)^J,
\qquad
p_r=\Phi\left(-\sqrt{n_{\mathrm c}r\varsigma^2/(2\nu^2)}\right).
\]
Uniform error at most $\alpha_0$ requires
\[
\frac{n_{\mathrm c}r\varsigma^2}{\nu^2}\ge\mathfrak q_{J,\alpha_0}.
\]
The construction is nonvacuous when $\lceil\nu^2\mathfrak q_{J,\alpha_0}/(n_{\mathrm c}\varsigma^2)\rceil<L/4$ and the horizon contains the separated blocks.
\end{enumerate}

As $J\to\infty$,
\[
\mathfrak q_{J,\alpha_0}
=4\log J-2\log\log J-4\log c_{\alpha_0}-2\log(4\pi)+o(1),
\quad
c_{\alpha_0}=-\log(1-\alpha_0).
\]
\end{theorem*}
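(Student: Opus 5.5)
All three lower bounds will be proved by reduction to simple hypothesis testing in the Gaussian sequence model \eqref{eq:si-minimax-model}. The only ingredients are the chi-square divergence of Gaussian location mixtures and the exact optimal error of a two-point Gaussian test. First, sufficiency reduces the $n_{\mathrm c}$ copies to their average, which is Gaussian with noise level $\nu^2/n_{\mathrm c}$. All divergences below are therefore computed with this effective variance.

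\emph{Detection.} This is a Le Cam two-fuzzy-hypotheses argument. Let $\bar P=M^{-1}\sum_mP_m$. Then
\[
P_0\phi+\max_mP_m(1-\phi)\ge P_0\phi+\bar P(1-\phi)\ge1-\mathrm{TV}(P_0,\bar P)\ge1-\tfrac12\chi^2(\bar P\|P_0)^{1/2}.
\]
For Gaussian shifts, $E_0[(dP_m/dP_0)(dP_{m'}/dP_0)]=\exp\{n_{\mathrm c}\inner{x^{(m)}}{x^{(m')}}/\nu^2\}$. Disjointness makes the cross terms equal to one. The diagonal terms equal $\exp(n_{\mathrm c}\Delta\varsigma^2/\nu^2)$. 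Hence $\chi^2=\{\exp(n_{\mathrm c}\Delta\varsigma^2/\nu^2)-1\}/M$, which is the displayed bound. Inverting it gives $1-2\alpha_0\le\tfrac12\{(e^{s}-1)/M\}^{1/2}$ with $s=n_{\mathrm c}\Delta\varsigma^2/\nu^2$, and this is the stated logarithmic condition. The gap condition is not needed for the computation; it only keeps the blocks interior and disjoint.

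\emph{Attribution.} Put a uniform prior on $v\in\{s,c\}^J$. The informative blocks are independent across $j$, and the remaining data come through a label-independent kernel. By the data-processing (Blackwell) argument, the Bayes risk of any classifier is at least that of the best classifier based on the informative blocks alone. That Bayes problem factorizes into $J$ independent two-point Gaussian tests of $u_s$ against $u_c$ with $h$ averaged observations per copy. The separation is $\norm{u_s-u_c}=\sqrt2\varsigma$, so the optimal per-block error is exactly $p_h=\Phi(-\tfrac12\sqrt{2n_{\mathrm c}h}\,\varsigma/\nu)=\Phi(-\sqrt{n_{\mathrm c}h\varsigma^2/(2\nu^2)})$. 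Because the posterior factorizes, the Bayes-optimal rule is coordinatewise. Its probability of getting all $J$ labels right is therefore $(1-p_h)^J$, and the supremum over $v$ dominates the Bayes average. Requiring $1-(1-p_h)^J\le\alpha_0$ gives $\Phi(\sqrt{\cdot})\ge(1-\alpha_0)^{1/J}$, which is the condition $n_{\mathrm c}h\varsigma^2/\nu^2\ge\mathfrak q_{J,\alpha_0}$.

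\emph{Localization.} The main obstacle is turning an estimator of a finite set into $J$ independent binary decisions; this is where the separation hypotheses are used. Given $\widehat T$, define $\widehat b_j$ as the $q\in\{0,1\}$ minimizing the distance from $\{s_j+qr,s_j+qr+L\}$ to $\widehat T$, with ties broken by a fixed rule. The key claim is that $d_H(\widehat T,T_b)<r/2$ forces $\widehat b=b$. To see this, the two candidate endpoint pairs of macro-block $j$ differ by exactly $r$. Endpoints of different macro-blocks are at least $L-r>3r$ apart. Hence every true endpoint has an estimated point within $r/2$, and that point is strictly more than $r/2$ from the wrong candidate (using $r<L/4$). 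The event $\{d_H<r/2\}$ is thus contained in $\{\widehat b=b\}$, and it suffices to bound the latter's probability.

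For each $j$, the laws under $b_j=0$ and $b_j=1$ (the others fixed) differ only on the symmetric difference of $I_j(0)$ and $I_j(1)$. That symmetric difference consists of $2r$ dates carrying mean $u$ on one side or the other. The data in different macro-blocks are independent, so the uniform-prior argument from attribution applies. The per-coordinate two-point problem has Hellinger/Gaussian separation $\sqrt{2r}\,\varsigma\sqrt{n_{\mathrm c}}/\nu$, and its optimal error is $\Phi(-\sqrt{n_{\mathrm c}r\varsigma^2/(2\nu^2)})=p_r$. I will verify this constant carefully: the sufficient statistic is the difference of sums over the two length-$r$ strips. Randomized estimators are handled by averaging over the auxiliary randomization. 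The nonvacuity remark then just says some integer $r$ with the required bound fits below $L/4$.

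\emph{Asymptotics of $\mathfrak q$.} Write $(1-\alpha_0)^{1/J}=1-c_{\alpha_0}/J+O(J^{-2})$, so that $\Phi^{-1}(1-\epsilon)$ has $\epsilon\approx c_{\alpha_0}/J$. The standard Mills expansion gives $z^2=2\log(1/\epsilon)-\log\log(1/\epsilon)-\log(4\pi)+o(1)$. Doubling this yields the stated expansion.
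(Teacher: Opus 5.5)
Your proposal is correct and follows the paper's proof essentially step for step. Detection uses the mixture $\chi^2$/total-variation bound, attribution uses the uniform-prior Bayes argument with a product posterior and a label-independent kernel, localization decodes the shift bits from a Hausdorff-close endpoint set, and the expansion of $\mathfrak q_{J,\alpha_0}$ comes from the Mills-ratio quantile expansion. One point in your decoding step needs to be stated explicitly: excluding estimated points near the wrong candidate uses the reverse half of $d_H(\widehat T,T_b)<r/2$. That is, every point of $\widehat T$ lies within $r/2$ of some true endpoint, and every true endpoint is at distance at least $r$ from both wrong-candidate endpoints, so no point of $\widehat T$ can lie within $r/2$ of a wrong endpoint.
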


\begin{proof}
\medskip\noindent\textit{Detection.}
Let $L_m=dP_m/dP_0$ and $\bar P=M^{-1}\sum_mP_m$. Disjoint supports imply $E_0(L_mL_{m'})=1$ for $m\ne m'$, while the Gaussian likelihood-ratio identity gives $E_0L_m^2=\exp(n_{\mathrm c}\Delta\varsigma^2/\nu^2)$. Hence
\[
\chi^2(\bar P,P_0)=
\frac{\exp(n_{\mathrm c}\Delta\varsigma^2/\nu^2)-1}{M}.
\]
Total variation is at most one half the square root of chi-square. The standard testing inequality
\[
P_0\phi+\bar P(1-\phi)\ge1-\norm{P_0-\bar P}_{\rm TV}
\]
and $\bar P(1-\phi)\le\max_mP_m(1-\phi)$ prove the risk bound. Algebraic inversion gives the necessary condition.

\medskip\noindent\textit{Attribution.}
For one block, the sample mean of all $Ah$ observations has covariance $\nu^2(Ah)^{-1}I_d$. The likelihood-ratio rule between $u_s$ and $u_c$ has equal error
\[
p_h=\Phi\left(-\frac{\norm{u_s-u_c}\sqrt{Ah}}{2\nu}\right)
=\Phi\left(-\sqrt{n_{\mathrm c}h\varsigma^2/(2\nu^2)}\right).
\]
Under the uniform prior on $\{s,c\}^J$, independence across blocks makes the Bayes probability of recovering every label $(1-p_h)^J$. A parameter-free kernel for remaining observations cannot improve this Bayes risk. The maximum risk is at least the Bayes risk, proving the bound. Inverting the monotone normal cdf gives $n_{\mathrm c}h\varsigma^2/\nu^2\ge\mathfrak q_{J,\alpha_0}$.

\medskip\noindent\textit{Localization.}
In macro-block $j$, the two shifted pulse hypotheses differ on exactly $2r$ observations. The optimal binary error is
\[
p_r=\Phi\left(-\sqrt{n_{\mathrm c}r\varsigma^2/(2\nu^2)}\right).
\]
Uniform separation implies that any estimated endpoint set within Hausdorff distance $r/2$ uniquely decodes every shift bit. Under the uniform prior, exact decoding probability is at most $(1-p_r)^J$. Therefore the maximum localization error is at least $1-(1-p_r)^J$, and inversion gives the threshold.

\medskip\noindent\textit{Asymptotic threshold.}
Let $u_J=1-(1-\alpha_0)^{1/J}=c_{\alpha_0}/J+O(J^{-2})$. The standard lower-tail normal-quantile expansion
\[
\{\Phi^{-1}(u)\}^2
=2\log(1/u)-\log\log(1/u)-\log(4\pi)+o(1)
\]
as $u\downarrow0$ gives the displayed formula after multiplication by two.
\end{proof}

\begin{remark}[Scope]
This is a lower bound for \eqref{eq:si-minimax-model}. It applies to a raw outcome--network experiment only after a comparison establishes that the raw experiment is no more informative, up to a controlled deficiency error.
\end{remark}

\section{Conditional transfer from a local outcome--report experiment}
\label{sec:si-transfer}
\subsection{Local equivalence theorem}

Let $d$ be fixed, $T=T_n$, $A=A_n$, and let the deterministic parameter path $\boldsymbol\theta=(\theta_1,\ldots,\theta_T)$ belong to
\[
\Theta_n(b_n)=\{\boldsymbol\theta:\max_t\norm{\theta_t-\theta_t^0}\le b_n\}
\]
around a known deterministic baseline. Let $D_n$ be a parameter-free design-and-ancillary variable. The residual experiment observes $D_n$ and $Z_{ta}$ whose conditional law is
\[
\bigotimes_{a=1}^A
N_{m_{ta}}\{\mu_{ta}(\theta_t;\mathcal F_{t-1}^Z),I_{m_{ta}}\}.
\]
Put $B_{ta}=D_\theta\mu_{ta}(\theta_t^0;\mathcal F_{t-1}^Z)$.

\begin{theorem*}[Local Gaussian-sequence equivalence under common information]
Assume, uniformly over dates, copies, histories, and $\Theta_n(b_n)$:
\begin{enumerate}[label=\textup{(C\arabic*)}]
\item for a known $\nu_n>0$ and known block-diagonal $\mathcal I_0=\diag(i_\beta,\mathcal I_\eta)\succ0$,
\[
B_{ta}^\trans B_{ta}=\nu_n^{-2}\mathcal I_0;
\]
\item the residual mean has the quadratic remainder
\[
\norm{\mu_{ta}(\theta;h)-\mu_{ta}(\theta_t^0;h)-B_{ta}(h)(\theta-\theta_t^0)}
\le L\sqrt n\norm{\theta-\theta_t^0}^2;
\]
\item $A_nT_nnb_n^4\to0$.
\end{enumerate}
Let $\mathcal G_n$ observe
\[
X_t^{(a)}=\mathcal I_0^{1/2}\theta_t+\nu_n\xi_t^{(a)},
\qquad \xi_t^{(a)}\stackrel{\mathrm{iid}}\sim N_d(0,I_d).
\]
Then the Le Cam distance satisfies
\[
\Delta_{\rm LC}(\mathcal E_n^Z,\mathcal G_n)
\le\rho_n:=\frac L2\sqrt{A_nT_nnb_n^4}\to0.
\]
If the raw outcome--report experiment and the residual experiment admit parameter-free Markov kernels that reconstruct each other exactly, the same bound holds for the raw experiment. If the residual mean is exactly affine, residual and Gaussian experiments are exactly equivalent at finite sample.
\end{theorem*}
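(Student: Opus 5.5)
The plan is to insert an intermediate \emph{linearized} residual experiment $\mathcal L_n$ and bound the two legs $\Delta_{\rm LC}(\mathcal E_n^Z,\mathcal L_n)$ and $\Delta_{\rm LC}(\mathcal L_n,\mathcal G_n)$ separately.

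The experiment $\mathcal L_n$ observes $D_n$ and
\[
\widetilde Z_{ta}\sim N_{m_{ta}}\{\mu_{ta}(\theta_t^0)+B_{ta}(\theta_t-\theta_t^0),I\}.
\]
Both $\mathcal E_n^Z$ and $\mathcal L_n$ are realized on the same sample space, with the identity map as kernel. Their Le Cam distance is therefore at most $\sup_{\boldsymbol\theta}\norm{P_{\boldsymbol\theta}-\widetilde P_{\boldsymbol\theta}}_{\rm TV}$.

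I bound this total variation through the Kullback--Leibler chain rule along the sequential factorization in $(t,a)$. Conditionally on the past, each factor is a pair of unit-covariance Gaussians with the same covariance, so the conditional KL is $\tfrac12\norm{\mu_{ta}(\theta_t)-\mu_{ta}(\theta_t^0)-B_{ta}(\theta_t-\theta_t^0)}^2$. By (C2) this is at most $\tfrac12L^2nb_n^4$, uniformly over histories.

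One subtlety is that $\mu_{ta}$ depends on $\mathcal F_{t-1}^Z$. The linearized experiment must feed the same history into the design, so $B_{ta}$ is evaluated on the observed past in each experiment. The chain rule then still applies, because the bound in (C2) is uniform in $h$. Summing over $A_nT_n$ factors gives $\mathrm{KL}\le\tfrac12L^2A_nT_nnb_n^4$. Pinsker's inequality then yields $\mathrm{TV}\le\tfrac L2\sqrt{A_nT_nnb_n^4}=\rho_n$, which is the stated bound. When the mean is exactly affine this leg is zero.

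For the second leg I show that $\mathcal L_n$ and $\mathcal G_n$ are exactly equivalent. I subtract the known offset $\mu_{ta}(\theta_t^0)-B_{ta}\theta_t^0$; this uses only history and baseline, so it is parameter free. I then apply the linear map $\nu_nB_{ta}^\trans$ (equivalently $\nu_n\mathcal I_0^{-1/2}B_{ta}^\trans$ followed by $\mathcal I_0^{1/2}$). By (C1), $\nu_n B_{ta}^\trans B_{ta}\theta_t=\nu_n^{-1}\mathcal I_0\theta_t$ and the covariance is $\nu_n^2\nu_n^{-2}\mathcal I_0=\mathcal I_0$. After rescaling by $\nu_n\mathcal I_0^{-1/2}$ this produces $\mathcal I_0^{1/2}\theta_t+\nu_n\xi$, with $\xi$ standard normal, which is exactly the law of $X_t^{(a)}$.

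This projection is sufficient in $\mathcal L_n$. The orthogonal complement of $\col(B_{ta})$ carries a parameter-free Gaussian component, independent of the projection, so it can be regenerated from $\mathcal G_n$ by an explicit kernel. That kernel draws the complement from $N(0,I-P_{B_{ta}})$ and also needs the history in order to form the next $B$. I expect this to be the main obstacle: the recursion requires reconstructing $\mathcal F_{t-1}^Z$ sequentially. I would handle it date by date, noting that the conditional law of the reconstructed $\widetilde Z_{ta}$ given the reconstructed past coincides with the $\mathcal L_n$ law because $B_{ta}^\trans B_{ta}$ does not depend on history by (C1). Induction on $t$ then gives exact equivalence.

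The triangle inequality for $\Delta_{\rm LC}$ finishes the main claim. For the raw experiment, exact parameter-free reconstruction kernels in both directions give $\Delta_{\rm LC}=0$ between the raw and residual experiments, so the bound transfers by the triangle inequality once more.
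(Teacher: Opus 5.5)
Your proposal is correct and follows the paper's proof essentially step for step. The paper also passes through the linearized experiment, bounds the first leg by the relative-entropy chain rule with (C2) applied uniformly over histories followed by Pinsker, and obtains exact linear--Gaussian equivalence via the statistic $\mathcal I_0^{1/2}(B_{ta}^\trans B_{ta})^{-1}B_{ta}^\trans(Z_{ta}-\mu_{ta}^0)$ (your map $\nu_n^2\mathcal I_0^{-1/2}B_{ta}^\trans$ under (C1)) and a recursive kernel adding $U_{ta}\sim N(0,I-P_{ta})$, with (C1) making every conditional law history-free; the only detail it states that you leave implicit is that this reverse kernel first draws the parameter-free $D_n$ before rebuilding histories.
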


\begin{proof}
Linearize the conditional means at the baseline. In the linear experiment, the statistic
\[
\widetilde\theta_{ta}=\theta_t^0+
(B_{ta}^\trans B_{ta})^{-1}B_{ta}^\trans(Z_{ta}-\mu_{ta}^0)
\]
and $X_t^{(a)}=\mathcal I_0^{1/2}\widetilde\theta_{ta}$ produce the Gaussian sequence because the common-information identity makes the transformed conditional covariance $\nu_n^2I_d$. The conditional law is history independent, so iterated conditioning makes the transformed noises iid.

Conversely, starting from the Gaussian sequence, independently draw $D_n$ and reconstruct histories recursively. Let $P_{ta}=B_{ta}(B_{ta}^\trans B_{ta})^{-1}B_{ta}^\trans$, draw $U_{ta}\sim N(0,I-P_{ta})$, and set
\[
Z_{ta}^*=\mu_{ta}^0+B_{ta}(\mathcal I_0^{-1/2}X_t^{(a)}-\theta_t^0)+U_{ta}.
\]
This kernel is parameter free. Its conditional mean and covariance equal those of the linear experiment, so the two linear experiments are exactly equivalent.

For the nonlinear experiment, conditional Gaussian relative entropy between the true and linearized kernels is one half the squared remainder. The chain rule for relative entropy and (C2) give
\[
\operatorname{KL}(P^Z_{\boldsymbol\theta},P^{Z,\rm lin}_{\boldsymbol\theta})
\le\frac{L^2}{2}A_nT_nnb_n^4.
\]
Pinsker's inequality gives total variation at most $\rho_n$. The identity kernel bounds both deficiencies, and the triangle inequality with exact linear equivalence proves the result. Exact raw/residual reconstruction adds zero distance by another triangle inequality.
\end{proof}

\begin{corollary*}[Transfer of change procedures and lower bounds]
For either the residual experiment, or the raw experiment under exact parameter-free reconstruction, Gaussian upper guarantees transfer with success probability reduced by at most $\rho_n$. Gaussian lower bounds transfer with risk reduced by at most the reverse deficiency, also bounded by $\rho_n$. Strength and composition jumps are measured in the common information metric:
\[
s_j=\sqrt{i_\beta}|\Delta\beta_j|,
\qquad
c_j=\norm{\mathcal I_\eta^{1/2}\Delta\eta_j}.
\]
\end{corollary*}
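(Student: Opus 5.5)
The statement is the corollary on transferring change procedures and lower bounds. I would prove it by combining the local equivalence theorem just proved with the standard operational meaning of Le Cam deficiency for bounded decision losses.

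\textbf{Step 1: deficiency controls bounded risks.} Write $\delta(\mathcal E,\mathcal F)$ for the deficiency of $\mathcal E$ with respect to $\mathcal F$. The key fact is the following. If $\delta(\mathcal E,\mathcal F)\le\rho$, then for every decision rule $\phi$ in $\mathcal F$ there is a rule $\phi'$ in $\mathcal E$ such that, for every parameter and every loss $L$ with values in $[0,1]$,
\[
\bigl|R_{\mathcal E}(\phi',\boldsymbol\theta)-R_{\mathcal F}(\phi,\boldsymbol\theta)\bigr|\le\rho .
\]
In our setting $\phi'$ is explicit: it is $\phi$ composed with the parameter-free Markov kernel built in the proof. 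Going from residual data to Gaussian data, that kernel is the whitening map $X_t^{(a)}=\mathcal I_0^{1/2}\widetilde\theta_{ta}$. Going from Gaussian data back to residual data, it is the reconstruction kernel $Z^*_{ta}$ (or its composition with the exact raw/residual reconstruction). The bound $\rho_n$ comes from the triangle inequality already used: exact equivalence of the linear experiments contributes zero, and Pinsker's inequality on the nonlinear remainder contributes $\rho_n$.

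\textbf{Step 2: upper guarantees.} Take the three-copy procedure of Theorem~\ref{thm:change-main} and apply it to $X_t^{(a)}$ obtained from the residual (or raw) data by the kernel. The "failure" events are detection, localization, or attribution failure. Their indicators are $[0,1]$-valued losses of the output and the parameter path. Step 1 then gives a success probability of at least $1-\alpha-\rho_n$.

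\textbf{Step 3: lower bounds.} Suppose some procedure in the residual or raw experiment achieved risk below the Gaussian minimax bound minus $\rho_n$. Composing it with the reverse kernel would give a Gaussian procedure that beats the bounds of Section~\ref{sec:si-minimax}. That is a contradiction, so the minimax risks transfer with loss at most $\rho_n$.

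\textbf{Step 4: the information metric.} The Gaussian experiment has mean $\mathcal I_0^{1/2}\theta_t$ with $\mathcal I_0=\diag(i_\beta,\mathcal I_\eta)$. A jump $(\Delta\beta,\Delta\eta)$ therefore appears as the Gaussian jump $\bigl(\sqrt{i_\beta}\,\Delta\beta,\ \mathcal I_\eta^{1/2}\Delta\eta\bigr)$. Because $\mathcal I_0$ is block diagonal, the decomposition $\mathcal V_{\rm str}\oplus\mathcal V_{\rm cmp}$ is preserved orthogonally. So $s_j$ and $c_j$ are exactly the block norms used by the attribution rule and by the lower bounds.

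\textbf{Main obstacle.} The delicate point is that the losses must be bounded and defined through the parameter path alone, for the estimators being compared. The Hausdorff loss can take the value $\infty$, so the transfer has to be stated for the error indicators $\ind\{d_H\ge r/2\}$ rather than for the distance itself; these indicators are bounded and can be used directly. I would also check that the local parameter set $\Theta_n(b_n)$ contains the hard submodels, i.e.\ that the jump sizes are at most $b_n$. The corollary is then read as holding on that set.
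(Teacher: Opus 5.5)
Your proposal is correct and follows essentially the paper's proof. Upper guarantees come from composing the Gaussian procedure with the forward residual/raw-to-Gaussian kernel, lower bounds come from composing any raw procedure with the reverse reconstruction kernel, and in both directions $[0,1]$-valued losses move by at most the total-variation bound $\rho_n$. Your explicit derivation of the $\mathcal I_0$-metric block norms, and your caveats about using indicator losses and checking that the hard submodels lie in $\Theta_n(b_n)$, are sensible details that the paper leaves implicit.
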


\begin{proof}
Compose any Gaussian procedure with the kernel from the residual/raw experiment to the Gaussian experiment. Total variation changes the probability of its joint success event by at most the forward deficiency. For a lower bound, compose any raw procedure with a kernel from the Gaussian experiment to the raw experiment; a $[0,1]$ loss changes by at most the reverse deficiency. Taking infima over kernels proves both claims.
\end{proof}

\begin{remark}[Why this is not automatic]
Condition (E4) gives only eigenvalue bounds; it does not imply exactly common, block-diagonal information across dates and copies. Unknown whitening, estimated covariance, cross-fitted pilots, reporter nuisances, and a random support may also prevent parameter-free reconstruction. If a separate comparison controls these steps by $\zeta_n$, the transfer error becomes $\rho_n+\zeta_n$. Without such verification, the Gaussian upper and lower results in Sections~\ref{sec:si-change} and \ref{sec:si-minimax} remain benchmarks.
\end{remark}

\section{A verified transfer instance: the pinned-design experiment}
\label{sec:si-instance}
The transfer theorem of Section~\ref*{sec:si-transfer} was stated as a
conditional result: its hypotheses (C1)--(C3) were not verified for any
concrete outcome--report design. This section closes that gap for one
explicit design. The design is not universal: it is a designed replication
experiment, not an observational autoregression. Its value is that every
hypothesis of the transfer theorem becomes a finite, checkable
computation, so the Gaussian change guarantees and lower obstructions of
Section~\ref*{sec:si-change} and \ref*{sec:si-minimax} apply to a fully
specified data-generating mechanism with an explicit deficiency bound.

\subsection{The pinned-design replication experiment}

Fix a support $\mathcal E$ on $N$ nodes with partner sets as in the main
text, a linear chart $m(\eta)=\Psi\eta$ with $C^\trans\Psi=0$ and
$\rank\Psi=q$, and known covariances $\Sigma=\sigma_y^2I$,
$\Omega$ (mirror-pair structure allowed). Prespecify:
a fixed exposure basket $\bar y\in\R^N$ (a reference lag vector held fixed by
design, not the realized lagged outcome); fixed nuisance designs
$X$ (outcome) and $U=[AC,B]$ (reports); and a baseline
$\theta^0=(\beta_0,\eta_0)$ interior to the parameter space. At each date
$t\le T_n$ and replicate $a\le A_n$, independent draws
\begin{align*}
Y_t^{(a)}&=X\gamma+\beta_t\,g(\eta_t)+\sigma_y\xi_t^{(a)},
\qquad g(\eta)=W(\eta)\bar y,\\
z_t^{(a)}&=U\lambda+A\Psi\eta_t+\Omega^{1/2}\upsilon_t^{(a)},
\end{align*}
with iid standard Gaussian innovations. Let $r=M_{L_yX}L_yg(\eta_0)$,
$H=M_{L_yX}L_yG(\eta_0)$, $Q=M_{L_zU}L_zA\Psi$ as in
Theorem~\ref*{thm:identification-main}, all evaluated once at the baseline
(they are date- and replicate-independent because the design is pinned).

\begin{description}[style=nextline]
\item[(P1) Identification at baseline.] $r\ne0$ and
$K_c+\beta_0^2H^\trans P_r^\perp H\succ0$.
\item[(P2) Designed orthogonality.] $H^\trans r=0$.
\item[(P3) Local neighborhood.] The path lies in
$\Theta_n(b_n)=\{\max_t\norm{\theta_t-\theta^0}\le b_n\}$ with
$A_nT_n\bar nb_n^4\to0$, where $\bar n=N+2|\mathcal E|$ is the per-replicate
observation count.
\end{description}

Condition (P2) is a design equation: $r$ and each column of $H$ are linear in
$\bar y$, so $H^\trans r=0$ imposes $q$ quadratic equations on the $N$
coordinates of $\bar y$; generically these have solutions for $N>q+2$, and a
solution can be found and verified numerically before any data are collected.
The replication code accompanying the paper exhibits such a basket for the
worked design of Section~\ref*{sec:charts-main} and verifies (P1)--(P2) to
machine precision.

\subsection{Verification of the transfer hypotheses}

\begin{proposition*}[Proposition~\ref*{prop:transfer-instance-main}, restated]
Assume (P1)--(P3). Let
$C_\Psi=\max_{e}\norm{\Psi_{e\cdot}}_2$,
$\bar\beta=|\beta_0|+b_n$, and
\[
L_\star
=\frac{1}{\sigma_y}
\left(2\,\norm{G(\eta_0)}_{\op}
+8\,\bar\beta\,C_\Psi^2\,\norm{\bar y}_\infty\sqrt N\right)
\Big/\sqrt{\bar n} .
\]
Then the pinned-design experiment satisfies (C1)--(C3) of
Section~\ref*{sec:si-transfer} with
\[
\nu_n=1,
\qquad
\mathcal I_0=
\begin{pmatrix}
r^\trans r&0\\
0&\beta_0^2H^\trans H+K_c
\end{pmatrix}\succ0,
\qquad
L=L_\star ,
\]
and admits an exact parameter-free reconstruction kernel. Consequently
\[
\Delta_{\rm LC}\big(\mathcal E_n^{\rm pinned},\mathcal G_n\big)
\le\rho_n=\frac{L_\star}2\sqrt{A_nT_n\bar nb_n^4}\to0,
\]
and the three-copy change guarantees and the minimax lower bounds of
Theorem~\ref*{thm:change-main} hold for the pinned-design experiment with all
jumps measured in the $\mathcal I_0$ metric, with success probabilities and
risks shifted by at most $\rho_n$.
\end{proposition*}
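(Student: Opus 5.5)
The plan is to recast the pinned-design experiment in the form required by the transfer theorem of Section~\ref*{sec:si-transfer}, then verify (C1)--(C3) one at a time, then build the reconstruction kernel. Fix orthonormal bases $E_y$ and $E_z$ of the ranges of $M_{L_yX}$ and $M_{L_zU}$. Each replicate's residual observation is
\[
Z_{ta}=\bigl(E_y^\trans M_{L_yX}L_yY_t^{(a)};\ E_z^\trans M_{L_zU}L_zz_t^{(a)}\bigr).
\]
Its conditional law is Gaussian with identity covariance, because the whitened residualizers are partial isometries. Its mean is
\[
\mu(\theta)=\bigl(\beta\,E_y^\trans M_{L_yX}L_yg(\eta);\ E_z^\trans Q\eta\bigr).
\]
This mean does not depend on the history, since $\bar y$ is pinned rather than a realized lag. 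So $\nu_n=1$, and the Jacobian at the baseline is exactly the matrix $B_c$ of Theorem~\ref*{thm:identification-main}(d), expressed in these bases. That theorem gives $B^\trans B=\mathcal I_c$. Under the design equation (P2), $H^\trans r=0$ kills the off-diagonal blocks, so $\mathcal I_c$ equals the displayed block-diagonal $\mathcal I_0$. Condition (P1), through the Schur-complement criterion of Theorem~\ref*{thm:identification-main}(c), gives $\mathcal I_0\succ0$. This verifies (C1), and it holds identically across dates and replicates, as the transfer theorem requires.

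For (C2), the report block is exactly linear in $\eta$, so only the outcome block has a remainder. I would write
\[
\beta g(\eta)-\beta_0g(\eta_0)-(\beta-\beta_0)g(\eta_0)-\beta_0G(\eta_0)(\eta-\eta_0)
\]
as the sum of two pieces. The first is the cross term $(\beta-\beta_0)\{g(\eta)-g(\eta_0)\}$, bounded by $\norm{G(\eta_0)}_{\op}\norm{\theta-\theta^0}^2$ plus higher order; taking the mean-value bound crudely gives the factor $2$. The second is $\beta$ times the second-order Taylor remainder of $\eta\mapsto W(\eta)\bar y$. For that remainder I would use the explicit softmax derivative from Step~1 of the proof of Theorem~\ref*{thm:identification-main}. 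Differentiating once more, each row's Hessian is a weighted covariance-type expression in the $\Psi_{ij}$, so its entries are bounded by $4C_\Psi^2\norm{\bar y}_\infty$ per row; the weights sum to one. Summing over $N$ rows in Euclidean norm gives $\sqrt N$. The projection $M_{L_yX}L_y$ contributes the factor $1/\sigma_y$, and $\bar\beta$ bounds $|\beta|$ on $\Theta_n(b_n)$. Writing the result as $L\sqrt{\bar n}\,\norm{\theta-\theta^0}^2$ produces $L_\star$, with $n$ read as $\bar n$. Condition (C3) is then literally (P3).

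For the reconstruction kernel, the forward map from raw data to $Z$ is the parameter-free projection above, since $X$, $U$, $\Sigma$ and $\Omega$ are known. The reverse map will be the main obstacle. The complementary components $P_{L_yX}L_yY$ and $P_{L_zU}L_zz$ have means involving the unknown nuisances $(\gamma,\lambda)$, and also $\beta P_{L_yX}L_yg(\eta)$, so a literal exact reconstruction of the raw law is impossible. I would resolve this by defining the raw experiment up to the nuisance translation group $\{Y\mapsto Y+X\gamma,\ z\mapsto z+U\lambda\}$. Because $\gamma$ and $\lambda$ range freely, these components carry no identifiable information about $\theta$ beyond what the group absorbs, and the residual $Z$ is a maximal invariant. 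The reverse kernel then draws the nuisance-space components as independent Gaussians with the known projected covariances and an arbitrary fixed nuisance mean. This reproduces the raw law modulo the group action, and the group-invariant (equivalently, residualized) procedures are exactly the ones the change guarantees and minimax bounds concern.

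With (C1)--(C3) and this reconstruction in place, the local-equivalence theorem gives $\Delta_{\rm LC}\le\rho_n$, with the displayed constant. The transfer corollary then moves the three-copy guarantees and the lower bounds of Theorem~\ref*{thm:change-main}, with jumps measured in the $\mathcal I_0$ metric, which is $\sqrt{i_\beta}|\Delta\beta|$ and $\norm{\mathcal I_\eta^{1/2}\Delta\eta}$.
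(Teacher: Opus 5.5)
Your verification of (C1)--(C3) follows the paper's argument. The pinned design makes the residual Jacobian $B$ the same for every $(t,a)$. The product $B^\trans B=\mathcal I_c$ is block diagonal by (P2) and positive definite by (P1). The report block is affine. The outcome remainder is controlled by a row-softmax Hessian bound, with $\sqrt N$ coming from stacking rows and $\sigma_y^{-1}$ from the residualizer.

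Your split of the remainder into the cross term $(\beta-\beta_0)\{g(\eta)-g(\eta_0)\}$ plus $\beta_0$ times the Taylor remainder is equivalent to the paper's second-order expansion of $\beta g(\eta)$ with mixed partial $G$. Your per-row constant $4C_\Psi^2\norm{\bar y}_\infty$ is sharper than the paper's $8$. Passing to orthonormal bases $E_y,E_z$ is a genuine improvement. It produces the identity covariance that the transfer theorem's residual experiment requires, whereas the paper writes $\mu$ in the full residual space, where the noise covariance is a projection.

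The substantive departure is the reconstruction kernel. The paper asserts that the annihilated components can be drawn ``from their known Gaussian law.'' You correctly note that their means, $L_yX\gamma+\beta P_{L_yX}L_yg(\eta)$ and $L_zU\lambda+P_{L_zU}L_zA\Psi\eta$, involve the nuisances and $\theta$ itself, so that law is not known. With the nuisances fixed and known, the raw experiment is in general strictly more informative than the residual one. The paper's route therefore really establishes only the residual-experiment bound; your invariance formulation is what makes a raw-experiment statement true.

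Two points need tightening.

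First, the translation group must act date by date: the nuisances $(\gamma_t,\lambda_t)$ must be free at each date, as in the main-text model. The pinned design writes a single $(\gamma,\lambda)$. Under a common translation, the contrasts $P_{L_yX}L_y(Y_t-Y_s)$ have mean $\beta_tP_{L_yX}L_yg(\eta_t)-\beta_sP_{L_yX}L_yg(\eta_s)$. That mean carries information about exactly the path changes at issue, so $Z$ is then not a maximal invariant.

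Second, the minimax bounds concern all procedures, not only invariant ones, so you should not argue by restricting attention to residualized procedures. Argue instead as follows. Your kernel with a fixed $\gamma_0$ reproduces the raw law exactly at the $\theta$-dependent nuisance $\gamma_t=\gamma_0-\beta_t(X^\trans\Sigma^{-1}X)^\dagger X^\trans\Sigma^{-1}g(\eta_t)$, and analogously for $\lambda_t$. Hence the worst-case risk over nuisances of any raw procedure is at least its risk on this subfamily. That subfamily risk equals the risk of a Gaussian-experiment procedure, up to $\rho_n$. The lower bounds therefore transfer in worst-case-over-nuisance form. The upper guarantees transfer at every nuisance value, because residualized procedures are invariant.
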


\begin{proof}
\textit{(C1).} After whitening and residualizing the fixed nuisance
designs, a date- and replicate-independent linear operation, the residual
mean of one replicate is
\[
\mu(\theta)=
\begin{pmatrix}
\sigma_y^{-1}M_{L_yX}L_y\,\beta g(\eta)\\
M_{L_zU}L_z A\Psi\eta
\end{pmatrix},
\qquad
B:=D_\theta\mu(\theta^0)=
\begin{pmatrix}
r&\beta_0H\\
0&Q
\end{pmatrix}.
\]
$B$ is the same matrix for every $(t,a)$ because the design is pinned. Then
$B^\trans B$ equals $\mathcal I_0$ displayed above by (P2), and (P1) makes it
positive definite. This is (C1) with $\nu_n=1$.

\textit{(C2).} The report block of $\mu$ is exactly affine in $\theta$, so
its remainder vanishes. For the outcome block, write
$f(\theta)=\beta g(\eta)$. Second derivatives:
$D^2_{\beta\eta}f=G(\eta)$ and $D^2_{\eta\eta}f=\beta D^2_\eta g(\eta)$. For
the row-softmax chart, differentiating
$D_\eta W_{ij}=W_{ij}(\Psi_{ij}-\bar\Psi_i)^\trans$ once more gives, for unit
vectors $u,v$,
\[
u^\trans D^2_\eta W_{ij}(\eta)v
=W_{ij}\Big\{(\Psi_{ij}-\bar\Psi_i)^\trans u\,(\Psi_{ij}-\bar\Psi_i)^\trans v
-\sum_kW_{ik}\,(\Psi_{ik}-\bar\Psi_i)^\trans u\;\Psi_{ik}^\trans v\Big\},
\]
so, using $\norm{\Psi_{ij}-\bar\Psi_i}\le2C_\Psi$,
$|\Psi_{ik}^\trans v|\le C_\Psi\norm v$, and $\sum_kW_{ik}=1$,
\[
\big|u^\trans D^2_\eta W_{ij}(\eta)v\big|
\le W_{ij}\left\{4C_\Psi^2+2C_\Psi^2\right\}\norm u\norm v
=6C_\Psi^2W_{ij}\norm u\norm v
\le8C_\Psi^2W_{ij}\norm u\norm v,
\]
using $\norm{\Psi_{ij}-\bar\Psi_i}\le2C_\Psi$. Summing within a row against
$\bar y$ and stacking rows,
$\norm{D^2_\eta g(\eta)}\le8C_\Psi^2\norm{\bar y}_\infty\sqrt N$ uniformly on
the ball. The whitened residualizing operator has operator norm at most
$\sigma_y^{-1}$ on the outcome block. A second-order Taylor expansion with
integral remainder then gives, for $\theta\in\Theta_n(b_n)$,
\[
\norm{\mu(\theta)-\mu(\theta^0)-B(\theta-\theta^0)}
\le\frac{1}{2\sigma_y}
\left(2\norm{G(\eta_0)}_{\op}+8\bar\beta C_\Psi^2\norm{\bar y}_\infty\sqrt N\right)
\norm{\theta-\theta^0}^2
= \frac{L_\star}{2}\sqrt{\bar n}\,\norm{\theta-\theta^0}^2 ,
\]
where the extra $\norm{G(\eta)-G(\eta_0)}$ term along the segment is at
most $8C_\Psi^2\norm{\bar y}_\infty\sqrt N\,b_n$ by the same softmax
bound and is absorbed into the first constant for all $n$ large, since
$b_n\to0$. This is (C2) with
$L=L_\star$ after matching the normalization
$L\sqrt{\bar n}$ used there.

\textit{(C3).} Immediate from (P3).

\textit{Reconstruction.} The design variables are deterministic, the
covariances known, and the whitening--residualizing map is a fixed linear
surjection with known kernel; the Gaussian-sequence reconstruction kernel of
the transfer proof (draw the annihilated components independently from their
known Gaussian law and invert the fixed linear map) is parameter free and
exact. The transfer theorem and its corollary then give the displayed
deficiency bound and the transfer of upper and lower results.
\end{proof}

\begin{remark}[Scope of the instance]
Three hypotheses are substantive and delimit what has been verified.
First, the exposure is pinned at a prespecified $\bar y$: this is a designed
replication (repeated independent panels measured against a fixed reference
basket), not the observational autoregression in which $y_{t-1}$ is random
and serially dependent; for the latter, common information across dates fails
in general and the transfer question remains open, as stated in
Section~\ref*{sec:si-transfer}. Second, covariances are known; estimated
covariances add a deficiency term of order $\delta_n\sqrt{A_nT_n\bar n}\,b_n$
that must be tracked separately. Third, (P2) is a genuine design choice; without
it the strength/composition decomposition is still identified but the
information matrix is not block diagonal, and attribution statements must be
made in the oblique $\mathcal I_c$ metric. Within these limits, every
constant in the change guarantees of Theorem~\ref*{thm:change-main} is now a
number that can be computed before running the experiment; the simulation
section evaluates how conservative those numbers are.
\end{remark}

\section{Specialized observation boundaries and diagnostics}
\label{sec:si-boundaries}
\subsection{Independent Gaussian censoring}

Suppose independent latent reports satisfy $Z_m^*\sim N\{\mu_m(\zeta,\eta),\sigma_m^2\}$ with known finite thresholds $c_m$ and known scales. We observe the exact value only above $c_m$ and otherwise a censored symbol. Put $a_m=(c_m-\mu_m)/\sigma_m$ and
\[
\omega_m=\sigma_m^{-2}\left\{
1-\Phi(a_m)+a_m\phi(a_m)+\frac{\phi(a_m)^2}{\Phi(a_m)}
\right\},
\qquad W_c=\diag(\omega_1,\ldots,\omega_M).
\]

\begin{proposition*}[Information under independent Gaussian censoring]
Assume an interior DQM point and locally constant nuisance-score rank.
Let $D=D_\eta\mu$ and $U=D_\zeta\mu$. After profiling $\zeta$, the efficient information is
\[
K_{\rm cen}=D^\trans W_c^{1/2}M_{W_c^{1/2}U}W_c^{1/2}D.
\]
For finite $a_m$, every $\omega_m>0$, so composition is regularly identified exactly when
\[
Dh\notin\col(U)\quad\text{for every }h\ne0.
\]
Finite censoring preserves exact first-order rank. Holding $\sigma_m$ fixed, $\omega_m\to0$ as $a_m\to\infty$, so indispensable directions can become arbitrarily weak.
\end{proposition*}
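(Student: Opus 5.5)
The plan is to compute the per-observation Fisher information of a left-censored Gaussian for its location, assemble the information matrix through the chain rule, and then profile the nuisance by a weighted projection.

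\textbf{Scalar information.} Fix $m$ and write $a=a_m$. The observation is either the exact value $Z_m^*>c_m$ or the symbol ``censored''. On the censored event, which has probability $\Phi(a)$, the location score is
\[
\partial_\mu\log\Phi\{(c-\mu)/\sigma\}=-\phi(a)/\{\sigma\Phi(a)\}.
\]
On the exact event the score is $(z-\mu)/\sigma^2$. Using the truncated-normal moment $E\{(Z-\mu)^2\ind(Z>c)\}=\sigma^2\{1-\Phi(a)+a\phi(a)\}$, the expected squared score equals $\omega_m$ as displayed. Independence across $m$ and the chain rule then give the full information for $(\eta,\zeta)$:
\[
[D,U]^\trans W_c[D,U].
\]
DQM holds for this smooth finite mixture of a point mass and a truncated density; alternatively one can invoke the DQM lemma for parameter-free kernels in Section~\ref{sec:si-observation}.

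\textbf{Profiling.} The efficient information for $\eta$ is the Schur complement
\[
D^\trans W_cD-D^\trans W_cU(U^\trans W_cU)^\dagger U^\trans W_cD.
\]
This equals $D^\trans W_c^{1/2}M_{W_c^{1/2}U}W_c^{1/2}D$, because the projection onto the columns of $W_c^{1/2}U$ has exactly this form. Locally constant rank justifies using the pseudoinverse. The quadratic form is $\|M_{W_c^{1/2}U}W_c^{1/2}Dh\|^2$.

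\textbf{Positivity.} Write $\omega\sigma^2=1-\Phi+a\phi+\phi^2/\Phi$, where $\Phi$ and $\phi$ are evaluated at $a$. The case $a\ge0$ is clear, since each of the three terms is nonnegative and $1-\Phi(a)>0$.

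For $a<0$, I would show that $g(a)=1-\Phi(a)+a\phi(a)+\phi(a)^2/\Phi(a)$ is positive. Since $g(a)\to1$ as $a\to-\infty$, it suffices to study $g$ directly. The expression $\phi^2/\Phi+a\phi=\phi\{\phi+a\Phi\}/\Phi$ is nonnegative because $\phi(a)+a\Phi(a)=\int_{-\infty}^a\Phi>0$. Hence $g\ge1-\Phi>0$, and in fact this single argument covers every $a$.

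Since $W_c^{1/2}$ is then invertible, $M_{W_c^{1/2}U}W_c^{1/2}Dh=0$ holds if and only if $W_c^{1/2}Dh\in\col(W_c^{1/2}U)$, which holds if and only if $Dh\in\col(U)$. This gives the rank condition, and it shows that finite censoring preserves rank.

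\textbf{Decay.} As $a\to\infty$, we have $1-\Phi(a)\to0$ and $a\phi(a)\to0$. The remaining term also vanishes: $\Phi(a)\to1$ and $\phi(a)^2\to0$, so $\phi^2/\Phi\to0$. Hence $\omega_m\to0$. A direction $h$ whose $Dh$ is supported only on coordinates that are censored in this way therefore has information tending to zero, so indispensable directions can become arbitrarily weak.

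\textbf{Main obstacle.} The key step is the positivity bound through $\phi+a\Phi=\int_{-\infty}^a\Phi(s)\,ds$. Everything else is the standard Gaussian projection calculus already used in the proof of Theorem~\ref{thm:identification-main}.
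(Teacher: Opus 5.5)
Your proposal is correct and follows the paper's proof step for step. Both arguments take the score of one censored coordinate, get $E s_m^2=\omega_m$ by truncated-normal integration, use independence for $\Var(s)=W_c$, profile by projecting $W_c^{1/2}D$ off $W_c^{1/2}U$, use invertibility of $W_c^{1/2}$ for the rank condition, and show $\omega_m\to0$ as $a_m\to\infty$. The one difference is positivity: your $\int_{-\infty}^a\Phi$ argument is valid but unnecessary, since the moment identity you already used gives $\sigma_m^2\omega_m=\int_{a_m}^\infty x^2\phi(x)\,dx+\phi(a_m)^2/\Phi(a_m)$, which is visibly positive and is exactly how the paper writes it.
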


\begin{proof}
The mean score for coordinate $m$ is
\[
s_m=\Delta_m\frac{Z_m-\mu_m}{\sigma_m^2}
-(1-\Delta_m)\frac{\phi(a_m)}{\sigma_m\Phi(a_m)}.
\]
Direct truncated-normal integration gives $E(s_m)=0$ and $E(s_m^2)=\omega_m$. Independence makes $\Var(s)=W_c$. Projecting the whitened composition derivative $W_c^{1/2}D$ off $W_c^{1/2}U$ gives the displayed Gram matrix. Since $W_c^{1/2}$ is invertible at finite thresholds, its null directions are exactly those with $Dh\in\col(U)$. Finally,
\[
\sigma_m^2\omega_m
=\int_{a_m}^{\infty}x^2\phi(x)\,dx
+\frac{\phi(a_m)^2}{\Phi(a_m)}\longrightarrow0
\]
as $a_m\to\infty$, proving weak information under severe censoring.
\end{proof}

Unknown scales and correlated censored reports require projection off their complete observed-data scores. One cannot replace $W_c$ by the inverse latent covariance after censoring.

\subsection{Structural zeros}

\begin{proposition*}[Sharp sets under pooled zeros]
Let $P_F=\pi\delta_0+(1-\pi)H$, with $H$ supported on $(0,\infty)$, and record $O=0$ for $F\le c$ and $O=F$ for $F>c$, where $c>0$ is known. Put $p_0=P(O=0)$ and let $Q$ be the observed positive-tail subprobability measure. Then
\[
\mathcal I_\pi=[0,p_0].
\]
If $T=\int_{(c,\infty)}xQ(dx)<\infty$, then
\[
\mathcal I_{E(F)}=[T,T+cp_0].
\]
Both sets are sharp.
If $T=\infty$, every compatible latent distribution has infinite mean.
\end{proposition*}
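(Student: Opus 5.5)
The plan is to parametrize every latent law compatible with the observed data and then read off the range of each target. The observed law consists of the atom $p_0=P(O=0)$ and the positive tail $Q$, which is the restriction of $P_F$ to $(c,\infty)$. Since $\pi\delta_0$ puts no mass there, $Q=(1-\pi)H|_{(c,\infty)}$. The only unobserved piece is how the mass $p_0$ on $[0,c]$ splits between the atom at zero and $(1-\pi)H|_{(0,c]}$. So I will show that the compatible latent laws are exactly
\[
P_F=\pi\delta_0+G+Q,
\]
where $G$ is any measure on $(0,c]$ with total mass $p_0-\pi$, and $0\le\pi\le p_0$.

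Necessity is immediate from the recording rule: $P(O=0)=\pi+(1-\pi)H(0,c]$, and the tail is untouched. Sufficiency is the construction step. Given $\pi\in[0,p_0]$ and such a $G$, I set $H=(G+Q)/(1-\pi)$. This is a probability measure on $(0,\infty)$ because $p_0+Q(c,\infty)=1$. The edge case $\pi=1$ forces $p_0=1$ and $Q=0$, and then any $H$ works.

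For $\mathcal I_\pi=[0,p_0]$, every $\pi$ in the interval is attained by the construction, for example with $G=(p_0-\pi)\delta_{c}$. Values outside the interval are excluded because $\pi\le P(F\le c)=p_0$. Note that $\pi=0$ requires $G$ of mass $p_0$ on $(0,c]$, which is allowed.

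For the mean, I will compute $E(F)=\int_{(0,c]}x\,G(dx)+T$. The first term lies in $[0,c(p_0-\pi)]\subseteq[0,cp_0]$, which gives the containment $\mathcal I_{E(F)}\subseteq[T,T+cp_0]$. For sharpness:
\begin{itemize}
\item The upper endpoint is attained with $\pi=0$ and $G=p_0\delta_c$, since $c$ itself lies in $(0,c]$ and is recorded as $O=0$.
\item The lower endpoint $T$ is attained with $\pi=p_0$ and $G=0$.
\item Intermediate values come from $G=p_0\delta_s$ with $s\in(0,c]$, or from convex mixtures.
\end{itemize}
So the whole closed interval is attained.

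When $T=\infty$, every compatible law contains $Q$ and its mean is at least $T$, so it is infinite.

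The main obstacle is only bookkeeping at the boundary: confirming that the point $c$ is recorded as zero, so the endpoint $T+cp_0$ is attained rather than being a supremum, and handling the degenerate cases $\pi=1$ or $p_0=0$.
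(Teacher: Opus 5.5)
Your proposal is correct and follows the paper's argument. The paper likewise notes $\pi\le p_0$, fills in the unobserved mass $p_0-\widetilde\pi$ on $(0,c]$ while keeping $Q$ above $c$, bounds the below-limit mean contribution between $0$ and $cp_0$, and uses nonnegativity for the case $T=\infty$. The only cosmetic difference is in attaining intermediate means: the paper puts the needed mass at $c$ and the remainder on the zero atom, whereas you move a point mass $p_0\delta_s$ through $(0,c]$ with $\pi=0$. Both constructions work.
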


\begin{proof}
Every compatible law has $\pi\le p_0$. Conversely, for any $\widetilde\pi\in[0,p_0]$, assign mass $p_0-\widetilde\pi$ arbitrarily on $(0,c]$ and use $Q$ above $c$; this reproduces the complete observed law and has structural-zero mass $\widetilde\pi$. For the mean, the unobserved below-limit contribution lies between zero and $cp_0$. Every point is attained by placing the needed mass at $c$ and assigning the remainder to the zero atom.
If $T=\infty$, nonnegativity implies $E(F)\ge T=\infty$ for every compatible law.
\end{proof}

\subsection{Nonignorable selection}

\begin{proposition*}[Unrestricted selection destroys identification]
For any prescribed reporting rate $\rho\in(0,1)$, there are two distinct Gaussian means and two latent-report-dependent selection functions that induce the same observed law of $(R,RZ^*)$ and satisfy $P(R=1)=\rho$.
\end{proposition*}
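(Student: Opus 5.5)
The plan is an explicit construction. The observed data consist of $R\in\{0,1\}$, together with the value $Z^*$ whenever $R=1$. The observed law is therefore determined by two things: the reporting rate $P(R=1)=\rho$, and the subdensity $f^o(z)=\phi_\sigma(z-\mu)\,g(z)$ of reported values on $\{R=1\}$. Here $g(z)=P(R=1\mid Z^*=z)$ is the selection function. It suffices to exhibit $\mu_0\ne\mu_1$ and measurable $g_0,g_1:\R\to[0,1]$ such that
\[
\phi_\sigma(z-\mu_0)g_0(z)=\phi_\sigma(z-\mu_1)g_1(z)\quad\text{for all }z,
\qquad\int\phi_\sigma(z-\mu_0)g_0(z)\,dz=\rho .
\]

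I would fix a common target subdensity $f^o=\rho\,k$, where $k$ is a probability density, and then set $g_j=f^o/\phi_\sigma(\cdot-\mu_j)$. The only constraint is $g_j\le1$, which means $\rho\,k(z)\le\phi_\sigma(z-\mu_j)$ for $j=0,1$ and every $z$. I would choose
\[
k(z)=\frac{\min\{\phi_\sigma(z-\mu_0),\phi_\sigma(z-\mu_1)\}}{c},
\qquad c=\int\min\{\phi_\sigma(\cdot-\mu_0),\phi_\sigma(\cdot-\mu_1)\}=2\Phi\!\left(-\frac{|\mu_1-\mu_0|}{2\sigma}\right).
\]
With this choice the requirement becomes $\rho\le c$. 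For any $\rho\in(0,1)$, pick $\mu_0=0$ and $\mu_1=\epsilon>0$ small enough that $2\Phi(-\epsilon/(2\sigma))\ge\rho$. This is possible because $c\to1$ as $\epsilon\to0$ and $\rho<1$.

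Then $g_j(z)=\rho\min\{\phi_0,\phi_1\}(z)/\{c\,\phi_j(z)\}\in[0,\rho/c]\subseteq[0,1]$. Each $g_j$ is measurable and depends on the latent report, so selection is nonignorable. Both $(\mu_j,g_j)$ give $P(R=1)=\int\phi_jg_j=\rho$ and the same law for $RZ^*$ on $\{R=1\}$. On $\{R=0\}$ the observation is the point $(0,0)$, which has mass $1-\rho$ under both. The two observed laws therefore coincide while $\mu_0\ne\mu_1$.

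The hardest step is the pointwise constraint $g\le1$ holding simultaneously for both means, especially when $\rho$ is close to one. The minimum-overlap choice of $k$, combined with shrinking the mean gap, is exactly what secures it. I would state explicitly that $\mu_0,\mu_1$ may be arbitrarily close but distinct, which is all that distinctness requires. If one wants arbitrarily separated means, the construction still works whenever $\rho\le 2\Phi(-|\mu_1-\mu_0|/(2\sigma))$. The proof should note that allowing $\rho\to1$ forces the means to merge.
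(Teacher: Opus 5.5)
Your proof is correct and is essentially the paper's construction. Like the paper, you take the reported-value subdensity to be the pointwise minimum of the two Gaussian densities, rescaled to mass $\rho$, and define each selection function as its ratio to the corresponding latent density. You then choose the means close enough that the overlap $2\Phi(-|\mu_1-\mu_0|/(2\sigma))$ is at least $\rho$; the paper requires strict inequality, giving $\pi_k<1$, and your non-strict version with $g_j\le1$ is equally valid.
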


\begin{proof}
Choose two distinct, sufficiently close everywhere-positive Gaussian densities $f_0,f_1$ so that $I=\int\min(f_0,f_1)>\rho$. Put
\[
q(z)=\frac\rho I\min\{f_0(z),f_1(z)\},
\qquad
\pi_k(z)=q(z)/f_k(z).
\]
Then $0\le\pi_k<1$, the reported-value subdistribution is $q(z)dz$ under both models, and the nonreporting atom is $1-\rho$. Hence the observed laws coincide although the Gaussian means differ.
\end{proof}

\begin{corollary*}[Specified-selection score criterion]
Under a DQM observed-data selection model, let $S_\eta^o$ be the composition score and $\mathcal T_\nu$ the closed nuisance tangent space including selection, flow levels, biases, and covariance parameters. Then composition is regularly first-order identified exactly when
\[
K_{\rm sel}:=E\left[
\{S_\eta^o-\Pi_{\mathcal T_\nu}S_\eta^o\}
\{S_\eta^o-\Pi_{\mathcal T_\nu}S_\eta^o\}^\trans
\right]\succ0.
\]
If a finite nuisance score spans the complete tangent space, this equals
\[
J_{\eta\eta}-J_{\eta\nu}J_{\nu\nu}^\dagger J_{\nu\eta}.
\]
\end{corollary*}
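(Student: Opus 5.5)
The statement to prove is the final corollary (specified-selection score criterion). I would derive it from the general semiparametric definition of regular first-order identification, arguing exactly as in Step 4 of the proof of Theorem~\ref{thm:identification-main}.

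First, fix a direction $h\in\R^q$ and look at the one-dimensional submodel through the truth with composition moving along $h$. Locally, $h$ fails to be regularly identified exactly when some nuisance path reproduces the observed law to first order. Under DQM this means the observed-data score $h^\trans S_\eta^o$ equals, in $L_2$, a nuisance score $g\in\mathcal T_\nu$; closedness of $\mathcal T_\nu$ lets us include limits of nuisance scores. This is the standard characterization of local identification via efficient information. By the DQM lemma under a parameter-free observation map, the observed scores are conditional expectations of the full-data scores. So $S_\eta^o$ and the selection, flow, bias and covariance scores all lie in $L_2^0$ of the observed law, and the projection is well defined.

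Second, compute
\[
h^\trans K_{\rm sel}h=\norm{h^\trans S_\eta^o-\Pi_{\mathcal T_\nu}(h^\trans S_\eta^o)}_{L_2}^2 .
\]
This uses linearity of the projection onto a closed subspace. The expression vanishes iff $h^\trans S_\eta^o\in\mathcal T_\nu$, i.e.\ iff $h$ is locally confounded with nuisance variation. Hence $K_{\rm sel}\succ0$ iff no nonzero $h$ is confounded, which is the claimed criterion. The locally-constant-rank assumption, as in the main theorem, ensures this first-order condition is the regular one and not an artifact of a rank drop.

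Third, for the finite-score formula, suppose $\mathcal T_\nu=\Span(S_\nu^o)$. The $L_2$ projection of $S_\eta^o$ is then $J_{\eta\nu}J_{\nu\nu}^\dagger S_\nu^o$, obtained from the normal equations with the Moore--Penrose inverse handling redundant nuisance scores. Expanding the residual covariance gives the Schur complement $J_{\eta\eta}-J_{\eta\nu}J_{\nu\nu}^\dagger J_{\nu\eta}$. Here I would use $J_{\nu\nu}J_{\nu\nu}^\dagger J_{\nu\eta}=J_{\nu\eta}$, which holds because $J_{\nu\eta}$ has columns in the range of $J_{\nu\nu}$.

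The main obstacle is the first step: justifying that identification is equivalent to $S_\eta^o\notin\mathcal T_\nu$ directionwise when the nuisance tangent space is infinite-dimensional or selection scores are only limits. I would handle this by taking closure by definition of regular first-order identification. I would not claim global identification; the preceding proposition shows it fails without a specified selection law.
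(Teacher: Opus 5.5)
Your proposal is correct and matches the paper's argument. The paper likewise identifies $h^\trans K_{\rm sel}h$ with the squared $L_2$ distance from $h^\trans S_\eta^o$ to the closed nuisance tangent space, which vanishes exactly when that direction is cancelled by regular nuisance perturbations, and obtains the finite-score formula as the covariance of the Moore--Penrose least-squares residual. Your justification that $J_{\nu\nu}J_{\nu\nu}^\dagger J_{\nu\eta}=J_{\nu\eta}$ is the same fact the paper invokes when it notes that redundant nuisance directions have zero score and hence zero cross-information.
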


\begin{proof}
For every $h$, $h^\trans K_{\rm sel}h$ is the squared $L_2$ distance from $h^\trans S_\eta^o$ to the closed nuisance tangent space. It vanishes exactly when the target direction can be cancelled by regular nuisance perturbations. The finite-dimensional formula is the covariance of the Moore--Penrose least-squares residual; redundant nuisance directions have zero score and therefore zero cross-information.
\end{proof}

\subsection{Support change}

\begin{proposition*}[No intrinsic support--intensity attribution]
If dyad $e$ is absent at $t-1$ and present at $t$, an unrestricted positive counterfactual pre-entry weight is not identified, nor is the corresponding log intensive change. More generally, endpoint row-normalized networks on different supports do not determine a unique decomposition into support and intensive-weight effects.
\end{proposition*}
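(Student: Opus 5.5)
The plan is to construct two observationally equivalent latent configurations explicitly; the ``more generally'' claim then follows from the same construction.

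\emph{Step 1: the single-dyad case.} Fix row $i$ with $e=(i,j)$ absent from $\mathcal E_{t-1}$ and present in $\mathcal E_t$. The observed model at $t-1$ is the row-softmax $W_{i\cdot,t-1}$ on $\mathcal E_{t-1}$. By Section~\ref*{sec:si-observation}, the report law at $t-1$ involves only opportunities attached to eligible dyads, and the receipt law is invariant to counterfactual report coordinates of structurally absent dyads. So the observed likelihood at $t-1$ does not depend on any value $\ell^+_{e,t-1}$.

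\emph{Step 2: build the family of equivalent configurations.} For every $c\in\R$, extend the $t-1$ latent log-flow by setting $\ell^+_{e,t-1}=c$. Keep all other coordinates, and the date-$t$ parameters, fixed. Every such extension induces the same joint observed law over both dates. Here I take the product structure of the pushforward likelihood across dates and conditional on $\mathcal H_t$ as given, with the support exogenous and ancillary.

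\emph{Step 3: read off the nonidentified quantities.} The counterfactual pre-entry weight on the enlarged support $\mathcal E_{t-1}\cup\{e\}$ is $e^{c}/(e^{c}+\sum_{r}e^{\ell^+_{ir,t-1}})$. As $c$ ranges over $\R$, it sweeps all of $(0,1)$, so it is not identified. The log intensive change $\ell^+_{e,t}-c$ sweeps all of $\R$, so it is not identified either.

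\emph{Step 4: the general decomposition.} Take any decomposition of the change from $W_{t-1}$ on $\mathcal E_{t-1}$ to $W_t$ on $\mathcal E_t$ into a support effect and an intensive effect. Such a decomposition must pass through an intermediate network on a common support, for example $W_{t-1}$ extended to $\mathcal E_{t-1}\cup\mathcal E_t$, or $W_t$ restricted to $\mathcal E_t$ and renormalized. The restriction path is determined by the data. The extension path requires exactly the counterfactual weights shown to be free in Step 2. Different values of $c$ therefore give different splits of the same observed endpoints: the support effect and the intensive effect change in opposite directions while their sum stays fixed. Different path choices (restriction versus extension, and the order of operations) also give different splits, so no unique decomposition exists.

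The main obstacle is Step 2: justifying that the counterfactual coordinate really drops out of the full likelihood. This is where the assumptions bite. The support process must be exogenous, with its law free of $c$, and the selection kernel must be invariant to absent-dyad coordinates. I will state these as the conditions under which the proposition holds, consistent with the ancillarity caveat in the pushforward-likelihood proposition.
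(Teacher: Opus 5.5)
Your proposal is correct and is essentially the paper's argument: the off-support pre-entry weight never enters either datewise observed law, so it (and the log intensive change) is free. Any support/intensive split built on the extended pre-entry network inherits that freedom, and the choice of decomposition order adds a further ambiguity. The paper makes the order-dependence you only assert concrete with one row, $S_0=\{a\}$, $S_1=\{a,b\}$, unit weights and counterfactual $x>0$: the support component is $(-x/(1+x),x/(1+x))$ support-first versus $(-1/2,1/2)$ weights-first. Also, ``$W_t$ restricted to $\mathcal E_t$'' in your Step~4 should read $\mathcal E_{t-1}$, and that restriction is data-determined only when no dyad exits.
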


\begin{proof}
Changing an off-support counterfactual pre-entry weight leaves both datewise observed laws unchanged but changes the proposed intensive target. Extend both active weight vectors positively to the union support. The support-first identity is
\[
W(\bar q_1,S_1)-W(\bar q_0,S_0)
=\{W(\bar q_1,S_1)-W(\bar q_0,S_1)\}
+\{W(\bar q_0,S_1)-W(\bar q_0,S_0)\},
\]
while the weights-first identity is
\[
W(\bar q_1,S_1)-W(\bar q_0,S_0)
=\{W(\bar q_1,S_1)-W(\bar q_1,S_0)\}
+\{W(\bar q_1,S_0)-W(\bar q_0,S_0)\}.
\]
For one row with $S_0=\{a\}$, $S_1=\{a,b\}$, $q_{0a}=q_{1a}=q_{1b}=1$, and arbitrary unobserved extension $\bar q_{0b}=x>0$, the support-first component is $(-x/(1+x),x/(1+x))$, whereas the weights-first component is $(-1/2,1/2)$. These differ whenever $x\ne1$, although every $x>0$ gives the same date-0 observed law. Hence neither the extension nor the decomposition order is data determined.
\end{proof}

The common-support construction in Section~\ref{sec:si-observation} is a well-defined alternative; it is not the same estimand as full-support composition.

\subsection{Reporter-cycle diagnostics and common bias}

Let paired-report discrepancies satisfy $d=Gb+e$, where $G$ is an oriented edge-by-vertex incidence matrix and $e\sim N(0,\Omega_d)$ with known $\Omega_d\succ0$. Put $L_d=\Omega_d^{-1/2}$, $A=L_dG$, and $M_A=I-A(A^\trans A)^\dagger A^\trans$.

\begin{proposition*}[Exact cycle test]
\[
T_{\rm bias}=\norm{M_AL_dd}^2\sim\chi^2_{M-\rank(G)}.
\]
Moreover, $E(d)\in\col(G)$ if and only if every signed population cycle sum is zero. If the reporter graph is a forest, there is no overidentifying cycle restriction.
\end{proposition*}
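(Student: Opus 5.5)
The test statistic is a whitened least-squares residual norm, so the plan has three parts: its exact law, the cycle characterization of $\col(G)$, and the forest case.

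\textbf{Exact law.} Whiten the discrepancies: $L_dd=Ab+L_de$, and $L_de\sim N(0,I_M)$ because $L_d\Omega_dL_d^\trans=I$ for the principal symmetric root. The matrix $M_A$ is the orthogonal projector onto $\col(A)^\perp$, and it annihilates $Ab$ exactly. Hence $M_AL_dd=M_A\epsilon$ with $\epsilon$ standard normal, and
\[
T_{\rm bias}=\epsilon^\trans M_A\epsilon\sim\chi^2_{\rank(M_A)}
\]
by the standard idempotent-quadratic-form fact. Moreover $\rank(M_A)=M-\rank(A)$, and $\rank(A)=\rank(G)$ since $L_d$ is invertible. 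Under the null the law does not depend on $b$, which is what makes the test exact.

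\textbf{Cycle characterization.} I would use the standard incidence-matrix fact $\col(G)=\ker(\Gamma^\trans)^\perp$-type duality. Concretely, $\col(G)=\ker(G^\trans)^\perp$, and $\ker(G^\trans)$ is the cycle space of the reporter graph. It consists of edge vectors $c$ with zero net flow at every vertex, and it is spanned by the signed indicator vectors of cycles. So $E(d)\in\col(G)$ if and only if $\inner{c}{E(d)}=0$ for every cycle vector $c$, that is, if and only if every signed population cycle sum vanishes.

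The step needing the most care is that $\ker(G^\trans)$ is \emph{spanned} by simple cycles, rather than merely containing them. I would argue this by decomposing a circulation: pick a spanning forest, and note that each non-tree edge yields a fundamental cycle. These fundamental cycles form a basis of dimension $M-(\#\text{vertices}-\#\text{components})=M-\rank(G)$, using $\rank(G)=\#\text{vertices}-\#\text{components}$. This count agrees with the $\chi^2$ degrees of freedom, which serves as a consistency check.

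\textbf{Forest case.} If the reporter graph is a forest, there are no non-tree edges. Then $\rank(G)=M$, the cycle space is trivial, and $M_A=0$. So $T_{\rm bias}\equiv0$ and no overidentifying restriction exists.
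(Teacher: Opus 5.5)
Your proposal is correct and follows the paper's proof. You whiten to $L_dd=Ab+\varepsilon$, note that $M_A$ annihilates $Ab$ and is a symmetric idempotent of rank $M-\rank(G)$, and obtain the population restriction from $\col(G)=\ker(G^\trans)^\perp$ with $\ker(G^\trans)$ the cycle space. Two of your steps fill in what the paper only asserts: the fundamental-cycle basis argument, showing that signed cycle indicators span $\ker(G^\trans)$ with dimension $M-\rank(G)$, and the explicit forest case, where $\rank(G)=M$ and hence $M_A=0$.
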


\begin{proof}
Whitening gives $L_dd=Ab+\varepsilon$ with $\varepsilon\sim N(0,I)$. Thus $M_AL_dd=M_A\varepsilon$. The projector $M_A$ is symmetric idempotent of rank $M-\rank(G)$, proving the chi-square law. The cycle space is $\ker(G^\trans)$ and $\col(G)=\ker(G^\trans)^\perp$, proving the population restriction.
\end{proof}

\begin{proposition*}[Common dyad bias remains invisible]
Suppose the joint law of $(u,v)$ is fixed and parameter-independent under the transformations below, $E(u)=E(v)=0$, and
\[
z^E=\ell+c+G_Eb+u,
\qquad
z^I=\ell+c+G_Ib+v,
\]
where $c$ is an unrestricted dyad-specific bias common to both reports and $G_E-G_I=G$ is the incidence design used in the cycle test. Then the discrepancy law depends on neither $\ell$ nor $c$, and $E(z^E-z^I)=Gb$, so every population cycle restriction may hold while composition is unidentified. Any admissible perturbation $a$ can be absorbed through $(\ell,c)\mapsto(\ell+a,c-a)$. Row-normalized composition changes exactly when $a$ is nonconstant within some eligible row.
\end{proposition*}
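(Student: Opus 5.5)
The argument has three parts: the discrepancy law, the cycle restrictions, and the absorbed perturbation with its effect on composition. Each reduces to direct substitution into the displayed report equations, together with Step~1 of the proof of Theorem~\ref*{thm:identification-main}.

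First, I take differences. Subtracting the importer equation from the exporter equation, the terms $\ell$ and $c$ enter both reports identically and cancel, leaving
\[
z^E-z^I=(G_E-G_I)b+(u-v)=Gb+(u-v).
\]
The pair $(u,v)$ has a fixed joint law that does not depend on the parameters. Hence $u-v$ has a fixed law, and $Gb+(u-v)$ is a measurable function of $b$ and that noise alone. So the discrepancy law depends on neither $\ell$ nor $c$. Since $E(u)=E(v)=0$, we get $E(z^E-z^I)=Gb\in\col(G)$. By the population part of the exact cycle test proposition just above, every signed cycle sum of the mean discrepancy is therefore zero, whatever the values of $\ell$ and $c$.

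Second, I check that the perturbation is absorbed. Fix an admissible $a$ (row-centered, $C^\trans a=0$) and map $(\ell,c)\mapsto(\ell+a,c-a)$ with $b$ unchanged. The sums $\ell+c$ are unchanged, so both mean vectors $\ell+c+G_Eb$ and $\ell+c+G_Ib$ are unchanged. The error law is fixed, so the joint law of $(z^E,z^I)$ is identical under the two parameter values, not merely the discrepancy law. Thus the observed experiment cannot distinguish $\ell$ from $\ell+a$.

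Third, I determine when composition actually changes. By the row-gauge argument of Step~1 of the proof of Theorem~\ref*{thm:identification-main}, two positive row-softmax vectors coincide if and only if the log-flow difference is constant within each eligible row. So $W(\ell+a)=W(\ell)$ exactly when $a$ is constant within every eligible row. If $a$ is nonconstant within some row, the row-normalized network differs while the observed law does not, and composition is unidentified. To see that this is a genuine failure, note that any admissible within-row composition perturbation, $m\mapsto m+\Psi h$ or more generally any $a$ with $C^\trans a=0$, can be chosen nonconstant within a row whenever that row has at least two eligible dyads. This matches the statement in Theorem~\ref*{thm:identification-main}(d) that an unrestricted common dyad effect absorbs every composition derivative.

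I expect no real obstacle in any of these steps. The one point needing care is the role of the hypothesis that the joint law of $(u,v)$ is fixed under the transformation: the invariance claims hold only because the absorbed perturbation touches the means and not the error law, so the argument relies on exactly this assumption.
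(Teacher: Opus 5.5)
Your proposal is correct and follows the paper's proof. Subtracting the reports cancels $\ell$ and $c$; the map $(\ell,c)\mapsto(\ell+a,c-a)$ preserves $\ell+c$ and hence both report means; and composition is unchanged exactly when every within-row log ratio is preserved, i.e.\ when $a$ is row constant. The paper writes this last fact as $W_{ij}(\ell+a)/W_{ik}(\ell+a)=e^{a_{ij}-a_{ik}}W_{ij}(\ell)/W_{ik}(\ell)$, which is the same row-gauge fact you cite from Step~1. Your explicit appeal to the population part of the cycle-test proposition, to show that all cycle restrictions hold, is a harmless addition that the paper leaves implicit.
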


\begin{proof}
Subtraction gives $z^E-z^I=(G_E-G_I)b+(u-v)$, so $\ell$ and $c$ cancel. The displayed transformation leaves both report means unchanged. Within a row,
\[
\frac{W_{ij}(\ell+a)}{W_{ik}(\ell+a)}
=\frac{W_{ij}(\ell)}{W_{ik}(\ell)}e^{a_{ij}-a_{ik}},
\]
so the softmax is unchanged if and only if the perturbation is row constant.
\end{proof}

Passing the cycle test can reject an additive discrepancy model, but it cannot validate the absence of common dyad bias. Estimated covariance also removes the exact finite-sample chi-square law unless an independent conditional argument or justified calibration is supplied.

\end{document}